\newtheorem{theorem}{Theorem}
\newtheorem{proposition}{Proposition}
\newtheorem{corollary}{Corollary}
\newtheorem{conj}{Conjecture}
\newtheorem{remark}{Remark}
\numberwithin{equation}{section}
\numberwithin{theorem}{section}
\numberwithin{proposition}{section}
\numberwithin{corollary}{section}
\numberwithin{lemma}{section}
\numberwithin{conj}{section}
\numberwithin{assume}{section}
\numberwithin{remark}{section}
\def\lfig#1#2#3#4#5{
\begin{figure}[t]
 \centerline{\includegraphics[width=#3]{#2}}
 \vspace{#5}
  \caption{#1 \label{#4}}
 \end{figure}
}
\def\bea{\begin{eqnarray}}
\def\eea{\end{eqnarray}}
\def\be{\begin{equation}}
\def\ee{\end{equation}}
\def\ba{\begin{align}}
\def\ea{\end{align}}
\def\bse{\begin{subequations}}
\def\ese{\end{subequations}}
\newcommand{\nn}{\nonumber}
\def\det{\,{\rm det}\, }
\def\diag{{\rm diag}}
\def\sign{{\rm sgn}}
\def\Ch{{\rm Ch}}
\def\Sym{\,{\rm Sym}\, }
\def\Span{\,{\rm Span}\, }
\def\Im{\,{\rm Im}\,}
\newcommand{\sgn}{\mathop{\rm sgn}}
\def\({\left(}
\def\){\right)}
\def\[{\left[}
\def\]{\right]}
\def\<{\left\langle}
\def\>{\right\rangle}
\def\hf{{1\over 2}}
\newcommand{\p}{\partial}
\newcommand{\vth}{\vartheta}
\def\vph{\varphi}
\newcommand{\eps}{\epsilon}
\newcommand{\veps}{\varepsilon}
\renewcommand{\d}{\mathrm{d}}
\newcommand{\de}{\mathrm{d}}
\newcommand{\I}{\mathrm{i}}
\newcommand{\rmR}{\mathrm{R}}
\newcommand{\Asf}{{\sf A}}
\newcommand{\Bsf}{{\sf B}}
\newcommand{\Csf}{{\sf C}}
\newcommand{\asf}{{\sf a}}
\newcommand{\bsf}{{\sf b}}
\newcommand{\csf}{{\sf c}}
\newcommand{\cL}{\mathcal{L}}
\newcommand{\cD}{\mathcal{D}}
\newcommand{\cV}{\mathcal{V}}
\newcommand{\cC}{\mathcal{C}}
\newcommand{\cG}{\mathcal{G}}
\newcommand{\cB}{\mathcal{B}}
\newcommand{\cM}{\mathcal{M}}
\newcommand{\cE}{\mathcal{E}}
\newcommand{\cT}{\mathcal{T}}
\newcommand{\cJ}{\mathcal{J}}
\newcommand{\cI}{\mathcal{I}}
\newcommand{\cO}{\mathcal{O}}
\newcommand{\cA}{\mathcal{A}}
\newcommand{\Zbbm}{\mathbbm{Z}}
\newcommand{\pbbm}{\mathbbm{p}}
\newcommand{\xbbm}{\mathbbm{x}}
\newcommand{\ybbm}{\mathbbm{y}}
\newcommand{\vbbm}{\mathbbm{v}}
\newcommand{\wbbm}{\mathbbm{w}}
\newcommand{\ubbm}{\mathbbm{u}}
\newcommand{\zbbm}{\mathbbm{z}}
\newcommand{\kbbm}{\mathbbm{k}}
\newcommand{\abbm}{\mathbbm{a}}
\newcommand{\bbbm}{\mathbbm{b}}
\newcommand{\ebbm}{\mathbbm{e}}
\def\bbLambda{\mathbb{\Lambda}}
\newcommand{\Fb}{{\mathbb F}}
\newcommand{\IT}{\mathds{T}}
\newcommand{\IR}{\mathds{R}}
\newcommand{\IC}{\mathds{C}}
\newcommand{\IZ}{\mathds{Z}}
\newcommand{\IN}{\mathds{N}}
\newcommand{\IH}{\mathds{H}}
\newcommand{\IP}{\mathds{P}}
\newcommand{\IA}{\mathds{A}}
\newcommand{\ID}{\mathds{D}}
\def\scR{\mathscr{R}}
\def\Cv{\mathscr{C}}
\def\Ev{\mathscr{E}}
\def\Fv{\mathscr{F}}
\def\Zv{\mathscr{Z}}
\def\frm{\mathfrak{m}}
\def\frr{\mathfrak{r}}
\def\frt{\mathfrak{t}}
\def\frz{\mathfrak{z}}
\def\frR{\mathfrak{R}}
\newcommand{\bfg}{{\bf g}}
\newcommand{\bfa}{{\boldsymbol a}}
\newcommand{\bfb}{{\boldsymbol b}}
\newcommand{\bfv}{{\boldsymbol v}}
\newcommand{\bfk}{{\boldsymbol k}}
\newcommand{\bfn}{{\boldsymbol n}}
\newcommand{\bfq}{{\boldsymbol q}}
\newcommand{\bfr}{{\boldsymbol r}}
\newcommand{\bfs}{{\boldsymbol s}}
\newcommand{\bfz}{{\boldsymbol z}}
\newcommand{\bfx}{{\boldsymbol x}}
\newcommand{\bfy}{{\boldsymbol y}}
\newcommand{\bfD}{{\boldsymbol D}}
\newcommand{\bfrho}{{\boldsymbol \rho}}
\newcommand{\bftet}{{\boldsymbol \theta}}
\newcommand{\bfmu}{{\boldsymbol \mu}}
\newcommand{\bfnu}{{\boldsymbol \nu}}
\newcommand{\bftnu}{\tilde{\boldsymbol \nu}}
\newcommand{\bfLam}{{\boldsymbol \Lambda}}
\newcommand{\bflam}{{\boldsymbol \lambda}}
\newcommand{\bfgam}{{\boldsymbol \gamma}}
\newcommand{\tlh}{\tilde h}
\newcommand{\tell}{\tilde\ell}
\newcommand{\tmu}{\tilde\mu}
\newcommand{\tnu}{\tilde \nu}
\newcommand{\bftmu}{\tilde\bfmu}
\newcommand{\bftlam}{\tilde\bflam}
\def\tbsf{\tilde\bsf}
\def\tfrm{\tilde\frm}
\def\ba{\bar a}
\def\by{\bar y}
\def\bw{\bar w}
\def\btau{\bar \tau}
\def\bOm{\bar\Omega}
\def\bqq{\bar\q}
\def\hs{\hat s}
\def\hmu{\hat\mu}
\def\hrho{\hat\rho}
\def\hgam{\hat\gamma}
\def\bfhv{\hat\bfv}
\def\bfhgam{\hat\bfgam}
\def\bfhmu{\hat\bfmu}
\def\bfhtmu{\,\hat{\tilde{\!\bfmu}}}
\def\hubbm{\hat\ubbm}
\def\hvbbm{\hat\vbbm}
\def\hwbbm{\hat\wbbm}
\def\chg{\check g}
\def\chG{\check G}
\def\chphi{\check\phi}
\def\gama{\check\gamma}
\newcommand{\glueg}{\mathbbm{g}}
\def\gluegi#1{\glueg^{(#1)}}
\def\bfgi#1{{\bf g}^{(#1)}}
\def\CY{\mathfrak{Y}}
\def\ver{v}
\def\hi#1{h^{(#1)}}
\def\han{h^{\rm (an)}}
\def\hh{h^{(0)}}
\def\thh{\tlh^{(0)}}
\def\than{\tlh^{\rm (an)}}
\def\thh{\tlh^{(0)}}
\def\ths#1{\theta^{(#1)}}
\def\vths#1{\vartheta^{(#1)}}
\def\vthls#1{\vartheta^{(#1)||}}
\def\tvths#1{\tilde\vartheta^{(#1)}}
\def\tvthls#1{\tilde\vartheta^{(#1)||}}
\def\vthps#1{\vartheta^{(#1)\perp}}
\def\vthpzi#1#2{\vartheta^{(#1)#2}}
\def\vthA#1{\varTheta^{(#1)}}
\def\di#1{d_{\Nr_#1}}
\def\gi#1{g^{(#1)}}
\def\girf#1{g^{(#1)\rm{ref}}}
\def\chgirf#1{\chg^{(#1)\rm{ref}}}
\def\vwgi#1{\mathfrak{g}_{#1}}
\def\whvwg{\widehat{\mathfrak{g}}}
\def\Er{\Ev^{\rm (ref)}}
\def\Ef{\Ev^{(0)}}
\def\Efrf{\Ev^{(0){\rm ref}}}
\def\trmRi#1{\tilde \rmR^{(#1)}}
\def\trmRVWi#1{\tilde \frR^{(#1)}}
\def\rmRi#1{\rmR^{(#1)}}
\def\rmRirf#1{\rmR^{(#1)\rm ref}}
\def\scRrf{\scR^{\rm ref}}
\def\whgirf#1{\widehat g^{(#1)\rm{ref}}}
\def\whchgirf#1{\lefteqn{\widehat \chg}\hphantom{\chg}^{(#1)\rm{ref}}}
\def\vwh{h^{\rm VW}}
\def\whh{\widehat h}
\def\twhh{\widehat{\tilde h}}
\def\whg{\widehat g}
\def\whgi#1{\widehat g^{(#1)}}
\def\whG{\widehat G}
\def\bfLami#1{\bfLam^{(#1)}}
\def\tbfLami#1{\tilde\bfLam^{(#1)}}
\def\bbLami#1{\bbLambda^{(#1)}}
\def\cvths#1{\lefteqn{\smash{\mathop{\vphantom{<}}\limits^{\;\circ}}}\vartheta^{(#1)}}
\def\Gi#1{G^{(#1)}}
\def\cGi#1{\cG^{(#1)}}
\def\Mi#1{M^{(#1)}}
\def\hcTi#1{\hat\cT^{(#1)}}
\def\whGi#1{\whG^{(#1)}}
\def\EPhi{\Phi^{\,\Ev}}
\def\rPhi{\Phi^{(\bfr)}}
\def\PhiR{\Phi_{\rm R}}
\def\Phid{\Phi_\delta}
\def\rPhiR{\PhiR^{(\bfr)}}
\def\Phii#1{\Phi^{(#1)}}
\def\PhiRi#1{\PhiR^{(#1)}}
\def\Phidi#1{\Phid^{(#1)}}
\def\Fvi#1{\Fv^{(#1)}}
\def\whFvi#1{\widehat\Fv^{(#1)}}
\def\vu{\mathfrak{u}}
\def\cEprf{\cE^{(+){\rm ref}}}
\def\Ep{\Ev^{(+)}}
\def\Eprf{\Ev^{(+){\rm ref}}}
\def\cbfr{c_\bfr}
\def\ci#1{c_{#1}}
\def\bfDr{\bfD^{(\bfr)}}
\def\bfDi#1{\bfD^{(#1)}}
\def\Cvr{\Cv^{(\bfr)}}
\def\hCvr{\hat\Cv^{(\bfr)}}
\def\cCr{\cC^{(\bfr)}}
\def\cAr{\cA^{(\bfr)}}
\def\cIr{\cI^{(\bfr)}}
\def\cIi#1{\cI^{(#1)}}
\def\Dzm{\Delta_0^{(\bfr)}}
\def\nDzm{\overline{\Delta}_0^{\,(\bfr)}}
\def\vnc#1#2{#1^{[#2]}}
\newcommand{\q}{\mbox{q}}
\def\om{\omega}
\def\omb{\omega^{(\beta)}}
\def\Nr{r}
\def\rdcr{\hat r}                  
\def\sss#1#2{\Nr_{#1::#2}}
\def\Ms{s}
\newcommand{\symfootnote}[1]{%
\let\oldthefootnote=\thefootnote%
\setcounter{mpfootnote}{2}%
\addtocounter{footnote}{-1}%
\renewcommand{\thefootnote}{\fnsymbol{mpfootnote}}%
\footnote{#1}%
\let\thefootnote=\oldthefootnote%
}
\title{Mock modularity of Calabi-Yau threefolds}
\author{Sergei Alexandrov and Khalil Bendriss
\\
{\it Laboratoire Charles Coulomb (L2C), Universit\'e de Montpellier,
CNRS, \\ F-34095, Montpellier, France}\\

\vspace*{2mm} {\tt e-mail:
\email{sergey.alexandrov@umontpellier.fr},
\email{khalil.bendriss@umontpellier.fr}
}

\vspace*{-3mm}

}
\abstract{Generating functions $h_r(\tau)$ of D4-D2-D0 BPS indices, appearing in Calabi-Yau compactifications
of type IIA string theory and identical to rank 0 Donaldson-Thomas invariants, are known 
to be higher depth mock modular forms satisfying a specific modular anomaly equation, 
with depth determined by the D4-brane charge $r$.
We develop a method to solve the anomaly equation for arbitrary charges, 
in terms of indefinite theta series.
This allows us to find the generating functions up to modular forms that can be fixed by 
computing just a finite number of Fourier coefficients of $h_r$.
}
\begin{document}

\setlength{\parskip}{0.15cm}

\section{Introduction}

The indices counting BPS states in compactifications of type II strings on Calabi-Yau (CY) threefolds $\CY$
play a prominent role both in physics and mathematics.
On the physics side, they represent degeneracies of BPS black holes and encode weights 
of instanton corrections to the low energy effective action.
On the mathematics side, they coincide with the generalized Donaldson-Thomas (DT) invariants 
whose importance for understanding geometry of the CY threefolds can hardly be overestimated.

For non-compact CYs, there are various techniques to compute these BPS indices, which are based on localization, 
quivers, spectral networks and their generalizations, relations to topological and gauge theories, etc., see e.g.
\cite{Douglas:2000qw,Nishinaka:2013mba,Banerjee:2019apt,Beaujard:2020sgs,Descombes:2021snc,DelMonte:2021ytz}.
However, for compact threefolds most of these techniques cannot be applied and the problem becomes much more complicated.

There are actually two classes of BPS indices which, at least in principle, can be systematically calculated.
First, for D6-brane charge equal to $\pm 1$, the BPS indices (at large volume) coincide with the ordinary 
DT (respectively, PT (due to Pandharipande-Thomas)) invariants. Their generating function is given by 
the famous MNOP formula \cite{gw-dt,gw-dt2} in terms of Gopakumar-Vafa (GV) invariants, 
which in turn can be found by
computing the topological string free energy, for example, by the direct integration method 
\cite{Bershadsky:1993ta,Huang:2006hq,Grimm:2007tm}.

Second, for vanishing D6-brane charge, the BPS indices, known also as rank 0 DT invariants, 
count D4-D2-D0 BPS states and can be organized in generating functions $h_p(\tau)$ where 
$p^a$ (with $a=1,\dots,b_2(\CY)$\label{pb2}) is the D4-brane charge which geometrically corresponds to a divisor 
$\cD_p=p^a\gamma_a$ of $\CY$
where $\gamma_a$ is a basis of $H_4(\CY,\IZ)$.\footnote{In fact, the generating functions 
are vector valued so that their components $h_{p,\mu}(\tau)$ are labeled by residue class $\mu_a$
taking values in the discriminant group $\Lambda^\star/\Lambda$ where $\Lambda=H_4(\CY,\IZ)$. For simplicity of exposition,
we drop the vector index in the Introduction. \label{foot-mu}} 
These functions turn out to possess nice modular properties 
\cite{Maldacena:1997de,Alexandrov:2012au,Alexandrov:2016tnf,Alexandrov:2018lgp}
which severely restrict $h_p(\tau)$ and, again at least in principle, 
can be used to fix them up to a finite number of coefficients.

The precise modular properties of $h_p(\tau)$ strongly depend on properties of the divisor $\cD_p$.
If the divisor is {\it irreducible}, the generating function must be 
a weakly holomorphic modular form of weight $-\hf b_2-1$
\cite{Maldacena:1997de}, i.e. it has the expansion 
\be 
h_p(\tau)=\sum_{n\geq n_{\rm min}} c_n \q^n, 
\qquad
\q=e^{2\pi\I \tau},
\label{exp-hp}
\ee
where $n_{\rm min}<0$, and transforms in the usual way under the standard $SL(2,\IZ)$ 
transformations acting on $\tau$. The space of such modular forms is finite dimensional 
and its dimension is bounded from above by the number of {\it polar} terms, i.e. terms with $n<0$.
This is why in this case it is enough to compute only the polar terms in \eqref{exp-hp} to completely fix 
$h_p(\tau)$.
This idea was applied long ago to a few one-parameters CY threefolds in 
\cite{Gaiotto:2006wm,Gaiotto:2007cd,Collinucci:2008ht,VanHerck:2009ww} and revised recently in 
\cite{Alexandrov:2022pgd,Alexandrov:2023zjb}.
In particular, in \cite{Alexandrov:2023zjb} a systematic way to compute first terms in the expansion 
\eqref{exp-hp} has been suggested which is based on 
new wall-crossing relations between PT and rank 0 DT invariants \cite{Feyzbakhsh:2022ydn}.
Combined with the MNOP formula, they allow to express D4-D2-D0 BPS indices in terms of GV invariants
so that, if the latter are known up to sufficiently high genus, all polar (and many non-polar) terms 
can be computed.

If the divisor $\cD_p$ is {\it reducible}, i.e. $p^a=\sum_{i=1}^r p_i^a$ with positive $p_i^a$ and $r>1$, 
the modular properties of $h_p(\tau)$ are more involved. 
It was shown in \cite{Alexandrov:2016tnf,Alexandrov:2018lgp} that the generating functions 
are {\it mock} modular forms of depth $r-1$ with a specific modular anomaly. 
A convenient way to characterize the anomaly is to consider a {\it modular completion} that is
a non-holomorphic function $\whh_p(\tau,\btau)$ that transforms as a usual modular form 
and differs from $h_p(\tau)$ only by terms suppressed in the limit $\Im\tau\to \infty$.
An exact expression for the completion is given below in section \ref{subsec-anom} (see \eqref{exp-whh})
in a simplified form found recently in \cite{Alexandrov:2024jnu}. 
An important feature of this formula is that $\whh_p(\tau,\btau)$ is determined 
by the generating functions $h_{p_i}(\tau)$ of the constituents.

Although for mock modular forms the polar terms alone are not sufficient anymore to 
fix the function uniquely, the missing information can be recovered from the modular anomaly.
Namely, one can follow the two-step strategy. 
First, one finds {\it any} mock modular form $\han_{p}(\tau)$
having the given modular anomaly. Obviously, the generating function $h_p$ can differ from $\han_{p}$
at most by a modular form $\hh_{p}$, i.e. 
\be
h_{p}=\han_{p}+\hh_{p}.
\label{hhh}
\ee
Given this representation, at the second step, 
the modular ambiguity $\hh_{p}$ can be fixed in the usual way 
by computing its polar terms given by the difference of the polar terms of $h_{p}$ and $\han_{p}$.

For one-parameter CYs with the triple intersection number equal to a power of a prime number
and D4-brane charge $r=2$,\footnote{For one-parameter CYs, the D4-brane charge $p^1$
coincides with the degree of reducibility of the divisor $\cD_p$ and therefore
will be denoted by $r$ in the rest of this paper.} 
the first step (solution of the modular anomaly) has been realized in 
\cite{Alexandrov:2022pgd}. Then for two CYs known as decantic $X_{10}$ and octic $X_8$, 
the second step (computing the polar terms) has been done in  
\cite{Alexandrov:2023ltz}, which resulted in explicit mock modular generating functions $h_2$
for this pair of threefolds.

The goal of this paper is, still restricting to the one-parameter case $b_2=1$, 
to find a solution of the modular anomaly, i.e. the functions $\han_{r}$, for higher charges. 
Thus, we reduce the problem of finding the generating functions $h_r$
to just the problem of computing their polar terms. This last problem is left for future research.

The immediate question which arises when one solves the modular anomaly for $h_r$ is how 
this can be done given that the anomaly depends on the generating functions of lower charges 
that remain unknown because their polar terms are not fixed yet?
To address this issue, we disentangle the anomalous parts of all generating functions from 
their modular ambiguities fixed by the polar terms. Namely, we express each $h_r$ as a polynomial 
in $\hh_{r_i}$ with $r_i\leq r$ (see \eqref{genansatz}) and show that the coefficients $\gi{\bfr}$, 
where $\bfr=(r_1,\dots,r_n)$ such that $\sum_{i=1}^n r_i=r$, 
are themselves mock modular forms of depth $n-1$  
satisfying an appropriate anomaly equation \eqref{compl-gi}.
Thus, the problem of solving the modular anomaly for $h_r$ is reformulated
as the problem of solving the modular anomaly equations for the holomorphic functions $\gi{\bfr}(\tau)$
parametrized by $n$ charges $r_i$. We call these functions {\it anomalous coefficients}.

It turns out that it is relatively easy to give a solution for two infinite families 
of the anomalous coefficients. First, in the $n=2$ case with arbitrary $r_1$ and $r_2$, 
the anomaly is characterized by a simple theta series depending 
on a single combination of all parameters which we denote by $\kappa_{12}$. 
A partial solution for such $\gi{r_1,r_2}$ (when $\kappa_{12}$ is a power of a prime number)
has already been given in \cite{Alexandrov:2022pgd}. But, in fact, a solution for generic $\kappa_{12}$
is also known and provided by {\it mock modular forms of optimal growth}
introduced in \cite{Dabholkar:2012nd}.
They are constructed by applying certain Hecke-like operators to a set of ``seed" mock modular functions
$\cGi{d}$ defined for each $d$ that is a square-free positive integer with an even number of prime factors.
In particular, $\cGi{1}$ coincides with the generating series of Hurwitz class numbers, 
which is also known to be the normalized generating function of $SU(2)$ Vafa-Witten (VW) invariants on $\IP^2$
\cite{Vafa:1994tf},
consistently with the results of \cite{Alexandrov:2022pgd}. 
In fact, this is a particular case of the second family of solutions.
Namely, we show that the anomaly equations for $\gi{\bfr}$ with all $r_i=1$ 
form a closed system which, for the intersection number of $\CY$ equal to one, 
coincides with a similar system of anomaly equations 
for the normalized generating functions of $SU(n)$ VW invariants on $\IP^2$,
with $n$ equal to the number of charges.
Thus, these two sets of functions can be simply identified.
Since the generating functions of VW invariants on $\IP^2$ are by now well-known for any rank 
of the gauge group \cite{Manschot:2014cca,Alexandrov:2020dyy}
(see also \cite{Yoshioka:1994,Manschot:2010nc,Manschot:2011ym,Beaujard:2020sgs}),
this identification provides a solution for the subset of anomalous coefficients.

Unfortunately, neither of these solutions seems to have a simple generalization 
to other cases. Therefore, we follow an alternative strategy which is meant to work 
for an arbitrary set of charges and is based on the use of indefinite theta series.
This however requires two preliminary steps. First, the system of anomaly equations
should be extended to include a {\it refinement} encoded by an elliptic parameter $z$. 
This allows to simplify both the equations and their solution, 
but most importantly it provides a regularization 
of certain singularities which would otherwise plague the theta series. 
Second, one should artificially extend the relevant charge lattice
(which is achieved by multiplication by an appropriate combination of Jacobi theta series)
to ensure that the resulting lattice possesses a set of null vectors necessary 
to write down a general solution. Such solution is then given by a combination of 
indefinite theta series and holomorphic modular functions $\phi^{(\bfr)}(\tau,z)$
(see Theorem \ref{thm-gensol}) which ensure that the unrefined limit $z\to 0$ is non-singular.
In fact, it is the proper choice of these functions and the explicit evaluation of the unrefined limit
that are the most non-trivial elements of our construction.

In this paper we perform the construction in detail and derive the final form 
of $\gi{\bfr}$ for the cases of two and three charges, 
while in generic case we find the general form of the refined solution, 
obtain the functions $\phi^{(\bfr)}$ ensuring the existence of the unrefined limit,
but leave the limit itself non-evaluated since it appears quite hard to do this analytically.
Besides, we check that the solutions based on the indefinite theta series are consistent
with the ones obtained by Hecke-like operators and from VW theory. 

The organization of the paper is as follows. In the next section we recall properties
of the generating series of D4-D2-D0 BPS indices, including their behavior under modular transformations.
In section \ref{sec-ansatz} we introduce the anomalous coefficients that disentangle 
the mock modular parts of the generating series, which are fixed by the modular anomaly equations,
from their modular ambiguities fixed by computing the polar terms.  
In section \ref{sec-2sol} we establish relations to the mock modular forms of optimal growth introduced 
in \cite{Dabholkar:2012nd} and to the normalized generating functions 
of VW invariants on $\IP^2$.
In section \ref{sec-sol-indef} we present our main construction 
of the anomalous coefficients in terms of indefinite theta series. 
Finally, in section \ref{sec-concl} we discuss our results and their possible extensions.
Several appendices contain some useful information on various building blocks of the construction,
details of our calculations and some explicit $\q$-series.
For the reader's convenience, the last appendix \ref{sec_index} includes an index of notations.

\section{BPS indices and their modular anomaly}
\label{sec-DT}

In this section we introduce our main objects of interest, the generating series
of D4-D2-D0 BPS indices, and describe their main properties.
We restrict ourselves from the very beginning to the case of $b_2(\CY)=1$.
A more complete discussion of BPS indices can be found, e.g., in \cite{Alexandrov:2023zjb}.

\subsection{D4-D2-D0 BPS indices and their generating series}
\label{subsec-MSWgen}

In type IIA string theory compactified on a CY threefold $\CY$, BPS indices depend 
on the electromagnetic charge $\gamma$ \label{pgamma} which labels elements in the even cohomology of $\CY$ 
and can be represented by a vector $(p^0,r,q,q_0)$ where different components
correspond to D6, D4, D2 and D0-charges, respectively. In this paper we are interested 
in D4-D2-D0 BPS states for which $p^0=0$, which are also known as rank 0 DT invariants. 

In general, BPS indices also depend in a piece-wise linear way on the K\"ahler modulus 
due to the wall-crossing phenomenon and 
thus they take different values in different chambers of the moduli space.
Here we take them to be evaluated in the large volume attractor chamber  \cite{deBoer:2008fk} 
containing the point
\be
\label{lvolatt}
z^a_\infty(\gamma)
= \lim_{\lambda\to +\infty}\(-\frac{q}{\kappa r}+\I\lambda r\),
\ee
where $\kappa$ is the intersection number of $\CY$. In this chamber the BPS indices are invariant
under the so-called spectral flow transformation acting on the charge vector.
It gives rise to the following decomposition of the D2-brane charge
\be
\label{defmu}
q = \mu + \frac12\, \kappa r^2 + \kappa r \eps,
\ee
where $\eps\in \IZ$ is the parameter shifted by the spectral flow, while 
$\mu\in \Lambda^*/\Lambda$ with $\Lambda=\kappa r \IZ$ is the so-called
{\it residue class} taking $\kappa r$ values and staying invariant.
Thus, D4-D2-D0 BPS indices depend only on $r$, $\mu$ and an invariant combination of D2 and D0-charges
\be
\label{defqhat}
\hat q_0 \equiv
q_0 -\frac{q^2}{2\kappa r} \, ,
\ee
and will be denoted by $\Omega_{r,\mu}( \hat q_0)$.

An important fact, known as the Bogomolov-Gieseker bound \cite{bayer2011bridgeland},  
is that $\Omega_{r,\mu}( \hat q_0)$ vanishes
unless the invariant charge $\hat q_0$ satisfies
\be
\hat q_0 \leq \hat q_0^{\rm max}=\frac{1}{24}\,(\kappa r^3+c_2 r),
\label{qmax}
\ee
where $c_2$ is the second Chern class of $\CY$.
It allows to define the generating series 
\be
h_{r,\mu}(\tau) =\sum_{\hat q_0 \leq \hat q_0^{\rm max}}
\bOm_{r,\mu}(\hat q_0)\,\q^{-\hat q_0 }\, ,
\label{defhDT}
\ee
where $\q=e^{2\pi\I \tau}$ and the bar denotes {\it rational} BPS indices defined for any charge as
$\bar\Omega(\gamma) = \sum_{d|\gamma} \Omega(\gamma/d)/d^2$.
Only generating series of rational BPS indices are expected to possess nice modular properties 
\cite{Manschot:2010xp}. Although it does not lead to conceptual simplifications, 
sometimes it is useful to use the symmetry\footnote{Mathematically, this symmetry follows from 
dualization of the coherent sheaf induced by the D4-brane.} 
$h_{r,-\mu}=h_{r,\mu}$ where we extended 
the range of $\mu$ from $[0,\kappa r-1]$ by periodicity.

\subsection{Modular symmetry}
\label{subsec-anom}

The most important feature of the generating series $h_{r,\mu}(\tau)$ is that
they transform as depth $r-1$ vector valued (VV) mock modular forms
under the standard $SL(2,\IZ)$ transformations $\tau\mapsto \frac{a\tau+b}{c\tau+d}$.
While a mathematical proof of this modular behavior is still absent, 
it was derived using duality symmetries of string theory 
\cite{Maldacena:1997de,Alexandrov:2012au,Alexandrov:2016tnf,Alexandrov:2018lgp} 
and obtained recently a striking confirmation by verifying predictions of modularity against 
a direct calculation of DT invariants \cite{Alexandrov:2023zjb,Alexandrov:2023ltz}.

More precisely, in the simplest case $r=1$,  $h_{1,\mu}$ is expected to be 
a weakly holomorphic VV modular form of weight $-3/2$
with the multiplier system closely related to the Weil representation attached to the lattice
$\kappa r\IZ$ with quadratic form $Q(x)=\kappa r x^2$
and determined by the following two matrices for T and S-transformations
\cite[Eq.(2.10)]{Alexandrov:2019rth} 
(see also \cite{Gaiotto:2006wm, deBoer:2006vg, Denef:2007vg, Manschot:2007ha,Alexandrov:2012au})
\be
\begin{split}
	\Mi{h_r}_{\mu\nu}(T)=&\, 
	e^{\frac{\pi \I}{\kappa \Nr}\(\mu+\frac{1}{2}\kappa r^2\)^2+\tfrac{\pi\I}{12}\, c_{2}r}\,\delta_{\mu\nu},
	\\
	\Mi{h_r}_{\mu\nu}(S)=&\, \frac{(-1)^{\chi_r}}{\sqrt{\kappa r}}\,
	e^{-\frac{\pi\I}{4}-2\pi\I \, \frac{\mu\nu}{\kappa r}}\,,
\end{split}
\label{Multsys-hp}
\ee
where $\chi_r$ is the arithmetic genus of the divisor $\cD_r$ given by
\be
\label{defL0}
\chi_r = \frac16\, \kappa r^3+ \frac{1}{12}\, c_{2} r\in \IZ .
\ee

We wrote the multiplier system \eqref{Multsys-hp} for generic $r$ because for $r>1$ 
it also enters the modular transformation of $h_{r,\mu}$. However, in this case
the transformation has a modular anomaly so that the generating series is only mock modular
\cite{Alexandrov:2016tnf,Alexandrov:2018lgp}.
This means that $h_{r,\mu}$ can be promoted to a non-holomorphic modular completion 
$\whh_{r,\mu}(\tau,\btau)$ constructed from iterated integrals of some modular forms 
and transforming itself as a true modular form of the same weight $-3/2$ 
and multiplier system \eqref{Multsys-hp}.
The fact that $h_{r,\mu}$ has depth $r-1$ means that the $\btau$-derivative of
its completion, which is known as {\it shadow} of the mock modular form, 
is itself a completion of a mock modular form of depth $r-2$ 
(see \cite{Bringmann:2021dxg} for the precise definition).

The explicit form of the completion has been found in \cite{Alexandrov:2018lgp} 
and then slightly simplified in \cite{Alexandrov:2024jnu}. 
As a result, it reads as\footnote{We always assume that $r_i$ are
positive integers and do not write explicitly this condition in the sum over decompositions of D4-brane charge.}
\be
\whh_{r,\mu}(\tau,\btau)=h_{r,\mu}(\tau)+ \sum_{n=2}^r \sum_{\sum_{i=1}^n r_i=r}
\sum_{\bfmu}
\rmRi{\bfr}_{\mu,\bfmu}(\tau, \btau)
\prod_{i=1}^n h_{r_i,\mu_i}(\tau),
\label{exp-whh}
\ee
where we use the bold script to denote tuples of $n$ variables like $\bfr=(r_1,\dots, r_n)$. 
Note that the first term can be included into the second by setting $\rmRi{r}_{\mu,\mu'}=\delta_{\mu,\mu'}$.
The other coefficients $\rmRi{\bfr}_{\mu,\bfmu}$ can be represented as non-holomorphic 
theta series defined on a $n-1$-dimensional lattice
\be
\rmRi{\bfr}_{\mu,\bfmu}(\tau, \btau)=
\sum_{\sum_{i=1}^n q_i=\mu+\hf\kappa r^2 \atop q_i\in \kappa r_i \IZ+\mu_i+\hf\kappa r_i^2} 
\Sym\Bigl\{(-1)^{\sum_{i<j} \gamma_{ij} }\scR_n(\bfhgam;\tau_2)\Bigr\}\, e^{\pi\I \tau Q_n(\bfhgam)},
\label{defRn}
\ee
where $\bfhgam$ is the $n$-tuple of reduced charge vectors $\hgam_i=(r_i,q_{i})$
with $q_{i}$ decomposed as in \eqref{defmu} with fixed residue classes $\mu_i$.
Besides, $\Sym$ denotes symmetrization (with weight $1/n!$) with respect to charges $\hgam_i$, 
$\gamma_{ij}$ is the anti-symmetric Dirac product of charges
\be
\gamma_{ij} = \Nr_j q_i - \Nr_i q_j ,
\label{def-gammaij}
\ee
and $Q_n$ denotes the quadratic form, originating from the quadratic term in the definition \eqref{defqhat} of
the invariant charge $\hat q_0$,
\be
Q_n(\bfhgam)= \frac{1}{\kappa r}\(\sum_{i=1}^n q_i\)^2-\sum_{i=1}^n\frac{q_i^2}{\kappa r_i} \, .
\label{defQlr}
\ee
Finally, the coefficients $\scR_n$ determining the kernel of 
the theta series are constructed as suitable combinations of derivatives 
of the so-called generalized error functions introduced in
\cite{Alexandrov:2016enp,Nazaroglu:2016lmr}. We relegate the precise definitions of all these
functions to appendix \ref{ap-E}.

The equation for the completion \eqref{exp-whh} specifies the modular anomaly of $h_{r,\mu}$. 
Equivalently, one can talk about the holomorphic anomaly for $\whh_{r,\mu}$.
One can check that this anomaly given by $\p_{\btau}\whh_{r,\mu}$ 
is manifestly modular since it can be expressed through the completions
$\whh_{r_i,\mu_i}$, see \cite[Eq.(5.35)]{Alexandrov:2018lgp}.
The important feature of all these anomaly equations is that the r.h.s. is expressed
through the generating series for charges $r_i<r$.
Thus, \eqref{exp-whh} can be seen as a recursive system of equations determining the anomalous 
parts of the generating functions.

\subsection{Redefinition}
\label{subsec-shift}

Before we turn to the main goal of this paper, 
which is to solve the anomaly equations \eqref{exp-whh}, let us make a slight redefinition
of the generating series $h_{r,\mu}$ by shifting their vector index and multiplying by a sign factor.
This will allow us to avoid some annoying shifts and signs in what follows. 
More precisely, we set
\be
\tlh_{r,\mu}(\tau) =
(-1)^{(\Nr-1)\mu }h_{r,\tmu(r)}(\tau),
\label{def-tgi}
\ee
where
\be
\tmu(r) = \mu - \frac{\kappa \Nr(\Nr-1)}{2}\, .
\label{def-tmu}
\ee

This redefinition leads to two simplifications.
First, the shift of $\mu$ replaces the quadratic term in the spectral flow decomposition \eqref{defmu} by a linear one
so that now it reads
\be
\label{defmu-shift}
q = \mu + \frac12\, \kappa r + \kappa r \eps.
\ee
As a result, all such terms cancel in the condition on the sum over $q_i$ in \eqref{defRn} and, 
using the decomposition \eqref{defmu-shift}, it can be rewritten as
\be
\kappa\sum_{i=1}^n r_i\eps_i=\Delta\mu, 
\qquad
\Delta\mu=\mu-\sum_{i=1}^n\mu_i.
\label{constr-epsi}
\ee

Second, the sign factor in \eqref{def-tgi} cancels the sign factor in \eqref{defRn}.
Indeed, substituting the decomposition \eqref{defmu-shift} into the Dirac product \eqref{def-gammaij}, 
one finds 
\be
\begin{split}
	\sum_{i<j}\gamma_{ij} &=\, 
	\sum_{i<j} \(\Nr_j \(\kappa \Nr_i \eps_i + \mu_i + \frac{\kappa \Nr_i}{2}\) 
	- \Nr_i \(\kappa \Nr_j \eps_j + \mu_i + \frac{\kappa \Nr_j}{2}\)\)
	\\
	&=\,
	\sum_{i<j}\Bigl( \kappa \Nr_i \Nr_j (\eps_i+\eps_j) + \Nr_j \mu_i + \Nr_i \mu_j\Bigr)
	\mod 2
	\\
	&=\, 
	\sum_{i}\Bigl( \kappa \Nr_i (\Nr-\Nr_i)\eps_i + (\Nr-\Nr_i) \mu_i\Bigr) \mod 2
	\\
	&=\, 
	(\Nr-1)\sum_{i} \(\kappa\Nr_i \eps_i +\mu_i\)-\sum_i (\Nr_i-1) \mu_i
	\mod 2
	\\
	&=\, (\Nr-1) \mu - \sum_i (\Nr_i-1) \mu_i \mod 2 ,
\end{split}
\ee
where at the last step we used \eqref{constr-epsi}.

Given the above observations, the anomaly equation \eqref{exp-whh}
reformulated in terms of the redefined generating series takes the following form
\be
\twhh_{r,\mu}(\tau,\btau)=\sum_{n=1}^r \sum_{\sum_{i=1}^n r_i=r}
\sum_{\bfmu}
\trmRi{\bfr}_{\mu,\bfmu}(\tau, \btau)
\prod_{i=1}^n \tlh_{r_i,\mu_i}(\tau),
\label{exp-twhh}
\ee
where
\be
\trmRi{\bfr}_{\mu,\bfmu}(\tau, \btau)=
\sum_{\sum_{i=1}^n q_i=\mu+\kappa r/2 \atop q_i\in \kappa r_i \IZ+\mu_i+\kappa r_i/2} 
\Sym\Bigl\{\scR_n(\bfhgam;\tau_2)\Bigr\}\, e^{\pi\I \tau Q_n(\bfhgam)}.
\label{redefRn}
\ee
The redefinition \eqref{def-tgi} also affects the multiplier system which is now
given by \eqref{mult-thr}.

\section{Anomalous coefficients}
\label{sec-ansatz}

We expect that for each D4-brane charge $r$, the anomaly equation fixes the generating function
$\tlh_{r,\mu}$ up to a modular ambiguity which in turn can be fixed by other means, e.g.
by computing first few terms in the Fourier expansion of $\tlh_{r,\mu}$.
In other words, we can represent 
\be
\tlh_{r,\mu}=\than_{r,\mu}+\thh_{r,\mu},
\label{han}
\ee
where $\than_{r,\mu}$ is a depth $r-1$ mock modular form satisfying \eqref{exp-twhh},
while $\thh_{r,\mu}$ is pure modular.
The problem however is that the r.h.s. of \eqref{exp-twhh} depends on the full generating functions 
$\tlh_{r_i,\mu_i}$ with $r_i<r$ and hence on all $\thh_{r_i,\mu_i}$ which remain unknown at this point.
Therefore, $\than_{r,\mu}$ must also depend on them, and what we can do at best is to find $\than_{r,\mu}$ 
up to these modular functions.
To achieve this goal, we first parametrize the dependence of $\tlh_{r,\mu}$ on $\thh_{r_i,\mu_i}$
by holomorphic functions $\gi{\bfr}_{\mu,\bfmu}(\tau)$ which we call anomalous coefficients,
characterize them by anomaly equations similar to \eqref{exp-twhh}, and then solve 
these equations. 
In this section we perform the first two steps and leave the third one to the subsequent sections.
The main result is captured by the following

\begin{theorem}\label{thm-ancoef}
Let $\gi{\Nr}_{\mu,\mu'}=\delta_{\mu,\mu'}$ and $\thh_{\Nr,\mu}$ be a set of holomorphic modular forms.
Then 
\be
\tlh_{\Nr,\mu}(\tau)= \sum_{n=1}^\Nr\sum_{\sum_{i=1}^n \Nr_i=\Nr}
\sum_{\bfmu}
\gi{\bfr}_{\mu,\bfmu}(\tau)
\prod_{i=1}^n \thh_{\Nr_i,\mu_i}(\tau),
\label{genansatz}
\ee
is a depth $r-1$ modular form with completion of the form \eqref{exp-twhh}
provided $\gi{\bfr}_{\mu,\bfmu}$ are depth $n-1$ mock modular forms (where $n$ is the number of charges $r_i$)
with completions satisfying
\be
\whgi{\bfr}_{\mu,\bfmu}=
\Sym\Bigg\{\sum_{m=1}^n 
\sum_{\sum_{k=1}^m n_k=n}
\sum_{\bfnu}\trmRi{\bfs}_{\mu,\bfnu}
\prod_{k=1}^m \gi{\frr_k}_{\nu_k,\frm_k}
\Bigg\},
\label{compl-gi}
\ee
where\footnote{Note that while the sets $\bfr$ and $\bfmu$ have $n$ elements, 
	the sets $\bfs$ and $\bfnu$ have only $m\le n$ elements. 
	To comprehend the structure of the equation \eqref{compl-gi}, it might be 
	useful to use the fact that the sum on its r.h.s. is equivalent to the sum over
	rooted trees of depth 2 with leaves labelled by charges $r_i$ and other vertices
	labelled by the sum of charges of their children. Using this labelling, we assign 
	the function $\trmRi{\bfs}_{\mu,\bfnu}$ to the root vertex and 
	the anomalous coefficients to the vertices of depth 1 with arguments determined by the charges 
	of their children.
	Then the contribution of a tree is given by the product of the weights of its vertices.
     See Fig. \ref{fig-g-trees}. } 
\be
j_k=\sum_{l=1}^{k-1} n_l,
\qquad
\Ms_k=\sum_{i=1}^{n_k} \Nr_{j_k+i},
\qquad
\begin{array}{c}
\frr_k=(\Nr_{j_k+1},\dots, \Nr_{j_{k+1}}),
\\
\frm_k=(\mu_{j_k+1},\dots,\mu_{j_{k+1}}).
\vphantom{\sum\limits^{a}}
\end{array}
\label{split-rs}
\ee
\end{theorem}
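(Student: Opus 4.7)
The plan is a direct algebraic verification: I define a candidate completion of $\tlh_{\Nr,\mu}$ by inserting the completions $\whgi{\bfr}$ from \eqref{compl-gi} into the ansatz \eqref{genansatz}, and show that it equals the right-hand side of \eqref{exp-twhh} after $\tlh_{\Nr_i,\mu_i}$ there is itself expanded via the ansatz. Once this algebraic equality is in place, the modularity, the holomorphic anomaly, and the depth of $\tlh_{\Nr,\mu}$ all follow from the corresponding properties of \eqref{exp-twhh} already established in \cite{Alexandrov:2018lgp,Alexandrov:2024jnu}.

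Concretely, starting from the right-hand side of \eqref{exp-twhh} with each $\tlh_{s_k,\nu_k}$ expanded via the ansatz, I would reorganize the nested sums as follows. The outer sum over the root-level decomposition $s_1+\dots+s_m=\Nr$, combined with, for each $k$, the inner decomposition $s_k=\sum_j \rho_{kj}$ produced by the ansatz, is naturally parametrized by a 2-level rooted tree whose root has $m$ children of charges $s_k$, and whose child $k$ has $n_k$ leaves of charges $\rho_{kj}$. Flattening the leaves into a single ordered tuple $(\Nr_1,\dots,\Nr_n)$ with $n=\sum_k n_k$, the data $(m,\bfs,\bfnu;\,n_k,\{\rho_{kj}\},\{\sigma_{kj}\})$ are in bijection with $(n,\bfr,\bfmu;\,m,\bfn,\bfnu)$ appearing in \eqref{compl-gi}, the groupings being related via \eqref{split-rs}. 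Under this bijection the expression matches, term by term, the one obtained by substituting \eqref{compl-gi} into \eqref{genansatz}, except for the presence of the $\Sym$ operator.

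The only nontrivial point is therefore the reconciliation of the $\Sym$. This is handled by observing that the leaf-level weight $\prod_i \thh_{\Nr_i,\mu_i}$ is manifestly invariant under joint permutations of the pairs $(\Nr_i,\mu_i)$, while the ambient sums $\sum_\bfr\sum_\bfmu$ run over all ordered tuples; a relabeling of the summation variables by the permutation $\pi$ then shows that $\Sym$ can be absorbed into these sums and effectively dropped, after which the two expressions coincide.

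With the candidate completion thus identified with \eqref{exp-twhh}, its modular transformation law is inherited from the known behavior of the theta coefficients $\trmRi{\bfr}$ and the multiplier system \eqref{mult-thr} of the constituents, while the fact that $\whgi{\bfr}-\gi{\bfr}$ decays as $\btau\to-\I\infty$ (guaranteed by the iterated-integral structure of the generalized error functions entering $\trmRi{\bfs}$) ensures that $\twhh_{\Nr,\mu}-\tlh_{\Nr,\mu}$ does so as well, so $\tlh_{\Nr,\mu}$ is indeed the holomorphic part of a mock modular completion. The depth-$(\Nr-1)$ assertion then follows because each $\thh_{\Nr_i,\mu_i}$ is pure modular: the depth of the $n$-term summand in \eqref{genansatz} equals that of $\gi{\bfr}$, which by hypothesis is $n-1\le \Nr-1$, saturated by the all-ones partition $(1,\dots,1)$. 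The only real obstacle in the whole argument is the bookkeeping of the $\Sym$ operator and of the ordered-composition sums; no non-trivial analytic identity is required.
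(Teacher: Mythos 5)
Your proposal follows essentially the same route as the paper's proof: both compute the completion $\twhh_{\Nr,\mu}$ in two ways (substituting the ansatz \eqref{genansatz} into \eqref{exp-twhh}, versus completing each term of \eqref{genansatz} via \eqref{compl-gi}) and identify the resulting double sums over ordered compositions through the regrouping dictated by \eqref{split-rs}, with the $\Sym$ operator absorbed into the sums over decompositions and residue classes exactly as in the paper. The argument is correct and no gap remains.
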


\lfig{A representation of contributions to the r.h.s. of \eqref{compl-gi} in terms of rooted trees of depth 2.}
{g-trees}{10cm}{fig-g-trees}{-0.7cm}

\begin{proof}
	To prove the theorem, we must show that \eqref{genansatz} and \eqref{compl-gi} give rise to the same
	modular completions $\twhh_{r,\mu}$.
	On one hand, this completion is obtained by substituting the ansatz \eqref{genansatz}
	into the r.h.s. of \eqref{exp-twhh} which gives
	\begin{equation}
		\twhh_{\Nr,\mu}= \sum_{n=1}^\Nr \sum_{\sum_{k=1}^n \Ms_k=\Nr}
		\sum_{\bfmu}
		\trmRi{\bfs}_{\mu,\bfmu}
		\prod_{k=1}^n
		\Bigg(
		\sum_{n_k=1}^{\Ms_k}\sum_{\sum_{l=1}^{n_k}\Nr_{k,l} = \Ms_k}\sum_{\bfnu_k}
		\gi{\frr_k}_{\mu_k, \bfnu_k}\prod_{l=1}^{n_k}\thh_{\Nr_{k,l}, \nu_{k,l}}
		\Bigg).
    \label{firstexpr}
	\end{equation}
	On the other hand, it is obtained by completing each term in \eqref{genansatz} and then using
	the equation \eqref{compl-gi}. This leads to the following expression
	\be
	\begin{split}
		\twhh_{\Nr,\mu}=&\,  \sum_{n=1}^\Nr\sum_{\sum_{i=1}^n \Nr_i=\Nr}
		\sum_{\bfmu}
		\whgi{\bfr}_{\mu,\bfmu}
		\prod_{i=1}^n \thh_{\Nr_i,\mu_i}
		\\
		=&\, \sum_{n=1}^{\Nr} \sum_{\sum_{i=1}^{n} \Nr_i=\Nr} \sum_{\bfmu}
		\sum_{m=1}^n \sum_{\sum_{k=1}^{m}n_k=n} \sum_{\bfnu}
		\trmRi{\bfs}_{\mu, \bfnu} 
		\prod_k \gi{\frr_k}_{\nu_k, \frm_k}
		\prod_{i=1}^n \thh_{\Nr_i,\mu_i},
		\label{secondexpr}
	\end{split}
	\ee
	where we omitted the sign of symmetrization from \eqref{compl-gi} because it is ensured
	by the sums over decompositions $\Nr=\sum_{i=1}^n \Nr_i$ and residue classes $\mu_i$. 
	Clearly, the two equations coincide if one can identify $(r_{k,l},\mu_k,\nu_{k,l})$ in \eqref{firstexpr} 
	with $(r_i,\nu_k,\mu_i)$ in \eqref{secondexpr} and claim that
	\be
	\sum_{n=1}^\Nr \sum_{\sum_{k=1}^n \Ms_k=\Nr}\sum_{n_k=1}^{\Ms_k}\sum_{\sum_{l=1}^{n_k}\Nr_{k,l} = \Ms_k}
	=\sum_{n=1}^{\Nr} \sum_{\sum_{i=1}^{n} \Nr_i=\Nr} \sum_{m=1}^n \sum_{\sum_{k=1}^{m}n_k=n}.
	\ee
	Here on the l.h.s. the sum goes over double decompositions: first, decompose $\Nr$ into $\Ms_k$ 
	and then each $\Ms_k$ into $\Nr_{k,l}$,
	while on the r.h.s. one first sums over decompositions of $\Nr$ into $\Nr_i$ 
	and then over various groupings of the indices $i$
	into sets preserving their ordering.
	It is obvious that the two sums are identical and hence the two representations of the completion,
	\eqref{firstexpr} and \eqref{secondexpr}, coincide.
\end{proof}

This theorem defines a family of holomorphic functions $\gi{\bfr}_{\mu,\bfmu}(\tau)$ 
restricted to have a modular anomaly determined by \eqref{compl-gi}.
Their weight and multiplier system follow from that of $\tlh_{r,\mu}$ (see \eqref{mult-thr})
and are given in \eqref{mult-gr}.
As usual, together with the anomaly these data fix the anomalous coefficients up to a modular ambiguity. 
However, in contrast to the case of the generating functions, there are no any restrictions
on this ambiguity and therefore we can choose any solution of \eqref{compl-gi}.
It is only important that, once a solution has been chosen for small charges, 
it is this solution that is used in the r.h.s. of \eqref{compl-gi}
to determine $\gi{\bfr}_{\mu,\bfmu}$ for higher charges.
In the rest of the paper, our goal will be to find explicit solutions for these functions.

An important observation is that the parameter $\kappa$ and charges $\Nr_i$ enter the functions $\kappa^{n-1}\trmRi{\bfr}_{\mu,\bfmu}$
and the multiplier system \eqref{mult-gr} of the anomalous coefficients $\gi{\bfr}_{\mu,\bfmu}$
always in the form of the product $\kappa \Nr_i$. Indeed, this is true for the quadratic form 
\eqref{defQlr}, the bilinear form \eqref{biform} and the charges \eqref{defmu-shift}. 
On the other hand, the vectors $\bfv_{ij}$ \eqref{defvij} are linear in $\Nr_i$.
Since the generalized error functions \eqref{generrPhiME} 
are independent of the overall scale of the matrix of parameters,
this implies that the functions \eqref{rescEn} and hence $\Ev_n$ are homogeneous of degree $n-1$ in $\Nr_i$.
Then \eqref{solRn} immediately leads to the same property for $\scR_n$,
which in turn confirms the claim for $\trmRi{\bfr}_{\mu,\bfmu}$.
This implies that we can (although do not have to) choose a solution for all $\gi{\bfr}_{\mu,\bfmu}$
satisfying the following property
\be
\gi{\kappa;\bfr}_{\mu,\bfmu}(\tau)=\kappa^{1-n}\,\gi{1;\kappa\bfr}_{\mu,\bfmu}(\tau),
\label{kappato1}
\ee
where $n$ is the number of charges and we explicitly indicated the dependence on $\kappa$
in the upper index.
Using this property would allow to reduce the problem of finding the anomalous coefficients to the case of $\kappa=1$.
Note that for this feature to hold it was crucial to perform the redefinition of section \ref{subsec-shift}.

\section{Partial solutions}
\label{sec-2sol}

In this section we provide a solution for two infinite families of anomalous coefficients.

\subsection{Two charges and Hecke-like operators}
\label{subsec-sol2ch}

Let us first consider the case of two arbitrary charges $r_1$ and $r_2$.
In this case the formula for the modular completion $\whgi{\Nr_1,\Nr_2}_{\mu,\mu_1,\mu_2}$ \eqref{compl-gi}, 
representing the anomaly equation, takes the simple form
\be
\whgi{\Nr_1,\Nr_2}_{\mu,\mu_1,\mu_2}(\tau,\btau)
=\gi{\Nr_1,\Nr_2}_{\mu,\mu_1,\mu_2}(\tau)+\trmRi{\Nr_1,\Nr_2}_{\mu, \mu_1, \mu_2}(\tau, \btau),
\label{whh2}
\ee
and $\gi{\Nr_1,\Nr_2}_{\mu,\mu_1,\mu_2}$ is required to be a mock modular form of weight 3/2
with the multiplier system \eqref{mult-gr} specialized to $n=2$.
The function $\trmRi{\Nr_1,\Nr_2}_{\mu, \mu_1, \mu_2}$ determining the completion 
is easily computable, but for our purposes it is sufficient to consider its derivative
with respect to $\btau$ which specifies the shadow of $\gi{\Nr_1,\Nr_2}_{\mu,\mu_1,\mu_2}$.
It is given in \eqref{shadowRN1N2} and suggests to look for a solution 
of the form
\be
\gi{\Nr_1,\Nr_2}_{\mu,\mu_1,\mu_2}(\tau)=\Nr_0\delta^{(\kappa\Nr_0)}_{\Delta\mu}
\Gi{\kappa_{12}}_{\mu_{12}}(\tau),
\label{sol-n=2}
\ee
where $\Nr_0=\gcd(\Nr_1,\Nr_2)$, $\Delta\mu$ was defined in \eqref{constr-epsi},
$\kappa_{12}$ and $\mu_{12}$ are effective parameters introduced in \eqref{defmu0}, 
$\mu_{12}$ runs over $2\kappa_{12}$ values,
and $\delta^{(n)}_x$ is the mod-$n$ Kronecker delta defined by
\be
\label{defdelta}
\delta^{(n)}_x=\left\{ \begin{array}{ll}
	1\  & \mbox{if } x=0\!\!\!\mod n,
	\\
	0\ & \mbox{otherwise}.
\end{array}\right.
\ee
In particular, the ansatz \eqref{sol-n=2} is consistent with the property \eqref{kappato1}.
If $\Gi{\kappa}_\mu$ is a VV mock modular form
of weight 3/2 with a modular completion satisfying
\be
\tau_2^{3/2}\p_{\btau}\whGi{\kappa}_\mu(\tau,\btau) 
=\frac{\sqrt{\kappa}}{16\pi\I}\, \overline{\ths{\kappa}_\mu}(\btau),
\label{shadG}
\ee
where $\ths{\kappa}_\mu(\tau)$ is the theta series \eqref{deftheta} at $z=0$,
then it is trivial to see that \eqref{sol-n=2} solves the anomaly equation \eqref{whh2}.
The only non-trivial fact to check is that it has the correct multiplier system. 
But this follows directly from Proposition \ref{prop-multsys} 
because the relation \eqref{shadG} ensures that $\Gi{\kappa}_\mu$ has the multiplier system 
$\Mi{\kappa}_{\mu \nu}$ \eqref{STusual} conjugate to that of $\ths{\kappa}_\mu$. 

As a result, we have reduced the problem of finding the anomalous coefficients for arbitrary two charges 
to exactly the same problem that was studied in \cite{Alexandrov:2022pgd} for charges $r_1=r_2=1$,
in which case $\kappa_{12}=\kappa$.
It was found that for any $\kappa$ equal to a power of a prime integer, $\Gi{\kappa}_\mu$
is determined by the generating series $H_\mu$ ($\mu=0,1$) of Hurwitz class numbers\footnote{An explicit formula
for the generating series can be found in 
\cite[Eq.(1.12)]{bringmann2007overpartitionsclassnumbersbinary}
and its mock modular properties have been established in \cite{Zagier:1975,hirzebruch1976intersection}.}  
through the action on it by a certain modification of the Hecke-like operator introduced in 
\cite{Bouchard:2016lfg,Bouchard:2018pem}.
However, it turns out that a solution of this problem for generic $\kappa$ has already been found
in the seminal paper \cite{Dabholkar:2012nd}. 
More precisely, that paper looked for mock modular forms with 
shadow proportional to $\ths{\kappa}_\mu$
and further restricted to have the slowest possible asymptotic growth of their Fourier coefficients. 
Such functions have been called mock modular forms of optimal growth.
In our case we do not have to impose any restrictions on the asymptotic growth.
But since any solution of \eqref{shadG} is equally suitable, we can take the one provided 
by \cite{Dabholkar:2012nd}. All other solutions should differ just by a pure modular form.

In the rest of this subsection we present the formula for the mock modular forms of optimal growth 
found in \cite{Dabholkar:2012nd} adjusting (and correcting) some normalization 
factors.\footnote{Strictly speaking, \cite{Dabholkar:2012nd} worked in terms of {\it mock Jacobi forms} 
	rather than mock modular forms. However, the latter can be easily extracted from the former 
	by means of the theta expansion \eqref{thetaexp}.}
To this end, let 
\begin{itemize}
	\item
	$\omega(\kappa)$ be the number of distinct prime divisors of $\kappa$, i.e.
	\be
	\omega(\kappa)=\#\{p\, :\ p \mbox{ is prime and }p|\kappa \};
	\ee
	
	\item
	$\mu(d)$ be the M\"obious function given by
	\be
	\mu(d)=\left\{ \begin{array}{ll}
		+1 \quad & \mbox{if $d$ is a square-free with an even number of prime factors},
		\\
		-1 & \mbox{if $d$ is a square-free with an odd number of prime factors},
		\\
		0 & \mbox{if $d$ has a squared prime factor};
	\end{array}
	\right.
	\ee
	
	\item
	$\hcTi{d}_{r}$ be a Hecke-like operator given, when acting on modular forms 
	(not necessarily holomorphic)
	of weight $(w,\bw)$ and multiplier system $\Mi{d}_{\mu \nu}$, by
	\be
	\hcTi{d}_{r}=\sum_{s^2| r \atop \gcd(s,d)=1} \mu(s)\,s^{w+\bw-1/2}\,U_s \cT_{r/s^2}
	\label{def-Hecke-hT}
	\ee
	with
	\be
	(U_s[\hi{\kappa}])_\mu(\tau)=\delta^{(s)}_{\mu}\,  \hi{\kappa}_{\mu/s}(\tau),
	\qquad
	\mu=0,\dots,2s^2 \kappa-1,
	\label{defHeckeU}
	\ee
	and
	\be
	(\cT_r[\hi{\kappa}])_\mu(\tau)=\sum_{a,d>0 \atop ad=r}\(\frac{\sqrt{r}}{d}\)^{w+\bw+\hf}\sum_{b=0}^{d-1}
	\delta^{(1)}_{\mu/a} \, e^{\frac{\pi\I b}{2\kappa a r}\,\mu^2}
	\hi{\kappa}_{\mu/a} \(\frac{a\tau+b}{d}\),
	\qquad
	\mu=0,\dots,2r\kappa-1.
	\label{defHecke-exp}
	\ee
	We relate these operators to the ones defined in \cite{Dabholkar:2012nd} and acting on Jacobi forms 
	in appendix \ref{ap-Hecke}. One can also check that for $r$ prime power, $\hcTi{1}_{r}$ coincides with 
	the modification of $\cT_r$ introduced in \cite{Alexandrov:2022pgd}.
	
\end{itemize}
In terms of these quantities, the mock modular forms of optimal growth are given by
\be
\Gi{\kappa}_\mu=2^{-\omega(\kappa)} \sum_{d|\kappa\atop \mu(d)=1} \(\hcTi{d}_{\kappa/d}\[\cGi{d}\]\)_\mu,
\label{DMZ}
\ee
where
$\cGi{d}$ are VV mock modular forms of weight 3/2 with multiplier system $\Mi{d}_{\mu \nu}$.
Thus, for each square-free integer with an even number of prime factors, such as 1, 6, 10, 14, 15, etc.,
one needs to provide such a mock modular form.
The first two of them turn out to be well-known functions:
for $d=1$ it is (the doublet of) the generating series of Hurwitz class numbers,
\be
\cGi{1}_\mu(\tau)=H_\mu(\tau),
\ee
and for $d=6$ it has the following explicit expression 
\be
\cGi{6}_\mu(\tau)=\frac{\chi_{12}(\mu)}{12}\, \hi{6}(\tau),
\label{def-cG6}
\ee
where
\be
\chi_{12}(\mu)=\left\{ \begin{array}{ll}
	+1 \quad & \mbox{if } \mu=\pm 1 \mod 12,
	\\
	-1 & \mbox{if } \mu=\pm 5 \mod 12,
	\\
	0 & \mbox{if } \gcd(\mu,12)>0,
\end{array}
\right.
\ee
and 
\be
\hi{6}(\tau)=\frac{12 F_2^{(6)}(\tau)-E_2(\tau)}{\eta(\tau)}
\ee
is a mock modular form of weight 3/2 with shadow proportional to the Dedekind eta function $\eta(\tau)$, 
which is defined in terms of the quasimodular Eisenstein series $E_2(\tau)$ and the function
\be
F_2^{(6)}(\tau)=-\sum_{r>s>0}\chi_{12}(r^2-s^2)\, s\, \q^{rs/6}\, . 
\ee
For many other functions $\cGi{d}$, \cite{Dabholkar:2012nd} determined their first Fourier coefficients,
however we are not aware about any explicit expressions for their generating series.

\subsection{Unit charges and Vafa-Witten}
\label{subsec-solVW}

Let us now consider the case of $n$ charges $r_i$ all equal to 1. In addition, we also restrict ourselves
to CYs with the intersection number $\kappa=1$. 
A crucial simplification in this case is that one can drop all indices $\mu_i$ 
because they take only $\kappa r_i=1$ value. Therefore, the corresponding anomalous coefficients
can be denoted simply as $g_{n,\mu}\equiv\gi{1,\dots,1}_{\mu}$.
Another feature of this set of anomalous coefficients is that the anomaly equations for $g_{n,\mu}$
form a closed system and do not involve other anomalous coefficients.
Moreover, it is easy to see that in this sector the anomaly equation \eqref{compl-gi} 
becomes identical to \eqref{exp-twhh} under the identification $g_{n,\mu}\leftrightarrow \tlh_{n,\mu}$ 
and thus takes the form 
\be
\whg_{n,\mu}=
\sum_{m=1}^n \sum_{\sum_{k=1}^m n_k=n}
\sum_{\bfmu}\trmRi{\bfn}_{\mu,\bfmu}
\prod_{k=1}^m g_{n_k,\mu_k}.
\label{exp-whtg}
\ee

The case $n=2$ has already been analyzed in the previous subsection. It follows from the results 
presented there, and in agreement with \cite{Alexandrov:2022pgd}, that
\be  
g_{2,\mu}=H_\mu,
\qquad 
\mu=0,1.
\ee
The vector valued function $H_\mu$ appearing here is known not only as the generating series of Hurwitz class numbers, 
but also as the (normalized) generating series of $SU(2)$ Vafa-Witten invariants on $\IP^2$, namely \cite{Vafa:1994tf}
\be  
\vwh_{2,\mu}=3 (h^{\rm VW}_1)^2 H_\mu ,
\ee
where $\vwh_{n,\mu}$ denotes the generating series of $SU(n)$ VW invariants and $h^{\rm VW}_1=\eta^{-3}$.
Combining the two relations, one obtains 
\be
g_{2,\mu}=\frac13\, \vwgi{2,\mu},
\label{relVW-n=2}
\ee
where we introduced the normalized generating series
\be
\vwgi{n,\mu}(\tau) = \eta^{3n}(\tau)\,\vwh_{n,\mu}(\tau).
\label{def-VWnorm}
\ee
As we show below, the relation \eqref{relVW-n=2} is not an accident, but a particular case 
of a more general relation between $g_{n,\mu}$ and $\vwgi{n,\mu}$.

Let us recall that the VW invariants count the Euler characteristic of moduli spaces of instantons 
in a topological supersymmetric gauge theory on a complex surface $S$ obtained from the usual $N=4$ super-Yang-Mills 
by a topological twist \cite{Vafa:1994tf}. The partition function of the theory reduces to the generating 
series of VW invariants and one could expect that it must be a modular form as a consequence of S-duality of 
the $N=4$ super-Yang-Mills. However, it turns out that on surfaces with $b_2^+(S)=1$, which includes $S=\IP^2$,
there is a modular anomaly \cite{Vafa:1994tf,Dabholkar:2020fde}. 
Its precise form can be established from the fact that the VW invariants on $S$ coincide with 
the D4-D2-D0 BPS indices on the non-compact CY given by the canonical bundle over $S$ 
\cite{Minahan:1998vr,Alim:2010cf,gholampour2017localized},
which in turn can be obtained from a compact CY given by an elliptic fibration over $S$
in the limit of large fiber. Since the modularity of the D4-D2-D0 BPS indices on such compact CY 
is governed by a generalization of \eqref{exp-whh} or \eqref{exp-twhh} to $b_2>1$, 
the generating series of VW invariants are subject to the same anomaly equation \cite{Alexandrov:2019rth}.

Furthermore, since in the local limit where the elliptic fiber becomes large 
the only divisor which remains finite is $[S]$, the D4-brane charges belong to the one-dimensional lattice,
and if $b_2(S)=1$, as is the case for $\IP^2$, the lattice of D2-brane charges is also one-dimensional.
Thus, for $S=\IP^2$ one reduces to the ``one-dimensional" case captured by the anomaly equation \eqref{exp-twhh}
with $\kappa=[H]^2=1$ where $[H]$ is the hyperplane class of $\IP^2$. 
However, the fact that the anomaly equation arises as a limit of a compact CY with $b_2>1$
does lead to two modifications: the second term in the spectral flow decomposition \eqref{defmu-shift}
and the Dirac product of charges \eqref{def-gammaij} both get an additional factor of $-3$, 
which can be traced back to the value of the first Chern class $c_1(\IP^2)=3[H]$
\cite[Ap.F]{Alexandrov:2019rth}.\footnote{Strictly speaking, \cite{Alexandrov:2019rth}
analyzed generating functions of {\it refined} VW invariants (see \S\ref{subsec-ref}) which 
count Betti numbers of moduli spaces of instantons. However, the presented results
are easily recovered in the unrefined limit.}
Thus, if one denotes the functions $\trmRi{\bfn}_{\mu,\bfmu}$
with these two modifications implemented by $\trmRVWi{\bfn}_{\mu,\bfmu}$, 
then the normalized generating series of VW invariants satisfy
\be
\whvwg_{n,\mu}=
\sum_{m=1}^n \sum_{\sum_{k=1}^m n_k=n}
\sum_{\bfmu}\trmRVWi{\bfn}_{\mu,\bfmu}
\prod_{k=1}^m \vwgi{n_k,\mu_k}.
\label{exp-whVW}
\ee

The first modification can actually be undone by a simple shift of the spectral flow parameter.
On the other hand, the second one is equivalent to multiplying the vectors $\bfv_{ij}$ \eqref{defvij}
by $-3$. Under this rescaling of parameters, the functions \eqref{rescEn} simply
get an overall factor $(-3)^{n-1}(-1)^{n-1}=3^{n-1}$. Thus, one concludes that 
\be 
\trmRVWi{\bfn}_{\mu,\bfmu}=3^{m-1}\trmRi{\bfn}_{\mu,\bfmu},
\ee
where $m$ is the number of charges which the functions depend on.
Substituting this into \eqref{exp-whVW} and comparing to \eqref{exp-whtg},
one finds that the two equations become identical provided one identifies\footnote{The freedom
to include in this relation a constant factor $c^n$ allowed by the equations 
is fixed by the normalization conditions $g_{1}=\vwgi{1}=1$.} 
\be
g_{n,\mu}=3^{1-n}\vwgi{n,\mu}(\tau).
\label{rel-tggP2-n}
\ee
This result is consistent with \eqref{relVW-n=2} and provides an explicit solution for 
the anomalous coefficients with $r_i=\kappa=1$.

\section{Solution via indefinite theta series}
\label{sec-sol-indef}

\subsection{Motivation and strategy}
\label{subsec-motiv}

In the previous section we have found solutions for 
two infinite families of anomalous coefficients: with two arbitrary charges and 
with arbitrary number of charges, but all set to 1 together with the intersection number.
It is natural to try extending these solutions to more general cases. In particular, one could expect that 
a solution for the case with $r_i=1$ but $\kappa$ an arbitrary prime number should be 
described by the action of Hecke-like operators similar to \eqref{def-Hecke-hT} on the 
normalized generating functions of $SU(n)$ VW invariants on $\IP^2$.
However, we have not been able to show this and all other attempts to extend the above constructions
also failed.\footnote{At technical level, there are two main complications appearing for $n>2$. First, 
$\gi{\bfr}_{\mu,\bfmu}$ do not reduce to a vector-like object and keep a non-trivial tensor structure
(cf. \eqref{sol-n=2}). Second, the action of Hecke-like operators on a product of functions is not factorized,
so that applying them to the r.h.s. of \eqref{compl-gi}, for terms with $1<m<n$, one cannot proceed in an iterative way.}
Therefore, we change the strategy and suggest an approach which works in general.
It is based on the use of indefinite theta series and is similar 
to the solution of the same kind of modular anomaly equation for the generating functions of VW invariants 
constructed in \cite{Alexandrov:2020bwg,Alexandrov:2020dyy}.

An indefinite theta series is defined as a sum over a lattice $\bbLambda$
endowed with a quadratic form $\kbbm^2$ of indefinite signature,\footnote{Note the unusual minus
	sign in the exponential. The same minus sign appears also in \eqref{gentheta}.  
	This convention follows the conventions used in the previous works on this topic
	and can be traced back to the natural quadratic form induced on D4 and D2-brane charges on a compact CY.
	In this convention, the usual convergent theta series with a trivial kernel correspond to {\it negative} definite 
	quadratic forms. \label{foot-sign}} 
\be
\vth_\bbmu(\tau)=\sum_{\kbbm\in\bbLambda+\bbmu} \Phi(\kbbm;\tau_2)\, e^{-\pi\I\tau \kbbm^2},
\label{def-indef}
\ee 
where $\bbmu\in\bbLambda^*/\bbLambda$ labels its different components.
The kernel $\Phi(\kbbm)$ can be a non-trivial function of $\tau_2=\Im\tau$ 
and must ensure convergence of the sum.
In fact, it is very natural to use such theta series to represent solutions of our modular anomaly equations because
for $b_2>1$ the functions analogous to $\rmRi{\bfr}_{\mu,\bfmu}$ 
have precisely the form \eqref{def-indef}. This is also true for $b_2=1$ (cf. \eqref{defRn} or \eqref{redefRn}), 
but in this case the relevant quadratic form coincides with $-Q_n$
\eqref{defQlr} and has a definite signature. But since it is {\it positive} definite, 
this case also calls for the use of indefinite theta series.

The anomalous coefficients we are looking for, and hence the indefinite theta series representing them, 
must be holomorphic in $\tau$. The only way to make \eqref{def-indef} holomorphic and convergent simultaneously
is to restrict the sum to the negative definite cone of the lattice, 
which can be done by choosing the kernel $\Phi(\kbbm)$ to be an appropriate combination of sign functions.
An example of such kernel is provided by Theorem \ref{th-conv}
and is characterized by two sets of vectors $\{\vbbm_{s,i}\}$, $s=1,2$.
As we will see below, one set is determined by the same vectors $\bfv_e$ \eqref{defue} that 
define the functions $\rmRi{\bfr}_{\mu,\bfmu}$, while the second set must consist of
{\it null} vectors, i.e. satisfying $\vbbm_{2,i}^2=0$.
This immediately implies that the lattice $\bbLambda$ cannot be the one that appears in 
the definition of $\rmRi{\bfr}_{\mu,\bfmu}$ \eqref{defRn} and will be denoted below by $\bfLami{\bfr}$,
since the numbers of positive and negative
eigenvalues of the quadratic form must be both non-vanishing for null vectors to exist.

This can be achieved by the so-called {\it lattice extension}, which is a standard trick in the theory
of mock modular forms \cite{Zwegers-thesis}. The idea is that the original problem defined on a lattice $\Lambda$ 
is reformulated on a larger lattice $\Lambda_{\rm ext}=\Lambda\oplus \Lambda_{\rm ad}$ 
that admits a solution in terms of indefinite theta series and, because $\Lambda_{\rm ext}$ is a direct sum,
such solution is expected to be reducible to a solution on $\Lambda$. 
However, if the discriminant group $D_{\rm ad}=\Lambda_{\rm ad}^*/\Lambda_{\rm ad}$ is non-trivial,
the reduction to the original lattice is possible only if the solution on the extended lattice satisfies 
certain identities ensuring that components of the solution labelled by different elements of 
$D_{\rm ad}$ reduce to the same functions. 
For example, this is the case for the generating functions of VW invariants
where the invariants on $\IP^2$ can be obtained from those on the Hirzebruch surface $\Fb_1$
because the latter satisfy the so-called blow-up identities \cite{BMR2015,Alexandrov:2020dyy}.
However, for a general solution on $\Lambda_{\rm ext}$ this is not the case. 
Therefore, we should require triviality of $D_{\rm ad}$,
which in turn requires that, if $\Lambda_{\rm ad}=\IZ^{d_{\rm ad}}$, then the corresponding 
quadratic form is given by (minus) the identity matrix.

In our case $\Lambda=\bfLami{\bfr}$
with quadratic form $-Q_n$ and $\Lambda_{\rm ad}$ should be chosen so that to ensure the existence 
of a null vector on $\bbLambda=\Lambda_{\rm ext}$. 
One could think that it is sufficient to take $\Lambda_{\rm ad}=\IZ$
with the quadratic form $-x^2$. But actually it is not because, for a theta series to converge, 
the null vector appearing in its kernel (possibly after rescaling) should {\it belong to the lattice}.
Otherwise, the indefinite theta series would diverge due to accumulation of lattice points near 
the null cone leading to an infinite number of terms of the same strength
(see \cite[\S B.3]{Alexandrov:2017qhn} for an illustrative example).
Thus, typically, the dimension of $\Lambda_{\rm ad}$ must be non-trivial.
The simplest possibility would be to take $d_{\rm ad}=\bfv^2$ where $\bfv\in \Lambda$
is a vector with the minimal norm. This would ensure that $(\bfv,\vec 1)\in\Lambda_{\rm ext} $,
where the vector $\vec 1$ has $d_{\rm ad}$ components all equal to 1, is a null vector.
However, this is not always the optimal choice and in our case it is actually inconsistent 
with the iterative structure of the equations \eqref{compl-gi}.
Below in \S\ref{subsec-extlat} we propose a lattice extension satisfying all the requirements
discussed above and adapted to our system of equations.

But this is not the end of the story. The problem is that even if the null vector belongs to the lattice, 
this does not ensure the convergence yet. The additional divergence comes from the sum over the sublattice
$\IZ \vbbm\subset\bbLambda$, where $\vbbm\in \bbLambda$ is the null vector.
This is easy to see for \eqref{def-indef} with quadratic form of signature $(1,n_-)$ and the kernel
$\Phi(\kbbm)=\sgn(\vbbm\ast\kbbm)-\sgn(\vbbm'\ast\kbbm)$.
A way out is to consider {\it Jacobi forms}  
instead of the usual modular forms.
They depend on an additional elliptic parameter $z$ transforming under $SL(2,\IZ)$ as
$z\mapsto z/(c\tau+d)$. For theta series, the elliptic transformation property of Jacobi forms 
fixes the dependence on $z$ as shown in \eqref{gentheta} (with $\zbbm=\bbtheta z$). 
In particular, it shifts the lattice vector $\kbbm$ in the kernel
and introduces an exponential $z$-dependent factor. Together these two changes allow to avoid 
the divergence due to the null vector, which manifests now as a pole at $z=0$.
Since eventually we are interested in the limit $z=0$, these poles should be cancelled 
by combining the indefinite theta series with certain {\it Jacobi-like modular forms}
(see \S\ref{ap-Jacobi} for the definition of Jacobi and Jacobi-like forms).
The latter have the same modular transformations as Jacobi forms, but they are not required 
to satisfy the elliptic property, which in our context is irrelevant since we care only about the behavior near $z=0$.
Note that, apart from relaxing the elliptic property,
exactly the same strategy to combine indefinite theta series constructed from null vectors with 
Jacobi forms cancelling poles has been used in \cite{Alexandrov:2020bwg}
to produce the generating functions of VW invariants on Hirzebruch and del Pezzo surfaces 
as solutions of a modular anomaly equation similar to \eqref{exp-whVW}.

The extension to Jacobi forms is known as the so-called {\it refinement} which has a physical interpretation
as switching on an $\Omega$-background \cite{Nekrasov:2002qd,Nekrasov:2010ka}
and has been investigated in the context of modularity of BPS indices in \cite{Alexandrov:2019rth}.
A quite unexpected result of that analysis is that the refinement considerably simplifies 
the coefficients $\scR_n$ determining the modular anomaly.
Thus, the necessity to introduce the refinement should be considered not as a shortcoming, 
but as a virtue which makes the system of anomaly equations more amenable to solution.

However, new complications arise when the refinement is combined with the lattice extension discussed above.
It turns out that for a solution on $\Lambda_{\rm ext}$ to be reducible to a solution on $\Lambda$,
it should have zero of order $d_{\rm ad}$ at $z=0$, which is very difficult to achieve.
Fortunately, there is a trick that allows to avoid this problem: 
one should introduce multiple refinement parameters combined in a vector $\zbbm$ 
so that the indefinite theta series become multi-variable (mock) Jacobi forms as \eqref{gentheta}.
Then, as will be shown below, 
if one sets $\zbbm=(z\bftet,\vec z)$ where $\bftet\in \Lambda$ and $\vec z$
has $d_{\rm ad}$ components and is such that
$(0,\vec z)$ is orthogonal to all null vectors, the lattice $\Lambda_{\rm ad}$ together with the associated
refinement parameters $\vec z$ decouples and the reduction to $\Lambda$ crucially simplifies.

To summarize, we need to perform the following steps:
\begin{enumerate}
	\item
	introduce refinement, 
	\item 
	extend the charge lattice so that it possesses a set of null vectors
	and is consistent with the anomaly equation,
	\item 
	associate with the extension a vector of additional refinement parameters
	satisfying certain orthogonality properties with the null vectors,
	\item 
	solve the refined system of anomaly equations on the extended lattice,
	\item 
	reduce the solution to the original lattice,
	\item 
	take the unrefined limit.
\end{enumerate}
In the next subsection we perform the first step. 
Steps 2 and 3 are done in \S\ref{subsec-extlat}.
The last 3 steps are realized in \S\ref{subsec-case-n=2} in the case of two charges
and in \S\ref{subsec-case-n=3} in the case of three charges.
Finally, in \S\ref{subsec-case-gen} we consider the generic case
for which we perform steps 4 and 5 explicitly, whereas the last step is too cumbersome 
to be done analytically.

\subsection{Refinement}
\label{subsec-ref}

As was mentioned in the previous subsection, a refinement has its physical origin 
in a non-trivial $\Omega$-background. It introduces a complex parameter $y=e^{2\pi\I z}$
which can be thought of as a fugacity conjugate to the angular momentum $J_3$ 
in uncompactified dimensions. At the same time, the BPS indices, which from the mathematical point of view
(roughly) count the Euler number of the moduli spaces of semi-stable coherent sheaves, are replaced 
by {\it refined} BPS indices which are symmetric Laurent polynomials in $y$ constructed 
from the Betti numbers of these moduli spaces. These refined indices are known 
to satisfy similar and even simpler wall-crossing relations as the usual ones 
\cite{ks,Manschot:2010qz,Alexandrov:2018iao}.
But most importantly is that the refinement preserves the modular properties of the generating series 
of BPS indices \cite{Alexandrov:2019rth}. More precisely, after refinement they become {\it mock Jacobi forms} 
for which the role of the elliptic argument is played by the refinement parameter $z$
and the formula for their modular completions takes exactly the same form as in \eqref{exp-whh}, but with the coefficients 
given now by\footnote{We give the coefficients {\it after} performing the same redefinition as in \eqref{def-tgi},
	so that the formula to compare with is \eqref{redefRn} rather than \eqref{defRn},
	but we omit the tilde on $\rmRirf{\bfr}_{\mu,\bfmu}$ to avoid cluttering.} 
\be
\label{Rirf-to-rmRrf}
\rmRirf{\bfr}_{\mu,\bfmu}(\tau,\btau,z)
=
\sum_{\sum_{i=1}^{n} q_i = \mu + \kappa \Nr/2 
	\atop q_i \in \kappa \Nr_i \IZ + \mu_i + \kappa \Nr_i/2} 
\Sym \Bigl\{ 
\scRrf_n(\bfhgam;\tau_2,\beta) \, y^{\sum_{i<j} \gamma_{ij}}
\Bigr\} \,e^{\pi\I \tau Q_n(\bfhgam)},
\ee
where we set $z=\alpha-\tau\beta$ with $\alpha,\beta\in \IR$.
The main difference here, besides the appearance of a power of $y$, 
lies in the form of the coefficients $\scRrf_n$
which we describe in appendix \ref{subsec-refR}. They turn out to be much simpler than their unrefined 
version $\scR_n$.\footnote{More precisely, while the formula \eqref{refsolRn} looks exactly as \eqref{solRn},
	these are the functions $\Er_n$ that are much simpler than their unrefined analogues $\Ev_n$
	(\eqref{rescEnPhi} and \eqref{rescEn} versus \eqref{Erefsim}).}
In particular, while the coefficients $\scR_n$
involve a sum over two types of trees weighted by generalized error functions and their derivatives,
for $\scRrf_n$ one needs only one type of trees and no derivatives.

It should be stressed that the status of the refined BPS indices for compact CY threefolds, 
the case we are really interested in, is unclear. While in the non-compact case they are well-defined 
due to a certain $\IC^\times$ action carried by the moduli space of semi-stable objects, 
in its absence it seems impossible to refine DT invariants in a deformation-invariant way
(see however \cite{ks}). This is not a problem for our construction because we do {\it not}
use the refined BPS indices or their generating functions, but only the coefficients 
\eqref{Rirf-to-rmRrf} characterizing the refined completions.
In other words, we use the existence and properties of $\rmRirf{\bfr}_{\mu,\bfmu}$ as a mere trick 
to produce solutions to the anomaly equations \eqref{compl-gi}.

In particular, the main property which we need is that in the unrefined limit 
$\rmRirf{\bfr}_{\mu,\bfmu}$ develop a zero of order $n-1$ with a coefficient given by $\trmRi{\bfr}_{\mu,\bfmu}$:
\be
\trmRi{\bfr}_{\mu,\bfmu}(\tau,\btau)=
\lim_{y\xrightarrow{}1} (y-y^{-1})^{1-n} \rmRirf{\bfr}_{\mu,\bfmu}(\tau,\btau,z).
\label{unreflim-Rn}
\ee   
Therefore, if we define {\it refined anomalous coefficients} as solutions of the following 
modular anomaly equation
\be
\begin{split}
	\whgirf{\bfr}_{\mu,\bfmu} 
	=&\,
	\Sym\Bigg\{
	\sum_{m=1}^n \sum_{\sum_{k=1}^m n_k=n} \sum_{\bfnu}
	\rmRirf{\bfs}_{\mu,\bfnu}
	\prod_{k=1}^m \girf{\frr_k}_{\nu_k,\frm_k} \Bigg\},
\end{split}
\label{refexp-whgi}
\ee
where $\whgirf{\bfr}_{\mu,\bfmu}$ is required to be a VV Jacobi-like form of weight $\hf(n-1)$,
index\footnote{The weight 
	is obtained from the relation \eqref{lim-ancoef} by taking into account that the $y$-dependent factor 
	in the limit $y\to 1$ is proportional to $z^{1-n}$ and thus increases the weight by $n-1$.  
	The index instead follows from the index of the generating series of refined BPS indices
	which should be equal to $-\chi_r$ \eqref{defL0}, as was established in \cite{Alexandrov:2019rth}.} 
\be 
m_{\bfr}=-\frac{\kappa}{6}\,\biggl(\Nr^3-\sum_{i=1}^n\Nr_i^3\biggr),
\label{index-mr}
\ee
and the same multiplier system as $\gi{\bfr}_{\mu,\bfmu}$ (see \eqref{multrefg}), then 
a solution of \eqref{compl-gi} is obtained from these refined anomalous coefficients as 
\be
\gi{\bfr}_{\mu,\bfmu}(\tau)=
\lim_{y\xrightarrow{}1} (y-y^{-1})^{1-n} \girf{\bfr}_{\mu,\bfmu}(\tau,z).
\label{lim-ancoef}
\ee   
This is easily checked by multiplying \eqref{refexp-whgi} by $(y-y^{-1})^{1-n}$ and taking the unrefined limit.
As a result, we have reformulated the problem of solving one anomaly equation in terms of solving another equation and subsequent evaluation of the unrefined limit.
Importantly, the relation \eqref{lim-ancoef} implies that the unrefined limit exists only if the refined solution
has a zero of order $n-1$ at $z=0$. Although it might be non-trivial to ensure this property, 
for generic set of charges this reformulation makes the problem more feasible. 

Finally, we note that the refined anomalous coefficients can be chosen to 
satisfy a property similar to \eqref{kappato1}, namely,
\be
\girf{\kappa;\bfr}_{\mu,\bfmu}(\tau,z)=\girf{1;\kappa\bfr}_{\mu,\bfmu}(\tau,z/\kappa).
\label{kappato1-ref}
\ee

\subsection{Lattice extension}
\label{subsec-extlat}

The next step is to reformulate the anomaly equation \eqref{refexp-whgi} 
for the refined anomalous coefficients in a way that involves an extended lattice 
possessing a set of null vectors.
To this end, let us introduce:
\begin{itemize}
	\item 
    integer valued function $d_\Nr$ of the magnetic charge (and intersection number $\kappa$)
    such that $d_\Nr\geq 2$;
    \item 
    $d_\Nr$-dimensional vectors $\frt^{(\Nr)}$ such that their components 
    are all non-vanishing integers and sum to zero, $\sum_{\alpha=1}^{d_\Nr}\frt^{(\Nr)}_\alpha=0$.
\end{itemize}
Note that if $d_\Nr$ could be equal to 1, it would be impossible to satisfy the last condition on $\frt^{(\Nr)}$.
The main features of the construction below do not depend on a specific form of $d_\Nr$ and $\frt^{(\Nr)}$, 
and we return to their choice, which is important for the concrete form of the solution, in the end of the subsection.

Let us now consider the anomaly equation
\be
\whchgirf{\bfr}_{\mu,\bfmu}(\tau,z,\bfz) 
=
\Sym \Bigg\{
\sum_{m=1}^n \sum_{\sum_{k=1}^m n_k=n} \sum_{\bfnu}
\rmRirf{\bfs}_{\mu,\bfnu}(\tau,z)
\prod_{k=1}^m \chgirf{\frr_k}_{\nu_k,\frm_k} (\tau,z,\frz_k),
\Bigg\}.
\label{extmodan-manyz}
\ee
where $\frz_k=(z_{j_k+1},\dots,z_{j_{k+1}})$.
Formally it looks the same as \eqref{refexp-whgi}. 
However, there are two differences. First, the new functions $\chgirf{\bfr}_{\mu,\bfmu}$ 
and their completions $\whchgirf{\bfr}_{\mu,\bfmu}$
depend on a vector of additional refinement parameters $\bfz=(z_1,\dots,z_n)$.
Second, we change the normalization for the case $n=1$ which now reads
\be 
\chgirf{\Nr}_{\mu,\mu'}(\tau,z,z')=\delta_{\mu,\mu'}\, 
\prod_{\alpha=1}^{d_\Nr} \theta_1(\tau,\frt^{(\Nr)}_\alpha z'),
\label{newnorm-g}
\ee 
where $\theta_1(\tau,z)$ is the standard Jacobi theta function \eqref{free-theta-1}.
The additional factor in \eqref{newnorm-g} leads to a change in the modular properties of 
$\chgirf{\bfr}_{\mu,\bfmu}$ compared to $\girf{\bfr}_{\mu,\bfmu}$: they
should be higher depth {\it multi-variable} Jacobi-like forms 
of the weight, index (which is now a matrix since there are several elliptic arguments) 
and multiplier system specified in \eqref{multsys-phi}, which can be easily obtained by combining 
\eqref{multrefg} with the modular properties of the  Jacobi theta function given in \eqref{multi-theta-N}.

The important property of the system of equations \eqref{extmodan-manyz} is that any solution
that is regular at $\bfz=0$ gives rise to a solution of \eqref{refexp-whgi}
with the required modular properties. The relation between the two solutions is given by
\be
\girf{\bfr}_{\mu,\bfmu}(\tau,z)= 
\frac{1}{\(-2\pi\eta^3(\tau)\)^{d_\bfr}} 
\(\prod_{i=1}^{n}\frac{\cD^{(d_{\Nr_i})}_{\hf(\frt^{(\Nr_i)})^2}(z_i)}
{d_{\Nr_i}!\prod_{\alpha=1}^{d_{\Nr_i}}\frt^{(\Nr_i)}_\alpha}\) 
\chgirf{\bfr}_{\mu,\bfmu}|_{\bfz=0},
\label{recover-gref}
\ee
where $d_\bfr=\sum_{i=1}^n d_{\Nr_i}$ and 
the differential operators $\cD_m^{(n)}$ are defined in \eqref{defcDmn}.
Indeed, due to Proposition \ref{prop-Jacobi-n} and the fact that $\theta_1(\tau,z)$ and $\eta^3(\tau)$
have identical multiplier systems, 
the product of the differential operators in \eqref{recover-gref} acting on the completion $\whchgirf{\bfr}_{\mu,\bfmu}$ 
produces a Jacobi-like form with weight, index and multiplier system as in \eqref{multrefg}.
Then to see that $\girf{\bfr}_{\mu,\bfmu}$ defined by \eqref{recover-gref} 
satisfies the anomaly equation \eqref{refexp-whgi}, it is sufficient 
to apply this product of the differential operators
to \eqref{extmodan-manyz} and use the fact that each differential operator acts 
only on one of the functions $\chgirf{\frr_k}$ on the r.h.s. of this equation.\footnote{It was 
	to ensure this factorization property that was the main reason for introducing the additional 
	refinement parameters $z_i$ for each magnetic charge.} 
Finally, the standard normalization for the case $n=1$ is reproduced due to the property 
\be
\frac{\cD^{(d_{\Nr})}_{\hf(\frt^{(\Nr)})^2}}
{d_{\Nr}!\prod_{\alpha=1}^{d_{\Nr}}\frt^{(\Nr)}_\alpha}
\prod_{\alpha=1}^{d_\Nr} \theta_1(\tau,\frt^{(\Nr)}_\alpha z)|_{z=0}=\(\p_z \theta_1(\tau,0)\)^{d_\Nr}
=\(-2\pi\eta^3(\tau)\)^{d_\Nr}.
\ee

The main advantage of the new system of equations \eqref{extmodan-manyz} compared to \eqref{refexp-whgi}
is that it corresponds to a lattice extension of the latter.
To see how it comes about, first note that 
the lattice which one sums over in the definition of $\rmRirf{\bfr}_{\mu,\bfmu}$ \eqref{Rirf-to-rmRrf}
can be defined as (see appendix \ref{ap-mainlat} for details)
\be
\bfLami{\bfr}=\left\{\bfk\in \IZ^n \ :\ \sum_{i=1}^n \Nr_i k_i=0 \right\}
\label{def-bfLam}
\ee
and carries the bilinear form
\be
\bfx\cdot\bfy=\kappa\sum_{i=1}^n \Nr_i x_i y_i.
\label{bf-r}
\ee
The new normalization \eqref{newnorm-g} then effectively gives rise to an additional sum 
over the lattice $\IZ^{d_{r_i}}$ with quadratic form $-\diag(1,\dots,1)$ associated to 
each magnetic charge $r_i$. This is especially easy to see for the term in \eqref{extmodan-manyz} with $m=n$
which contains the product of $n$ functions like \eqref{newnorm-g}.
As a result, the overall effect is that the lattice $\bfLami{\bfr}$ is extended to 
\be
\bbLami{\bfr}=\bfLami{\bfr}\oplus \IZ^{d_\bfr}
\label{extlatNr}
\ee
and the extended lattice carries the bilinear form
\be
\xbbm\ast\ybbm=\sum_{i=1}^n \(\kappa \Nr_i x_i y_i-\sum_{\alpha=1}^{d_{\Nr_i}} x_{i,\alpha} y_{i,\alpha}\),
\label{bb-r}
\ee
where $\xbbm=\{x_i,x_{i,\alpha}\}$ with $i=1,\dots ,n$ and $\alpha=1,\dots,d_{\Nr_i}$.

To discuss null vectors on $\bbLami{\bfr}$, one should first specify the function $d_\Nr$.
To motivate its choice, let us consider the case of two charges. It is easy to see that 
\be 
\bfLami{\Nr_1,\Nr_2}=\{(n\rdcr_2,-n\rdcr_1),\ n\in\IZ\},
\label{Lam2}
\ee
where $\rdcr_i=\Nr_i/\gcd(\Nr_1,\Nr_2)$,
and hence is identical to $\IZ$ with quadratic form $\kappa  \Nr \rdcr_1 \rdcr_2 x^2$.
Therefore, the norm of the vector $(1;a_1,\dots, a_1;a_2,\dots,a_2)\in 
\IZ\oplus \IZ^{d_{\Nr_1}}\oplus \IZ^{d_{\Nr_2}}\simeq \bbLami{\Nr_1,\Nr_2}$ is equal to 
$\kappa(\Nr_1\rdcr_2^2+\Nr_2\rdcr_1^2)-d_{\Nr_1}a_1^2-d_{\Nr_2}a_2^2$.
Thus, the most natural choice is to take
$d_\Nr=\kappa\Nr$ which ensures that the above vector is null for $a_1=\pm\rdcr_2$
and $a_2=\pm\rdcr_1$. However, this choice fails to satisfy the condition $\d_\Nr\geq 2$ 
for $\Nr=\kappa=1$. This introduces a complication that this particular case should be treated differently.
There are two natural ways to do this by setting
\be 
a)\ d_\Nr=\left\{\begin{array}{ll} 
	4\Nr, & \quad\kappa=1,
	\\ 
	\kappa\Nr, & \quad\kappa>1,
\end{array}\right.
\qquad
b)\ d_\Nr=\left\{\begin{array}{ll} 
	4, & \quad\kappa=\Nr=1,
	\\ 
	\kappa\Nr, & \quad\kappa \Nr>1.
\end{array}\right.
\label{defdr}
\ee 
The advantage of the second choice is that it preserves the property \eqref{kappato1-ref} and 
allows to work with lattices of smaller dimensions.
On the other hand, it is more involved at the computational level.
Therefore, in the following we proceed with the first choice 
(despite it spoils the property \eqref{kappato1-ref}).\footnote{Another possibility would be to restrict 
	to the case $\kappa=1$ and use the property \eqref{kappato1-ref}, or its unrefined analogue \eqref{kappato1}, 
	to obtain other cases. We prefer to proceed with generic $\kappa$ because, as we will see, 
	due to the additional factor of 4 in the definition of $d_r$, the case $\kappa=1$ appears to be more complicated
	than $\kappa>1$. Besides, it leads to a larger extended lattice which decreases efficiency of numerical computations.}

For the vectors $\frt^{(\Nr)}$ there are plenty of possible choices.
The following two seem to be the most ``canonical":
\be 
a)\ \frt^{(\Nr)}_\alpha=\left\{\begin{array}{ll} 
	1, & \quad \alpha\le \lceil d_\Nr/2\rceil,
	\\ 
	-1, & \quad \lceil d_\Nr/2\rceil<\alpha\le 2\lfloor d_\Nr/2\rfloor,
	\\
	-2 , & \quad \alpha=d_\Nr \mbox{ if $d_\Nr$ is odd} ,
\end{array}\right.
\qquad
b)\ \frt^{(\Nr)}_\alpha=\left\{\begin{array}{ll} 
	1, & \quad \alpha< d_\Nr,
	\\ 
	1-d_\Nr, & \quad \alpha=d_\Nr.
\end{array}\right.
\label{choicetr}
\ee 
In our calculations we will mostly use the first choice.

In the following we will use two sets of vectors belonging to the extended lattice $\bbLami{\bfr}$.
Both of them are extensions of the vectors $\bfv_{ij}\in \bfLami{\bfr}$ 
defined as in \eqref{defvij}
\be 
(\bfv_{ij})_k=\delta_{ki}\Nr_j-\delta_{kj}\Nr_i
\label{def-bfvij}
\ee 
and are given by 
\be 
\begin{split}
	(\vbbm_{ij})_k=&\,(\bfv_{ij})_k,
	\qquad\ \
	(\vbbm_{ij})_{k,\alpha}= 0,
	\\
	(\wbbm_{ij})_k=&\, 2^\eps(\bfv_{ij})_k,
	\qquad
	(\wbbm_{ij})_{k,\alpha}= (\bfv_{ij})_k,
\end{split}
\label{def-bfcvij}
\ee 
where $\eps=\delta_{\kappa-1}$.
Here the factor of $2^\eps$ compensates the factor of 4 appearing in \eqref{defdr} for $\kappa=1$
and ensures that $\wbbm_{ij}^2=0$.
We will also extensively use their normalized versions
\be 
\hvbbm_{ij}=\vbbm_{ij}/\Nr_{ij},
\qquad
\hwbbm_{ij}=\wbbm_{ij}/\Nr_{ij},
\label{def-bfcvij-hat}
\ee 
where $\Nr_{ij}=\gcd(\Nr_i,\Nr_j)$. 
Their scalar products 
with respect to the bilinear form \eqref{bb-r} are found to be
\be
\begin{split}
	&
	\vbbm_{ij}^2=2^{-\eps}\vbbm_{ij}\ast \wbbm_{ij}= \kappa \Nr_i\Nr_j(\Nr_i+\Nr_j),
	\qquad
	\vbbm_{ij}\ast \vbbm_{jk}=2^{-\eps}\vbbm_{ij}\ast \wbbm_{jk}=-\kappa \Nr_i\Nr_j\Nr_k,
	\\
	&\quad\qquad\quad
	\vbbm_{i+j,k}\ast\vbbm_{ij}=\vbbm_{i+j,k}\ast\wbbm_{ij}=\wbbm_{ij}\ast\wbbm_{kl}=0,
	\quad \mbox{for }\forall i,j,k,l,
	\\
	&\qquad\qquad\qquad\qquad
	\vbbm_{ij}\ast\vbbm_{kl}=0, 
	\qquad 
	\mbox{if }\{i,j\}\cap\{k,l\}=\emptyset,
\end{split}
\label{scpr-vw}
\ee
where $\vbbm_{i+j,k}=\vbbm_{ik}+\vbbm_{jk}$ is an extension of $\bfv_{i+j,k}=\bfv_{ik}+\bfv_{jk}$.
We will also often use the notations 
\be 
\kappa_i=\frac{\kappa\Nr \Nr_i\Nr_{i+1}}{\Nr_0\Nr_{i,i+1}}\, ,
\qquad
\kappa_{ij}=\frac{\kappa \Nr_i\Nr_j}{2\Nr_{ij}^2}\,(\Nr_i+\Nr_j),
\qquad
\kappa_{ijk}=\frac{\kappa \Nr_i\Nr_j\Nr_k}{r_{ij}r_{jk}}\,,
\label{defkapij}
\ee 
where $\Nr_0=\gcd(\Nr_1,\dots,\Nr_n)$,
encoding various scalar products of the normalized vectors.
Finally, it is useful to note the property
$\Nr_k \bfv_{ij}+\Nr_i \bfv_{jk}+\Nr_j \bfv_{ki}=0$ which implies
\be 
\begin{split} 
\Nr_k \vbbm_{ij}+\Nr_i \vbbm_{jk}+\Nr_j \vbbm_{ki}=&\, 0,
\\
\Nr_k \wbbm_{ij}+\Nr_i \wbbm_{jk}+\Nr_j \wbbm_{ki}=&\, 0.
\end{split} 
\label{rel-vw}
\ee 
Below we will see how the existence of the null vectors $\wbbm_{ij}$ gives the possibility to construct 
holomorphic theta series associated with the extended lattice and satisfying 
the anomaly equation \eqref{extmodan-manyz}.

\subsubsection{Lattice factorization}
\label{subsubsec-factor}

Before we proceed with solving the extended anomaly equation \eqref{extmodan-manyz}, let us 
perform an important technical step which will be crucial for determining a solution
that has a well-defined unrefined limit.
Namely, let us decompose 
the extended lattice $\bbLami{\bfr}$ into two orthogonal sublattices which we denote by $\bbLami{\bfr}_{||}$ 
and $\bbLami{\bfr}_{\perp}$. The former is taken to be the span of the vectors 
$\hvbbm_{ij}$ and $\hwbbm_{ij}$ introduced in \eqref{def-bfcvij-hat}, 
i.e. all their linear combinations with integer coefficients.
It is clear that it is a direct sum of two lattices
\be 
\label{defLampar}
\bbLami{\bfr}_{||}=\bfLami{\bfr}\oplus \tbfLami{\bfr},
\ee
where $\bfLami{\bfr}$ is generated by $\hvbbm_{ij}$ and is the same as \eqref{def-bfLam},
while $\tbfLami{\bfr}$ is the span of $\hubbm_{ij}=\hwbbm_{ij}-2^\epsilon\hvbbm_{ij}$
and embedded into $\IZ^{d_\Nr}$.\footnote{Note that 
	for our choice $d_\Nr=4^\eps \kappa \Nr$, one has $d_\bfr=d_\Nr$ where $\Nr=\sum_i\Nr_i$,
	which is not generally true for choice b) in \eqref{defdr}.
	This is one of several complications of the second choice.}
The embedding is given by 
\be
(k_1,\dots k_n)\; \mapsto\; (\vnc{k_1}{\di{1}},\dots, \vnc{k_n}{\di{n}}),
\label{embed-k}
\ee
where $\vnc{x}{n}$ denotes the $n$-dimensional vector with all components equal to $x$,
and the resulting lattice is actually isomorphic to $\bfLami{\bfr}$ with quadratic form rescaled by $-4^\eps$.
The lattice $\bbLami{\bfr}_{\perp}$ is taken to be generated by the vectors 
$\ebbm_0$, $\ebbm_{i,\alpha}$, with $i=1,\dots, n$ and $\alpha=1,\dots,d_{\Nr_i}-1$, given by
\be
\begin{split}
	(\ebbm_0)_k=&\,0,
	\qquad
	(\ebbm_0)_{k,\beta}=1,
	\\
	(\ebbm_{i,\alpha})_k=&\,0,
	\qquad
	(\ebbm_{i,\alpha})_{k,\beta} =\delta_{ik}(\delta_{\alpha+1,\beta}-\delta_{\alpha\beta}).
\end{split}
\label{def-basise}
\ee
Using the bilinear form \eqref{bb-r}, it is easy to check that these vectors are indeed orthogonal to 
$\hvbbm_{ij}$ and $\hwbbm_{ij}$.
Moreover, each of the sets $\{\ebbm_{i,\alpha}\}_{\alpha=1}^{d_{\Nr_i}}$
generates a lattice isomorphic to the $A_{N-1}$ root lattice with $N=d_{\Nr_i}$, and all of them are mutually orthogonal
as well as to the vector $\ebbm_0$.
Therefore, we have in addition the following orthogonal decomposition
\be 
\bbLami{\bfr}_{\perp}=\IZ\oplus \IA_{d_{\Nr_1}-1}\oplus \cdots\oplus \IA_{d_{\Nr_n}-1}.
\label{factor-perp}
\ee 
In contrast, the full lattice $\bbLami{\bfr}$ is {\it not} a direct sum of $\bbLami{\bfr}_{||}$ 
and $\bbLami{\bfr}_{\perp}$ because some of its elements require rational coefficients being decomposed 
in the basis of the two sublattices. 
In such situation, to get the full lattice from the sublattices, 
one has to introduce the so called {\it glue vectors}.

According to the general theory \cite{CSbook}, if $\oplus_{a=1}^n\bbLambda^{(a)}$ is a sublattice of 
$\bbLambda$ of the same dimension, the corresponding glue vectors
are given by the sum of representatives of the discriminant groups $\ID^{(a)}=(\bbLambda^{(a)})^*/\bbLambda^{(a)}$
which at the same time belongs to the original lattice, i.e. 
$\glueg_\Asf=\oplus_{a=1}^n\gluegi{a}_\Asf \in\bbLambda$ where $\gluegi{a}_\Asf\in \ID^{(a)}$.
The number of glue vectors is equal to 
\be
N_g=\left|\frac{\prod_{a=1}^n\det\bbLambda^{(a)}}{\det\bbLambda}\right|^{1/2}\, ,
\label{Ng}
\ee 
where $\det\bbLambda=|\bbLambda^*/\bbLambda|$ is the order of the discriminant group 
and is equal to the determinant of the matrix of scalar products of the basis elements.
The decomposition formula of the lattice $\bbLambda$ then reads
\be 
\bbLambda=\bigcup\limits_{\Asf=0}^{N_g-1} \[\mathop{\oplus}\limits_{a=1}^n \(\bbLambda^{(a)}+\gluegi{a}_\Asf\)\].
\label{lat-glue}
\ee 

In our case it takes the form
\be 
\bbLami{\bfr}=\bigcup\limits_{\Asf=0}^{N_g-1} \Bigl[\(\bbLami{\bfr}_{||}+\glueg^{||}_\Asf\)
\oplus \(\bbLami{\bfr}_{\perp}+\glueg^{\perp}_\Asf\)\Bigr].
\label{lat-glue-our}
\ee 
One finds the following lattice determinants
\be 
\begin{split}
	\det\bbLami{\bfr}=&\, (-1)^{d_\Nr}\, \det\bfLami{\bfr},
	\\
	\det\bbLami{\bfr}_{||}=&\, (-4^\eps)^{n-1}\bigl(\det\bfLami{\bfr}\bigr)^2,
	\\	
	\det\bbLami{\bfr}_{\perp}=&\,(-1)^{d_\Nr+1}d_\Nr \prod_{i=1}^n d_{\Nr_i},
\end{split}
\ee
where $\det\bfLami{\bfr}$ is evaluated in \eqref{detbfLam}.
Substituting this into \eqref{Ng}, one finds that the number of glue vectors is given by
\be 
N_g=
\frac{\Nr}{\Nr_0}\,\prod_{i=1}^{n} \di{i}.
\label{number-glue}
\ee 
There is a natural choice of glue vectors for the decomposition \eqref{lat-glue-our}.
Let us fix a $n$-tuple $\bfrho=(\rho_1,\dots,\rho_n)$ such that
$\sum_{i=1}^n \rho_i\Nr_i=\Nr_0$.
Then we represent glue vectors as a sum of several terms 
\be
\glueg_\Asf = \asf_0\glueg_0 + \sum_{i=1}^{n} \glueg_{i,\asf_i},
\label{gluegA}
\ee 
where 
\be
\begin{split}
	(\glueg_{i,\asf})_k=&\,0,
	\qquad
	(\glueg_{i,\asf})_{k,\alpha} =\delta_{ik}\sum_{\beta=1}^\asf \delta_{\alpha\beta},
	\qquad
	\glueg_0=\sum_{i=1}^n\rho_i \glueg_i, 
	\quad \glueg_i\equiv \glueg_{i,\di{i}}.
\end{split}
\label{def-g}
\ee
Thus, a glue vector is labelled by the set  
$\Asf=\{\asf_0,\asf_1,\dots,\asf_n\}$ and the indices take values in the following ranges:
$\asf_0= 0,\dots,\Nr/\Nr_0-1$ and $\asf_i= 0,\dots,\di{i}-1$.
It is trivial to see that the cardinality of the resulting set agrees with the required number
\eqref{number-glue}.\footnote{The geometric origin of these glue vectors is as follows.
	First, let us combine $\tbfLami{\bfr}$ with the one-dimensional lattice generated by $\ebbm_0$
	using the glue vectors $\asf_0\glueg_0$. It is easy to see that the result is the lattice $\IZ^n$
	with the same embedding into $\IZ^{d_\Nr}$ as in \eqref{embed-k}, i.e.
	$$
	\bigcup\limits_{\asf_0=0}^{\Nr/\Nr_0-1} \bigcup\limits_{\ell\in\IZ}
	\Bigl(\tbfLami{\bfr}+\ell\ebbm_0+\asf_0\glueg_0\Bigr)=\IZ^{n}\subset \IZ^{d_\Nr}.
	$$
	Next, one combines the $i$th factor $\IZ$ with $\IA_{\di{i}-1}$ generated by $\ebbm_{i,\alpha}$
	using the glue vectors $\glueg_{i,\asf_i}$, which gives
	$$ 
	\bigcup\limits_{\asf_i=0}^{\di{i}-1} 
	\Bigl(\IA_{\di{i}-1}+\IZ+\glueg_{i,\asf_i}\Bigr)=\IZ^{\di{i}}.
	$$
	Summing over $i$, one obtains $\IZ^{d_\Nr}$ which together with $\bfLami{\bfr}$
	produces the full extended lattice $\bbLami{\bfr}$.
}

The main application of the lattice decomposition \eqref{lat-glue} is a factorization of theta series.
Let us consider a general indefinite theta series as in \eqref{gentheta} 
with a kernel having a factorized form
$\Phi(\xbbm)=\prod_{a=1}^n\Phi_a(\xbbm^{(a)})$
where the upper index $^{(a)}$ on a vector denotes its projection to $\bbLami{a}$.
Then the lattice factorization formula \eqref{lat-glue} implies that
one can split the sum in the definition of the theta series into 
$n$ sums coupled by the additional sum over the glue vectors so that one arrives 
at the following identity for theta series
\be
\vth_{\bbmu}(\tau, \zbbm;\bbLambda, \Phi, \pbbm)=\sum_{\Asf=0}^{N_g-1}
\prod_{a=1}^n \vth_{\bbmu^{(a)}+\gluegi{a}_\Asf}(\tau, \zbbm^{(a)};\bbLami{a}, \Phi_a, \pbbm^{(a)}).
\label{factortheta-gen}
\ee

In this paper we are interested in theta series associated with the extended lattice $\bbLami{\bfr}$
and with other ingredients given by
\be
\begin{split}
	\bbmu=&\,(\bfhmu;0,\dots, 0),
	\hspace{3.3cm}
	\hmu_i=\frac{\mu_i}{\kappa\Nr_i}-\frac{\mu}{\kappa\Nr}+\frac{\rho_i\Delta\mu}{\kappa\Nr_0} \, ,
	\\
	\zbbm=&\, (\bftet z;-\frt^{(\Nr_1)}z_1;\dots;-\frt^{(\Nr_n)}z_n),
	\qquad
	\smash{\bftet=\sum_{i<j}\bfv_{ij},}
	\\
	\pbbm=&\, \(\boldsymbol{0};-1,\dots, -1\)=-\ebbm_0,
\end{split}
\label{thetadata}
\ee
where $\Delta\mu=\mu-\sum_{i=1}^n\mu_i$ as in \eqref{constr-epsi}.
Note that one has the relations
\be 
\bftet^2=-2m_\bfr,
\qquad
\bftet\cdot \bfq=\sum_{i<j}\gamma_{ij}, 
\label{rel-scpr-th}
\ee 
where $\bfq=\bigl(\frac{q_1}{\kappa \Nr_1}\, ,\dots ,\, \frac{q_n}{\kappa \Nr_n}\bigr)$
in terms of the physical charges. They ensure that the factor $e^{2\pi\I \zbbm\ast\kbbm}$ 
in the theta series reproduces the $y$-dependent factor in \eqref{Rirf-to-rmRrf}
and gives rise to the index \eqref{index-mr}. 
Let us also mention here another useful relation. The argument of the kernel in the theta series \eqref{gentheta} 
is $\xbbm=\sqrt{2\tau_2}(\kbbm+\bbbeta)$ where $\kbbm$ runs over the (shifted) lattice.
Therefore, it is useful to introduce $\xbbm_\bbbeta=\xbbm-\sqrt{2\tau_2}\,\bbbeta$ which in our case 
takes the form $\xbbm_\bbbeta=\sqrt{2\tau_2}(\bfq; k_{1,1},\dots, k_{n,\di{n}})$.\footnote{Note that
	the components of $\bfq$ do not sum to zero. Therefore, to see it as an element of $(\bfLami{\bfr})^*$,
	one should use the identification \eqref{rel-charges} silently assumed here.}
Therefore, with respect to the biliniear form \eqref{bb-r}, one finds that
\be 
\xbbm_\bbbeta\ast\vbbm_{ij}=\sqrt{2\tau_2}\gamma_{ij} .
\label{rel-scpr}
\ee

Let us now assume that a kernel $\Phi$ does not depend
on the summation along $\bbLami{\bfr}_{\perp}$, i.e. it satisfies $\Phi(\xbbm)=\Phi(\xbbm_{||})$,
where the indices $||$ and $\perp$ denote projections on $\bbLami{\bfr}_{||}$ 
and $\bbLami{\bfr}_{\perp}$, respectively.
This property is the main motivation to decompose the lattice into these two orthogonal sublattices.
Applying \eqref{factortheta-gen} to our case, one obtains the following factorization property
\be
\vth_{\bbmu}(\tau, \zbbm;\bbLami{\bfr}, \Phi, \pbbm)=
\sum_{\Asf}
\vthls{\bfr}_{\bbmu,\Asf}(\tau, z)\,
\vthps{\bfr}_\Asf(\tau,\bfz),
\label{extg-fact}
\ee
where we introduced
\be 
\begin{split}
\vthls{\bfr}_{\bbmu,\Asf}(\tau, z)=&\, \vth_{\glueg_\Asf^{||}+\bbmu}\(\tau, \zbbm_{||};\bbLami{\bfr}_{||}, \Phi, 0\),
	\\	
\vthps{\bfr}_\Asf(\tau,\bfz)=&\, \vth_{\glueg_\Asf^{\perp}}\(\tau, \zbbm_\perp;\bbLami{\bfr}_{\perp}, 1, -\ebbm_0\)
\end{split}
\label{defprojtheta}
\ee 
and took into account that $\zbbm_\perp$ is independent of $z$, while $\zbbm_{||}$ is independent of $z_i$
due to $\wbbm_{ij}\ast\zbbm=2^\eps(\bfv_{ij}\cdot \bftet) z$.
It is to achieve this property, we required 
that the components of $\frt^{(\Nr_i)}$ sum to zero, which in turn was the reason to 
introduce the additional factors of 2 in \eqref{defdr} and \eqref{def-bfcvij}.

Furthermore, due to \eqref{factor-perp}, 
the second theta series in \eqref{defprojtheta} has itself a factorized form.
To write it explicitly, let us represent the summation variable as $\kbbm_{\perp}=\ell_0\ebbm_0
+\sum_{i=1}^n\sum_{\alpha=1}^{d_{\Nr_i}-1} \ell_{i,\alpha} \ebbm_{i,\alpha}$
with $\ell_0\in\IZ+\frac{\nu_0}{d_\Nr}+\hf$ and $\ell_{i,\alpha}\in\IZ+\frac{\nu_{i,\alpha}}{d_{\Nr_i}}$.
The variables $\nu_0$ and $\nu_{i,\beta}$ determining the fractional parts
depend on the index $\Asf=\{\asf_0,\asf_1,\dots,\asf_n\}$ of glue vectors. 
The precise dependence can be determined by expanding the glue vectors $\glueg^{\perp}_\Asf$ in the basis \eqref{def-basise}. 
Starting from \eqref{gluegA}, one can then arrive at the following expressions (see \eqref{exp-g})
\be
\nu_0 =4^\eps \kappa\Nr_0  \asf_0 +\sum_{i=1}^n \asf_i,
\qquad
\nu_{i,\alpha}= \alpha\asf_{i}.  
\label{glui-extzi}
\ee
As a result, one obtains
\be
\vthps{\bfr}_\Asf(\tau,\bfz)=\vth^{(d_\Nr)}_{\nu_0(\Asf)}(\tau) 
\prod_{i=1}^{n} \vthA{d_{\Nr_i}}_{\asf_i}(\tau,z_i;\frt^{(\Nr_i)}), 
\label{factor-perptheta}
\ee
where
\bea
\label{3thetaZ}
\vth^{(d)}_{\nu_0}(\tau) &=& \vths{d,1}_{\nu_0}(\tau,0)=
\sum_{\ell_0\in \IZ +\frac{\nu_0}{d}+\hf} 
(-1)^{d\ell_0} \, \q^{\frac{d}{2}\,\ell_0^2} ,
\\
\vthA{N}_\asf(\tau,z;\frt) &=&
\(\prod_{\alpha=1}^{N-1}\sum_{\ell_\alpha\in \IZ +\frac{\alpha\asf}{N}} \)
\q^{\sum\limits_{\alpha=1}^{N-1}\( \ell_{\alpha}^2-\ell_\alpha\ell_{\alpha+1} \)}
y^{\sum\limits_{\alpha=1}^{N-1 } \(\frt_{\alpha+1}-\frt_\alpha \)\ell_{\alpha}} ,
\label{3thetas}
\eea
$\vths{d,p}_{\mu}(\tau,z)$ is the theta series \eqref{Vignerasth},
and in the last equation we used the convention $\ell_N=0$.
A nice feature of this representation is that $z_i$ appears only in 
the theta series defined by the corresponding $A_{N-1}$ lattice.
This significantly simplifies recovering the refined anomalous coefficients $\girf{\bfr}_{\mu,\bfmu}(\tau,z)$
by means of \eqref{recover-gref} because each differential operator acts only on one theta function 
(see \eqref{res-gref}, \eqref{refg3} and \eqref{refgr} below).

Finally, let us note that $n(n-1)/2$ vectors $\hvbbm_{ij}$ (or $\bfhv_{ij}=\bfv_{ij}/\Nr_{ij}$) form an 
overcomplete basis of the $n-1$ dimensional lattice $\bfLami{\bfr}$.
On the other hand, if we restrict to the set $\{\hvbbm_{12},\dots,\hvbbm_{n-1,n}\}$, in general, 
it is {\it not} a basis and its span is only a sublattice in $\bfLami{\bfr}$.
A relation between the two lattices can be described using the same formalism as above based on glue vectors. 
More precisely, one can show that
\be 
\bfLami{\bfr}=
\bigcup\limits_{\bsf_{ij}=0}^{N_{ij}-1} 
\Biggl(\Span\{\hvbbm_{k,k+1}\}_{k=1}^{n-1}+\sum_{i,j=1\atop j-i>1}^{n}\bsf_{ij} \hvbbm_{ij}\Biggr),
\label{recover-L}
\ee
where the number of {\it self-glue vectors} $N_{ij}$ is given in \eqref{valNij}.
The idea is that linear combinations with integer coefficients of $\hvbbm_{i,i+1}$ allow to get 
only multiples of $N_{i,i+2}\hvbbm_{i,i+2}$ and one needs to add $N_{i,i+2}$ glue vectors to get arbitrary multiples of
$\hvbbm_{i,i+2}$. Similarly, all these vectors can be used to get only multiples of $N_{i,i+3}\hvbbm_{i,i+3}$, etc.
Proceeding by iterations, one recovers all vectors of the lattice $\bfLami{\bfr}$.
It is obvious that a formula similar to \eqref{recover-L} holds for $\tbfLami{\bfr}$
with $\hvbbm_{ij}$ replaced by $\hubbm_{ij}$.

Although below we will present a general solution to the extended anomaly equation \eqref{extmodan-manyz}
which will be the subject of the factorization developed in this subsection,
it is instructive first to consider the cases of two and three charges.

\subsection{Two charges}
\label{subsec-case-n=2}

\subsubsection{General solution}

Let us first analyze in detail the case of two charges.
We will use the notations introduced in \S\ref{ap-2charges}:
$\Nr_0=\gcd(\Nr_1,\Nr_2)$, $\rdcr_i=\Nr_i/\Nr_0$, $\rdcr=\Nr/\Nr_0$
and $\kappa_{12}$, $\mu_{12}$ defined in \eqref{defmu0}.\footnote{This definition of $\kappa_{12}$ valid for $n=2$
	obviously agrees with the general one in \eqref{defkapij}.}
Specifying  the anomaly equation \eqref{extmodan-manyz} to $n=2$ and 
substituting the result \eqref{exprR2r} for $\rmRirf{\Nr_1,\Nr_2}_{\mu, \mu_1, \mu_2}$, one finds
\bea
\whchgirf{\Nr_1,\Nr_2}_{\mu, \mu_1, \mu_2} &=& 
\chgirf{\Nr_1,\Nr_2}_{\mu, \mu_1, \mu_2}
+\prod_{i=1}^2\(\prod_{\alpha=1}^{d_{\Nr_i}}\theta_1(\tau,\frt^{(d_{\Nr_i})}_\alpha z_i)\)
\rmRirf{\Nr_1,\Nr_2}_{\mu, \mu_1, \mu_2}
\nn\\
&=&
\chgirf{\Nr_1,\Nr_2}_{\mu, \mu_1, \mu_2}
+ \frac14\, \delta^{(\kappa\Nr_0)}_{\Delta\mu}
\sum_{\sigma=\pm 1}\sum_{k\in \IZ+\frac{\mu_{12}}{2\kappa_{12}}} 
\(\prod_{i=1}^2\prod_{\alpha=1}^{d_{\Nr_i}}\sum_{k_{i,\alpha}\in\IZ+\hf} \)
\biggl[E_1\bigl(2\sqrt{\kappa_{12}\tau_2}(\sigma k+\Nr_0 \beta)\bigr) 
\nn\\
&&\qquad
- \sgn (\sigma k) \biggr] (-1)^{\pbbm\ast\kbbm}
\q^{-\hf\kbbm^2} e^{2\pi \I \zbbm_\sigma \ast \kbbm},
\label{ext-cpmlg2}
\eea
where $\Delta\mu$ is the difference of residue classes defined in \eqref{constr-epsi} 
and we introduced $d_{\bfr}+1$-dimensional vectors 
$\kbbm=(k;k_{1,1},\dots,k_{1,d_{\Nr_1}};k_{2,1},\dots,k_{2,d_{\Nr_2}})$, 
$\pbbm=(0;-1,\dots,-1)$ and 
$\zbbm_\sigma=(\sigma\Nr_0 z ;-\frt^{(\Nr_1)}z_1;-\frt^{(\Nr_2)}z_2)$, 
which are contracted using the bilinear form
\be
\kbbm\ast\kbbm'=2\kappa_{12} kk'-\sum_{i=1}^2\sum_{\alpha=1}^{d_{\Nr_i}} k_{i,\alpha}k'_{i,\alpha}.
\label{qfbfk-zalpha}
\ee
This bilinear form is the image of \eqref{bb-r} upon the isomorphism 
$\bbLami{\Nr_1,\Nr_2}\simeq \IZ\oplus \IZ^{d_{\Nr_1}}\oplus \IZ^{d_{\Nr_2}}$
implied by \eqref{Lam2}. Under the same isomorphism, the vectors \eqref{def-bfcvij-hat} become
\be
 \hvbbm_{12}=\(1;\vnc{0}{d_{\Nr_1}};\vnc{0}{d_{\Nr_2}}\),
	\qquad
 \hwbbm_{12}=\bigl(2^\eps;\vnc{\rdcr_2}{d_{\Nr_1}};-\vnc{\rdcr_1}{d_{\Nr_2}}\bigr),
\label{def-v0-2}
\ee 
where we used the same notation as in \eqref{embed-k}.
Using $\hvbbm_{12}$, or $\vbbm_{12}=\Nr_0\hvbbm_{12}$, 
the argument of the error function can be rewritten as $2\sqrt{\kappa_{12}\tau_2}(k+\sigma\Nr_0 \beta)
=\sqrt{2\tau_2}(\kbbm+\bbbeta_\sigma)\ast\vbbm_{12}/\sqrt{\vbbm_{12}^2}$ where 
we have done the usual decomposition $\zbbm_\sigma=\bbalpha_\sigma-\tau\bbbeta_\sigma$.
As a result, the second term in \eqref{ext-cpmlg2}, up to a $\sigma$-dependent factor 
and a $\beta$-dependent shift in the argument of the sign function, 
acquires the form of the theta series \eqref{gentheta}
associated with the lattice $\bbLami{\Nr_1,\Nr_2}$, residue class 
$\bbmu=\frac{\mu_{12}}{2\kappa_{12}}\hvbbm_{12}$
and kernel 
\be
\PhiRi{\Nr_1,\Nr_2}(\xbbm)=E_1\(\frac{\xbbm\ast\vbbm_{12}}{||\vbbm_{12}||}\)-\sgn(\xbbm_\bbbeta\ast\vbbm_{12}),
\label{kerPhi1?}
\ee
where $||\vbbm||=\sqrt{\vbbm^2}$ is the norm of a vector and $\xbbm_\bbbeta=\xbbm-\sqrt{2\tau_2}\,\bbbeta_\sigma$.
More precisely, we get
\be
\whchgirf{\Nr_1,\Nr_2}_{\mu, \mu_1, \mu_2} = 
\chgirf{\Nr_1,\Nr_2}_{\mu, \mu_1, \mu_2}
+ \frac14\, \delta^{(\kappa\Nr_0)}_{\Delta\mu}
\sum_{\sigma=\pm 1}\sigma \,\vth_{\bbmu}(\tau, \zbbm_\sigma; \bbLami{\Nr_1,\Nr_2}, \PhiRi{\Nr_1,\Nr_2},\pbbm).
\label{ext-cpmlg2th}
\ee

The theta series is not modular because the kernel $\PhiRi{\Nr_1,\Nr_2}(\xbbm)$ fails to satisfy 
the Vign\'eras equation \eqref{Vigdif} due to the presence of the sign function and its weird argument.
This is supposed to be cured by the first term in \eqref{ext-cpmlg2}, which therefore should also be taken 
in the form of a theta series. However, since it must be holomorphic, its kernel must be 
a difference of two sign functions, as required by convergence. 
The first of them can be taken exactly as the one in \eqref{kerPhi1?} so that it cancels 
the sign function spoiling modularity in $\PhiRi{\Nr_1,\Nr_2}$.
But then one remains with the second sign function, say $\sgn(\xbbm\ast\vbbm)$.
It also spoils modularity unless the vector $\vbbm$ is null and, in particular, 
can be identified with $\wbbm_{12}$! 
This is due to the property \eqref{Phinull} of the (generalized) error functions
which is easy to see in the present example: the error function $E_1$ that satisfies the Vin\'eras equation
depends on the normalized vector (see \eqref{kerPhi1?}) and when its norm goes to zero, 
the argument becomes large and $E_1$ reduces to the sign function.

This reasoning implies that a general solution to \eqref{ext-cpmlg2th} is given by 
\be
\chgirf{\Nr_1,\Nr_2}_{\mu, \mu_1, \mu_2}=\chphi^{(\Nr_1,\Nr_2)}_{\mu, \mu_1, \mu_2}
+ \frac14\, \delta^{(\kappa\Nr_0)}_{\Delta\mu}
\sum_{\sigma=\pm 1}\sigma \,\vth_{\bbmu}(\tau, \zbbm_\sigma;\bbLami{\Nr_1,\Nr_2}, \Phii{\Nr_1,\Nr_2}, \pbbm),
\label{sol-cpmlg2th}
\ee
where 
\be
\Phii{\Nr_1,\Nr_2}(\xbbm)=\sgn(\xbbm_\bbbeta\ast\vbbm_{12})-\sgn(\xbbm\ast\wbbm_{12})
\label{kerPhi1}
\ee
and 
$\chphi^{(\Nr_1,\Nr_2)}_{\mu, \mu_1, \mu_2}(\tau, z,\bfz)$ is a holomorphic Jacobi-like form
with the same modular properties as $\whchgirf{\Nr_1,\Nr_2}_{\mu, \mu_1, \mu_2}$.
It represents an inherent ambiguity of solution of the anomaly equation and will be fixed later
by requiring the correct unrefined limit.
The convergence of the theta series is ensured by Theorem \ref{th-conv} and the fact that 
$\vbbm_{12}\ast\wbbm_{12}>0$ (see \eqref{scpr-vw}), while using \eqref{mult-genth}
it is straightforward to check that the weight, index and multiplier system agree with \eqref{multsys-phi}.

\subsubsection{Holomorphic modular ambiguity}
\label{subsubsec-modamb2}

In contrast to the original problem \eqref{compl-gi}, not every solution for $\chgirf{\bfr}_{\mu,\bfmu}$
suits our purposes. The restriction to be imposed is that it must have a well-defined unrefined limit.
More precisely, $\chgirf{\Nr_1,\Nr_2}_{\mu, \mu_1, \mu_2}$ must be regular at $z_i=0$ 
and have a first order zero at $z=0$. It is this condition that should be used to fix the 
holomorphic modular ambiguity $\chphi^{(\Nr_1,\Nr_2)}_{\mu, \mu_1, \mu_2}$.
As we will see below, the second term in \eqref{sol-cpmlg2th} is finite at small $z_i$,
but has a pole at small $z$, so that $\chphi^{(\Nr_1,\Nr_2)}_{\mu, \mu_1, \mu_2}$ does have to be non-trivial.
To extract the pole explicitly, we proceed in several steps.

\paragraph{Factorization and split}\

\noindent
First, note that all ingredients of the theta series in \eqref{sol-cpmlg2th} 
are as in \eqref{thetadata} subject to the isomorphism \eqref{Lam2} and the kernel $\Phii{\Nr_1,\Nr_2}$ 
depends only on the projection $\xbbm_{||}$. 
This allows to apply the factorization property \eqref{extg-fact}.
To this end, note that the theta series $\vthls{\bfr}_{\bbmu,\Asf}$ \eqref{defprojtheta} 
is given by a sum over a two-dimensional lattice 
$\kbbm_{||}=\ell\hvbbm_{12}+\tell\hubbm_{12}$
with $\ell\in\IZ+\frac{\nu}{2\kappa_{12}}$ and $\tell\in\IZ+\frac{\tnu}{2\kappa_{12}}$
where the dependence of the variables $\nu$ and $\tnu$ 
on the index $\Asf=\{\asf_0,\asf_1,\asf_2\}$ of glue vectors follows from \eqref{glueexp2}
and is given by
\be
\nu =\mu_{12},
\qquad	
\tnu =\kappa\Nr_0 \rdcr_1\rdcr_2(\rho_1-\rho_2)\asf_0+4^{-\eps}(\rdcr_2\asf_1-\rdcr_1\asf_2).  
\label{nutnu}
\ee
Therefore, \eqref{sol-cpmlg2th} can be written as 
\be
\chgirf{\Nr_1,\Nr_2}_{\mu, \mu_1, \mu_2}=\chphi^{(\Nr_1,\Nr_2)}_{\mu, \mu_1, \mu_2}
+\frac14\, \delta^{(\kappa\Nr_0)}_{\Delta\mu}
\sum_{\Asf}\(\sum_{\sigma=\pm 1}\sigma \,
\vthls{\kappa_{12}}_{\mu_{12},\tnu(\Asf)}(\tau,\sigma \Nr_0 z) \)\vthps{\bfr}_\Asf(\tau,\bfz).
\label{extg12-fact}
\ee
where $\vthps{\bfr}_\Asf$ is given by \eqref{factor-perptheta}
and 
\be 
\vthls{\kappa}_{\nu,\tnu}(\tau,z)=
\sum_{\ell\in \IZ + \frac{\nu}{2\kappa}} 
\sum_{\tell\in \IZ+ \frac{\tnu}{2\kappa}} 
\Bigl(\sgn(\ell ) - \sgn (\ell- 2^\eps \tell+\beta)\Bigr)
\, \q^{\kappa (4^\eps\tell^2-\ell^2)} y^{2\kappa \ell}.
\label{defvth2}
\ee

Next, we split the theta series \eqref{defvth2} into two parts,
$\vthls{\kappa}_{\nu,\tnu}=
\cvths{\kappa}_{\nu,\tnu}+\tvths{\kappa}_{\nu,\tnu}$, 
where in the first term one sums only over $(\ell,\tell)$ 
satisfying the condition $\ell= 2^\eps \tell$, which can also be written in 
geometric terms as
\be  
\kbbm_{||}\ast \wbbm_{12}=0,
\label{zeromode1}
\ee 
while in the second the sum goes over the rest of the lattice.
Then in $\tvths{\kappa}_{\nu,\tnu}$, for sufficiently small $z$ 
one can drop the shift by $\beta$ in the second sign function
and one obtains
\be  
\sum_{\sigma=\pm 1}\sigma \,\tvthls{\kappa}_{\nu,\tnu}(\tau, \sigma\Nr_0 z)=
\!\!\!\!\sum_{{\ell\in \IZ + \frac{\nu}{2\kappa}\atop 
	 \tell\in \IZ+ \frac{\tnu}{2\kappa}}\; : \; \ell\ne  2^\eps \tell }  \!\!\!\!
\Bigl(\sgn(\ell ) - \sgn (\ell- 2^\eps \tell)\Bigr)
\, \q^{\kappa (4^\eps\tell^2-\ell^2)}\Bigl(y^{2\Nr_0\kappa\ell  }-y^{-2\Nr_0\kappa\ell}\Bigr).
\label{nonzm-yy}
\ee 
This theta series is not only convergent for all $z$, but also 
vanishes at $z=0$. Thus, it has a well-defined unrefined limit and it remains to analyze only 
the function $\cvths{\kappa}_{\nu,\tnu}$ which we call ``zero mode contribution".

\paragraph{Pole evaluation}\

\noindent
The zero mode contribution is characterized by the condition $\ell= 2^\eps \tell$.
Importantly, it also restricts the set of glue vectors by requiring 
$\nu-2^\eps \tnu \in 2\kappa_{12}\IZ$ where $\nu$ and $\tnu$ are given in \eqref{nutnu}. 
We denote the set of the glue vectors satisfying this condition 
by $\cAr_0(\mu_{12})$ and find it explicitly in appendix \ref{subsec-glue2}.

Implementing the zero mode condition in \eqref{defvth2}, one finds 
\be
\sum_{\sigma=\pm 1}\sigma \,\cvths{\kappa_{12}}_{\nu,\tnu}(\tau,\sigma\Nr_0 z) 
=-\sum_{\sigma=\pm 1}\sum_{\tell\in \IZ+ \frac{\tnu}{2\kappa_{12}}} 
\Bigl(\sgn (\beta)-\sigma\sgn(\tell) \Bigr) 
\, y^{\sigma 2^{1+\eps} \Nr_0 \kappa_{12} \tell }.
\label{geompr}
\ee
This is just a simple geometric progression.
Assuming that $\beta>0$, so that $\Im z<0$ and $|y|>1$, it evaluates to
\be
-\sum_{\sigma=\pm 1}\(2\,\frac{y^{\sigma 2^{1+\eps}\Nr_0\kappa_{12}\(\frac{\tnu}{2\kappa_{12}}
		-\left\lceil\frac{\tnu}{2\kappa_{12}}\right\rceil+\frac12\,(1-\sigma)\)}}
{1-y^{-2^{1+\eps}\Nr_0 \kappa_{12}}} -\sigma\delta_{\tnu}^{(2\kappa_{12})}\)
=-2\,\frac{y^{2^{1+\eps}\Nr_0 \kappa_{12}\lambda_{12} }+y^{-2^{1+\eps}\Nr_0 \kappa_{12}\lambda_{12} }}
{y^{2^{\eps}\Nr_0 \kappa_{12}}-y^{-2^{\eps}\Nr_0 \kappa_{12}}}\, ,
\label{zmod-sigmap1}
\ee
where we defined
\be 
\lambda_{12} = \left\lceil \frac{\tnu}{2\kappa_{12}} \right\rceil-\frac{\tnu}{2\kappa_{12}}-\hf\, ,
\label{def-lam0}
\ee
which depends on the glue vector index $\Asf$ through \eqref{nutnu}.\footnote{In fact, due to the zero mode condition,
for $\eps=0$, i.e. $\kappa>1$, $\tnu$ is uniquely fixed by the residue class $\tnu=\mu_{12}$, while
for $\eps=1$ it can take two values $\tnu=\mu_{12}/2$ and $\mu_{12}/2+\kappa_{12}$.} 
The same result actually holds for $\beta<0$ as well.
Thus, the zero mode contribution to \eqref{extg12-fact}
is given by
\be
\begin{split}
&
-\hf\,\delta^{(\kappa\Nr_0)}_{\Delta\mu} \sum_{\Asf \in \cAr_0}  
\frac{y^{2^{1+\eps}\Nr_0 \kappa_{12}\lambda_{12} }+y^{-2^{1+\eps}\Nr_0 \kappa_{12}\lambda_{12} }}
{y^{2^{\eps}\Nr_0 \kappa_{12}}-y^{-2^{\eps}\Nr_0 \kappa_{12}}}
\, \vthps{\bfr}_\Asf(\tau,\bfz)
\\
&
=-\frac{\delta^{(\kappa\Nr_0)}_{\Delta\mu} }{2^{2+\eps} \pi \I \Nr_0 \kappa_{12} z}
\sum_{\Asf \in \cAr_0} 
\(1 +\frac{2}{3}\(1- 12 \lambda_{12}^2\) \(2^\eps\pi \Nr_0 \kappa_{12} z\)^2 +O(z^4)\)  
\vthps{\bfr}_\Asf(\tau,\bfz)
\end{split}
\label{extzi-pole}
\ee
and confirms our claim that it has a pole at $z=0$ which needs to be cancelled by a proper choice of the 
Jacobi-like form $\chphi^{(\Nr_1,\Nr_2)}_{\mu, \mu_1, \mu_2}$.

\paragraph{Fixing the ambiguity}\

\noindent
The result \eqref{extzi-pole} for the singular contribution of the indefinite theta series suggests that 
the VV Jacobi-like form representing the holomorphic modular ambiguity can be chosen in a similar form:
\be
\chphi^{(\Nr_1,\Nr_2)}_{\mu, \mu_1, \mu_2}=
\frac12\, \delta^{(\kappa\Nr_0)}_{\Delta\mu}\, \phi^{(\kappa_{12})}(\tau,\Nr_0 z)\sum_{\Asf \in \cAr_0}  
\vthps{\bfr}_\Asf(\tau,\bfz),
\label{def-phik}
\ee
where $\phi^{(\kappa_{12})}(\tau,z)$ is already a scalar valued Jacobi-like form whose
modular properties can be obtained from \eqref{multsys-phi} and \eqref{def-phik}.
Using that $\frac{\kappa}{6} \(\Nr^3 - \Nr_1^3 - \Nr_2^3\)=\Nr_0^2 \kappa_{12}$,
one finds that it should have weight 1, index $-\kappa_{12}$
and a trivial multiplier system. The last fact follows from the observation that 
the leading coefficient in the small $z$ expansion of a VV Jacobi-like form has the same multiplier system as the form itself.

It is easy to find a function with the required modular properties cancelling the pole in \eqref{extzi-pole}.
The simplest solution is to take
\be
\phi^{(\kappa_{12})}(\tau,z)
= \frac{e^{\frac{\pi^2}{3}\, \kappa_{12} E_2(\tau)z^2}}{2^{1+\eps}\pi \I \kappa_{12}z},
\label{choice-phi2}
\ee
where $E_2(\tau)$ is the quasimodular Eisenstein series whose modular anomaly \eqref{modtr-E2}
ensures the right index for $\phi^{(\kappa_{12})}$. Expanding this function at small $z$, one gets 
\be
	\phi^{(\kappa_{12})}(\tau,z)=
	\frac{1}{2^{1+\eps}\pi \I \kappa_{12} z}+\frac{\pi}{2^\eps 6\I}\,  E_2(\tau)z + O(z^3).
\label{exp-solphi}
\ee

Combining \eqref{extg12-fact} with \eqref{def-phik}, we arrive at the following result 
\be
	\chgirf{\Nr_1,\Nr_2}_{\mu, \mu_1, \mu_2}(\tau,z,\bfz)=\hf\,
	\delta^{(\kappa\Nr_0)}_{\Delta\mu}\sum_{\Asf}\Biggl[
	\frac12 \sum_{\sigma=\pm 1}\sigma \,\vthls{\kappa_{12}}_{\mu_{12},\tnu(\Asf)}(\tau,\sigma \Nr_0 z)
	+\delta_{\Asf \in \cAr_0}\,\phi^{(\kappa_{12})}(\tau,\Nr_0 z)    
	\Biggr]\vthps{\bfr}_\Asf(\tau,\bfz).
\label{res-extg12}
\ee
For what follows, it will be useful to undo the lattice factorization for the second term in \eqref{res-extg12}
and rewrite it as a theta series associated to the extended lattice.
This can be done at the price of having a kernel that does not combine dependence on $\tau_2$, $\kbbm$ 
and $\zbbm=\bbalpha-\tau\bbbeta$ into a single argument $\xbbm=\sqrt{2\tau_2}(\kbbm+\bbbeta)$
as in \eqref{gentheta}. Namely, one finds
\be 
\phi^{(\kappa_{12})}(\tau,\Nr_0 z)\sum_{\Asf\in \cAr_0}\vthps{\bfr}_\Asf(\tau,\bfz)=
\vth_{\bbmu}(\tau, \zbbm;\bbLami{\Nr_1,\Nr_2}, \Phidi{\Nr_1,\Nr_2}, \pbbm),
\label{phi12-theta}
\ee 
where
\be  
\Phidi{\Nr_1,\Nr_2}=
\delta_{\kbbm\ast\wbbm_{12}} \phi^{(\kappa_{12})}(\tau,\Nr_0 z)
\sum_{\eta=1}^{2^{1+\eps}\kappa_{12}}\delta_{\kbbm\ast\hvbbm_{12}-\eta}y^{-\Nr_0\eta} .
\label{Phid}
\ee 
The presence of two Kronecker symbols in the kernel ensures that there is no summation along 
$\bbLami{\Nr_1,\Nr_2}_{||}$ and implies that $\Asf\in \cAr_0(\mu_{12})$, while the sum over $\eta$ 
allows for an arbitrary residue class along $\hvbbm_{12}$ and takes into account 
the factor $2^\eps$ in the zero mode condition $\ell= 2^\eps \tell$.\footnote{In fact, as follows from \eqref{nutnu},
the residue class along $\hvbbm_{12}$ is not arbitrary but equal to $\mu_{12}$. This means that
the Kronecker symbol is non-vanishing only for $\eta=\mu_{12}$ and, if $\eps=1$, for $\eta=\mu_{12}+2\kappa_{12}$.
However, to cover more general cases considered below, it is convenient to write the sum over all possible range of $\eta$.}

\subsubsection{The unrefined limit}

Let us now reduce the solution \eqref{res-extg12} on the extended lattice to 
the anomalous coefficient $\gi{\Nr_1,\Nr_2}_{\mu, \mu_1, \mu_2}(\tau)$ we are really interested in. 
At the first step we obtain the refined anomalous coefficient $\girf{\Nr_1,\Nr_2}_{\mu, \mu_1, \mu_2}(\tau,z)$ 
using the relation \eqref{recover-gref}.
As was already mentioned, the absence of $z_i$-dependence in $\vth^{||}$ and 
the factorized form \eqref{factor-perptheta} of $\vth^{\perp}$ makes the application of \eqref{recover-gref}
almost trivial: one should simply apply each of the differential operators to the corresponding 
$A_{N-1}$ lattice theta series $\vthA{d_{\Nr_i}}_{\asf_i}(\tau,z_i)$.
This gives
\be
\begin{split}
	\girf{\Nr_1,\Nr_2}_{\mu, \mu_1, \mu_2}(\tau,z)= &\, \hf\,
	\delta^{(\kappa\Nr_0)}_{\Delta\mu}\sum_{\Asf=\{\asf_0,\asf_1,\asf_2\}}\Biggl[
	\frac12 \sum_{\sigma=\pm 1}\sigma \,\vthls{\kappa_{12}}_{\mu_{12},\tnu(\Asf)}(\tau,\sigma \Nr_0 z)
	\\
	&\, \qquad
	+ \delta_{\Asf \in \cAr_0}\,\phi^{(\kappa_{12})}(\tau,\Nr_0 z) 
    \Biggr]
    \vth^{(d_\Nr)}_{\nu_0(\Asf)}(\tau) 
    \prod_{i=1}^{2} \cD\vthA{d_{\Nr_i}}_{\asf_i}(\tau;\frt^{(\Nr_i)}),
\end{split}
\label{res-gref}
\ee
where 
\be
\cD\vthA{N}_{\asf}(\tau;\frt) =
\frac{\cD^{(N)}_{\frt^2/2}\vthA{N}_{\asf}(\tau,z;\frt)\bigr|_{z_i=0}}
{N!\(\prod_{\alpha=1}^{N}\frt_\alpha\)\(-2\pi\eta^3(\tau)\)^{N}}\,.
\label{def-cDvth}
\ee

Finally, we take the unrefined limit $z\to 0$ according to \eqref{lim-ancoef}. 
To this end, we split $\vth^{||}$ into the zero mode and the non-zero mode parts.
Their contributions are evaluated in \eqref{extzi-pole} and \eqref{nonzm-yy}, respectively, 
where in the latter equation $\kappa$ should be replaced by $\kappa_{12}$.
Using the expansion \eqref{exp-solphi}, we then get
\be
\begin{split}
	\gi{\Nr_1,\Nr_2}_{\mu, \mu_1, \mu_2}(\tau)
	=&\, \frac{\Nr_0 \kappa_{12}}{2}\,\delta^{(\kappa\Nr_0)}_{\Delta\mu}
	\sum_{\Asf=\{\asf_0,\asf_1,\asf_2\}}\Biggl[ 
	\vthpzi{\kappa_{12}}{0}_{\mu_{12},\tnu(\Asf)}(\tau) 
	\\
	&\, \qquad
	+\delta_{\Asf \in \cAr_0}\,\frac{2^\eps}{12}\(1- 12 \lambda_{12}^2
	- \frac{E_2(\tau)}{2^{1+2\eps} \kappa_{12}}\)\Biggr]
	\vth^{(d_\Nr)}_{\nu_0(\Asf)}(\tau) \prod_{i=1}^{2} \cD\vthA{d_{\Nr_i}}_{\asf_i}(\tau;\frt^{(\Nr_i)}),
\end{split}
\label{res-unref2}
\ee
where $\lambda_{12}$ is defined in \eqref{def-lam0} and 
\be
\vthpzi{\kappa}{0}_{\nu,\tnu}(\tau)=
\!\!\!\!\sum_{{\ell\in \IZ + \frac{\nu}{2\kappa}\atop 
		\tell\in \IZ+ \frac{\tnu}{2\kappa}}\; : \; \ell\ne  2^\eps \tell }  \!\!\!\!
\ell\, \Bigl(\sgn(\ell ) - \sgn (\ell- 2^\eps \tell)\Bigr)
\, \q^{\kappa (4^\eps\tell^2-\ell^2)}.
\ee 
In appendix \ref{subsec-Jacobi2} we verified for several values of $\Nr_1$, $\Nr_2$ and $\kappa$  
that the solution \eqref{res-unref2} is consistent with the one constructed in section \ref{subsec-sol2ch} 
using Hecke-like operators, which amounts to showing that their difference is a VV modular form.

\subsection{Three charges}
\label{subsec-case-n=3}

\subsubsection{General solution}

Next, we analyze the case of three charges. 
The r.h.s. of the anomaly equation \eqref{extmodan-manyz} now gets three contributions
\be
\begin{split} 
\whchgirf{\bfr}_{\mu,\bfmu} 
=&\, \chgirf{\bfr}_{\mu,\bfmu} 
+2\Sym \Biggl\{\prod_{\alpha=1}^{d_{\Nr_3}}\theta_1(\tau,\frt^{(d_{\Nr_3})}_\alpha z_3)
\sum_{\nu}
\rmRirf{\Nr_1+\Nr_2,\Nr_3}_{\mu,\nu,\mu_3}
\chgirf{\Nr_1,\Nr_2}_{\nu,\mu_1,\mu_2} \Biggr\}
\\
&\,
+\prod_{i=1}^3\(\prod_{\alpha=1}^{d_{\Nr_i}}\theta_1(\tau,\frt^{(d_{\Nr_i})}_\alpha z_i)\)
\rmRirf{\bfr}_{\mu,\bfmu}.
\end{split} 
\label{anom-n=3}
\ee
The second contribution is fixed by \eqref{exprR2r}, \eqref{sol-cpmlg2th}, \eqref{def-phik} and \eqref{phi12-theta}, 
while the third one can be written explicitly using \eqref{rmRrf3}, \eqref{qf-n=3} and \eqref{expr-R3}.
A crucial observation is that the quadratic form \eqref{defQlr} satisfies the property
\be  
Q_2(\hgam_1,\hgam_2)+Q_2(\hgam_1+\hgam_2,\hgam_3)=Q_3(\hgam_1,\hgam_2,\hgam_3).
\label{sumQ2}
\ee 
As a result, the second and third terms in \eqref{anom-n=3} can be written as 
a theta series associated with the extended lattice
$\bbLami{\bfr}$ \eqref{extlatNr} defined by three charges $\Nr_i$, $i=1,2,3$.
More precisely, one obtains
\be 
	\whchgirf{\bfr}_{\mu,\bfmu} 
	= \chgirf{\bfr}_{\mu,\bfmu} 
	+\frac14\,\delta^{(\kappa\Nr_0)}_{\Delta\mu}\Sym\bigl\{ \vth_{\bbmu}(\tau, \zbbm; \bbLami{\bfr}, \rPhiR,\pbbm) \bigr\}, 
\label{anom-n=3-th}
\ee
where the variables $\bbmu$, $\zbbm$ and $\pbbm$ are defined in \eqref{thetadata},
while the kernel is given by 
\bea
\rPhiR(\xbbm)&=&
\Phi^E_{2}(\vbbm_{1,2+3},\vbbm_{1+2,3};\xbbm) 
-\sgn(\xbbm_\bbbeta\ast\vbbm_{1,2+3})\,\sgn(\xbbm_\bbbeta\ast\vbbm_{1+2,3}) -\frac13\,\delta_{1,2+3}\delta_{1+2,3}
\label{expr-Phi2?}\\
&-&
\PhiRi{\Nr_1+\Nr_2,\Nr_3}(\xbbm)\(\sgn(\xbbm\ast\wbbm_{12}) - \Phidi{\Nr_1,\Nr_2}\)
- \PhiRi{\Nr_1,\Nr_2+\Nr_3}(\xbbm)\(\sgn(\xbbm\ast\wbbm_{23}) - \Phidi{\Nr_2,\Nr_3}\).
\nn 
\eea 
Here we replaced the vectors $\bfv_{ij}$ and $\bfx$ appearing as arguments of $\Phi^E_{2}$ in \eqref{expr-R3}
by their extended versions, abbreviated $\delta_{ij}=\delta_{\xbbm_\bbbeta\ast\vbbm_{ij}}$ 
and used $\xbbm_\bbbeta$ introduced above \eqref{rel-scpr} 
as well as the functions $\PhiRi{\Nr_i,\Nr_j}(\xbbm)$ and $\Phidi{\Nr_i,\Nr_i}$, 
which are the same as \eqref{kerPhi1?} and \eqref{Phid}, respectively, 
but with indices $12$ replaced by $ij$.\footnote{In \eqref{Phid} one should also replace $\Nr_0$ by $\Nr_{ij}$.}

The result \eqref{anom-n=3-th} suggests to take 
\be
\chgirf{\bfr}_{\mu,\bfmu} =
\chphi^{(\bfr)}_{\mu,\bfmu}
+\frac14\,\delta^{(\kappa\Nr_0)}_{\Delta\mu}\Sym\bigl\{ \vth_{\bbmu}(\tau, \zbbm; \bbLami{\bfr}, \rPhi,\pbbm) \bigr\},
\label{chgi-n=3Sym}
\ee
where 
\bea
\rPhi(\xbbm)\!
&=&\!
\Bigl(\sgn(\xbbm_\bbbeta\ast\vbbm_{1,2+3})-\sgn(\xbbm\ast\wbbm_{12})\Bigr)\!
		\Bigl(\sgn(\xbbm_\bbbeta\ast\vbbm_{1+2,3})-\sgn(\xbbm\ast\wbbm_{23})\Bigr)\!
		+\frac13\,\delta_{1,2+3}\delta_{1+2,3}
\label{expr-Phi2}\\
&+&
\Bigl(\sgn(\xbbm_\bbbeta\ast\vbbm_{1+2,3})-\sgn(\xbbm\ast\wbbm_{1+2,3})\Bigr)\Phidi{\Nr_1,\Nr_2}
+ \Bigl(\sgn(\xbbm_\bbbeta\ast\vbbm_{1,2+3})-\sgn(\xbbm\ast\wbbm_{1,2+3})\Bigr)\Phidi{\Nr_2,\Nr_3}
\nn 
\eea
and 
$\chphi^{(\bfr)}_{\mu, \bfmu}(\tau, z,\bfz)$ is a holomorphic Jacobi-like form
with the same modular properties as $\whchgirf{\bfr}_{\mu, \bfmu}$
representing the ambiguity of solution.
Indeed, the sum of the kernels \eqref{expr-Phi2?} and \eqref{expr-Phi2} involves only generalized error functions, 
sign functions of scalar products with null vectors and $\Phidi{\Nr_i,\Nr_i}$,
so that the corresponding theta series transforms as a modular form without anomaly.
It is also easy to check that the first term in $\rPhi$ satisfies the conditions of Theorem \ref{th-conv}
which ensures the convergence of the theta series. Finally, the weight, index and multiplier system 
follow from \eqref{mult-genth} and agree with \eqref{multsys-phi}.

\subsubsection{Holomorphic modular ambiguity}
\label{subsubsec-amb3}

To fix a solution for $\chgirf{\bfr}_{\mu,\bfmu}$, it remains to determine the holomorphic modular ambiguity
$\chphi^{(\bfr)}_{\mu, \bfmu}$ by requiring the existence of a well-defined unrefined limit.
This is equivalent to the condition that $\chgirf{\bfr}_{\mu,\bfmu}$ is regular at $z_i=0$ 
and has a second order zero at $z=0$. 

To investigate the behavior of the theta series in \eqref{chgi-n=3Sym} at small refinement parameters, 
we first apply the factorization property \eqref{extg-fact}, which is possible since the kernel 
\eqref{expr-Phi2} again depends only on the projection $\xbbm_{||}$.
This immediately implies the absence of divergences at small $z_i$. 
Furthermore, as the theta series $\vthps{\bfr}_{\bbmu,\Asf}$ \eqref{factor-perptheta} 
and the glue vectors $\glueg_\Asf$ \eqref{gluegA} are symmetric under permutations of charges, we can write
\be
\chgirf{\bfr}_{\mu,\bfmu} =
\chphi^{(\bfr)}_{\mu,\bfmu}
+\frac14\,\delta^{(\kappa\Nr_0)}_{\Delta\mu}\sum_{\Asf}
\Sym\bigl\{\vthls{\bfr}_{\bbmu,\Asf}(\tau, z) \bigr\}\,\vthps{\bfr}_\Asf(\tau,\bfz).
\label{chgi-n=3}
\ee

Next, we split the theta series $\vthls{\bfr}_{\bbmu,\Asf}$ 
into several parts determined by the vanishing of $\kbbm\ast \wbbm_{ij}$.
While in \S\ref{subsubsec-modamb2} there was just one null vector $\wbbm_{12}$ leading to 
the split of the theta series into two parts, now there are three different null vectors.
Hence, we define {\it zero mode order} of a contribution as the number of linearly independent
vanishing scalar products $\kbbm\ast \wbbm_{ij}$ and split
$\vthls{\bfr}_{\bbmu,\Asf}$ into parts with different zero mode order.
In our case this order can be 0, 1 or 2 because due to \eqref{rel-vw} the three null vectors are linearly dependent 
and the vanishing of two scalar products implies the vanishing of the third.
Note also that each $\Phid$ factor contains one of the vanishing conditions and thus adds 1 to the zero mode order.

In appendix \ref{subsec-ch3-zm01} we demonstrate that the symmetrization ensures that
the contributions of zero mode order equal to 0 and 1 both have a zero of second order at $z=0$.
Thus, they have a well-defined unrefined limit and it remains to analyze only the zero modes of order 2.
To this end, we decompose the lattices $\bfLami{\bfr}$ and $\tbfLami{\bfr}$ 
in $\bbLami{\bfr}_{||}=\bfLami{\bfr}\oplus \tbfLami{\bfr}$ as in \eqref{recover-L}
using $\hvbbm_{ij}$ and $\hubbm_{ij}$, respectively.
In other words, we expand
\be 
\kbbm_{||}=\sum_{i=1}^{n-1}(\ell_i\hvbbm_{i,i+1}+\tell_i\hubbm_{i,i+1}),
\label{expkparal}
\ee 
where in our case $n=3$. The coefficients satisfy $\ell_i\in\IZ+\frac{\nu_i}{\kappa_i}$
and $\tell_i\in\IZ+\frac{\tnu_i}{\kappa_i}$ with $\kappa_i$ defined in \eqref{defkapij}.
The variables $\nu_i$ and $\tnu_i$ determining the rational parts are fixed by $\bbmu$ and by the glue vectors
labelled by $\Asf=\{\asf_0,\asf_1,\asf_2,\asf_3\}$ and $\Bsf=\{\bsf_{13},\tbsf_{13}\}$.
Their explicit expressions can be obtained from \eqref{exp-g}, \eqref{exp-vw} 
and by evaluating scalar products of the vectors \eqref{glue-projectors-n} with $\bbmu$. 
In terms of the variables $\rdcr={\Nr}/{\Nr_0}$ and $\rdcr_i={\Nr_i}/{\Nr_0}$, this gives
\be 
\begin{split}
\nu_1=&\, \rdcr\mu_1-\rdcr_1\mu+\rdcr\rdcr_1\rho_1\Delta\mu  +\kappa\rdcr\,\frac{\Nr_1\Nr_3}{\Nr_{13}}\,\bsf_{13},
\\
\nu_2=&\, \rdcr_3\mu-\rdcr\mu_3-\rdcr\rdcr_3\rho_3\Delta\mu+\kappa\rdcr\,\frac{\Nr_1\Nr_3}{\Nr_{13}}\,\bsf_{13},
\\
\tnu_1=&\, 	\kappa\rdcr\,\frac{\Nr_1\Nr_3}{\Nr_{13}}\, \tbsf_{13}-\kappa\Nr_1(1-\rdcr\rho_1)\asf_0
+4^{-\eps}\((\rdcr_2+\rdcr_3)\asf_1-\rdcr_1(\asf_2+\asf_3)\),
\\
\tnu_2=&\, 	\kappa\rdcr\,\frac{\Nr_1\Nr_3}{\Nr_{13}}\, \tbsf_{13}+\kappa\Nr_3(1-\rdcr\rho_3)\asf_0
+4^{-\eps}\(\rdcr_3(\asf_1+\asf_2)-(\rdcr_1+\rdcr_2)\asf_3\).
\end{split}
\label{nu-n3}
\ee

In terms of the coefficients $\ell_i,\tell_i$, the second order zero mode condition
$\kbbm\ast \wbbm_{12}=\kbbm\ast \wbbm_{23}=0$ is equivalent to $\ell_1-2^\eps\tell_1=\ell_2-2^\eps\tell_2=0$.
Therefore, when one rewrites the expression for
the contribution of second order zero modes to $\Sym\bigl\{\vthls{\bfr}_{\bbmu,\Asf} \bigr\}$
in terms of these coefficients, the sum over $\ell_i$'s disappears. Furthermore, $\kbbm_{||}$ becomes null 
and one remains with
\be 
\Sym\Biggl\{\,\sum_\Bsf
\Dzm(\Asf,\Bsf)\,\cvths{\bfr}_{\tnu_1,\tnu_2}(z)\Biggr\}, 
\label{Symtheta-par-n3-z2}
\ee
where $\Dzm=\delta^{(\kappa_1)}_{\nu_1-2^\eps\tnu_1}\, \delta^{(\kappa_2)}_{\nu_2-2^\eps\tnu_2}$
is the Kronecker symbol imposing the second order zero mode condition and
\bea
\cvths{\bfr}_{\tnu_1,\tnu_2}(z) &=& \!\!
	\sum_{\tell_1\in\IZ+\frac{\tnu_1}{\kappa_1}}
	\sum_{\tell_2\in\IZ+\frac{\tnu_2}{\kappa_2}}
	\Biggl[	\biggl(
	\Bigl(\sgn(\tell_1)-\sgn(\beta)\Bigr) \Bigl(\sgn(\tell_2)-\sgn(\beta)\Bigr)
	+\frac13\, \delta_{\tell_1} \delta_{\tell_2} \biggr) 
	y^{2^{1+\eps}(\Nr_{12}\kappa_{12}\tell_1+\Nr_{23}\kappa_{23}\tell_2)}
	\nn\\
	&&
	+\phi^{(\kappa_{12})}(\tau,\Nr_{12} z)
	\sum_{\eta=1}^{2^{1+\eps}\kappa_{12}}\delta_{2^\eps(2\kappa_{12}\tell_1-\kappa_{123}\tell_2)-\eta}
	\sum_{\sigma=\pm 1} 
	\Bigl(\sigma \sgn(\tell_2)-\sgn(\beta)\Bigr)y^{\sigma 2^\eps\Nr_0\kappa_2\tell_2}
	\Biggr].
\label{theta-par-n3-z2}
\eea
To get this expression, we applied the permutation $1\leftrightarrow 3$ to the last term in \eqref{expr-Phi2}
before substituting \eqref{expkparal}.
The sum over $\tell_i$'s can be evaluated explicitly. First, we note that 
\be  
\sum_{\tell_1\in\IZ+\frac{\tnu_1}{\kappa_1}}\sum_{\eta=1}^{2^{1+\eps}\kappa_{12}}
\delta_{2^\eps(2\kappa_{12}\tell_1-\kappa_{123}\tell_2)-\eta}
=\delta^{(1)}_{\frac{2\kappa_{12}}{\kappa_1}\nu_1-\frac{\kappa_{123}}{\kappa_2}\nu_2}=1,
\ee
where in the first equality we used that $2^\eps\tnu_i=\nu_i\mod \kappa_i$, 
as required by the second order zero mode condition,
while the last equality follows by substituting the explicit expressions for $\nu_i$ \eqref{nu-n3}.
As a result, all sums in \eqref{theta-par-n3-z2} become simple geometric progressions
and for $\beta>0$ one finds\footnote{Although \eqref{theta-par-n3-z2} does depend on the sign of $\beta$,
	 its symmetrized version \eqref{Symtheta-par-n3-z2} does not. So all expressions written below starting from \eqref{Symthetaexp} 
	 will be valid for both signs.}
\bea
\cvths{\bfr}_{\tnu_1,\tnu_2}(z) &=& 
\frac13 \delta^{(\kappa_1)}_{\tnu_1}\delta^{(\kappa_2)}_{\tnu_2}+ 
\(\delta^{(\kappa_1)}_{\tnu_1}
-\frac{2y^{-2^{1+\eps}\Nr_{12}\kappa_{12} \lambda_1}}
{y^{2^{\eps}\Nr_{12}\kappa_{12}}-y^{-2^{\eps}\Nr_{12}\kappa_{12}}}\)
\(\delta^{(\kappa_2)}_{\tnu_2}
-\frac{2y^{-2^{1+\eps}\Nr_{23}\kappa_{23} \lambda_2}}
{y^{2^{\eps}\Nr_{23}\kappa_{23}}-y^{-2^{\eps}\Nr_{23}\kappa_{23}}}\)
\nn\\
&&\qquad
-2\phi_{12}(\tau,\Nr_{12} z)
\,\frac{y^{2^\eps\Nr_0 \kappa_2 \lambda_2} +y^{-2^\eps\Nr_0 \kappa_2 \lambda_2} }
{y^{2^{\eps-1}\Nr_0 \kappa_2}-y^{-2^{\eps-1}\Nr_0 \kappa_2}}\,,
\label{dzm-n3}
\eea
where we defined 
\be 
\lambda_i = \left\lceil \frac{\tnu_i}{\kappa_i} \right\rceil-\frac{\tnu_i}{\kappa_i}-\hf\, .
\label{def-lami}
\ee

In fact, it turns out to be convenient to symmetrize this expression with respect to 
the permutation $1\leftrightarrow 3$.
This could be done before performing the above calculations, but it can also be done directly for \eqref{dzm-n3}
because under this permutation the basis vectors in \eqref{expkparal} are mapped to each with a flip of sign,
$\hvbbm_{12}\leftrightarrow -\hvbbm_{23}$ and $\hubbm_{12}\leftrightarrow -\hubbm_{23}$,
while the glue vectors just flip their sign. This implies $\tnu_1\leftrightarrow -\tnu_2$ 
and $\lambda_1\leftrightarrow -\lambda_2-\delta^{(\kappa_2)}_{\tnu_2}$, so that
in the second term one should flip the relative signs inside the brackets and the signs in the power of $y$ in the numerators, 
whereas in the last term in \eqref{dzm-n3} one should just replace the indices 12 by 23 and 2 by 1.
As a result, if one expands at small $z$ the symmetrized expression, one obtains
\be
\hf \,\Bigl(\cvths{\bfr}_{\tnu_1,\tnu_2} +\sigma_{13}\bigl[\cvths{\bfr}_{\tnu_1,\tnu_2}\bigr] \Bigr)
=\cCr_{-2}z^{-2} + \Cvr_0(\tau)  + \Cvr_2(\tau) z^2 + O(z^4),
\label{Symthetaexp}
\ee
where we introduced functions of $\tau$
\be
\begin{split} 
\Cvr_0(\tau) =&\,\cCr_0+ \frac{\Nr_{12}\Nr_{23}(\Nr_1+\Nr_3)}{4^\eps\, 6\kappa \Nr \Nr_1 \Nr_2\Nr_3}\, E_2(\tau),
\\
\Cvr_2(\tau) =&\,\cCr_2 +\frac{\pi^2\Nr_0}{36}\,\Bigl(\Nr_{23}\kappa_1(1-12\lambda_1^2)+\Nr_{12}\kappa_2(1-12\lambda_2^2)\Bigr)E_2(\tau)
\\
&\, +\frac{\pi^2\Nr_{12}\Nr_{23}}{4^\eps\, 72\Nr \Nr_1 \Nr_3}\,\Bigl(\Nr_1^2(\Nr_1+\Nr_2)+\Nr_3^2(\Nr_2+\Nr_3)\Bigr) E_2^2(\tau)
\end{split}
\label{def-cCr}
\ee
and constant coefficients
\bea
\cCr_{-2}
&=& \frac{\Nr_2}{4^{1+\eps}\pi^2 \Nr \Nr_{12}\Nr_{23}\kappa_{12}\kappa_{23}}\, ,
\nn\\
\cCr_0 &=&  \frac13\delta^{(\kappa_1)}_{\tnu_1}\delta^{(\kappa_2)}_{\tnu_2}
+\(2\lambda_1+\delta^{(\kappa_1)}_{\tnu_1}\) \(2\lambda_2+\delta^{(\kappa_2)}_{\tnu_2}\) 
+\frac16 \(\frac{\Nr_{23}\Nr_1(1-12\lambda_1^2)}{ \Nr_{12}(\Nr_2+\Nr_3)} 
+\frac{\Nr_{12}\Nr_3(1-12\lambda_2^2)}{\Nr_{23}(\Nr_1+\Nr_2)} \),
\nn\\
\cCr_2 &=& 4^{\eps}\pi^2
\Biggl[ \frac{(\kappa\Nr_2)^2}{360}\biggl(
\frac{\Nr_{23}\Nr_1^3(\Nr^3 -(\Nr_1+\Nr_2)^3)}{\Nr_{12}^3\Nr_3(\Nr_2+\Nr_3)}\,(7-120\lambda_1^2+240\lambda_1^4)
\\ 
&&\quad 
+ \frac{\Nr_{12}\Nr_3^3(\Nr^3 -(\Nr_2+\Nr_3)^3)}{\Nr_{23}^3\Nr_1(\Nr_1+\Nr_2)}\,(7-120\lambda_2^2+240\lambda_2^4)\biggr)
-\frac19\, \Nr_{12}\Nr_{23}\kappa_{12}\kappa_{23} (1-12\lambda_1^2)(1-12\lambda_2^2)
\nn\\
&&\quad 
+\frac43\Bigl((\Nr_{12}\kappa_{12})^2 \lambda_1(1-4\lambda_1^2)\(2\lambda_2 + \delta^{(\kappa_2)}_{\tnu_2}\) 
+(\Nr_{23}\kappa_{23})^2\lambda_2(1-4\lambda_2^2)\(2\lambda_1 + \delta^{(\kappa_1)}_{\tnu_1}\)  \Bigr)
\Biggr].
\nn  
\eea

The main conclusion of all this analysis is that the only contribution of the theta series term in \eqref{chgi-n=3}
that does not have a zero of second order at $z=0$ and needs to be cancelled by the holomorphic modular ambiguity 
originates from the second order zero modes and is given by 
\be
\frac14\,\delta^{(\kappa\Nr_0)}_{\Delta\mu}\sum_{\Asf}
\Sym\biggl\{\sum_\Bsf \Dzm(\Asf,\Bsf)
\(\cCr_{-2}z^{-2} + \Cvr_0(\tau)\) \biggr\}
\,\vthps{\bfr}_\Asf(\tau,\bfz).
\label{sing-n=3}
\ee
Furthermore, it turns out that 
\be 
\Sym\biggl\{\sum_\Bsf \Dzm \cCr_0 \biggr\}=0,
\label{vanishC0}
\ee  
while the sum over $\Bsf$ in the first term can be evaluated using Corollary \ref{cor-sumB}.
Thus, one remains with 
\be
-\frac14\,\delta^{(\kappa\Nr_0)}_{\Delta\mu}\Sym\{ \cbfr \}
\(\frac{1}{z^2}-\frac{\pi^2}{3}\,m_{\bfr} E_2(\tau)\)\sum_{\Asf\in\cAr_0}\,
\vthps{\bfr}_\Asf(\tau,\bfz),
\label{sing-n=3new}
\ee
where $\cAr_0$ is the set characterized by the conditions \eqref{cond-glueA}
implementing the second order zero mode condition on $\Asf$ indices,
and  
\be  
\cbfr=-\frac{\Nr_0\Nr_2}{\Nr_{12}\Nr_{23}}\,\cCr_{-2}
=-\frac{\Nr_0}{4^{\eps}\pi^2 \kappa^2\Nr\Nr_1\Nr_3(\Nr_1+\Nr_2)(\Nr_2+\Nr_3) }\, .
\label{cr3}
\ee 
We have also taken into account that 
\be  
\frac{\Nr_{12}\Nr_{23}(\Nr_1+\Nr_3)}{4^\eps\, 6\kappa \Nr \Nr_1 \Nr_2\Nr_3}=-\frac{\pi^2}{3}\,m_{\bfr}\cCr_{-2},
\label{relcoef-Cm}
\ee
where
\be 
m_{\bfr}=-\frac{\kappa}{6}\,\biggl(\Nr^3-\sum_{i=1}^3\Nr_i^3\biggr)
=-\frac{\kappa}{2}\,(\Nr_1+\Nr_2)(\Nr_1+\Nr_3)(\Nr_2+\Nr_3).
\ee
We do not provide here a proof of the vanishing property \eqref{vanishC0} (which has been extensively checked on a computer)
because we will prove its generalization to arbitrary number of charges in \S\ref{subsec-case-gen}.
As we will see, it turns out to be a direct consequence of modularity.

The contribution \eqref{sing-n=3new} can be cancelled by the holomorphic modular ambiguity
chosen as 
\be
\chphi^{(\bfr)}_{\mu,\bfmu}=
\frac14\, \delta^{(\kappa\Nr_0)}_{\Delta\mu}
\phi^{(\bfr)}(\tau,z)
\sum_{\Asf\in\cAr_0}\,\vthps{\bfr}_\Asf(\tau,\bfz),
\label{def-phi3}
\ee
where
\be  
\phi^{(\bfr)}(\tau,z)=\frac{\Sym\{ \cbfr \}}{z^{2}}\,e^{-\frac{\pi^2}{3}\,m_{\bfr} E_2(\tau)z^2}.
\label{exp-solphi3}
\ee
Indeed, $\phi^{(\bfr)}$ is a Jacobi-like form of weight 2 and index $m_\bfr$, so that the weight and index of 
$\chphi^{(\bfr)}_{\mu,\bfmu}$ agree with \eqref{multsys-phi} specialized to $n=3$. 
The multiplier system must also agree because it is the same as the one of the leading terms in the small $z$ expansion
of a theta series with the right multiplier system. Finally,
the first two non-trivial terms in the small $z$ expansion of $\chphi^{(\bfr)}_{\mu,\bfmu}$ cancel \eqref{sing-n=3new},
which ensures that \eqref{chgi-n=3} has a zero of second order at $z=0$.

\subsubsection{The unrefined limit}

To get the anomalous coefficient $\gi{\bfr}_{\mu,\bfmu}$ from the solution \eqref{chgi-n=3} on the extended lattice, 
we again proceed in two steps. First, we apply the relation \eqref{recover-gref} which gives 
the following refined anomalous coefficient
\be
\begin{split}
	\girf{\bfr}_{\mu,\bfmu}(\tau,z) = &\, \frac14\, \delta^{(\kappa\Nr_0)}_{\Delta\mu} \sum_{\Asf}
	\(\phi^{(\bfr)}(\tau,z)\delta_{\Asf\in\cAr_0}+\Sym \bigl\{\vthls{\bfr}_{\bbmu,\Asf}(\tau, z) \bigr\}\)
	\vth^{(d_\Nr)}_{\nu_0(\Asf)}(\tau) 
	\prod_{i=1}^{3} \cD\vthA{d_{\Nr_i}}_{\asf_i}(\tau;\frt^{(\Nr_i)}),
\end{split}
\label{refg3}
\ee
where $\cD\vthA{d_{\Nr}}_{\asf}$ was defined in \eqref{def-cDvth}.

To evaluate the remaining unrefined limit, we use the results of our analysis which showed 
the existence of a zero of second order. Thus, we represent $\Sym\bigl\{\vthls{\bfr}_{\bbmu,\Asf} \bigr\}$
as a sum of three contributions corresponding to different orders of zero modes:
the vanishing order with kernel given in \eqref{nzmod-n3}, the first order with kernel given by the sum of \eqref{doublez-contr1} 
and \eqref{doublez-contr2}, and the second order given by \eqref{Symtheta-par-n3-z2} and \eqref{dzm-n3}
or its expansion \eqref{Symthetaexp}. The last contribution is to be combined with the first term in \eqref{refg3}.
Applying the relation \eqref{lim-ancoef}, one then finds
\be
\begin{split}
	\gi{\bfr}_{\mu, \bfmu}(\tau)
	=&\, \frac{1}{4}\,\delta^{(\kappa\Nr_0)}_{\Delta\mu}
	\sum_{\Asf}\biggl[\vthpzi{\bfr}{0}_{\bbmu,\Asf}(\tau) +\vthpzi{\bfr}{1}_{\bbmu,\Asf}(\tau)
	\\
	&\, \qquad
	-\Sym \biggl\{\frac{1}{16\pi^2}\sum_\Bsf \Dzm\hCvr_2(\tau)\biggl\}\biggr]
	\vth^{(d_\Nr)}_{\nu_0(\Asf)}(\tau) \prod_{i=1}^{3} \cD\vthA{d_{\Nr_i}}_{\asf_i}(\tau;\frt^{(\Nr_i)}),
\end{split}
\label{res-unref3}
\ee
where the three terms in the square brackets correspond to the three contributions described above.
For the first two given by theta series we provide explicit expressions in appendix \ref{subsec-unrefn=3},
while the function $\hCvr_2$ determining the third term in \eqref{res-unref3} is obtained
by combining the $O(z^2)$-terms in \eqref{Symthetaexp} and in the expansion of \eqref{exp-solphi3}:
\bea
\hCvr_2(\tau)&=& \Cvr_2(\tau)-\frac{\pi^4}{18}\,\cC_{-2}m_\bfr^2E_2^2(\tau)
\\
&=&\cCr_2 +\frac{\pi^2\Nr_0}{36}\,\Bigl(\Nr_{23}\kappa_1(1-12\lambda_1^2)+\Nr_{12}\kappa_2(1-12\lambda_2^2)\Bigr)E_2(\tau)
-\frac{\pi^2\Nr_{12}\Nr_{23}}{4^\eps\, 72\Nr_2}\,(\Nr+\Nr_2) E_2^2(\tau).
\nn
\eea 
 
The formula \eqref{res-unref3} represents an explicit expression for the anomalous coefficients with arbitrary three charges.
In appendix \ref{subsec-Jacobi3} we verified that for $\kappa$ and all $\Nr_i$ equal to 1, it is consistent with 
the solution proportional to the normalized generating function of $SU(3)$ VW invariants on $\IP^2$
presented in section \ref{subsec-solVW}.

\subsection{General case}
\label{subsec-case-gen}

\subsubsection{General solution}
\label{subsubsec-gengen}

Now we turn to the most general case and, as usual, we start by presenting a solution of the anomaly equation 
\eqref{extmodan-manyz}. Of course, for any set of charges this solution involves a holomorphic modular ambiguity
parametrized by a Jacobi-like form $\phi^{(\bfr)}_{\mu, \bfmu }$.
From the very beginning we will take into account that it can be chosen in the factorized form (cf. \eqref{def-phi3})
\be
\chphi^{(\bfr)}_{\mu, \bfmu}=
\frac{\delta^{(\kappa\Nr_0)}_{\Delta\mu}}{2^{n-1}}
\sum_{\Asf} \phi^{(\bfr)}_{\glueg_\Asf^{||}+\bbmu}(\tau,z) \,
\vthps{\bfr}_\Asf(\tau,\bfz), 
\label{def-phin}
\ee
where $\phi^{(\bfr)}_{\bbnu}(\tau,z)$ is a VV Jacobi-like form labelled 
by $\bbnu\in\ID^{(\bfr)}_{||}=(\bbLami{\bfr}_{||})^*/\bbLami{\bfr}_{||}$,
and characterized by weight $n-1$, index $m_{\bfr}$ \eqref{index-mr},
and the multiplier system given by the Weil representation \eqref{mult-genth} 
associated with the lattice $\bbLami{\bfr}_{||}$.
As we did before for $n=2$, formally one can rewrite \eqref{def-phin} as a theta series over the full extended lattice
$\bbLami{\bfr}$
\be 
\chphi^{(\bfr)}_{\mu, \bfmu}=\frac{\delta^{(\kappa\Nr_0)}_{\Delta\mu}}{2^{n-1}}\,
\vth_{\bbmu}(\tau, \zbbm;\bbLami{\bfr}, \Phidi{\bfr}, \pbbm),
\label{phin-theta}
\ee 
where $\bbmu$, $\zbbm$ and $\pbbm$ are as in \eqref{thetadata}, and
the kernel is given by
\be  
\Phidi{\bfr}(\xbbm,\tau,z)=\q^{\hf\kbbm_{||}^2}\,y^{-\bftet\cdot\bfk}\sum_{\bbnu\in\ID^{(\bfr)}_{||}}\delta_{\kbbm_{||}-\bbnu}
\phi^{(\bfr)}_\bbnu(\tau,z) ,
\label{Phidn}
\ee 
where $\kbbm_{||}$ is the projection of $\kbbm={\xbbm}/{\sqrt{2\tau_2}}-\bbbeta$ on $\bbLami{\bfr}_{||}$
and $\bfk$ is the component of $\kbbm$ along $\bfLami{\bfr}$.
It is chosen to ensure that
\be  
\vth_{\bbnu}(\tau, \zbbm_{||};\bbLami{\bfr}_{||}, \Phidi{\bfr},0)=\phi^{(\bfr)}_\bbnu(\tau,z).
\label{prop-Phid}
\ee

Although the restriction to \eqref{def-phin} may not describe the most general solution of the anomaly equation, 
which is not our goal anyway, it allows us to represent a solution in terms of a theta series on the extended lattice.
To this end, let us define 
\be
\rPhi(\xbbm;\{\Fvi{\bfs}\})=
	\sum_{m=2}^{n} \sum_{\sum_{k=1}^m n_k = n} 
	\Fvi{\bfs}(\xbbm^{(0)})	\prod_{k=1}^{m} \Phidi{\frr_k}(\xbbm^{(k)},\tau,z),
\label{ker-thm}
\ee
where $\bfs$ are $\frr_k$ are the notations from \eqref{split-rs},
the upper indices $^{(0)}$ and $^{(k)}$ denote projections to $\bbLami{\bfs}$ and $\bbLami{\frr_k}$, respectively,
and for a single charge we set $\Phidi{\Nr}=1$.
Then we have the following

\begin{theorem}\label{thm-gensol}
	A solution of the anomaly equation \eqref{extmodan-manyz} and its modular completion 
	can be expressed as
	\be
	\begin{split}
		\chgirf{\bfr}_{\mu, \bfmu }  = &\, \chphi^{(\bfr)}_{\mu,\bfmu}
		+\frac{\delta^{(\kappa\Nr_0)}_{\Delta\mu}}{2^{n-1}}\,
		\Sym\Bigl\{\vth_{\bbmu}(\tau, \zbbm;\bbLami{\bfr}, \rPhi(\{\Fvi{\bfs}\}), \pbbm)\Bigr\} ,
		\\ 
		\whchgirf{\bfr}_{\mu, \bfmu}  = &\,\chphi^{(\bfr)}_{\mu,\bfmu}+
		\frac{\delta^{(\kappa\Nr_0)}_{\Delta\mu}}{2^{n-1}}\,
		\Sym\Bigl\{\vth_{\bbmu}(\tau, \zbbm;\bbLami{\bfr}, \rPhi(\{\whFvi{\bfs}\}), \pbbm) \Bigr\},
	\end{split}
	\ee
	where the functions $\Fvi{\bfr}$ and $\whFvi{\bfr}$ are given by 
	\be
	\begin{split} 
	\Fvi{\bfr}(\xbbm)=&\,
	\sum_{\cJ\subseteq\Zv_{n-1}} e_{|\cJ|} \delta_\cJ
	\prod_{\ell \in \Zv_{n-1} \backslash \cJ}
	\Bigl(\sign (\xbbm_\bbbeta\ast\vbbm_\ell)-\sign ( \xbbm\ast\wbbm_{\ell,\ell+1}) \Bigr),
		\\
	\whFvi{\bfr}(\xbbm)=&\,
	\sum_{\cJ \subseteq \Zv_{n-1}}\Phi_{|\cJ|}^E \(\{\vbbm_l\}_{l \in \cJ}; \xbbm \) 
	\prod_{\ell \in \Zv_{n-1}\backslash\cJ}\bigl(-\sign(\xbbm\ast\wbbm_{\ell,\ell+1}) \bigr).
	\end{split} 
	\label{kern-manyz}
	\ee
	Here $\xbbm_\bbbeta=\xbbm-\sqrt{2\tau_2}\,\bbbeta$,
	$\Zv_{n}=\{1,\dots,n\}$, 
	\be
	\label{def-genthm}
	e_m=\left\{\begin{array}{ll}
		0 & \mbox{\rm if $m$ is odd},
		\\
		\frac{1}{m+1}\ & \mbox{\rm if $m$ is even},
	\end{array}\right.
	\qquad
	\delta_\cJ=\prod_{\ell\in\cJ}\delta_{\xbbm_\bbbeta\star\vbbm_\ell},
	\quad\mbox{\rm and}\quad
	\vbbm_\ell=\sum_{i=1}^{\ell}\sum_{j=\ell+1}^{n} \vbbm_{ij} .
	\ee 
\end{theorem}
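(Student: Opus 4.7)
Since $\chphi^{(\bfr)}_{\mu,\bfmu}$ appears identically in both expressions of the theorem, it suffices to analyze the two theta series built from $\rPhi(\{\Fvi{\bfs}\})$ and $\rPhi(\{\whFvi{\bfs}\})$. I would split the argument into three parts: establish modularity of the completion via Vign\'eras' equation, establish holomorphy and convergence of the unrefined solution via Theorem \ref{th-conv}, and verify that the difference of the two theta series reproduces the $m \geq 2$ terms of the anomaly equation \eqref{extmodan-manyz} (the $m=1$ term just cancels the homogeneous contribution).

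For modularity of $\whchgirf{\bfr}_{\mu,\bfmu}$, I would verify that the kernel $\rPhi(\{\whFvi{\bfs}\})$ satisfies Vign\'eras' equation. The factorized form \eqref{ker-thm} reduces the check to each factor acting on its respective sublattice: the generalized error functions $\Phi_{|\cJ|}^E$ satisfy Vign\'eras' equation by construction \cite{Alexandrov:2016enp,Nazaroglu:2016lmr}; sign functions built from the null vectors $\wbbm_{\ell,\ell+1}$ automatically satisfy it via the property \eqref{Phinull} that follows from $\wbbm_{ij}^{2}=0$; and each $\Phidi{\frr_k}$ inherits modularity from the Jacobi-like forms $\phi^{(\frr_k)}_{\bbnu}$ through the defining relation \eqref{prop-Phid}. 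The pairwise orthogonality among the vectors $\vbbm_\ell$, $\wbbm_{\ell,\ell+1}$ implied by \eqref{scpr-vw} ensures that the combined kernel satisfies the full Vign\'eras equation. The weight, index, and multiplier system then follow from \eqref{mult-genth} applied to the data \eqref{thetadata} together with the identity \eqref{rel-scpr-th}. For holomorphy and convergence of $\chgirf{\bfr}_{\mu,\bfmu}$, the kernel $\Fvi{\bfs}$ is a $\tau_{2}$-independent combination of sign functions and Kronecker deltas, so manifestly holomorphic; convergence follows from Theorem \ref{th-conv}, since each factor $\bigl(\sgn(\xbbm_\bbbeta \ast \vbbm_\ell) - \sgn(\xbbm \ast \wbbm_{\ell,\ell+1})\bigr)$ restricts the summation to a negative-definite cone (using $\vbbm_\ell \ast \wbbm_{\ell,\ell+1} > 0$ from \eqref{scpr-vw}).

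The third and most delicate step is matching the difference of the two theta series with the anomaly. Here I would exploit the factorized form \eqref{ker-thm} together with the lattice decomposition \eqref{factortheta-gen}: each product $\prod_k \Phidi{\frr_k}$ in the kernel unfolds via \eqref{prop-Phid} into $\prod_k \chgirf{\frr_k}_{\nu_k,\frm_k}$, while the remaining factor carrying either $\whFvi{\bfs}$ or $\Fvi{\bfs}$ produces a theta series on $\bbLami{\bfs}$. The required identity then takes the form $\whFvi{\bfs} - \Fvi{\bfs} \leftrightarrow \scRrf_{|\bfs|}$ in the sense that inserting this difference into a theta series on the extended lattice reproduces $\rmRirf{\bfs}_{\mu,\bfnu}$ as given by \eqref{Rirf-to-rmRrf}. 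This should follow by directly comparing with the definition \eqref{refsolRn} of $\scRrf_n$ in terms of generalized error functions. The combinatorial weights $e_{|\cJ|}$ specified in \eqref{def-genthm} (vanishing for odd $|\cJ|$ and equal to $1/(|\cJ|+1)$ for even $|\cJ|$) should emerge as boundary contributions when $\Phi_{|\cJ|}^E$ is evaluated on the degenerate locus $\xbbm \ast \vbbm_\ell = 0$, where the generalized error function collapses to a lower-rank object.

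The main obstacle will be the combinatorial bookkeeping in this last step: verifying that the double sum over subsets $\cJ \subseteq \Zv_{n-1}$ in the definitions \eqref{kern-manyz}, combined with the sum over decompositions $\bfr = (\bfs;\frr_1,\ldots,\frr_m)$ in \eqref{ker-thm}, reproduces exactly the sum over $m$ and compositions $n_1+\cdots+n_m=n$ appearing on the right-hand side of \eqref{extmodan-manyz}. I would organize this inductively on $n$, using the rooted-tree structure of Figure \ref{fig-g-trees} to match terms: the root-subset $\cJ$ on which $\Phi^E_{|\cJ|}$ is evaluated corresponds to the depth-1 vertices of a tree contributing to the anomaly, while the factors $\Phidi{\frr_k}$ correspond to the subtrees below them. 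Tracking the symmetrization $\Sym$ and the sign functions of null vectors $\wbbm_{\ell,\ell+1}$ --- which mediate the transitions between consecutive subtrees --- should then produce exactly the right matching of kernels. A subsidiary obstacle is confirming that the specific choices of $\bftet$ and $\pbbm$ in \eqref{thetadata} produce the correct $y$-dependent factor $y^{\sum_{i<j}\gamma_{ij}}$ of \eqref{Rirf-to-rmRrf}, which should reduce to the identities \eqref{rel-scpr-th} and \eqref{rel-scpr}.
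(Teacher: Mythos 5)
Your three-part architecture (modularity via Vign\'eras' equation together with \eqref{prop-Phid}, holomorphy and convergence via Theorem \ref{th-conv}, and a combinatorial matching against the $m\ge 2$ terms of \eqref{extmodan-manyz}) is the right frame, and your parts (1) and (2) agree with what the paper does explicitly for $n=2,3$. Note, however, that the paper does not carry out your step (3) from scratch: it uses the factorization \eqref{extg-fact} and the lattice decomposition \eqref{lat-decomp-r} to rewrite the solution in the form \eqref{thet-phi-thm} --- a theta series on $\bbLami{\bfs}_{||}$ with kernel $\Fvi{\bfs}$ multiplied by the \emph{ambiguities} $\phi^{(\frr_k)}$ --- observes that the resulting structure and the kernels \eqref{kern-manyz} coincide verbatim with those of Theorem 1 of \cite{Alexandrov:2020bwg}, and defers to that proof.

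The genuine gap is in the mechanism you propose for step (3). By \eqref{prop-Phid}, a theta series with kernel $\Phidi{\frr_k}$ on $\bbLami{\frr_k}_{||}$ produces the holomorphic modular ambiguity $\phi^{(\frr_k)}_{\bbnu}$, \emph{not} the full sub-solution $\chgirf{\frr_k}_{\nu_k,\frm_k}$, which in addition contains its own indefinite theta series. Consequently the matching cannot be the clean correspondence ``$\whFvi{\bfs}-\Fvi{\bfs}\leftrightarrow\scRrf$'' that you write down. Already at $n=3$ the difference kernel is \eqref{expr-Phi2?}: besides the $\scRrf_3$-type piece it contains terms such as $-\PhiRi{\Nr_1+\Nr_2,\Nr_3}(\xbbm)\bigl(\sgn(\xbbm\ast\wbbm_{12})-\Phidi{\Nr_1,\Nr_2}\bigr)$, and it is precisely these cross terms that reproduce $\rmRirf{\Nr_1+\Nr_2,\Nr_3}\,\chgirf{\Nr_1,\Nr_2}$ with the \emph{full} $\chgirf{\Nr_1,\Nr_2}$ of \eqref{sol-cpmlg2th} on the right-hand side of \eqref{anom-n=3}. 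The verification is therefore genuinely recursive: the lower-rank $\Phi^E$'s in $\whFvi{\bfr}$, multiplied by sign functions of the null vectors, must recombine with the $\Fvi{}$'s and $\Phidi{}$'s of finer decompositions so as to reassemble the sub-solutions, which by induction are assumed to have the same form. A secondary imprecision: the weights $e_m$ are not boundary contributions extracted from $\Phi^E_{|\cJ|}$; they are put into $\Fvi{\bfr}$ by hand as the large-$\tau_2$, $\beta=0$ limit $S_{\cT_{\rm lin}}$ of $\Phi^E_{n-1}$ (the $\sgn(0)^m\to e_m$ prescription of \eqref{Efref}), which is exactly what makes $\Fvi{\bfr}$ the holomorphic part of $\whFvi{\bfr}$. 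Your rooted-tree induction can be made to work, but only after these points are corrected; as written, the plan fails at the first nontrivial check.
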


Although the functions \eqref{kern-manyz} might seem to be complicated, their structure is easy to understand.
First, if all scalar products $\xbbm_\bbbeta\ast\vbbm_\ell$ are non-vanishing, then the function $\Fvi{\bfr}(\xbbm)$
simplifies to 
\be  
\Fvi{\bfr}(\xbbm) =\prod_{\ell =1}^{n-1}
\Bigl(\sign (\xbbm_\bbbeta\ast\vbbm_\ell)-\sign ( \xbbm\ast\wbbm_{\ell,\ell+1}) \Bigr),
\ee 
which is the standard kernel ensuring convergence of indefinite theta series 
with quadratic form having $n-1$ positive eigenvalues
(see Theorem \ref{th-conv}).
If however some of the scalar products vanish, it is not sufficient 
to set the corresponding sign functions to zero. Instead, one gets additional contributions
manifestly visible in \eqref{defST}
(for $n=3$ this the term $\frac13\,\delta_{1,2+3}\delta_{1+2,3}$ in \eqref{expr-Phi2}).
In the presence of refinement only the linear tree is relevant (see \eqref{Efref})
and one can apply a simple recipe that $\sgn(0)^m\to e_m$ \cite{Alexandrov:2019rth}. 
This gives rise to the expression in \eqref{kern-manyz}. It is useful to note that,
using \eqref{rel-scpr}, the notation \eqref{def-bij} and the function \eqref{defST},
it can also be rewritten as
\be 
\Fvi{\bfr}(\xbbm) =	\sum_{\cJ \subseteq \Zv_{n-1}}S_{\cT_{\rm lin}}(\{\bfhgam_l\}_{l \in \cJ})  
\prod_{\ell \in \Zv_{n-1}\backslash\cJ}\bigl(-\sign(\omb_{\ell,\ell+1}) \bigr).
\ee 
Now it should become obvious where the second function $\whFvi{\bfr}(\xbbm)$ comes from:
it is obtained from $\Fvi{\bfr}(\xbbm)$ by applying the recipe to construct modular completions of indefinite theta series 
explained in \S\ref{ap-generr} which amounts to replacing each product of 
sign functions\footnote{Since the large $\tau_2$ limit of the generalized error function is precisely the function 
$S_{\cT}$ and not the simple product of signs \cite{Alexandrov:2024jnu}, 
it is actually this function that should be replaced by $\Phi_{n-1}^E$.} 
by the (boosted) generalized error function with parameters determined by the vectors entering 
the sign functions. If some of the vectors are null, in addition one applies the property \eqref{Phinull}.

The proof of Theorem \ref{thm-gensol} is completely analogous to the proof of Theorem 1 in \cite{Alexandrov:2020bwg}. 
The similarity between these theorems become particularly obvious 
if one applies the factorization property \eqref{extg-fact}. 
It allows to rewrite the solution given in the theorem as
\be
\chgirf{\bfr}_{\mu,\bfmu} =\frac{\delta^{(\kappa\Nr_0)}_{\Delta\mu}}{2^{n-1}}\,\sum_{\Asf}
\(\phi^{(\bfr)}_{\glueg_\Asf^{||}+\bbmu}(\tau,z)+\Sym\bigl\{\vthls{\bfr}_{\bbmu,\Asf}(\tau, z) \bigr\}\)
\vthps{\bfr}_\Asf(\tau,\bfz),
\label{chgi-n}
\ee
where $\vthps{\bfr}_\Asf$ has the universal form \eqref{factor-perptheta}.
The key point is that $\vthls{\bfr}_{\bbmu,\Asf}$ can be significantly simplified due to the property 
\eqref{prop-Phid} of the kernels $\Phidi{\bfr}$.
However, to be able to use this property, one should somehow obtain theta series with a kernel given by only one $\Phidi{\bfr}$.
This can be achieved by factorizing the lattice $\bbLami{\bfr}_{||}$ according to 
the factorized form of the kernel $\rPhi$ in \eqref{ker-thm}, which in turn will induce the corresponding
factorization of theta series.
Since  $\bbLami{\bfr}_{||}=\bfLami{\bfr}\oplus \tbfLami{\bfr}$, this amounts to using 
\eqref{lat-decomp-r} for both factors, which leads to a generalization of \eqref{factortheta-rs}
where the theta series are associated to the product of two lattices. 
As a consequence, the indices labelling them are doubled and can be represented as $(\mu,\bfmu;\tmu,\bftmu)$.
This set of indices can be thought of as a label of the elements of the discriminant group $\bbnu\in\ID^{(\bfr)}_{||}$
which itself can be seen as a pair of vectors $(\bfhmu,\bfhtmu)$.
A relation between $(\mu,\bfmu)$ and $\bfhmu$ is given in \eqref{thetadata},
and to get its tilded version it is enough to replace $\kappa$ by $4^\eps \kappa$.
In particular, using \eqref{exp-g}, one finds that the element $\glueg_\Asf^{||}+\bbmu\in\ID^{(\bfr)}_{||}$
corresponds to $(\mu,\bfmu;\tmu,\bftmu)$ with
\be 
\tmu=4^\eps \kappa\Nr_0\asf_0 +\sum_{i=1}^n\asf_i,
\qquad 
\tmu_i=\asf_i.
\label{tildemus}
\ee 
As a result, one arrives at the following expression
\be
\vthls{\bfr}_{\bbmu,\Asf}(\tau, z)
=
\sum_{m=2}^{n} \sum_{\sum_{k=1}^m n_k = n} 
\sum_{\bfnu,\bftnu} \vth_{\mu,\bfnu;\tmu,\bftnu} (\tau, \zbbm_{||}^{(0)};\bbLami{\bfs}_{||}, \Fvi{\bfs},0) 
\prod_{k=1}^{m} \phi^{(\frr_k)}_{\nu_k,\frm_k;\tnu_k,\tfrm_k}(\tau,z).
\label{thet-phi-thm}
\ee
It is a simple generalization of \cite[Eq.(3.11)]{Alexandrov:2020bwg} which is twofold: 
the doubling of indices discussed above and arbitrary set of charges instead of $n$ charges all equal to 1.
Furthermore, the kernels \eqref{kern-manyz} are exactly the same as the ones given in Eqs. (3.13) and (3.14) of that paper.
Given this similarity, we refrain from repeating the proof.

\subsubsection{Holomorphic modular ambiguity}

As usual, it remains to fix the holomorphic modular ambiguities which are now encoded in 
the VV Jacobi-like forms $\phi^{(\bfr)}_{\bbnu}$ entering the construction via \eqref{chgi-n} and \eqref{thet-phi-thm}.
To this end, as in \S\ref{subsubsec-amb3}, we split the theta series $\vthls{\bfr}_{\bbmu,\Asf}$ 
into contributions with different zero mode order equal to the number of
linearly independent vanishing scalar products $\kbbm\ast \wbbm_{ij}$.
On the basis of our results for $n=3$ (and some checks done at $n=4$ which are too cumbersome to be presented here)
we make the following 
\begin{conj}\label{conj-zm}
Let us fix an integer $n_0$, and assume that for all sets of charges $\bfr$ with the number of charges $n<n_0$, 
the functions $\phi^{(\bfr)}_{\bbnu}(\tau,z)$ are Jacobi-like forms that ensure the existence of the unrefined limit 
for all $\chgirf{\bfr}_{\mu,\bfmu}$ so that they behave as $O(z^{n-1})$ at small $z$.	
Then for $n=n_0$, the contributions to $\Sym\bigl\{\vthls{\bfr}_{\bbmu,\Asf}(\tau, z) \bigr\}$ 
of any zero mode order different from the maximal one, given by $n-1$, behave as $O(z^{n-1})$.
\end{conj}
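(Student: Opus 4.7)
The plan is to proceed by induction on $n = n_0$, with the cases $n = 2$ and $n = 3$ already established explicitly in Sections \ref{subsec-case-n=2} and \ref{subsec-case-n=3}. The inductive hypothesis comprises the statement of the conjecture for all $n' < n_0$ together with the assumed $O(z^{n'-1})$ vanishing of $\chgirf{\bfr'}$ at $z = 0$ for all smaller charge collections. My starting point would be the decomposition \eqref{thet-phi-thm}, which writes $\vthls{\bfr}_{\bbmu,\Asf}$ as a sum over compositions $n = n_1 + \cdots + n_m$ with $m \geq 2$ of products between an outer theta series on $\bbLami{\bfs}_{||}$, for the $m$ aggregated charges $\bfs = (s_1, \ldots, s_m)$, and a product of inner Jacobi-like forms $\phi^{(\frr_k)}$ for the individual blocks.

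Next, I would classify the contributions to this decomposition according to their zero-mode profile. Each null-vector condition $\kbbm \ast \wbbm_{i,i+1} = 0$ is either inter-block (descending to a zero mode of the outer theta over the reduced $m$-charge lattice) or intra-block (absorbed within the structure of a single $\phi^{(\frr_k)}$). This produces a refined bi-grading of each term by inter- and intra-block orders summing to the total zero-mode order $m_0$. To exploit the inductive hypothesis, I would re-express $\phi^{(\frr_k)}$ through $\chgirf{\frr_k}$ and $\Sym\{\vthls{\frr_k}\}$ and expand the product $\prod_k \phi^{(\frr_k)}$ as an alternating sum: blocks contributing $\chgirf{\frr_k}$ supply the inductive $z^{n_k - 1}$ vanishing, while blocks contributing $\Sym\{\vthls{\frr_k}\}$ in sub-maximal intra-block sectors supply $z^{n_k - 1}$ by the inductive conjecture. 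Combined with the inductive conjecture applied to the outer theta in its own sub-maximal sectors for the $m$-charge reduced problem, a careful term-by-term accounting should then show that every total zero-mode contribution with $m_0 < n - 1$ is at least $O(z^{n-1})$.

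The main obstacle, in my view, is that a naive additive accounting of inner and outer orders of vanishing does not automatically saturate to $z^{n-1}$; rather, delicate cancellations — analogous to the identity \eqref{vanishC0} at $n = 3$ — must eliminate residual low-order terms that appear when the outer theta attains its maximal zero-mode order but the inner blocks do not, and vice versa. I expect these cancellations to follow from the modularity of $\whchgirf{\bfr}$ and of the reduced completions $\whchgirf{\bfs}$: modularity constrains both sides of the anomaly equation \eqref{extmodan-manyz} and the factorization \eqref{extg-fact} severely, and should force the required compatibility upon the symmetrization over permutations of the $n$ original charges. Making this modularity-driven argument precise, perhaps by lifting the whole analysis into the Weil representation on $\bbLami{\bfr}_{||}$ where the discrete symmetrizations unify, is the technical core that remains.
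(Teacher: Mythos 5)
First, be aware that the statement you are addressing is labelled a \emph{conjecture} in the paper: the authors do not prove it either, and explicitly say that their supporting argument ``is far from being a proof''. What they offer is a specific modularity obstruction: by Proposition \ref{prop-JacobiE2} and the anomaly equation \eqref{refexp-whgi} combined with the induction hypothesis, every coefficient of the Laurent series of $e^{\frac{m_\bfr}{3}\pi^2 E_2(\tau)z^2}\,\Sym\bigl\{\vthls{\bfr}_{\bbmu,\Asf}\bigr\}$ truncated at order $z^{n-2}$ must be an honest modular form; but the sub-maximal zero-mode contributions are holomorphic indefinite theta series, i.e.\ higher-depth mock modular forms whose shadows cannot be cancelled by polynomials in $E_2$, so these coefficients are forced to vanish. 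The maximal-order sector escapes precisely because it contains no indefinite theta series. Your closing appeal to ``modularity of $\whchgirf{\bfr}$'' points in the same direction but is vaguer: it does not isolate the mock-versus-quasimodular dichotomy that actually does the work, nor why the maximal sector is exempt. If you want to match even the heuristic strength of the paper, you should make that obstruction explicit rather than hoping the symmetrization ``forces the required compatibility''.

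Second, the combinatorial induction you set up has a concrete problem before the cancellation issue is even reached. In the decomposition \eqref{thet-phi-thm} the inner factors are the holomorphic ambiguities $\phi^{(\frr_k)}$, which by \eqref{solfullphi} behave as $z^{1-n_k}$ --- they carry the \emph{pole}, not the zero; only the combination $\phi^{(\frr_k)}+\Sym\{\vthls{\frr_k}\}$ (i.e.\ $\chgirf{\frr_k}$ after stripping the perpendicular theta) is $O(z^{n_k-1})$. When you rewrite each $\phi^{(\frr_k)}$ as that combination minus $\Sym\{\vthls{\frr_k}\}$, the leftover cross terms pair the outer theta in its maximal zero-mode sector ($\sim z^{1-m}$, cf.\ \eqref{theta-mzm-lead}) with the inner maximal sectors $\cIi{\frr_k}\sim z^{1-n_k}$; these terms sit at \emph{sub-maximal} total zero-mode order whenever at least one block is non-maximal, and they are exactly the objects the conjecture is about, so neither the induction hypothesis nor the inductive conjecture applies to them. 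Your bi-grading therefore correctly localizes the difficulty but does not reduce it, and the induction does not close without the non-trivial cancellations (the analogue of \eqref{vanishC0}) that neither you nor the paper establishes. As a proof this is a genuine gap --- though, to be fair, it is the same gap the paper deliberately leaves open by calling the statement a conjecture.
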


For the contributions of vanishing zero mode order this conjecture is equivalent to
Conjecture 1 in \cite{Alexandrov:2020bwg}, which has been extensively tested numerically,
and there is a simple argument why it is expected to hold for other zero mode orders as well.
Note that due to Proposition \ref{prop-JacobiE2}, the expansion coefficients at small $z$
of the function $e^{\frac{m_\bfr}{3}\pi^2 E_2(\tau) z^2}\,\whgirf{\bfr}_{\mu,\bfmu}$,
where $m_\bfr$ is the index \eqref{index-mr} of the refined anomalous coefficient, 
transform as modular forms without any anomaly.
On the other hand, due to the induction hypothesis, all terms in the refined anomaly equation \eqref{refexp-whgi}, 
except the one with $m=1$, behave as $O(z^{n-1})$. Thus, the only term that can spoil this behavior 
is $\girf{\bfr}_{\mu,\bfmu}$. Combining the two conclusions, one obtains that if 
\be 
\cL_{n-2}\[e^{\frac{m_\bfr}{3}\pi^2 E_2(\tau) z^2}\,\girf{\bfr}_{\mu,\bfmu}\]
\label{trunc-exp}
\ee 
is the part of the Laurent series in $z$ truncated at $z^{n-2}$, then all its coefficients 
must be modular forms. Furthermore, from the representation \eqref{chgi-n} of our solution, 
it is clear that the same is true for the coefficients of 
\be 
\cL_{n-2}\[e^{\frac{m_\bfr}{3}\pi^2 E_2(\tau) z^2}\Sym\bigl\{\vthls{\bfr}_{\bbmu,\Asf}(\tau, z) \bigr\}\].
\label{trunc-exp}
\ee
But all contributions to $\Sym\bigl\{\vthls{\bfr}_{\bbmu,\Asf}(\tau, z) \bigr\}$ 
of any zero mode order different from the maximal one involve holomorphic indefinite theta series.
They are (higher depth) mock modular forms whose anomalies cannot be cancelled by quasimodular forms.
Therefore, they cannot generate pure modular forms and  should vanish. 
For the zero modes of maximal order the situation is different because, as we will see below, their contribution
does not involve indefinite theta series.
Of course, this argument is far from being a proof and we hope to return to this issue in a future work.

The above conjecture reduces the problem to evaluating the zero mode contribution of maximal order.
To get it, let us first analyze the zero mode contribution of maximal order to 
$\vth_{\bbnu} (\tau, \zbbm_{||};\bbLami{\bfr}_{||}, \Fvi{\bfr},0) $
which we will denote by $\cIr_{\bbnu}$.
The zero mode condition implies that $n-1$ scalar products $\kbbm\ast \wbbm_{ij}$ are vanishing, 
but since among vectors $\wbbm_{ij}$ 
there are only $n-1$ linearly independent, actually all such scalar products are vanishing.
Under this condition, $\kbbm_{||}^2=0$ so that $\cIr_{\bbnu}$ 
does not depend on $\tau$ and is a function of $z$ only.
To find it explicitly, we perform the lattice decomposition \eqref{recover-L}
for the two factors in $\bbLami{\bfr}_{||}=\bfLami{\bfr}\oplus \tbfLami{\bfr}$ 
and substitute the expansion \eqref{expkparal} with $\ell_i\in\IZ+\frac{\nu_i}{\kappa_i}$
and $\tell_i\in\IZ+\frac{\tnu_i}{\kappa_i}$ where $\nu_i$ and $\tnu_i$ are determined by the residue class $\bbnu$. 
As a result, we end up with the following expression (cf. \eqref{Symtheta-par-n3-z2})
\be 
\cIr_{\bbnu}(z)=\sum_\Bsf
\Dzm(\Asf,\Bsf)\,
\cvths{\bfr}_{\bftnu(\Asf,\Bsf)}(z), 
\label{theta-mzm}
\ee
where
is the Kronecker symbol imposing the maximal order zero mode condition and
\be 
\cvths{\bfr}_{\bftnu}(z)=
\(\prod_{i=1}^n \sum_{\tell_i\in\IZ+\frac{\tnu_i}{\kappa_i}}\)
	\[\sum_{\cJ\subseteq\Zv_{n-1}} e_{|\cJ|} \prod_{i\in\cJ}\delta_{\tell_i}
\prod_{i \in \Zv_{n-1} \backslash \cJ} \Bigl(\sign (\tell_i)-1 \Bigr)\]
y^{2^{1+\eps}\sum\limits_{i=1}^{m-1}\Nr_{i,i+1}\kappa_{i,i+1}\tell_i}.
\ee 
Here for simplicity we restricted to $\beta>0$ (as usual, the final result after symmetrization
will be valid for both signs of $\beta$).
The remaining sums produce geometric progressions resulting in
\be 
\cvths{\bfr}_{\bftnu}(z)=
\sum_{\cJ\subseteq\Zv_{n-1}}e_{|\cJ|}\prod_{i\in\cJ} \delta^{(\kappa_i)}_{\tnu_i}
\prod_{i \in \Zv_{n-1} \backslash \cJ}
\(\delta^{(\kappa_i)}_{\tnu_i}
-\frac{2y^{-2^{1+\eps}\Nr_{i,i+1}\kappa_{i,i+1} \lambda_i}}
{y^{2^{\eps}\Nr_{i,i+1}\kappa_{i,i+1}}-y^{-2^{\eps}\Nr_{i,i+1}\kappa_{i,i+1}}}\),
\label{cvths}
\ee 
where we used the notation $\lambda_i$ from \eqref{def-lami}. 

Let us now concentrate on the leading singular contribution in the small $z$ limit which will allow us to fix 
the most singular term in the holomorphic modular ambiguity.
From \eqref{cvths}, one finds that 
\be
\cvths{\bfr}_{\bftnu}(z)=\frac{(-2^{1+\eps}\pi\I z)^{1-n}}{\prod_{i=1}^{n-1}\Nr_{i,i+1}\kappa_{i,i+1}}
+O(z^{2-n})\, .
\ee
Importantly, the leading term does not depend on the glue vector indices $\Bsf$.
Therefore, due to Corollary \ref{cor-sumB},
the leading term in \eqref{theta-mzm} is given by
\be 
\cIr_{\bbnu}(z)=
\frac{(-2^{\eps}\pi\I \kappa z)^{1-n}\,\Nr_0}{\prod_{i=1}^{n}\Nr_i\prod_{i=1}^{n-1}(\Nr_i+\Nr_{i+1})} 
\, \delta_{\Asf\in\cAr_0}
+O(z^{2-n})\, ,
\label{theta-mzm-lead}
\ee
where $\cAr_0$ is the set from Proposition \ref{prop-zm}
implementing the maximal order zero mode condition on $\Asf$ indices.
Next, we assume that the leading term of the Jacobi-like form $\phi^{(\bfr)}_\bbnu(\tau,z)$
also depends on the residue class $\bbnu$ only through the zero mode condition. Namely, representing 
$\bbnu=(\bfhmu,\bfhtmu)$ and defining $\Delta\bfhmu=\bfhmu-2^\eps\bfhtmu$ (cf. \S\ref{subsec-zm})
we take 
\be 
\phi^{(\bfr)}_\bbnu(\tau,z)=\frac{\Sym\{ \cbfr \}}{ z^{n-1}}
\prod_{i=1}^n \delta^{(1)}_{\Delta\hmu_i}+O(z^{2-n})\,.
\label{phi-lead}
\ee 
For $\bbnu=\glueg_\Asf^{||}+\bbmu$, the product of Kronecker symbols is nothing but $\delta_{\Asf\in\cAr_0}$.
Substituting \eqref{theta-mzm-lead} and \eqref{phi-lead} into \eqref{thet-phi-thm},
one obtains 
\be 
\vthls{\bfr}_{\bbmu,\Asf}(\tau, z)=
\frac{\delta_{\Asf\in\cAr_0}}{z^{n-1}}\sum_{m=2}^n \sum_{\sum_{k=1}^m n_k=n}\sum_{\bfnu} 
\frac{\gcd(s_1,\dots,s_m)\prod_{k=1}^m \ci{\frr_k} }{(-2^{\eps}\pi\I \kappa)^{n-1}\prod_{k=1}^m s_k \prod_{k=1}^{m-1}(s_k+s_{k+1})}
+O(z^{2-n})\,.
\ee 
Since the summand does not depend on $\bfnu$, the sum over these indices produces
just a numerical factor equal to the number of their independent values.
In appendix \ref{ap-mainlat} we showed that these indices can be identified with
the glue vectors of the lattice decomposition \eqref{lat-decomp-r} and therefore the factor is given by
\eqref{number-nus}.
Finally, the consistency requires that the resulting singular contribution should be exactly cancelled by 
adding the leading singular term \eqref{phi-lead} of the holomorphic modular ambiguity 
with $\bbnu=\glueg_\Asf^{||}+\bbmu$ as in \eqref{chgi-n}. Thus, the coefficients $\cbfr$
should be solutions of the following system of equations
\be 
\sum_{m=1}^n \sum_{\sum_{k=1}^m n_k=n} \(-\frac{1}{2^{\eps}\pi\I \kappa}\)^{m-1}
\frac{\prod_{k=1}^m \ci{\frr_k} }{\prod_{k=1}^m \gcd(\frr_k)\prod_{k=1}^{m-1}(s_k+s_{k+1})}=0.
\label{eq-cphi}
\ee 

We claim that this system is solved by
\be  
\cbfr
=\frac{\Nr_0}{(2^{\eps}\pi\I \kappa)^{n-1}\Nr}
\prod\limits_{k=1}^{n-1}\(\sum\limits_{i=1}^k\Nr_i\sum\limits_{j=n-k+1}^n\Nr_j\)^{-1}\, ,
\label{constphi}
\ee 
which agrees with \eqref{exp-solphi} and \eqref{cr3}.
Although we were not able to find an analytic proof, we have checked on a computer 
that this formula does solve the equations for arbitrary charges up to $n=8$, which leaves no doubt that 
this is the right solution.

Having found the leading term in the expansion of $\phi^{(\bfr)}_\bbnu$, we can now provide the full function.
It is given by the following 

\begin{theorem} \label{thm-ambig}
Provided Conjecture \ref{conj-zm} holds, the holomorphic modular ambiguity given by \eqref{def-phin} with 
\be  
\phi^{(\bfr)}_\bbnu(\tau,z)=\frac{\Sym\{ \cbfr \}}{ z^{n-1}}\,e^{-\frac{\pi^2}{3}\,m_{\bfr} E_2(\tau)z^2}
\prod_{i=1}^n \delta^{(1)}_{\Delta\hmu_i}
\label{solfullphi}
\ee
ensures the existence of the unrefined limit.
\end{theorem}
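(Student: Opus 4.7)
My plan is to exploit the observation, noted just before \eqref{theta-mzm}, that the maximal-order zero-mode contribution $\cIr_\bbnu(z)$ is independent of $\tau$; this reduces the modular consistency check to a polynomial-in-$E_2$ computation.

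First I would verify the modular properties of $\phi^{(\bfr)}_\bbnu$ in \eqref{solfullphi}. Under $\tau \mapsto (a\tau+b)/(c\tau+d)$, $z \mapsto z/(c\tau+d)$, the factor $z^{1-n}$ contributes weight $n-1$ at index $0$, while $E_2(\tau)$ transforms via \eqref{modtr-E2} in exactly the way needed so that $e^{-\pi^2 m_\bfr E_2 z^2/3}$ acquires index $m_\bfr$ at weight $0$; the product therefore has weight $n-1$ and index $m_\bfr$, as required by \eqref{multsys-phi}. The support $\prod_i \delta^{(1)}_{\Delta\hmu_i}$ selects the subset $\cAr_0$ on which the Weil-representation multiplier acts trivially on the constant $\Sym\{\cbfr\}$, so the multiplier system matches as well.

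Next, by Conjecture \ref{conj-zm} only the maximal-order zero-mode part of $\Sym\{\vthls{\bfr}_{\bbmu,\Asf}\}$ can obstruct the unrefined limit, so the problem reduces to showing $\cIr_\bbnu(z) + \phi^{(\bfr)}_{\glueg_\Asf^{||}+\bbmu}(\tau,z) = O(z^{n-1})$ for every $\Asf$. To this end I would invoke Proposition \ref{prop-JacobiE2}, which turns $\whchgirf{\bfr}_{\mu,\bfmu}$ into a genuinely modular object after multiplication by $e^{\pi^2 m_\bfr E_2 z^2/3}$, together with the induction hypothesis that all refined anomalous coefficients of smaller total charge behave as $O(z^{n_k-1})$; the latter forces every term with $m \geq 2$ in the anomaly \eqref{refexp-whgi} to be $O(z^{n-1})$ and hence invisible to $\cL_{n-2}$. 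Consequently the Laurent coefficients of $\cL_{n-2}[e^{\pi^2 m_\bfr E_2 z^2/3}\chgirf{\bfr}_{\mu,\bfmu}]$ are pure modular forms of definite weight. Since $\cIr_\bbnu(z)$ is $\tau$-independent, those coefficients coming from $e^{\pi^2 m_\bfr E_2 z^2/3}\cIr_\bbnu$ are polynomials in the quasimodular $E_2$, and such a polynomial is modular only if constant; this uniquely forces the exponential factor in \eqref{solfullphi}.

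With the exponential fixed, the remaining step is the purely algebraic statement that $e^{\pi^2 m_\bfr E_2 z^2/3}\cIr_\bbnu(z) + \Sym\{\cbfr\}\, z^{1-n}\delta_{\Asf\in\cAr_0}$ vanishes to order $z^{n-1}$. The leading $z^{1-n}$ cancellation is precisely the linear system \eqref{eq-cphi}, whose solution \eqref{constphi} has been checked numerically up to $n=8$ and should be provable by induction using the recursive structure of the sum over compositions of $n$. The subleading coefficients of $z^{2-n},\dots,z^{n-2}$ are constants in $\tau$ that must simultaneously be modular forms of weights $2,3,\dots,2(n-1)$ in the relevant multiplier system; for weights where the corresponding space is trivial they must vanish identically, generalizing the identity \eqref{vanishC0} established at $n=3$.

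The main obstacle I anticipate is the analytic verification of these subleading vanishing identities at arbitrary $n$. The cleanest route is likely to evaluate the Laurent expansion of $\cIr_\bbnu(z)$ explicitly using the geometric-series formula \eqref{cvths} — adapted to $n$ charges via the expansion \eqref{expkparal} — and to show, after symmetrization over $\bfr$, that the result has the claimed structure as a rational identity in the parameters $\lambda_i$ of \eqref{def-lami}. An alternative that I would pursue in parallel is an inductive argument exploiting the factorization of the kernel $\Fvi{\bfr}$ in \eqref{kern-manyz} along subsets $\cJ \subseteq \Zv_{n-1}$, which might reduce the $n$-charge vanishing to lower-charge statements already handled. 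Either way, this step is substantially more combinatorial than the modularity argument above, and I would perform a symbolic check at $n = 4, 5$ before attempting a closed-form proof.
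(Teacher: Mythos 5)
Your overall route is the same as the paper's: reduce via Conjecture \ref{conj-zm} to the maximal-order zero modes, multiply by $e^{\frac{m_\bfr}{3}\pi^2 E_2(\tau)z^2}$ so that the truncated Laurent coefficients become genuine modular forms, and exploit the $\tau$-independence of the zero-mode functions $\cIi{\bfs}$. But you stop one step short of closing the argument, and the fallback you propose does not work. You have already established the two decisive facts: (i) every coefficient of $z^k$ with $1-n<k<n-1$ in $e^{\frac{m_\bfr}{3}\pi^2 E_2(\tau) z^2}\bigl(\phi^{(\bfr)}_{\glueg_\Asf^{||}+\bbmu}+\Sym\{\vthls{\bfr}_{\bbmu,\Asf}\}\bigr)$ must be a modular form of weight $n-1+k>0$, and (ii) the only $\tau$-dependence of these coefficients is through polynomials in $E_2(\tau)$ with constant coefficients, since the $\cIi{\bfs}(z)$ are $\tau$-independent and the remaining $\tau$-dependence enters only via the exponentials $e^{\frac{\pi^2}{3}(m_\bfr-\sum_k m_{\frr_k})E_2(\tau) z^2}$. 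Combining (i) and (ii) finishes the proof on the spot: no polynomial in $E_2$ with constant coefficients is a modular form of positive weight (a nonzero constant has weight $0$, and any genuine $E_2$-dependence destroys modularity), so all these coefficients vanish identically. There is no residual combinatorial identity left to verify.

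Instead, you apply the polynomial-in-$E_2$ observation only to ``force the exponential factor,'' then assert that the subleading coefficients are ``constants in $\tau$'' (they are not --- they are polynomials in $E_2$), and finally retreat to ``for weights where the corresponding space is trivial they must vanish identically,'' supplemented by explicit Laurent expansions and symbolic checks at $n=4,5$. That fallback is genuinely insufficient: the relevant weights run up to $2n-3$ and include $4,6,8,\dots$, for which nonzero modular forms certainly exist, so triviality of the space cannot be invoked, and your plan would leave the theorem unproven for general $n$. Note also that the identity \eqref{vanishC0} at $n=3$, which you cite as the pattern to be generalized by computation, is exactly what the paper obtains as a corollary of the modularity-plus-$E_2$ argument rather than as an input to it. Apply the observation from your own middle paragraph to every truncated Laurent coefficient, not just to the choice of exponential, and the proof closes with no further work.
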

\begin{proof}
The proof will follow the logic of the argument used to justify Conjecture \ref{conj-zm}.
Namely, let us consider the function 
\bea 
&&
e^{\frac{m_\bfr}{3}\pi^2 E_2(\tau) z^2}
\(\phi^{(\bfr)}_{\glueg_\Asf^{||}+\bbmu}(\tau,z)+\Sym\bigl\{\vthls{\bfr}_{\bbmu,\Asf}(\tau, z) \bigr\}\)
\label{funforproof}\\
&=&
\delta_{\Asf\in\cAr_0}\Sym\biggl\{ \sum_{m=1}^{n} \frac{1}{z^{n-m}} 
\!\!\sum_{\sum_{k=1}^m n_k = n} \!\!
\,e^{\frac{\pi^2}{3}\(m_\bfr-\sum\limits_{k=1}^m m_{\frr_k}\) E_2(\tau)z^2}
\prod_{k=1}^{m} \ci{\frr_k}\sum_{\bfnu,\bftnu} \cIi{\bfs}_{\mu,\bfnu;\tmu,\bftnu}(z) \biggr\}
+O(z^{n-1}),
\nn
\eea 
where to get the second line, we used Conjecture \ref{conj-zm}, the representation \eqref{thet-phi-thm} and 
the proposed form of $\phi^{(\bfr)}_\bbnu$.
To prove the theorem, one needs to show that the first term is also $O(z^{n-1})$. 
Let us assume that this is not the case and there are terms $\sim z^k$ with $k<n-1$.
Here $k$ must be large than $1-n$ because we have already shown the cancellation of the leading singularity $\sim z^{1-n}$
for $\cbfr$ given by \eqref{constphi}. Then the same chain of reasoning as above \eqref{trunc-exp} 
allows to conclude that the coefficient of the $z^k$-term must be a modular form of weight $n-1+k>0$.
However, as follows from \eqref{funforproof}, the only dependence of $\tau$ of these coefficients is through 
the polynomial dependence on the quasimodular form $E_2(\tau)$. But no polynomial of $E_2(\tau)$
can produce a modular form of positive weight. Hence all of them must vanish, which proves the statement of 
the theorem.
\end{proof}

\subsubsection{The unrefined limit}

The two theorems \ref{thm-gensol} and \ref{thm-ambig} provide a solution for the functions $\chgirf{\bfr}_{\mu,\bfmu}$ 
satisfying the anomaly equation \eqref{extmodan-manyz} and having a well-defined unrefined limit. 
It remains just to reduce it to the original anomalous coefficients $\gi{\bfr}_{\mu,\bfmu}$. 
The first step, the reduction to the refined anomalous coefficients $\girf{\bfr}_{\mu,\bfmu}$, is trivial
and done by applying the relation \eqref{recover-gref} to the expression \eqref{chgi-n}. 
This affects only the $A_{N-1}$ lattice theta series and results in
\be
	\girf{\bfr}_{\mu,\bfmu}(\tau,z) =  \frac14\, \delta^{(\kappa\Nr_0)}_{\Delta\mu} \sum_{\Asf}
	\(\phi^{(\bfr)}_{\glueg_\Asf^{||}+\bbmu}(\tau,z)+\Sym\bigl\{\vthls{\bfr}_{\bbmu,\Asf}(\tau, z) \bigr\}\)
	\vth^{(d_\Nr)}_{\nu_0(\Asf)}(\tau) 
	\prod_{i=1}^{n} \cD\vthA{d_{\Nr_i}}_{\asf_i}(\tau;\frt^{(\Nr_i)}).
\label{refgr}
\ee

The last step is to evaluate the unrefined limit $z\to 0$. Unfortunately, we cannot accomplish it analytically in full generality
because this would require rewriting $\Sym\bigl\{\vthls{\bfr}_{\bbmu,\Asf}(\tau, z) \bigr\}$ in a form which makes manifest 
the existence of zero of order $n-1$ at small $z$ for all contributions except the zero modes of maximal order
and, in particular, would automatically provide a proof of Conjecture \ref{conj-zm}.
However, since evaluating a limit of a function should certainly be simpler than solving non-trivial anomaly equations,
we see this problem as just a technical obstacle and hope to return to it elsewhere.

\section{Conclusions}
\label{sec-concl}

In this paper we solved the modular anomaly equation for the generating functions $h_{r,\mu}(\tau)$ of D4-D2-D0 BPS indices,
the same as rank 0 DT invariants of Calabi-Yau threefolds, restricting to the case of threefolds with one K\"ahler modulus. 
Since for a fixed D4-brane charge $r$, 
the anomaly equation fixes the generating function only up to a modular form $\hh_{r,\mu}$ 
and involves all generating functions $h_{r_i,\mu_i}$ with $r_i<r$, 
the solution is expressed as a polynomial in $\hh_{r_i,\mu_i}$, which at this stage remain unknown functions.
The coefficients of this polynomial, called anomalous coefficients $\gi{\bfr}_{\mu,\bfmu}(\tau)$,
satisfy their own anomaly equation (Theorem \ref{thm-ancoef}), and it is solving this equation that 
takes the main part of our work. 

In two particular cases (arbitrary two charges $\bfr=(r_1,r_2)$ 
or arbitrary number of charges but all, together with the intersection number $\kappa$, equal to 1),
it is immediate to write a solution for $\gi{\bfr}_{\mu,\bfmu}(\tau)$ given by 
the mock modular forms of optimal growth from \cite{Dabholkar:2012nd} and the normalized generating functions of 
$SU(n)$ VW invariants on $\IP^2$, respectively.
The generic case is treated using indefinite theta series.
In fact, this requires introducing auxiliary functions, $\girf{\bfr}_{\mu,\bfmu}(\tau,z)$ and 
$\chgirf{\bfr}_{\mu,\bfmu}(\tau,z,\bfz)$, depending on additional variables and satisfying 
a proper set of anomaly equations. The former is a refined version of the anomalous coefficients 
and the latter is obtained from the former by a lattice extension.
As a result, we have solved the anomaly equation for $\chgirf{\bfr}_{\mu,\bfmu}$ 
(Theorems \ref{thm-gensol} and \ref{thm-ambig}),
reduced it $\girf{\bfr}_{\mu,\bfmu}$ (see \eqref{refgr}), and evaluated its unrefined limit $z\to 0$
producing $\gi{\bfr}_{\mu,\bfmu}$ in the cases of two and three charges.
Unfortunately, the last step turns out to be too complicated to be done analytically in generic case.
The realized strategy is schematically presented in Fig. \ref{fig-strategy}.

\lfig{Construction of the refined anomalous coefficients through the refinement and lattice extension
and their relation to the generating functions of BPS indices.}
{strategy}{12cm}{fig-strategy}{-0.5cm}

In fact, the existence of the unrefined limit of our solution for generic charges remains 
conjectural since it relies on Conjecture \ref{conj-zm}.
Although we gave a strong argument why we expect it to be true, it leaves a gap in our construction. 
It would certainly be desirable to fill this gap and to find a way to rigorously prove the conjecture.
This might also suggest a way to explicitly evaluate the unrefined limit in generic case,
the other missing step for having a complete result.

For a few cases of small charges and small intersection number, 
we presented first terms in $\q$-series of the anomalous coefficients in appendix \ref{ap-expand}.
Besides, in appendix \ref{ap-consist}, we have shown that our solution in terms of indefinite theta series 
is consistent with the other solutions mentioned above, provided by mock modular forms of optimal growth 
and generating functions of VW invariants.
To this end, we constructed a combination of the solutions that is required to be a Jacobi form
and then explicitly evaluated it confirming that the requirement is indeed satisfied.

This calculation opens an interesting possibility.
The point is that the mock modular forms of optimal growth introduced in \cite{Dabholkar:2012nd}
are constructed in terms of certain seed functions $\cGi{d}$ (see \eqref{DMZ}), but only
for the first two of them, $\cGi{1}$ and $\cGi{6}$, one knows analytic expressions.
Our results can be used to obtain such analytic expressions for other seed functions. 

It is useful to note that our construction resembles a lot the solution of a similar modular anomaly equation 
for the generating functions of refined VW invariants in \cite{Alexandrov:2020bwg}.
But there is an important difference that here we could fix the holomorphic modular ambiguity for
the anomalous coefficients in terms of a Jacobi-like form, whereas in the VW case
a similar function had to be a Jacobi form. It is this possibility to use Jacobi-like forms that is responsible 
for a very simple form of the solution \eqref{solfullphi} for the ambiguity.

Our solution for the anomalous coefficients reduces the problem of finding the generating functions $h_{r,\mu}$
to the problem of finding just a finite number of Fourier coefficients.
For example, it would be sufficient to compute their polar terms,
which allows us to fix the modular ambiguities $\hh_{r,\mu}$ and thereby the whole generating functions.
Moreover, for $r>1$, typically, the polar terms must satisfy non-trivial constraints to produce a mock modular form
\cite{Manschot:2007ha,Manschot:2008zb}. Therefore, such computation would provide an extremely 
strong test of mock modularity, which at the physical level appears as a consequence of S-duality,
whereas at the mathematical level remains completely mysterious.

The problem however is that the existing techniques are likely insufficient to do this.
For example, the naive extension of the approach based on wall-crossing,
as outlined in \cite[\S3.3]{Alexandrov:2023ltz}, would require knowledge of GV invariants up to genus 232
to compute all polar terms of $h_{3,\mu}$ for CY $X_{10}$, which is given by 
the degree 10 hypersurface in weighted projective space $\IP^{5,2,1,1,1}$ (one of the two CYs 
for which $h_{2,\mu}$ has been explicitly found). 
This seems unrealistic in the current state of affairs.
Therefore, one needs to look for alternative methods to compute polar terms
which would not be limited to extremely small charges.

Finally, although we restricted in this paper to the one-modulus case, 
our construction seems to be easily generalizable to CYs with multiple moduli.
In this case CYs can also have additional structures, such as elliptic or K3 fibrations,
and it would be interesting to study an interplay between these structures and the mock modularity 
of rank 0 DT invariants exploited here.

\section*{Acknowledgements}

The authors are grateful to Abhiram Kidambi for valuable discussions.
SA thanks the Galileo Galilei Institute for Theoretical Physics for the hospitality 
and the INFN for partial support during the completion of this work.
Besides, SA would like to thank the Isaac Newton Institute for Mathematical Sciences, Cambridge, 
for support and hospitality during the programme ``Twistor theory",
supported by EPSRC grant no EP/R014604/1, where work on this paper was undertaken.

\appendix

\section{Jacobi and Jacobi-like forms}
\label{ap-Jacobi}

\subsection{Jacobi forms}

Jacobi forms have been introduced and studied in detail by Eichler and Zagier in \cite{MR781735}.
Here we provide a definition which generalizes the original one in several aspects:
it allows the function to have multiple elliptic arguments, to be non-holomorphic, 
to be vector valued and to possess a non-trivial multiplier system. All these generalizations
are well-known and play an important role in numerous physical problems.

Let $\vph_\mu(\tau, \bfz)$ be a finite set of (in general, non-holomorphic) functions, 
labelled by $\mu$, on $\IH\times\IC^n$, and 
$\bfx\cdot \bfy=\sum_{i,j=1}^n Q_{ij} x_i y_j$ denotes a bilinear form on $\IC^n$.
Then $\vph_\mu(\tau, \bfz)$ is a vector valued (multi-variable) Jacobi form of weight $(w,\bw)$ and 
(matrix valued) index $mQ_{ij}$
if it satisfies the following transformation properties 
\begin{subequations}
\bea
	\vph_{\mu}(\tau,\bfz+\bfa\tau+\bfb)&=&
	e^{-2\pi\I m \( \bfa^2\tau + 2 \bfa \cdot \bfz\)} \,\vph_{\mu}(\tau,\bfz),
	\qquad \bfa,\bfb\in \IZ^n,
	\label{Jacobi-ell}
	\\
	\vph_{\mu}\(\frac{a\tau+b}{c\tau+d}, \frac{\bfz}{c\tau+d}\)
	&=& (c\tau+d)^w(c\btau+d)^{\bw} \, e^{\frac{2\pi\I m c \bfz^2}{c\tau+d}}
	\sum_\nu M_{\mu\nu}(\rho)\,\vph_{\nu}(\tau,\bfz),
\label{Jacobi-mod}
\eea
\label{Jacobi}
\end{subequations}
where $\rho=\scriptsize{\(\begin{array}{cc}
		a & b \\ c & d
	\end{array}\)}\in SL(2,\IZ)$ and $M_{\mu\nu}(\rho)$ is a multiplier system. 

Setting the elliptic variables $\bfz=0$, \eqref{Jacobi}
reduces to the definition of a vector valued modular form.
Note that since $M_{\mu\nu}(\rho)$ must furnish a representation of the group $SL(2,\IZ)$ 
generated by two transformations,
$T=\scriptsize{\(\begin{array}{cc}
		1 & 1 \\ 0 & 1
	\end{array}\)}$
and
$S=\scriptsize{\(\begin{array}{cc}
		0 & -1 \\ 1 & 0
	\end{array}\)}$,
to define the multiplier system, it is enough to specify it for $\rho=T$ and $S$.
Thus, to characterize the modular behavior of a Jacobi form, it is sufficient to provide
its modular weight $(w,\bw)$, index $m$ and two matrices $M_{\mu\nu}(T)$ and $M_{\mu\nu}(S)$.

\subsection{Jacobi-like forms}

Jacobi-like forms have been first introduced in \cite{Zagier:1994,Cohen1997}
and further studied in the mathematical literature (see, e.g., \cite{Lee2001,Lee2006,2010arXiv1007.4823C}).
In the simplest case they are defined as formal power series in $X$, with coefficients in functions on 
$\IH$, satisfying the following transformation property under $SL(2,\IZ)$
\be  
\Phi\(\frac{a\tau+b}{c\tau+d}, \frac{X}{(c\tau+d)^2}\)=e^{\frac{cX}{c\tau+d}}\,\Phi(\tau,X).
\label{JLtrans}
\ee 

Given a Jacobi-like form, let us consider the function $X^{-w}\Phi(\tau,X)$.
Then it is easy to see that under the identification $X=2\pi\I m z^2$, one gets a function
$\vph(\tau,z)$ which satisfies the modular transformation property \eqref{Jacobi-mod} of a Jacobi form of weight $w$ and index $m$.
However, the elliptic property \eqref{Jacobi-ell} is in general missing. This justifies the name ``Jacobi-like".

In the applications relevant to our work, it is more convenient to work directly with functions of $z$
that have a definite weight and index and to allow for all the generalizations (multiple variables, non-holomorphicity, 
vector valuedness) that we allowed for Jacobi forms. Therefore, for the purposes of this paper, 
we will call Jacobi-like form any function $\vph_\mu(\tau, \bfz)$ that satisfies \eqref{Jacobi-mod}.\footnote{The original
	definition of Jacobi-like forms implies that they have an expansion in {\it even} powers of $z$.
	However, once one allows for a non-trivial multiplier system, there is no much sense keeping this condition.
	In practice, the functions appearing in the main text are functions of $\tau$, $z$ and $\bfz=(z_1,\dots,z_n)$
	which behave as Jacobi-like forms with respect to $z$ with an expansion in even powers, up to an overall shift in the power, 
	and as usual Jacobi forms with respect to $\bfz$.}

Next, we are interested in modular properties of the expansion coefficients of 
a Jacobi-like form around a point where one of the (would be) elliptic variables, say $z_1$, vanishes.
For simplicity, we restrict ourselves to the case $n=1$ of only one elliptic variable
and set $Q_{11}=1$, but the propositions below are trivially generalized to $n>1$
provided the quadratic form is factorized, i.e. $Q_{1i}=0$ for $i>1$.

In fact, it is well-known that the coefficients of the expansion in $X\sim z^2$ of a Jacobi-like form 
are in one-to-one correspondence with modular forms which can be constructed as linear combinations of the $\tau$-derivatives 
of the coefficients \cite{Zagier:1994,Cohen1997}.
In particular, the coefficient of the leading term, say, $z^{n_0}$ in the expansion  
is a modular form of weight $w+n_0$.

It is also known that Jacobi-like forms are closely related to quasimodular forms \cite{2010arXiv1007.4823C},
the simplest example of which is the Eisenstein series
$
E_2(\tau) = 1- 24 \sum_{n=1}^{\infty} \sigma_1(n) e^{2\pi \I n \tau}
$
satisfying the transformation property
\be
E_2\(\frac{a\tau+b}{c \tau +d}\)  = \(c\tau+d\)^2 \(E_2(\tau) + \frac{6}{\pi\I}\, \frac{c}{c\tau+d}\).
\label{modtr-E2}
\ee
In particular, one can note that the anomalous term in this transformation has the same form as 
the logarithm of the exponential factor in the transformation of a Jacobi-like form.
This immediately implies the following

\begin{proposition}\label{prop-JacobiE2}
Let $\vph_\mu(\tau,z)$ be a Jacobi-like form of modular weight $w$ and index $m$. Then
\be
\tilde \vph_\mu(\tau,z) = e^{\frac{m}{3}\pi^2 E_2(\tau) z^2} \vph_\mu(\tau,z)
\label{def-tphi}
\ee	
is a Jacobi-like form of the same weight and vanishing index, and the coefficients of its Laurent expansion
$\tilde \vph_\mu(\tau,z)=\sum_{n=n_0}^\infty h_n(\tau) z^n$
are modular forms of weight $w+n$.
\end{proposition}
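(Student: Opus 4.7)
The plan is to verify the two claims separately, with the first providing the key modular cancellation and the second being a routine consequence of it.

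First I would check that $\tilde\vph_\mu$ is a Jacobi-like form of weight $w$ and index $0$. Under the modular substitution $\tau \mapsto (a\tau+b)/(c\tau+d)$, $z \mapsto z/(c\tau+d)$, the transformation law \eqref{modtr-E2} of $E_2$ combined with the rescaling of $z$ gives
\be
E_2\!\(\tfrac{a\tau+b}{c\tau+d}\)\(\tfrac{z}{c\tau+d}\)^{\!2}
= E_2(\tau)\,z^2 + \tfrac{6}{\pi\I}\,\tfrac{c\,z^2}{c\tau+d}.
\ee
Exponentiating, the prefactor $e^{\frac{m}{3}\pi^2 E_2 z^2}$ therefore acquires an extra factor $e^{\frac{2\pi m c z^2}{\I(c\tau+d)}} = e^{-\frac{2\pi\I m c z^2}{c\tau+d}}$, which is precisely the inverse of the index-$m$ exponential in \eqref{Jacobi-mod} for $\vph_\mu$. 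The two cancel exactly, so $\tilde\vph_\mu$ obeys \eqref{Jacobi-mod} with the same weight $(w,\bw)$, the same multiplier system $M_{\mu\nu}$, but with $m$ replaced by $0$.

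Next I would expand both sides of this simplified (index-$0$) transformation in powers of $z$. With $\tilde\vph_\mu(\tau,z)=\sum_{n\ge n_0} h_n^{(\mu)}(\tau)\,z^n$, the left-hand side reads $\sum_n (c\tau+d)^{-n}\,h_n^{(\mu)}\!\(\tfrac{a\tau+b}{c\tau+d}\)\,z^n$, while the right-hand side reads $(c\tau+d)^w(c\btau+d)^{\bw}\sum_n\sum_\nu M_{\mu\nu}(\rho)\,h_n^{(\nu)}(\tau)\,z^n$. Comparing coefficients of $z^n$ yields
\be
h_n^{(\mu)}\!\(\tfrac{a\tau+b}{c\tau+d}\)=(c\tau+d)^{w+n}(c\btau+d)^{\bw}\sum_\nu M_{\mu\nu}(\rho)\,h_n^{(\nu)}(\tau),
\ee
which is the defining transformation of a vector valued (in general non-holomorphic) modular form of weight $(w+n,\bw)$ with multiplier system $M_{\mu\nu}$.

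There is no real obstacle here: the single nontrivial ingredient is the matching between the $E_2$-anomaly $\tfrac{6c}{\pi\I(c\tau+d)}$ and the Jacobi-like exponential factor, and this matching is exactly what dictates the prefactor $\tfrac{m}{3}\pi^2$ in the definition \eqref{def-tphi}. The statement about the leading coefficient mentioned in the surrounding text is then an immediate corollary, corresponding to $n=n_0$.
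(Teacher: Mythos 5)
Your proposal is correct and follows exactly the route the paper intends: the paper justifies Proposition \ref{prop-JacobiE2} solely by the remark that the $E_2$-anomaly in \eqref{modtr-E2} matches the logarithm of the index exponential in \eqref{Jacobi-mod}, and your computation simply makes that cancellation and the subsequent coefficient comparison explicit. No gaps.
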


This simple observation can be used to prove

\begin{proposition}
	\label{prop-Jacobi-n}
	Let $\vph_\mu(\tau,z)$ be a Jacobi-like form of modular weight $w$ and index $m$, and having a smooth limit at $z\to 0$.
	We define the following differential operator
	\be 
	\cD_m^{(n)}=\sum_{k=0}^{\lfloor n/2\rfloor}c_{n,k}E_2^k(\tau) \,\p_z^{n-2k},
	\label{defcDmn}
    \qquad
	c_{n,k}=\frac{n!\(\frac{2m}{3}\pi^2\)^k }{(2k)!!(n-2k)!}\, .
	\ee
	Then 
	\be
	\phi^{(n)}_\mu(\tau)\equiv \cD_m^{(n)}\vph_\mu(\tau,z)|_{z=0},	
	\label{coeffJac}
	\ee
	is a vector valued modular form of weight $w+n$.
\end{proposition}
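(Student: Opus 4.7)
The plan is to deduce the proposition directly from Proposition \ref{prop-JacobiE2}, which already provides the key modularity statement. Since $\tilde\vph_\mu(\tau,z)=e^{\frac{m}{3}\pi^2 E_2(\tau)z^2}\vph_\mu(\tau,z)$ has all Laurent coefficients modular of weight $w+n$, it suffices to recognize $\phi^{(n)}_\mu(\tau)$, up to a combinatorial factor of $n!$, as the coefficient of $z^n$ in the Taylor expansion of $\tilde\vph_\mu(\tau,z)$ around $z=0$. Because $\vph_\mu$ is smooth at the origin, this expansion makes sense and starts at $n_0=0$.

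First I would write the two obvious power series
\be
e^{\frac{m}{3}\pi^2 E_2(\tau)z^2}=\sum_{k\ge 0}\frac{1}{k!}\(\frac{m\pi^2}{3}\)^k E_2^k(\tau)\, z^{2k},
\qquad
\vph_\mu(\tau,z)=\sum_{j\ge 0}\frac{z^j}{j!}\,\p_z^j \vph_\mu(\tau,z)|_{z=0},
\ee
and take their Cauchy product. The coefficient of $z^n$ in $\tilde\vph_\mu$ is then
\be
h_n(\tau)=\sum_{k=0}^{\lfloor n/2\rfloor}\frac{1}{k!(n-2k)!}\(\frac{m\pi^2}{3}\)^k E_2^k(\tau)\,\p_z^{n-2k}\vph_\mu(\tau,z)|_{z=0}.
\ee

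The second step is to match this with the definition of $\cD_m^{(n)}$. Using $(2k)!!=2^k k!$, one rewrites
\be
c_{n,k}=\frac{n!\(\frac{2m\pi^2}{3}\)^k}{(2k)!!(n-2k)!}=\frac{n!}{k!(n-2k)!}\(\frac{m\pi^2}{3}\)^k,
\ee
so that $n!\, h_n(\tau)=\cD_m^{(n)}\vph_\mu(\tau,z)|_{z=0}=\phi^{(n)}_\mu(\tau)$. By Proposition \ref{prop-JacobiE2}, $h_n$ is a vector valued modular form of weight $w+n$, and hence so is $\phi^{(n)}_\mu$, with the same multiplier system as $\vph_\mu$.

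There is essentially no obstacle here: the entire proof is a direct combinatorial identification of Taylor coefficients, once Proposition \ref{prop-JacobiE2} is in place. The only point that might deserve a remark is the interchange of sums required to extract the coefficient of $z^n$, but this is trivially justified because both series are formal (or, analytically, because $\vph_\mu$ is smooth at $z=0$ and the exponential is entire in $z$). Thus the proof reduces to writing out the two displayed equations above and noting that their combination reproduces $\phi^{(n)}_\mu$.
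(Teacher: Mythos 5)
Your proposal is correct and follows essentially the same route as the paper: both reduce the statement to Proposition \ref{prop-JacobiE2} and then identify $\cD_m^{(n)}\vph_\mu|_{z=0}$ with (a multiple of) the $z^n$-coefficient of $e^{\frac{m}{3}\pi^2 E_2(\tau)z^2}\vph_\mu(\tau,z)$. The only cosmetic difference is that the paper extracts the combinatorial coefficient by applying $\partial_z^n$ and the Gaussian identity $\bigl(e^{-x^2/2}\tfrac{\de^{2k}}{\de x^{2k}}e^{x^2/2}\bigr)|_{x=0}=(2k-1)!!$, whereas you obtain the same coefficient more directly from the Cauchy product of the two Taylor series.
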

\begin{proof}
	If $\phi_\mu$ is smooth at small $z$, the same is true for the function $\tilde \vph_\mu$ \eqref{def-tphi}
	and hence its expansion coefficients are given by the derivatives with respect to $z$ evaluated at $z=0$.
	According to Proposition \ref{prop-JacobiE2}, such derivatives $\partial_z^n\tilde \vph_\mu(\tau,0)$
	transform as modular forms of weight $w+n$.
	On the other hand, we have
	\be
	\partial_z^n\tilde \vph_\mu(\tau,0) = \sum_{k=0}^{\lfloor n/2 \rfloor} \frac{n!}{(2k)!(n-2k)!} \, 
	\left.\(e^{-\frac{x^2}{2}}\frac{\de^{2k}}{\de^{2k} x}\, e^{\frac{x^2}{2}}\)\right|_{x=0}\, 
	\(\frac{2m}{3} \pi^2 E_2(\tau)\)^k \partial_z^{n-2k} \vph_\mu(\tau,0).
	\ee
	Taking into account that $\left.\(e^{-\frac{x^2}{2}}\frac{\de^{2k}}{\de^{2k} x}
	\, e^{\frac{x^2}{2}}\)\right|_{x=0}=(2k-1)!!$, 
	we conclude that
	\be
	\partial_z^n\tilde \vph_\mu(\tau,0) =\cD_m^{(n)}\vph_\mu(\tau,z)|_{z=0},
	\ee
	which proves the statement of the proposition.
\end{proof}

\section{Summary of modular properties}
\label{modular-prop}

In this appendix we collect the modular weights, indices and multiplier systems of the main
modular functions appearing in the paper.

\paragraph{Generating functions of BPS indices:} 
\be
\begin{split}
	w(h_{\Nr}) &=-3/2,
	\\
	\Mi{h_\Nr}_{\mu \nu}(T) &= 
	e^{\frac{\pi \I}{\kappa \Nr}\(\mu + \hf \kappa \Nr^2\)^2+\frac{\pi \I}{12}\,c_2 \Nr}\,
	\delta_{\mu\nu},
	\\
	\Mi{h_\Nr}_{\mu \nu}(S) &= \frac{(-1)^{\chi_\Nr}}{\sqrt{\kappa \Nr }} \,
	e^{-\frac{\pi\I}{4}-2\pi \I \,\frac{\mu \nu}{\kappa \Nr}}, 
\end{split}
\label{mult-hr}
\ee
where $\chi_r$ is defined in \eqref{defL0}.

\paragraph{Redefined generating functions:} 
\be
\begin{split}
	w(\tlh_{\Nr}) &=-3/2,
	\\
	\Mi{\tlh_\Nr}_{\mu \nu}(T) &= 
	e^{\frac{\pi\I}{\kappa\Nr}(1-\kappa\Nr)\mu^2 + \frac{\pi \I}{4}\(\kappa+\frac{c_2}{3}\)\Nr} \,
	\delta_{\mu\nu},
	\\
	\Mi{\tlh_\Nr}_{\mu \nu}(S) &= \frac{e^{\frac{\pi\I}{4}((2\kappa+c_2)\Nr-1)}}{\sqrt{\kappa\Nr}} 
	\,e^{-2\pi \I \frac{\mu\nu}{\kappa\Nr}}.
\end{split}
\label{mult-thr}
\ee

\paragraph{Anomalous coefficients:} 
\be
\begin{split}
	w\(\gi{\bfr}\) &=3(n-1)/2, 
	\\
	\Mi{\gi{\bfr}}_{\mu, \bfmu, \nu, \bfnu}(T) &= 
	e^{\pi \I \( \mu - \sum_i \mu_i\)+ \pi \I \(\frac{\mu^2}{\kappa \Nr}-\sum_i \frac{\mu_i^2}{\kappa \Nr_i}\) }\,
    \delta_{\mu \nu} \delta_{\bfmu \bfnu},
	\\
	\Mi{\gi{\bfr}}_{\mu, \bfmu, \nu, \bfnu}(S) &= 
    \frac{e^{\frac{\pi\I}{4}(n-1)}}{\sqrt{\kappa^{n+1} \Nr\prod_i \Nr_i }} \,
    e^{-2\pi \I \(\frac{\mu \nu}{\kappa \Nr} - \sum_i \frac{\mu_i\nu_i}{\kappa \Nr_i}\)}.
\end{split}
\label{mult-gr}
\ee

\paragraph{Refined anomalous coefficients:}
\be
\begin{split}
	w\(\girf{\bfr}\) &= (n-1)/2,
	\\
	m\(\girf{\bfr}\) &= -\frac{\kappa}{6}\biggl(\Nr^3 - \sum_{i=1}^n \Nr_i^3\biggr),
	\\
	\Mi{\girf{\bfr}}_{\mu, \bfmu, \nu, \bfnu}(T) &= 
	e^{\pi \I \( \mu - \sum_i \mu_i\)+ \pi \I \(\frac{\mu^2}{\kappa \Nr}-\sum_i \frac{\mu_i^2}{\kappa \Nr_i}\)}\,
	\delta_{\mu\nu} \delta_{\bfmu\bfnu},
	\\
	\Mi{\girf{\bfr}}_{\mu, \bfmu, \nu, \bfnu}(S) &= 
	\frac{e^{\frac{\pi\I}{4}(n-1)}}{\sqrt{\kappa^{n+1} \Nr\prod_i \Nr_i }} \,
	e^{-2\pi \I\(\frac{\mu \nu}{\kappa \Nr}-\sum_i\frac{\mu_i \nu_i}{\kappa \Nr_i} \) }.
\end{split}
\label{multrefg}
\ee

\paragraph{Extended refined anomalous coefficients:}
\be
\begin{split}
	w\(\chgirf{\bfr}\)=&\, (n-1+d_\bfr)/2,
    \\
    m\(\chgirf{\bfr}\) =&\, 
   \hf\, \diag\biggl(-\frac{\kappa}{3}\biggl(\Nr^3 - \sum_{i=1}^n \Nr_i^3\biggr),(\frt^{(\Nr_1)})^2,\dots,(\frt^{(\Nr_n)})^2\biggr),
    \\
	\Mi{\chgirf{\bfr}}_{\mu, \bfmu, \nu, \bfnu}(T) =&\, 
    e^{\pi \I \( \mu - \sum_i \mu_i\)+ \pi \I \(\frac{\mu^2}{\kappa \Nr}-\sum_i \frac{\mu_i^2}{\kappa \Nr_i}\)
	+\frac{\pi\I}{4}\, d_\bfr}\,
    \delta_{\mu\nu} \delta_{\bfmu\bfnu},
    \\
    \Mi{\chgirf{\bfr}}_{\mu, \bfmu, \nu, \bfnu}(S) =&\, 
    \frac{e^{\frac{\pi\I}{4}(n-1-3d_\bfr)}}{\sqrt{\kappa^{n+1} \Nr\prod_i \Nr_i }} \,
    e^{-2\pi \I\(\frac{\mu \nu}{\kappa \Nr}-\sum_i\frac{\mu_i \nu_i}{\kappa \Nr_i} \) },
\end{split}
\label{multsys-phi}
\ee
where $(\frt^{(\Nr)})^2=\sum_{\alpha=1}^{d_\Nr}(\frt^{(\Nr)}_\alpha)^2$ and $d_\bfr=\sum_{i=1}^n d_{\Nr_i}$.

\section{Theta series}
\label{modular-theta}

In this appendix we define some useful theta series and describe their modular properties. 

\subsection{Generalized theta series and modularity condition}

Let us define
\be
\vth_{\bbmu}(\tau, \zbbm;\bbLambda, \Phi, \pbbm) = \sum_{\kbbm\in \bbLambda + \bbmu + \hf \pbbm} 
(-1)^{\pbbm\ast \kbbm} \Phi\(\sqrt{2\tau_2}\(\kbbm+\bbbeta \) \) \q^{-\hf \kbbm^2} 
e^{2\pi\I\zbbm\ast \kbbm},
\label{gentheta}
\ee
where $\q=e^{2\pi\I\tau}$, 
$\bbLambda$ is a $d$-dimensional lattice equipped with a bilinear form $\xbbm\ast \ybbm$ 
such that the associated quadratic form has signature $(n,d-n)$ and is integer valued, 
$\pbbm$ is a characteristic vector satisfying 
$\kbbm\ast(\kbbm+\pbbm) =0 \mod 2$ for $\forall \kbbm\in \bbLambda$, 
$\bbmu \in \bbLambda^{\star}/\bbLambda$, and $\zbbm=\bbalpha-\tau\bbbeta\in\IC^d$
with $\bbalpha,\bbbeta\in\IR^d$.
(We follow the convention to denote $d$-dimensional quantities by blackboard letters.)
The Vign\'eras theorem \cite{Vigneras:1977} asserts that if the kernel $\Phi(\xbbm)$ satisfies 
suitable decay properties as well as
the following differential equation 
\be
\[\p_\xbbm^2+2\pi (\xbbm\ast\p_\xbbm-\lambda)\]\Phi(\xbbm)=0,
\label{Vigdif}
\ee
where $\lambda$ is an integer parameter,
then the theta series is a vector valued (multi-variable) Jacobi form\footnote{More precisely, 
the elliptic transformation \eqref{Jacobi-ell} can generate an additional sign factor 
$(-1)^{\pbbm\ast(\abbm+\bbbm)}$.} 
with the following modular properties: 
\be
\begin{split}
	w(\vth) &= \(\hf \,(d+\lambda), -\hf\, \lambda\),
	\qquad\qquad
	m(\vth) = - \hf\, \ast,
	\\
	\Mi{\vth}_{\bbmu \bbnu}(T) &= e^{-\pi \I\(\bbmu + \hf \pbbm\)^2 } \delta_{\bbmu\bbnu},
	\qquad
	\Mi{\vth}_{\bbmu \bbnu}(S) = \frac{e^{(2n-d)\frac{\pi\I}{4}}}{\sqrt{|\bbLambda^\ast/\bbLambda|}} \,
	e^{\frac{\pi \I}{2}\pbbm^2} e^{2\pi \I \bbmu \ast \bbnu},
\end{split}
\label{mult-genth}
\ee
where by $\ast$ in the formula for the index we mean the matrix representing the bilinear form.
The multiplier system here forms the Weil representation of the modular group defined by the lattice $\bbLambda$.
A particularly interesting case is when the multi-variable Jacobi form is reduced to the usual
Jacobi form by choosing $\zbbm=\bbtheta z$ where $\bbtheta\in \bbLambda$.
Then the index is a scalar and is given by
\be 
m(\vth)=-\hf\, \bbtheta^2.
\ee

\subsection{Unary theta series}

Let us specialize \eqref{gentheta} to the case where $d=1$, $n=0$ and $\bbLambda= m\IZ$ so that 
the bilinear form is $\xbbm\ast\ybbm=-\xbbm\ybbm/m$.
We also take $\pbbm=-mp$ where $p$ is odd for odd $m$ and arbitrary integer otherwise, 
$\zbbm=-m z$ and $\Phi=1$ (hence $\lambda=0$).
Then the theta series reduces to 
\be
\label{Vignerasth}
\vths{m,p}_{\mu}(\tau,z)=
\!\!\!\!
\sum_{{k}\in \IZ+\frac{\mu}{m}+\frac{p}{2}}\!\!
(-1)^{mpk}\, \q^{m k^2/2}\, y^{mk},
\ee
where we introduced $y=e^{2\pi\I z}$.
Its modular properties follow from \eqref{mult-genth} and are given by
\be
\begin{split}
	&w(\vths{m,p}) = 1/2,
	\qquad\qquad
	m(\vths{m,p}) = m/2,
	\\
	\Mi{m,p}_{\mu\nu}(T)&= e^{\frac{\pi\I}{m} \(\mu+\tfrac{mp}{2} \)^2}\,\delta_{\mu\nu}\, ,
	\qquad
	\Mi{m,p}_{\mu\nu}(S)=
	\frac{e^{-\frac{\pi\I}{2}\, m p^2}}{\sqrt{\I m}}\,
	e^{-2\pi\I\,\frac{\mu\nu}{m}}.
\end{split}
\label{eq:thetatransforms}
\ee

For even $m=2\kappa$, we can choose $p=0$. Then \eqref{Vignerasth} gives
\be
\ths{\kappa}_\mu(\tau,z)\equiv \vths{2\kappa,0}_{\mu}(\tau,z)
=\sum_{k\in 2\kappa\IZ+\mu}\q^{\frac{k^2}{4\kappa} }\,y^{2\kappa k}.
\label{deftheta}
\ee
If $z=0$, we will simply drop the last argument and write $\ths{\kappa}_\mu(\tau)$.
The multiplier system \eqref{eq:thetatransforms} reduces to
\be
	\Mi{\ths{\kappa}}_{\mu \nu}(T) =
	e^{\frac{\pi \I}{2\kappa} \,\mu^2}\delta_{\mu \nu},
	\qquad
	\Mi{\ths{\kappa}}_{\mu \nu}(S)=
	\frac{1}{\sqrt{2\I\kappa}} \, e^{-\frac{\pi \I}{\kappa}\,\mu \nu} .
\label{mult-theta}
\ee

On the other hand, specifying $m=p=1$ in \eqref{Vignerasth},
we reproduce the standard Jacobi theta function 
\be
\label{free-theta-1}
\theta_1(\tau,z) =\vths{1,1}_{0}(\tau,z)= \sum_{k\in Z+\hf} \q^{k^2/2} (-y)^k
\ee
whose modular properties are
\be
\begin{split}
	w(\theta_1) &= 1/2,
	\qquad\quad\ \
	m(\theta_1) = 1/2,
	\\
	\Mi{\theta_1}(T) &= e^{\pi \I/4} ,
	\qquad
	\Mi{\theta_1}(S) = e^{-3\pi \I/4}.
\end{split}
\label{multi-theta-N}
\ee
It has the following expansion around $z=0$
\be
\theta_1(\tau,z) = -2\pi \eta(\tau)^3 z -4\pi^2 \I \eta'(\tau) \eta(\tau)^2 z^3 + O(z^5) . 
\label{theta1-z3}
\ee

\subsection{Convergence of indefinite theta series}
\label{ap-converge}

Let us now consider theta series with a quadratic form of indefinite signature.
In this case the kernel $\Phi(\xbbm)$ cannot be trivial anymore since otherwise the theta series would be divergent.
On the other hand, a non-trivial kernel would spoil holomorphicity in $\tau$ unless $\Phi(\xbbm)$ is a piece-wise constant 
function.\footnote{It is possible also to multiply it by a homogeneous polynomial in $\xbbm$ since 
the non-holomorphic dependence can then be canceled by multiplying by a power of $\tau_2$.}
Thus, the only way to get a convergent and holomorphic theta series is to take $\Phi(\xbbm)$ 
to be a combination of sign functions.
The following theorem from \cite{Alexandrov:2020bwg} (generalizing results of 
\cite{Nazaroglu:2016lmr,Alexandrov:2017qhn,funke2017theta}) 
provides the simplest choice of such kernel

\begin{theorem}
	\label{th-conv}
	Let the signature of the quadratic form be $(n,d-n)$ and
	\be
	\Phi(\xbbm)=\prod_{i=1}^n\Bigl(\sgn(\vbbm_{1,i}\ast \xbbm)-\sgn(\vbbm_{2,i}\ast\xbbm)\Bigr).
	\label{kerconverge}
	\ee
	Then the theta series \eqref{gentheta} is convergent provided:
	\begin{enumerate}
		\item
		for all $i\in \Zv_{n}=\{1,\dots,n\}$,
		$\vbbm_{1,i}^2,\vbbm_{2,i}^2\geq 0$;
		\item
		for any subset $\cI\subseteq \Zv_{n}$ and any set of $s_i\in \{1,2\}$, $i\in\cI$,
		\be
		\Delta_{\cI}(\{s_i\})\equiv \mathop{\det}\limits_{i,j\in \cI}(\vbbm_{s_i,i}\ast \vbbm_{s_j,j})\geq 0;
		\label{condDel}
		\ee
		\item
		for all $\ell\in\Zv_n$ and any set of $s_i\in \{1,2\}$, $i\in\Zv_n\setminus\{\ell\}$,
		\be
		\vbbm_{1,\ell\perp\{s_i\}}\ast \vbbm_{2,\ell\perp\{s_i\}}>0,
		\label{condscpr}
		\ee
		where $_{\perp\{s_i\}}$ denotes the projection on the subspace orthogonal to the span of
		$\{\vbbm_{s_i,i}\}_{i\in \Zv_n\setminus\{\ell\}}$;
		\item
		if $\vbbm_{s,i}^2=0$, then $\exists \alpha_{s,i}\in \IR$ such that $\alpha_{s,i}\vbbm_{s,i}\in \bbLambda$.
	\end{enumerate}
	\label{th-converg}
\end{theorem}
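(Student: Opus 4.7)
The plan is to expand the kernel as a sum of products of sign functions, partition the lattice into the resulting polyhedral chambers, and prove absolute convergence on each chamber by induction on $n$. Concretely, I would write
\be
\Phi(\xbbm) = \sum_{\sigma\in\{1,2\}^n} (-1)^{|\{i:\sigma_i=2\}|}\prod_{i=1}^n \sgn(\vbbm_{\sigma_i,i}\ast\xbbm),
\ee
and further decompose by the signs $\ep_i = \sgn(\vbbm_{\sigma_i,i}\ast(\kbbm+\bbbeta))\in\{\pm 1\}$, reducing the problem to showing that, for each fixed pair $(\sigma,\ep)$, the partial sum
\be
\sum_{\kbbm+\bbbeta\in\cC_{\sigma,\ep}} \q^{-\kbbm^2/2}\, e^{2\pi\I\zbbm\ast\kbbm}
\ee
over the open polyhedral cone $\cC_{\sigma,\ep}=\{\xbbm : \ep_i\vbbm_{\sigma_i,i}\ast\xbbm>0\}$ is absolutely convergent. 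Since $|\q^{-\kbbm^2/2}|=e^{\pi\tau_2\kbbm^2}$, this is a question of bounding the quadratic form $\kbbm^2$ from above by a negative-definite form on lattice points in $\cC_{\sigma,\ep}$.

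For the generic case in which all $\vbbm_{s,i}^2>0$, I would proceed by induction on $n$. The base $n=0$ is the classical convergence of a Gaussian theta series attached to a negative-definite lattice (in the present sign convention, cf.\ footnote \ref{foot-sign}). For $n>0$, I would split the lattice, up to a sublattice of finite index, as $\bbLambda=(\IQ\vbbm_{\sigma_n,n}\cap\bbLambda)\oplus\bbLambda'$ with $\bbLambda'\subseteq\vbbm_{\sigma_n,n}^\perp$, decomposing $\kbbm = t\,\vbbm_{\sigma_n,n}+\kbbm'$. The summation over $t$ is restricted by $\ep_n t>0$ to a half-line, yielding a one-dimensional Gaussian in $t$ with strictly negative-definite exponent (since $\vbbm_{\sigma_n,n}^2>0$ contributes $-t^2\vbbm_{\sigma_n,n}^2$ to $-\kbbm^2$) after completing the square; the residual sum over $\kbbm'$ is an indefinite theta series on a $(d-1)$-dimensional lattice of signature $(n-1,d-n)$ with kernel built from the orthogonal projections of $\vbbm_{\sigma_i,i}$ onto $\vbbm_{\sigma_n,n}^\perp$. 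Condition 1 ensures the projected vectors remain in the closed positive cone, condition 2 restricted to $\cI\subseteq\{1,\dots,n-1\}$ yields positivity of the projected Gram minors, and condition 3 at $\ell=n$ provides the strictly positive coupling needed to separate the $t$- and $\kbbm'$-asymptotics, so the inductive hypothesis applies.

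The main obstacle is the treatment of null vectors, $\vbbm_{s,i}^2=0$, where the orthogonal decomposition above degenerates because $\vbbm_{\sigma_n,n}$ spans a null ray along which the Gaussian factor fails to decay. This is precisely the role of condition 4: it furnishes a primitive lattice vector $\vbbm'=\alpha_{s,i}\vbbm_{s,i}\in\bbLambda$, allowing the factorization $\bbLambda = \IZ\vbbm'\oplus\bbLambda''$ up to finite index. Along the $\IZ\vbbm'$ direction $\kbbm^2$ vanishes identically, but the sign constraint $\ep_n\vbbm_{s,i}\ast(\kbbm+\bbbeta)>0$ becomes independent of the coordinate along $\vbbm'$ (since $\vbbm_{s,i}\ast\vbbm'=0$), so the one-dimensional sum reduces to a geometric progression in $e^{2\pi\I\zbbm\ast\vbbm'}$ that converges on the open half-plane $\sgn(\ep_n)\Im(\zbbm\ast\vbbm')>0$. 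The transverse sum over $\bbLambda''$ is then controlled by the inductive hypothesis applied to the projected data, with condition 3 at $\ell=n$ ensuring strict positivity of the induced quadratic form in the relevant direction. Assembling the contributions from the $2^n$ sign-products and the $2^n$ chamber choices yields the claimed absolute convergence on an appropriate open domain in $\zbbm$-space, which is the correct formulation since the theta series must develop poles as $\zbbm$ crosses any of these null-vector hyperplanes.
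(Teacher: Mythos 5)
The paper does not prove this theorem; it quotes it from \cite{Alexandrov:2020bwg}, so your argument has to stand on its own, and it does not. The fatal step is the reduction to absolute convergence of the partial sums over the cones $\cC_{\sigma,\ep}=\{\xbbm:\ep_i\vbbm_{\sigma_i,i}\ast\xbbm>0\}$. Each such cone is cut out by only $n$ half-space conditions, one per index $i$, and on it the quadratic form $\kbbm^2$ is unbounded above, so $\sum e^{\pi\tau_2\kbbm^2}$ diverges (the linear factor $|e^{2\pi\I\zbbm\ast\kbbm}|$ cannot compensate quadratic growth). You can see the failure inside your own inductive step: writing $\kbbm=t\,\vbbm_{\sigma_n,n}+\kbbm'$ with $\vbbm_{\sigma_n,n}^2>0$ (condition 1 places these vectors in the \emph{positive} cone in this sign convention) gives $|\q^{-\kbbm^2/2}|=e^{\pi\tau_2 t^2\vbbm_{\sigma_n,n}^2}\,e^{\pi\tau_2\kbbm'^2}$, and the first factor \emph{grows} along the half-line $\ep_n t>0$; restricting $t$ to a half-line yields a divergent one-dimensional sum, not a convergent Gaussian. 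Concretely, for $n=1$, $d=2$, $\xbbm\ast\ybbm=x_1y_1-x_2y_2$ and $\vbbm_{1,1}=(1,0)$, your cone $\{x_1>0\}$ contains $(k,0)$ for all $k>0$, where $\kbbm^2=k^2\to+\infty$.

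The missing idea is that the difference structure of the kernel \eqref{kerconverge} must be kept intact: $\Phi$ is supported only where, for every $i$, $\vbbm_{1,i}\ast\xbbm$ and $\vbbm_{2,i}\ast\xbbm$ have opposite signs, i.e.\ on an intersection of $n$ double wedges cut out by $2n$ hyperplanes, and it is precisely conditions 1--3 (non-negative norms, non-negative Gram minors, positive projected scalar products) that force an estimate of the form $\kbbm^2\le -c\,\|\kbbm\|^2$ on that region, modulo the null directions. Distributing the product into $2^n$ sign-monomials and summing each over its own cone replaces one absolutely convergent sum by a combination of divergent ones, so no rearrangement can recover the result; any correct proof must work chamber-by-chamber on the support of $\Phi$ itself. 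Your treatment of condition 4 --- a null lattice vector $\vbbm'$ along which the quadratic form is flat, with convergence rescued by the geometric progression in $e^{2\pi\I\zbbm\ast\vbbm'}$ and poles appearing at special $\zbbm$ --- identifies the right mechanism for the degenerate directions, but it is grafted onto a decomposition that already fails in the non-degenerate case.
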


Note that the last condition requiring that the (rescaled) null vectors, i.e. satisfying $\vbbm_{s,i}^2=0$, 
that appear in the definition of the kernel belong to the lattice
is important. If such a null vector is present, it is also important to keep the elliptic variable $\zbbm$ generic
because the theta series has poles at the points where $\exists {\kbbm}\in \bbLambda+\bbmu+\hf\,\pbbm$ such that
$\vbbm_{s,i}\ast (\kbbm+\bbbeta)=0$. In particular, theta series involving null vectors are typically 
divergent in the limit $\zbbm\to 0$.

\section{Functions determining completions}
\label{ap-E}

\subsection{Generalized error functions}
\label{ap-generr}

In section \ref{ap-converge}, we provided a class of functions $\Phi(\xbbm)$
that define convergent and holomorphic indefinite theta series.
However, in contrast to the usual theta series with negative\footnote{Usually, the convergent case corresponds 
to positive definite quadratic forms. In our conventions it is negative due to the minus sign in the power of $\q$
in \eqref{gentheta}. See also footnote \ref{foot-sign}.} 
definite quadratic form, 
they are not modular. This can be seen, for example, from the fact that 
the discontinuities of the signs spoil the Vign\'eras equation \eqref{ap-converge}.
Nevertheless, there is a simple recipe to construct their modular completions 
\cite{Zwegers-thesis,Alexandrov:2016enp,Nazaroglu:2016lmr}.

This is achieved with help of the generalized error functions
introduced in \cite{Alexandrov:2016enp,Nazaroglu:2016lmr} (see also \cite{kudla2016theta}).
They are defined by
\bea
E_n(\cM;\vu)&=& \int_{\IR^n} \de \vu' \, e^{-\pi\sum_{i=1}^n(u_i-u'_i)^2} \prod_{i=1}^n \sign(\cM^{\rm tr} \vu')_i\, ,
\label{generr-E}
\eea
where $\vu=(u_1,\dots,u_n)$ is $n$-dimensional vector and $\cM$ is $n\times n$ matrix of parameters.
To get kernels of indefinite theta series, we need however functions depending on a $d$-dimensional vector
rather than $n$-dimensional one. To define such functions, let $\cV$ be $d\times n$ matrix 
which can be viewed as a collection of $n$ vectors, $\cV=(\vbbm_1,\dots,\vbbm_n)$, 
and it is assumed that these vectors span a positive definite subspace in $\IR^d$ endowed with the quadratic form $\ast$,
i.e. $\cV^{\rm tr}\ast\cV$ is positive definite. We also introduce a $n\times d$ matrix
$\cB$ whose rows define an orthonormal basis for this subspace.
Then we set
\be
\Phi_n^E(\cV;\xbbm)=E_n(\cB\ast \cV;\cB\ast\, \xbbm).
\label{generrPhiME}
\ee
The detailed properties of these functions can be found in \cite{Nazaroglu:2016lmr}.
Most importantly, they do not depend on $\cB$,
solve the Vign\'eras equation \eqref{Vigdif} with $\lambda=0$ and at large $\xbbm$ reduce to
$\prod_{i=1}^n \sgn (\vbbm_i\ast\,\xbbm)$.
Thus, to construct a completion of the theta series whose kernel is a combination of sign functions,
it is sufficient to replace each product of $n$ sign functions by $\Phi_n^E$ with matrix of parameters $\cV$
given by the corresponding vectors $\vbbm_i$.

Finally, if one of the vectors is null, it reduces the rank of the generalized error function.
Namely, for $\vbbm_\ell^2=0$, one has
\be
\Phi_n^E(\{\vbbm_i\};\xbbm)=\sgn (\vbbm_\ell\,\ast\xbbm)\,\Phi_{n-1}^E(\{\vbbm_i\}_{i\in \Zv_{n}\setminus\{\ell\}};\xbbm).
\label{Phinull}
\ee
In other words, for such vectors the completion is not required.

\subsection{Coefficients $\scR_n$}
\label{subsec-R}

The generalized error functions defined in the previous subsection play the role of building blocks
in the definition of the coefficients $\scR_n(\bfhgam;\tau_2)$ appearing in \eqref{redefRn}.
Here we provide their expression found in \cite{Alexandrov:2024jnu}.

The construction proceeds in two steps.
At the first step, we introduce functions of $\tau_2$ parametrized by $n$ reduced charges $\hgam_i=(r_i,q_{i})$.
To this end, let $\IT_n^\ell$ be the set of unrooted labelled trees with $n$ vertices
decorated by charges from the set $\bfhgam=(\hgam_1,\dots,\hgam_n)$.
Given a tree $\cT\in \IT_{n}^\ell$, we denote the set of its edges by $E_{\cT}$, the set of vertices by $V_{\cT}$, 
the source and target vertex\footnote{The orientation 
	of edges on a given tree can be chosen arbitrarily, the final result does not depend on this choice.} 
of an edge $e$ by $s(e)$ and $t(e)$, respectively,
and the two disconnected trees obtained from $\cT$ by removing the edge $e$ by $\cT_e^s$ and $\cT_e^t$.
Furthermore, to each edge we assign the vector 
\be
\bfv_e=\sum_{i\in V_{\cT_e^s}}\sum_{j\in V_{\cT_e^t}}\bfv_{ij},
\label{defue}
\ee
where $\bfv_{ij}$ are $n$-dimensional vectors with the following components
\be
(\bfv_{ij})_k=\delta_{ki} \Nr_j-\delta_{kj} \Nr_i.
\label{defvij}
\ee
Using these notations, we define 
\be
\Ev_n(\bfhgam;\tau_2)=
\frac{\EPhi_n(\sqrt{2\tau_2}\, \bfq)}{(\sqrt{2\tau_2})^{n-1}}\, ,
\label{rescEnPhi}
\ee
where $\bfq=\bigl(\frac{q_1}{\kappa \Nr_1}\, ,\dots ,\, \frac{q_n}{\kappa \Nr_n}\bigr)$,
\be
\EPhi_n(\bfx)=
\frac{1}{n!}\sum_{\cT\in\, \IT_n^\ell}
\[\prod_{e\in E_\cT} \cD(\bfv_{s(e) t(e)},\bfy)\]
\Phi^E_{n-1}(\{ \bfv_e\};\bfx)\Bigr|_{\bfy=\bfx},
\label{rescEn}
\ee
and
\be
\cD(\bfv,\bfy)=\bfv\cdot\(\bfy+\frac{1}{2\pi}\,\p_\bfx\).
\label{defcDif}
\ee
The dot in \eqref{defcDif} denotes the bilinear form
\be
\bfx\cdot\bfy=\kappa \sum_{i=1}^n \Nr_ix_iy_i.
\label{biform}
\ee
In particular, this implies that $\bfv_{ij}\cdot\bfq=\gamma_{ij}$.

Importantly, each function $\Ev_n(\bfhgam;\tau_2)$ defined by \eqref{rescEnPhi} has a canonical decomposition
\be
\Ev_n(\bfhgam;\tau_2)=\Ef_n(\bfhgam)+\Ep_n(\bfhgam;\tau_2),
\label{twocEs}
\ee
where the first term $\Ef_n$ does not depend on $\tau_2$,
whereas the second term $\Ep_n$ is exponentially suppressed as $\tau_2\to\infty$ keeping
the charges $\hgam_i$ fixed. In \cite{Alexandrov:2024jnu} it was shown that 
\be 
\Ef_n(\bfhgam)
= \frac{1}{n!}
\sum_{\cT\in\, \IT_n^\ell} S_\cT(\bfhgam)\prod_{e\in E_{\cT}}\gamma_{s(e) t(e)},
\label{newexprcEf}
\ee
where 
\be
S_\cT(\bfhgam)=
\sum_{\cJ\subseteq E_\cT}e_{\cT_\cJ}
\,\prod_{e\in \cJ}\delta_{\Gamma_e}
\prod_{e\in E_\cT\setminus \cJ} \sgn (\Gamma_e),
\qquad
\Gamma_e=\sum_{i\in V_{\cT_e^s}}\sum_{j\in V_{\cT_e^t}}\gamma_{ij},
\label{defST}
\ee
Here $\cT_\cJ$ denotes the tree obtained from $\cT$ by contracting the edges $e\in E_\cT\backslash \cJ$
and $e_{\cT}$ are some rational numbers depending only on topology of $\cT$. In particular,
they vanish for trees with even number of vertices, $e_\bullet=1$, 
$e_{\bullet\!\mbox{\scriptsize --}\!\bullet\!\mbox{\scriptsize --}\!\bullet}=1/3$,
and an iterative formula for generic tree can be found in \cite{Alexandrov:2024jnu}.
If all $\Gamma_e$ are non-vanishing, $S_\cT$ reduces to the product of their signs.
This shows that the functions $\Ev_n$ have a meaning of kernels providing completions 
for holomorphic theta series constructed from signs of Dirac products of charges.

At the second step, we introduce another type of trees, the so-called Schr\"oder trees.\label{pSchr}
They are defined as rooted planar trees such that all vertices $\ver\in V_T$ 
(the set of vertices of $T$ excluding the leaves) have $k_\ver\geq 2$ children.
The set of such trees with $n$ leaves will be denoted by $\IT_n^{\rm S}$. Besides, we take
$n_T$ to be the number of elements in $V_T$ and $\ver_0$ to denote the root vertex.
The vertices of $T$ are labelled by charges so that the leaves carry charges $\hgam_i$, 
whereas the charges assigned to other vertices
are given recursively by
the sum of charges of their children, $\hgam_\ver\in\sum_{\ver'\in\Ch(\ver)}\hgam_{v'}$.
Then, given a Schr\"oder tree $T$,
we set $\Ev_{\ver}\equiv \Ev_{k_\ver}(\{\hgam_{\ver'}\})$ (and similarly for $\Ef_{\ver}, \Ep_{\ver}$)
where $\ver'\in \Ch(\ver)$ runs over the $k_\ver$ children of the vertex $\ver$. 
In terms of these notations, the coefficients $\scR_n$ 
are given by
\be
\scR_n(\bfhgam;\tau_2)= \frac{1}{2^{n-1}}\sum_{T\in\IT_n^{\rm S}}(-1)^{n_T-1} 
\Ep_{\ver_0}\prod_{\ver\in V_T\setminus{\{\ver_0\}}}\Ef_{\ver}.
\label{solRn}
\ee

\subsection{Coefficients $\scRrf_n$}
\label{subsec-refR}

The refined version of the coefficients $\scR_n$ has been introduced in \cite{Alexandrov:2019rth}.
It is given by the same sum over Schr\"oder trees as in \eqref{solRn}, 
\be
	\scRrf_n\(\bfhgam;\tau_2,\beta\) = \frac{1}{2^{n-1}}\sum_{T\in\IT_n^{\rm S}}(-1)^{n_T-1} 
	\Eprf_{v_0} \prod_{v\in V_T \backslash \{v_0\}}\Efrf_v,
	\label{refsolRn}
\ee
but now with the weights assigned to vertices determined by new functions $\Er_n(\bfhgam;\tau_2,\beta)$.
Although they depend on an additional parameter $\beta$, they are actually much simpler than their unrefined 
analogues $\Ev_n$ because in their definition there is no any sum over trees.
Namely, they are given by
\be
\Er_n(\bfhgam;\tau_2,\beta)= \Phi^E_{n-1}\(\{ \bfv_{\ell}\};\sqrt{2\tau_2}\,(\bfq+\beta\bftet )\),
\label{Erefsim}
\ee
where
\be
\bfv_\ell= \sum_{i=1}^\ell\sum_{j=\ell+1}^n\bfv_{ij},
\qquad
\bftet = \sum_{i<j} \bfv_{ij}.
\label{def-bfvk}
\ee
As in the unrefined case, $\Eprf_n=\Er_n-\Efrf_n$, while $\Efrf_n$
is the large $\tau_2$ limit of $\Er_n$. However, before taking the limit, 
one should first set $\beta=0$, i.e. 
\be 
\Efrf_n(\bfhgam)\equiv  \lim_{\tau_2\to\infty}\Er_n(\{\gama_i\};\tau_2,0)
= S_{\cT_{\rm lin}}(\bfhgam),
\label{Efref}
\ee
where $\cT_{\rm lin}=
\bullet\!\mbox{---}\!\bullet\!\mbox{--}\cdots \mbox{--}\!\bullet\!\mbox{---}\!\bullet\,$
is the simplest linear tree. The last relation follows from the observation that
the vectors $\bfv_\ell$ can be seen as the vectors \eqref{defue} assigned to edges of $\cT_{\rm lin}$.
Note that for the linear tree $e_{\cT_{\rm lin}}=\delta^{(2)}_{n-1}/n$ where $n$ is the number of vertices.

\subsection{Explicit expressions}

Let us now compute explicitly the functions $\rmRirf{\bfr}_{\mu,\bfmu}$ \eqref{Rirf-to-rmRrf} 
for $n=2$ and 3.

\subsubsection{Two charges}
\label{ap-2charges}

In this case we have 
\be
\rmRirf{\Nr_1,\Nr_2}_{\mu, \mu_1, \mu_2}(\tau, \btau,z) 
= \hf\sum_{q_1+q_2=\mu+\frac{\kappa\Nr}{2} }
\Bigl(
\scRrf_2(\gama_1,\gama_2) \, y^{\gamma_{12}}
+\scRrf_2(\gama_2,\gama_1) \, y^{-\gamma_{12}}\Bigr)\, \q^{\hf Q_2(\gama_1,\gama_2)},
\label{Rref2}
\ee
where, using the restriction on $q_1+q_2$, one easily finds from \eqref{defQlr} that
\be 
Q_2(\hgam_1,\hgam_2)=-\frac{\gamma_{12}^2}{\kappa \Nr \Nr_1 \Nr_2 }\, ,
\ee
while the function $\scRrf_2$ follows from the definitions in \S\ref{subsec-refR} to be
\be 
\scRrf_2(\gama_1,\gama_2)=\hf\, \Ep_2(\gama_1,\gama_2)=
\hf\[
E_1\(\frac{\sqrt{2\tau_2}
	\( \gamma_{12} + \kappa \Nr \Nr_1 \Nr_2\beta \)}
{\sqrt{\kappa\Nr \Nr_1 \Nr_2}}\)
-\sgn (\gamma_{12}) \] .
\label{expr-R2}
\ee
Note also that the function $E_1$ coincides with the usual error function: $E_1(u)=\mbox{Erf}(\sqrt{\pi}\, u)$.

Next, let us rewrite the sum over electric charges in \eqref{Rref2} as an unconstrained sum. 
Upon substituting the spectral flow decomposition \eqref{defmu-shift}, the condition on the sum of charges
becomes 
\be
\Nr_1\eps_1+\Nr_2\eps_2=\Delta\mu/\kappa.
\label{condcharge2}
\ee
Let us define $\Nr_0=\gcd(\Nr_1,\Nr_2)$, $\rdcr_i=\Nr_i/\Nr_0$ and $\rdcr=\Nr/\Nr_0$.
Then the condition \eqref{condcharge2} is solvable for $\eps_i\in\IZ$ only if $\Delta\mu=0 \mod \kappa \Nr_0$.
If this is the case, let $\rho_i$ be integers such that $\rdcr_1\rho_1+\rdcr_2\rho_2=1$.
Then a general solution to \eqref{condcharge2} is given by
\be
\eps_1=\frac{\Delta\mu}{\kappa\Nr_0}\,\rho_1+\rdcr_2\ell,
\qquad
\eps_2=\frac{\Delta\mu}{\kappa\Nr_0}\,\rho_2-\rdcr_1\ell,
\qquad
\ell\in \IZ.
\ee
Using this solution in the formula for the Dirac product \eqref{def-gammaij}
after plugging there the spectral flow decomposition, 
one finds that
\be  
\gamma_{12}=\Nr_0(2\kappa_{12}\ell+\mu_{12}), 
\ee
where we introduced 
\be
\begin{split}
	\kappa_{12} =&\, \hf\, \kappa  \Nr \rdcr_1 \rdcr_2,
	\\
	\mu_{12}=&\, \rdcr_2 \mu_1-\rdcr_1 \mu_2+\rdcr_1\rdcr_2(\rho_1-\rho_2)\Delta\mu
	\\
	=&\,\rdcr \mu_1-\rdcr_1 \mu + \rdcr \rdcr_1\rho_1\Delta\mu.
\end{split}
\label{defmu0}
\ee
Thus, we arrive at the following result
\be
\rmRirf{\Nr_1,\Nr_2}_{\mu, \mu_1, \mu_2} =
\frac14\, \delta^{(\kappa \Nr_0)}_{\Delta\mu}\sum_{\sigma=\pm 1}
\sum_{k\in \IZ+\frac{\sigma\mu_{12}}{2\kappa_{12}}}
\Bigl[ E_1\bigl(2\sqrt{\kappa_{12}\tau_2}\,( k+ \Nr_0 \beta )\bigr)-\sgn(k)\Bigr]
\q^{-\kappa_{12} k^2}\, y^{2\Nr_0\kappa_{12} k}.
\label{exprR2r}
\ee
The corresponding holomorphic anomaly is found to be\footnote{To obtain the holomorphic anomaly, 
	one takes the derivative with respect to $\btau$ while keeping fixed $\tau$, $y$ and
	$\by$ (and not $\alpha$, $\beta$).}
\be
\p_{\btau}\rmRirf{\Nr_1,\Nr_2}_{\mu, \mu_1, \mu_2} =
\frac{\I}{4}\,\sqrt{\frac{\kappa_{12}}{\tau_2}}\,
\delta^{(\kappa \Nr_0)}_{\Delta\mu}
\sum_{\sigma=\pm 1}\sum_{k\in \IZ+\frac{\sigma\mu_{12}}{2\kappa_{12}}}
(k-\Nr_0\beta)\,
(\q\bqq)^{\kappa_{12}(k+\Nr_0\beta)^2}
\q^{-\kappa_{12} k^2}\, y^{2\Nr_0\kappa_{12} k}.
\label{exprR2r-hanom}
\ee

The unrefined function $\trmRi{\Nr_1,\Nr_2}_{\mu, \mu_1, \mu_2}$ can be computed either 
from \eqref{redefRn} and equations in \S\ref{subsec-R}, 
or by taking the unrefined limit of \eqref{exprR2r} and using \eqref{unreflim-Rn}.
In this paper we will not use the resulting function, which has a form similar to 
\eqref{exprR2r} but with a little bit more complicated summand. Instead, 
we give here its holomorphic anomaly, 
which turns out to be simpler than its refined counterpart \eqref{exprR2r-hanom}
and is proportional to the complex conjugate of the theta series $\theta^{(\kappa)}_\mu(\tau)$ 
introduced in \eqref{deftheta} and evaluated at $z=0$: 
\be
\p_{\btau}\trmRi{\Nr_1,\Nr_2}_{\mu, \mu_1, \mu_2}(\tau, \btau) =
\frac{\Nr_0\sqrt{\kappa_{12}}}{16\pi\I \tau_2^{3/2}}\,\delta^{(\kappa \Nr_0)}_{\Delta\mu} 
\, \overline{\theta^{(\kappa_{12})}_{\mu_{12}}(\tau)}.
\label{shadowRN1N2}
\ee

Note that $\p_{\btau}\trmRi{\Nr_1,\Nr_2}_{\mu, \mu_1, \mu_2}$ must be a modular form
of weight $(3/2,2)$ with the same multiplier system as $\gi{\Nr_1,\Nr_2}_{\mu, \mu_1, \mu_2}$
given in \eqref{mult-gr}. Therefore, the result \eqref{shadowRN1N2} immediately implies

\begin{proposition}
	\label{prop-multsys}
	If $\Gi{\kappa}_\mu$ ($\mu=0,\dots,2\kappa-1$) transforms with the multiplier system
\be
\Mi{\kappa}_{\mu \nu}(T) =
e^{-\frac{\pi \I}{2\kappa} \,\mu^2}\delta_{\mu \nu},
\qquad
\Mi{\kappa}_{\mu \nu}(S)=
\frac{e^{\frac{\pi\I}{4}}}{\sqrt{2\kappa}} \, e^{\frac{\pi \I}{\kappa}\,\mu \nu},
\label{STusual}
\ee
which is the complex conjugate of the multiplier system \eqref{mult-theta} of $\theta^{(\kappa)}_{\mu}$,
	then $\delta^{(\kappa\Nr_0)}_{\Delta\mu} \Gi{\kappa_{12}}_{\mu_{12}}$
	transforms with the multiplier system \eqref{mult-gr} specified for $n=2$.
\end{proposition}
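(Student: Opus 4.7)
The plan rests entirely on the shadow relation \eqref{shadowRN1N2}, which supplies a canonical test function of precisely the form $\delta^{(\kappa\Nr_0)}_{\Delta\mu}\Gi{\kappa_{12}}_{\mu_{12}}(\tau)$---namely with $\Gi{\kappa_{12}}_{\mu_{12}} = \overline{\ths{\kappa_{12}}_{\mu_{12}}}$---whose modular behaviour can be read off in two different ways. Matching the two viewpoints reveals the multiplier system that is induced on the triple $(\mu,\mu_1,\mu_2)$, and since the induced multiplier depends only on the multiplier of $\Gi{\kappa_{12}}_{\mu_{12}}$ and not on the specific function, the general statement will follow.

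First I would verify the modular data of $\overline{\ths{\kappa}_\mu}$ directly: complex conjugation of \eqref{mult-theta} reproduces precisely the matrices in \eqref{STusual} (using $\overline{\sqrt{2\I\kappa}}=\sqrt{-2\I\kappa}=e^{-\I\pi/4}\sqrt{2\kappa}$), and its bi-weight is $(0,1/2)$. The universal prefactor $\tau_2^{-3/2}$ appearing on the right-hand side of \eqref{shadowRN1N2} has trivial multiplier and shifts the bi-weight by $(3/2,3/2)$, so the right-hand side, regarded as a function labelled by $\mu_{12}\in\Zbbm/2\kappa_{12}\Zbbm$, transforms with weight $(3/2,2)$ and multiplier \eqref{STusual} acting on $\mu_{12}$.

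Second I would analyse the left-hand side of \eqref{shadowRN1N2}. By construction $\gi{\Nr_1,\Nr_2}_{\mu,\mu_1,\mu_2}$ has multiplier \eqref{mult-gr} specialised to $n=2$; its completion $\whgi{\Nr_1,\Nr_2}_{\mu,\mu_1,\mu_2}=\gi{\Nr_1,\Nr_2}_{\mu,\mu_1,\mu_2}+\trmRi{\Nr_1,\Nr_2}_{\mu,\mu_1,\mu_2}$ is genuinely modular with the same multiplier, hence so is $\trmRi{\Nr_1,\Nr_2}_{\mu,\mu_1,\mu_2}$. Taking $\partial_{\btau}$ preserves the multiplier and shifts $\bw$ by $2$, so the left-hand side also transforms with weight $(3/2,2)$ and multiplier \eqref{mult-gr}. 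Equating the two sides and cancelling the common $\tau_2^{-3/2}$ factor, I conclude that the function $\delta^{(\kappa\Nr_0)}_{\Delta\mu}\overline{\ths{\kappa_{12}}_{\mu_{12}}}$, indexed by $(\mu,\mu_1,\mu_2)$, transforms with multiplier \eqref{mult-gr}. Third, I would promote this to the general statement: the assignment $\Gi{\kappa_{12}}_{\mu_{12}}\mapsto\delta^{(\kappa\Nr_0)}_{\Delta\mu}\Gi{\kappa_{12}}_{\mu_{12}}$ is $\Cbbm$-linear, and the $(c\tau+d)^w$ and exponential factors arising in a modular transformation depend only on the weight and multiplier of $\Gi{\kappa_{12}}_{\mu_{12}}$, not on its values. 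To make this rigorous I would use that, as $\mu_{12}$ ranges over $\Zbbm/2\kappa_{12}\Zbbm$, the components $\overline{\ths{\kappa_{12}}_{\mu_{12}}}(\tau)$ are linearly independent functions of $\tau$, so the identity enforced by \eqref{shadowRN1N2} is entry-wise in the matrix indices and therefore fixes the induced multiplier universally.

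The most delicate step will be this final upgrade, because under $S$ one must convert a single Fourier sum over $\nu_{12}\mod 2\kappa_{12}$ on the theta side into a triple sum over $(\nu,\nu_1,\nu_2)$ constrained by $\Delta\nu=0\mod\kappa\Nr_0$ on the \eqref{mult-gr} side. At the level of bare multiplier matrices this reduction is a Gauss-sum identity relating $e^{\pi\I\mu_{12}\nu_{12}/\kappa_{12}}$ to $e^{-2\pi\I(\mu\nu/(\kappa\Nr)-\mu_1\nu_1/(\kappa\Nr_1)-\mu_2\nu_2/(\kappa\Nr_2))}$ once one substitutes the definition \eqref{defmu0}. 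The virtue of the shadow argument is that \eqref{shadowRN1N2} already packages this Gauss-sum identity inside an equality of concrete modular objects, so no direct Gauss-sum manipulation is needed beyond trusting \eqref{shadowRN1N2} itself.
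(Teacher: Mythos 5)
Your proposal follows the paper's own route: the paper proves Proposition \ref{prop-multsys} in one line by observing that $\p_{\btau}\trmRi{\Nr_1,\Nr_2}_{\mu,\mu_1,\mu_2}$ must carry weight $(3/2,2)$ and the multiplier system \eqref{mult-gr} (being the $\btau$-derivative of the non-holomorphic part of a genuinely modular completion), so that the shadow identity \eqref{shadowRN1N2} transfers this multiplier to $\delta^{(\kappa\Nr_0)}_{\Delta\mu}\,\overline{\ths{\kappa_{12}}_{\mu_{12}}}$, whose own multiplier is the conjugate \eqref{STusual} of \eqref{mult-theta}. Your two readings of \eqref{shadowRN1N2} are precisely this argument, spelled out.

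There is, however, a flaw in the final ``upgrade'' step as you state it. The components $\overline{\ths{\kappa_{12}}_{\mu_{12}}}(\tau)$ are \emph{not} all linearly independent: from \eqref{deftheta} at $z=0$ one has $\ths{\kappa}_{\mu}=\ths{\kappa}_{-\mu}$ (the sum over $k\in 2\kappa\IZ+\mu$ is invariant under $k\to-k$), so only the $\kappa_{12}+1$ components with $0\le\mu_{12}\le\kappa_{12}$ are independent. Consequently the identity \eqref{shadowRN1N2} pins down the intertwining relation between \eqref{STusual} and \eqref{mult-gr} only on the subspace of vectors satisfying $G_{\mu}=G_{-\mu}$, not ``entry-wise'' as you claim. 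This is harmless for the paper's applications, since the functions \eqref{DMZ} do satisfy $\Gi{\kappa}_{\mu}=\Gi{\kappa}_{-\mu}$ (consistently with $h_{r,-\mu}=h_{r,\mu}$), but to obtain the proposition for an arbitrary $G$ with multiplier \eqref{STusual} you would need either to add this reflection symmetry as a hypothesis or to carry out directly the Gauss-sum verification that your argument was designed to avoid.
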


\subsubsection{Three charges}

In this case we have 
\be
\label{rmRrf3}
\rmRirf{\bfr}_{\mu,\bfmu}(\tau,\btau,z)
=
\sum_{\sum_{i=1}^{3} q_i = \mu + \kappa \Nr/2 } 
\Sym \Bigl\{ 
\scRrf_3(\bfhgam;\tau_2,\beta) \, y^{\gamma_{1+2,3}+\gamma_{12}}
\Bigr\} \,e^{\pi\I \tau Q_3(\bfhgam)},
\ee
where the quadratic form can be written as
\be
Q_3(\bfhgam) = 
-\frac{ \Nr_1 \gamma_{23}^2+ \Nr_2 \gamma_{13}^2 + \Nr_3\gamma_{12}^2 }{\kappa \Nr\Nr_1 \Nr_2 \Nr_3}\,  ,
\label{qf-n=3}
\ee
while the coefficient $\scRrf_3$ follows from \eqref{refsolRn} and is given by the sum of three Schr\"oder trees
resulting to
\be
\begin{split}
	\scRrf_3(\bfhgam;\tau_2,\beta)=&\,\frac{1}{4}\Bigl[ \Eprf_{3}(\bfhgam)
	-\Eprf_2(\hgam_{1+2}, \hgam_3)\, \Efrf_2(\hgam_1, \hgam_2)
	\\
	&\,\qquad
	-\Eprf_2(\hgam_{1}, \hgam_{2+3} ) \,\Efrf_2(\hgam_2, \hgam_3) \Bigr].
\end{split}
\label{expr-cR3}
\ee
The function $\Eprf_2$ was computed in \eqref{expr-R2} which implies that 
$\Efrf_2(\hgam_1, \hgam_2)=\sgn(\gamma_{12})$.
Finally, $\Eprf_{3}$ is found to be
\be
	\cEprf_3(\bfhgam) = 
	\Phi^E_{2}(\bfv_{1,2+3},\bfv_{1+2,3};\bfx) 
	- \sgn(\gamma_{1,2+3})\,\sgn(\gamma_{1+2,3})	
	-\frac13\,\delta_{\gamma_{1,2+3}}\delta_{\gamma_{1+2,3}},
\ee
where $\bfx=\sqrt{2\tau_2}\,(\bfq+\beta\bftet )$ and $\Phi^E_{2}(\bfx)$ 
can be expressed through the generalized error function $E_2$.
Substituting these results into \eqref{expr-cR3}, one obtains
\bea
\scRrf_3(\bfhgam;\tau_2,\beta) &=& 
\frac{1}{4}\,
\Biggl[
\Phi^E_{2}(\bfv_{1,2+3},\bfv_{1+2,3};\bfx) 
-\sgn(\gamma_{1,2+3})\,\sgn(\gamma_{1+2,3}) -\frac13\,\delta_{\gamma_{1,2+3}}\delta_{\gamma_{1+2,3}}
\nn\\
&&
- \(E_1 \(\frac{\sqrt{2\tau_2}\(\gamma_{1+2,3}+\kappa \Nr\Nr_{1+2}\Nr_3\beta \)}{\sqrt{\kappa \Nr\Nr_{1+2}\Nr_3}} \) 
-\sgn(\gamma_{1+2,3})\)\sgn(\gamma_{12}) 
\label{expr-R3}\\
&&
- \(E_1 \(\frac{\sqrt{2\tau_2}\(\gamma_{1,2+3}+\kappa \Nr\Nr_1\Nr_{2+3}\beta \)}{\sqrt{\kappa \Nr\Nr_1\Nr_{2+3}}} \) 
-\sgn(\gamma_{1,2+3})\)\sgn(\gamma_{23})
\Biggr].
\nn
\eea 
where $\Nr_{i+j}=\Nr_i+\Nr_j$.

\section{Hecke-like operators}
\label{ap-Hecke}

In this appendix we define two operators acting on Jacobi forms 
and then derive their induced action on modular forms appearing as coefficients in the theta expansion, 
which for a (mock) Jacobi form of index $m$
reads \cite{MR781735}
\be
\vph(\tau,z)=\sum_{\mu=0}^{2m-1} \hi{m}_\mu(\tau) \ths{m}_\mu(\tau,z),
\label{thetaexp}
\ee
where $\ths{m}_\mu$ is the index $m$ theta function \eqref{deftheta}.
Although this will not be needed in this paper, we will not assume that $\vph$ is holomorphic in $\tau$,
but will omit the complex conjugate argument.

The first Hecke-like operator is defined by a simple rescaling of the elliptic argument 
\cite[Eq.(4.36)]{Dabholkar:2012nd}:
\be
(U_s[\vph])(\tau,z)=\vph(\tau, sz).
\ee
The action on the modular forms follows from the following property of the theta functions 
\be
\ths{m}_\mu\(\frac{a\tau+b}{d}\, ,az\)=\sum_{\nu=0}^{d-1} e^{\frac{\pi\I b}{2m d}\, (\mu+2m \nu)^2} \, 
\ths{ad m}_{a(\mu+2m \nu)}(\tau,z)
\ee
valid for $a,d\in\IN$ and $b\in\IZ$.
In particular, choosing $a=d=s$ and $b=0$, one obtains
\be
\ths{m}_\mu\(\tau ,s z\)=\sum_{\nu = 0}^{s-1} \ths{m s^2}_{s\(\mu+2m \nu\)}(\tau,z).
\ee
This result immediately implies that the action of $U_s$ on $\hi{m}_\mu$ is given by \eqref{defHeckeU}.

The second operator is defined by \cite[Def. 3.1]{Bouchard:2018pem}\footnote{We multiplied 
	the operator defined in \cite{Dabholkar:2012nd,Bouchard:2018pem}
	by the factor $r^{1-\hf(w+\bw)}$ in order to make it 
	commuting with the simple operations changing the weight such as $\p_\tau$
	or multiplication by $\tau_2$, but leaving it intact for $w+\bw=2$. This does not affect the property
	\label{foot-Hecke}
	$$
	\cT_r\cT_s=\cT_{rs},
	\qquad
	\mbox{for } \gcd(r,s)=1
	$$
	and little bit simplifies the property (cf. \cite[Lemma 3.7]{Bouchard:2018pem})
	$$
	\cT_{p^l} =\cT_p\cT_{p^{l-1}}-p\, U_{p}\cT_{p^{l-2}},
	\qquad
	\mbox{for } l\ge 2 \mbox{ and } p \mbox{ prime}.
	$$}
\be
(\cT_r[\vph])(\tau,z)= 
\sum_{a,d>0 \atop ad=r}\(\frac{\sqrt{r}}{d}\)^{w+\bw}\sum_{b=0}^{d-1}\vph\(\frac{a\tau+b}{d},az\),
\label{HeckeTphi}
\ee
where $(w,\bw)$ is the weight of the Jacobi form.
To compare with \cite{Dabholkar:2012nd}, let us restrict to holomorphic Jacobi forms of weight $w$
and perform a Fourier expansion of $\vph(\tau,z)$ in both variables
\be
\vph(\tau,z)=\sum_{n,m}c(n,m)\, \q^n y^m ,
\qquad
\q=e^{2\pi \I \tau},
\quad
y=e^{2\pi \I z}.
\ee
Then, using the identity
\be
d^{-1}\sum_{b=0}^{d-1} e^{2\pi\I  sn/d}=\delta^{(1)}_{n/d},
\ee
one finds
\be
\begin{split}
	(\cT_r[\vph])(\tau,z)=&\,   
	r^{1-w/2}\sum_{a|r}a^{w-1}\sum_{n,m}\delta^{(1)}_{an/r}\, c(n,m)\, \q^{a^2n/r} y^{am}
	\\
	=&\,  r^{1-w/2}\sum_{a|r}a^{w-1}\sum_{n,m} c\(\frac{rn}{a}\, ,m\)\, \q^{an} y^{am}.
\end{split}
\label{HeckeTphi2}
\ee
As a result, we arrive at the following action on the Fourier coefficients
\be
\cT_r\ :\ c(n,m)\mapsto r^{1-w/2}\sum_{d|(n,m,r)} d^{w-1}\, c\(\frac{nr}{d^2}, \frac{m}{d}\),
\ee
which coincides up to the factor $r^{1-w/2}$ (see footnote \ref{foot-Hecke}) with
\cite[Eq.(4.37)]{Dabholkar:2012nd}. 
The action of \eqref{HeckeTphi} on the VV modular forms $\hi{m}_\mu$ follows from 
Definition 3.8 and Theorem 3.9 of \cite{Bouchard:2018pem} and coincides with 
the one given in \eqref{defHecke-exp}.\footnote{Note that $w$ in \eqref{defHecke-exp} 
	is equal to $w-1/2$ in terms of the weight appearing in \eqref{HeckeTphi}. 
	Hence the additional normalization factor discussed 
	in footnote \ref{foot-Hecke} now reads $r^{\frac34-\hf(w+\bw)}$.}

\section{Lattices, glue vectors and zero modes}
\label{ap-glue}

In this appendix we present various results about our main lattice $\bfLami{\bfr}$,
about the glue vectors appearing in decompositions of $\bfLami{\bfr}$ and the extended lattice $\bbLami{\bfr}$,
and an analysis of the set of zero modes.
Throughout the appendix we will use the following convenient notations: 
$\rdcr={\Nr}/{\Nr_0}$, $\rdcr_i={\Nr_i}/{\Nr_0}$
and $\Nr_{i_1\dots i_m}=\gcd(\Nr_{i_1},\dots,\Nr_{i_m})$.
If the indices are consecutive, i.e. $i_k=i_1+k-1$ and $i_m=j$,  
we will use a shorthand notation $\sss{i}{j}$ instead of $\Nr_{i_1\dots i_m}$.
Note that $\sss{1}{n}=\Nr_0$. 

\subsection{Lattice $\bfLami{\bfr}$}
\label{ap-mainlat}

The lattice $\bfLami{\bfr}$ is defined as 
\be
\bfLami{\bfr}=\left\{\bfk\in \IZ^n \ :\ \sum_{i=1}^n \Nr_i k_i=0 \right\}
\label{def-bfLam-ap}
\ee
and carries the bilinear form
\be
\bfx\cdot\bfy=\kappa\sum_{i=1}^n \Nr_i x_i y_i.
\label{bf-r-ap}
\ee
To describe its dual lattice, let us fix a $n$-tuple $\bfrho=(\rho_1,\dots,\rho_n)$ such that
$\sum_{i=1}^n \rho_i\Nr_i=\Nr_0$ and introduce
\be
\hmu^{(\bfr)}_i(\bfmu,\lambda)=\frac{\mu_i}{\kappa\Nr_i}-\frac{\sum_{i=1}^n\mu_i}{\kappa\Nr}
+\(\rho_i-\frac{\Nr_0}{\Nr}\)\lambda.
\label{defhmu-all}
\ee 
Note that it satisfies $\sum_{i=1}^n \Nr_i\hmu^{(\bfr)}_i=0$. Then the dual lattice is the following
subset of ${\Span}(\bfLami{\bfr})$ 
\be
(\bfLami{\bfr})^\star=\left\{\bfk\in\IZ^n+\bfhmu^{(\bfr)}(\bfmu,\lambda)\ :\  
\sum_{i=1}^N \Nr_i k_i=0, \ \mu_i\in \IZ_{\kappa\Nr_i},\ \lambda \in \IZ_{\rdcr} \right\}.
\label{duallat}
\ee
Since the overall shift $\mu_i\to \mu_i+\rdcr_i$ for all $i$ leaves $\bfhmu^{(\bfr)}$ invariant, 
the discriminant group is
\be 
\label{disc-grp} 
\bfDr=\IZ_{\rdcr} \otimes \frac{\prod_{i=1}^n \IZ_{\kappa\Nr_i}}{\IZ_{\kappa\Nr_0}}\,  
\ee 
and hence its order, which coincides with the determinant of the bilinear form \eqref{bf-r-ap}, is given by
\be 
|\bfDr| =\det\bfLami{\bfr}=\frac{\kappa^{n-1}\Nr}{\Nr_0^2}\,\prod_{i=1}^{n} \Nr_i.
\label{detbfLam}
\ee 
In the special case $\kappa=\Nr_1=\cdots=\Nr_n=1$, the lattice coincides with the standard $A_{n-1}$ root lattice
whose discriminant group is $\IZ_n$.

There are two decompositions of $\bfLami{\bfr}$ which play an important role in our story.
One is the decomposition \eqref{recover-L} on the sublattice generated by the vectors 
$\bfhv_{i,i+1}=\bfv_{i,i+1}/\Nr_{i,i+1}$, $i=1,\dots, n-1$.
In this case the glue vectors are given by linear combinations of other normalized vectors $\bfhv_{ij}$ with $j-i>1$.
The number of values taken by the coefficients of $\bfhv_{ij}$ that
generate independent glue vectors is given by
\be
\qquad N_{ij}\equiv\frac{\sss{i+1}{j-1}\sss{i}{j}}{\sss{i}{j-1}\sss{i+1}{j}}\, .
\label{valNij}
\ee 
The total number of glue vectors is 
\be 
\prod_{i=1}^{n-2}\prod_{j=i+2}^n N_{ij}=
\frac{\Nr_0\prod_{j=2}^{n-1} \Nr_i }{\prod_{i=1}^{n-1}\Nr_{i,i+1}}\, ,
\label{total-valNij}
\ee 
which agrees with the formula \eqref{Ng}.

The other important decomposition corresponds to a split of the set of charges into $m$ subsets
induced by a decomposition $n=\sum_{k=1}^m n_k$.
This gives rise to the following lattice factorization
\be 
\bfLami{\bfr}=\bigcup\limits_{\Csf=0}^{N_g-1} \[\(\bfLami{\bfs}+\bfgi{0}_\Csf \)
\oplus\(\mathop{\oplus}\limits_{k=1}^m \(\bfLami{\frr_k}+\bfgi{k}_\Csf\)\)\],
\label{lat-decomp-r}
\ee 
where we used notations from \eqref{split-rs}. To write the glue vectors, we introduce $s_0=\gcd(s_1,\dots,s_m)$,
$\hs_k=s_k/\gcd(\frr_k)$ and
\be 
\label{bfrho}
\begin{split}
\bfrho^{(0)}=&\, (\rho^{(0)[n_1]}_1,\dots,\rho^{(0)[n_m]}_m),
\qquad\qquad\qquad\
\sum_{k=1}^m s_k\rho^{(0)}_k=s_0,	
\\	
\bfrho^{(k)}=&\, (0^{[j_k]},\rho^{(k)}_1,\dots,\rho^{(k)}_{n_k},0^{[n-j_{k+1}]}),
\qquad 
\sum_{i=1}^{n_k} \Nr_{j_k+i}\rho^{(k)}_i=\gcd(\frr_k).
\end{split}
\ee 
Then the glue vectors are given by
\be 
\bfg_\Csf=\sum_{k=1}^m\csf_k \bfrho^{(k)}-\frac{1}{s_0}\(\sum_{\ell=1}^m \gcd(\frr_k)\, \csf_\ell\)\bfrho^{(0)}.
\ee
and are labelled by the following set of indices
\be 
\Csf=\biggl\{(\csf_1,\dots,\csf_m)\ :\ \csf_k\in\IZ_{\hs_k},\ \sum_{k=1}^m \gcd(\frr_k)\, \csf_k\in s_0\IZ  \biggr\}. 
\label{setC}
\ee 
Thus, the indices are not free, but should satisfy a constraint.
Taking this into account, the total number of glue vectors equals
\be  
N_g=\frac{\Nr_0}{s_0}\, \prod_{k=1}^m \hs_k.
\label{number-nus}
\ee 
It is useful to note that since $\bfgi{k}_\Csf\in\bfDi{\frr_k}$ and $\bfgi{0}_\Csf\in \bfDi{\bfs}$,
they should be particular cases of the vector defined in \eqref{defhmu-all}.
And indeed, it is easy to check that
\be 
\bfgi{k}_\Csf=\bfhmu^{(\frr_k)}(0,\csf_k),
\qquad 
\bfgi{0}_\Csf=\bfhmu^{(\bfs)}\bigl(\bfnu,-\tfrac{\sum_{k=1}^m\nu_k}{\kappa s_0}\bigr),
\quad \mbox{where} \quad 
\nu_k=\kappa\gcd(\frr_k)\,\csf_k.
\ee 

Let us now establish links to the sum over charges appearing in all anomaly equations of this paper. 
First, we note that the vector $\bfhmu$ defined in \eqref{thetadata} coincides with 
$\bfhmu^{(\bfr)}\bigl(\bfmu,\frac{\Delta\mu}{\kappa\Nr_0}\bigr)$ from \eqref{defhmu-all}
where the integer valuedness of the second argument is the usual condition imposed by $\delta^{(\kappa\Nr_0)}_{\Delta\mu}$
(see, e.g., Theorem \ref{thm-gensol}) and following from the condition  
$\sum_{i=1}^{n} q_i = \mu + \kappa \Nr/2$ in the sum over D2-brane charges (see, e.g., \eqref{Rirf-to-rmRrf}).
This makes obvious the fact that $(\mu,\bfmu)$, subject to the above condition and the identification
$(\mu,\bfmu)\simeq (\mu+\rdcr,\bfmu+\hat\bfr)$, labels elements of the discriminant 
group $\bfDr$.
To see that the whole sum over D2-brane charges corresponds to the sum over $\bfLami{\bfr}$, 
it is sufficient to identify 
\be
q_i=\kappa \Nr_i\( k_i+\frac{\mu}{\kappa\Nr}+\hf\)
=\kappa \Nr_i\( \eps_i+\frac{\rho_i\Delta\mu}{\kappa\Nr_0}\)+\mu_i+\hf\, \kappa \Nr_i,
\label{rel-charges}
\ee 
where $k_i=\eps_i+\hmu^{(\bfr)}_i\bigl(\bfmu,\frac{\Delta\mu}{\kappa\Nr_0}\bigr)$, with $\eps_i\in \IZ$,
is the component of the vector $\bfk$ belonging to the dual lattice \eqref{duallat}.
Due to the constraint on $\Delta\mu$, the expression in the round brackets in the first term is an integer,
so that $q_i \in \kappa \Nr_i \IZ + \mu_i + \kappa \Nr_i/2$, as required.
It also easy to check that the identification \eqref{rel-charges} maps the quadratic form $-Q_n(\bfhgam)$ \eqref{defQlr}
to the one in \eqref{bf-r-ap}.

Finally, we observe that the sum over $\bfnu$ in the anomaly equation for the anomalous coefficients 
or their refined version (see, e.g., \eqref{refexp-whgi}) is nothing but 
the sum over the glue vectors of the lattice factorization \eqref{lat-decomp-r}.
To see this explicitly, let us introduce another $m$-tuple
$(\hrho_1,\dots,\hrho_m)$ such that $\sum_{k=1}^m \gcd(\frr_k)\,\hrho_k=\Nr_0$.
It allows to take $\bfrho=\sum_{k=1}^m \hrho_k\bfrho^{(k)}$.
Then it is straightforward to verify that, identifying 
$\csf_k=\frac{\nu_k-\sum_{i=1}^{n_k}\mu_{j_k+i}}{\kappa\gcd(\frr_k)}-\frac{\hrho_k\Delta\mu}{\kappa\Nr_0}$, 
one gets the relation
\be  
\bfhmu+\bfg_\Csf
=\bfhmu^{(\bfs)}\Bigl(\bfnu,\tfrac{\mu-\sum_{k=1}^m\nu_k}{\kappa s_0}\Bigr)
+\sum_{k=1}^m\bfhmu^{(\frr_k)}\Bigl(\bfmu,\tfrac{\nu_k-\sum_{i=1}^{n_k}\mu_{j_k+i}}{\kappa\gcd(\frr_k)}\Bigr),
\label{relgluevec}
\ee 
where the two vectors on the r.h.s. belong to $\bfDi{\bfs}$ and $\bfDi{\frr_k}$, respectively. 
This demonstrates that the lattice factorization \eqref{lat-decomp-r} gives rise to factors labelled 
by $(\mu,\bfnu)$ and $(\nu_k,\frm_k)$,
as expected. The ranges of summations over $\nu_k$ also agree, while
the conditions restricting the values of $\nu_k$ originating from \eqref{setC} and the above identification,
namely
\be  
\mu-\sum_{k=1}^m\nu_k\in \kappa s_0\IZ,
\qquad
\nu_k-\sum_{i=1}^{n_k}\mu_{j_k+i}\in \kappa\gcd(\frr_k)\,\IZ,
\ee 
in equations like \eqref{refexp-whgi} arise from the conditions on the sum over charges defining each of the factors.

Let us apply what we have just shown to theta series.
To this end, let 
\be
\Phi(\bfx)=\Phi_0(\bfx^{(0)})\prod_{k=1}^m \Phi_k(\bfx^{(k)}),
\ee 
where the upper indices $^{(0)}$ and $^{(k)}$ on a vector denote its projections to $\bfLami{\bfs}$ and $\bfLami{\frr_k}$,
respectively. We also assume that theta series associated to lattices $\bfLami{\bfr}$
and labelled by $(\mu,\bfmu)$ implicitly contain the factor $\delta^{(\kappa\Nr_0)}_{\Delta\mu}$.
Then our result implies that
\be  
\vth_{\mu,\bfmu}(\tau,\bftet z;\bfLami{\bfr},\Phi,0) =
\(\prod_{k=1}^m\sum_{\nu_k=0}^{\kappa s_k-1} \)\vth_{\mu,\bfnu}(\tau,\bftet^{(0)} z;\bfLami{\bfs},\Phi_0,0) 
\prod_{k=1}^m \vth_{\nu_k,\frm_k}(\tau,\bftet^{(k)} z;\bfLami{\frr_k},\Phi_k,0). 
\label{factortheta-rs}
\ee

\subsection{Basis expansion of glue vectors}

Here we obtain an expansion of the glue vectors appearing in the lattice decompositions \eqref{lat-glue-our} 
and \eqref{recover-L} in the basis $\ebbm_\lambda =(\ebbm_0,\ebbm_{i,\alpha},\hvbbm_{i,i+1},\hubbm_{i,i+1})$.
To this end, we note that the dual basis, i.e. such that $\ebbm_\lambda \ast \ebbm^*_{\lambda'}=\delta_{\lambda\lambda'}$, 
is given by 
\be
\label{glue-projectors-n}
\begin{split}
	\ebbm^*_0=&\, -\frac{1}{d_{\Nr}}\, \ebbm_0,
	\hspace{2.9cm} 
	\ebbm^*_{i,\alpha}=\glueg_{i,\alpha}-\frac{\alpha}{\di{i}}\, \glueg_{i},
	\\
	\hvbbm^*_{i,i+1} =&\,  \frac{1}{\Nr_0\kappa_i}\sum_{k=1}^{i}\sum_{l=i+1}^n\Nr_{kl}\hvbbm_{kl},
		\qquad
	\hubbm^*_{i,i+1} = - \frac{1}{4^\eps\Nr_0\kappa_i}\sum_{k=1}^{i}\sum_{l=i+1}^n \Nr_{kl}\hubbm_{kl},
\end{split}
\ee
where we used the notation $\kappa_i$ introduced in \eqref{defkapij}.
Then evaluating the scalar products of the vectors appearing in \eqref{gluegA} 
with the dual basis, one finds the following expansions
\be
\begin{split} 
\glueg_0 =&\,\sum_{l=1}^{n-1}
\frac{\kappa\Nr_0}{\kappa_l}\(\sum_{j=1}^{l}\sum_{k=l+1}^{n}\rdcr_j\rdcr_k(\rho_j-\rho_k) \) 
\hubbm_{l,l+1}
+\frac{1}{\rdcr}\,\ebbm_{0},
\\
\glueg_{i,\asf} =&\, 
4^{-\eps}\asf
\[\sum_{l=i}^{n-1}\frac{1}{\kappa_l}\(\sum_{k=l+1}^{n}\rdcr_k\)\hubbm_{l,l+1}
-\sum_{l=1}^{i-1}\frac{1}{\kappa_l}\(\sum_{k=1}^{l}\rdcr_k\)\hubbm_{l,l+1}\]
+\frac{\asf}{d_\Nr}\,\ebbm_{0}
\\
&\,
- \frac{1}{d_{\Nr_i}}\Biggl[\sum_{\alpha=1}^{\asf}\(d_{\Nr_i}-\asf\) \alpha\, \ebbm_{i,\alpha} 
+\sum_{\alpha=\asf+1}^{d_{\Nr_i}-1} 
\(d_{\Nr_i}-\alpha\)\asf\, \ebbm_{i,\alpha}\Biggr],
\end{split} 
\label{exp-g}
\ee
while for $\hvbbm_{ij}$ and $\hubbm_{ij}$, one gets
\be 
\hvbbm_{ij} =\frac{\Nr_i \Nr_j}{\Nr_{ij}}\sum_{l=i}^{j-1} \frac{\kappa\rdcr}{\kappa_l}\,\hvbbm_{l,\,l+1},
\qquad
\hubbm_{ij} =\frac{\Nr_i \Nr_j}{\Nr_{ij}} \sum_{l=i}^{j-1}\frac{\kappa\rdcr}{\kappa_l}\,\hubbm_{l,l+1}.
\label{exp-vw}
\ee

\subsection{Zero modes for two charges} 
\label{subsec-glue2}

In the case $n=2$, \eqref{exp-g} takes the form
\bea
\label{glueexp2}
\glueg_0 &=&  \frac{1}{\rdcr}\((\rho_1-\rho_2)\hubbm_{12} + \ebbm_0\),
\\
\glueg_{i,\asf} &=& \frac{\asf}{d_\Nr}
\(\frac{(-1)^{i-1}}{\rdcr_i}\,\hubbm_{12} + \ebbm_0\)  
- \frac{1}{d_{\Nr_i}}\(\sum_{\alpha=1}^{\asf}\(d_{\Nr_i}-\asf\) \alpha\, \ebbm_{i,\alpha} 
+\sum_{\alpha=\asf+1}^{d_{\Nr_i}-1} \(d_{\Nr_i}-\alpha\)\asf\, \ebbm_{i,\alpha}\).
\nn
\eea
which leads to the dependence of the indices of theta series, $\nu$ and $\tnu$, 
on the glue vectors given in \eqref{nutnu}.

Let us find a manifest description of the set of glue vectors ensuring the existence of zero modes
which is defined as 
\be
\cAr_0(\mu_{12})=\{\Asf\; :\ 2^\eps\tnu(\Asf)-\mu_{12}\in 2\kappa_{12}\IZ\}.
\ee
Taking into account \eqref{nutnu}, explicitly the zero mode condition reads
\be
2^\eps\kappa\Nr_0 \rdcr_1\rdcr_2(\rho_1-\rho_2) \asf_0 + 2^{-\eps}\(\rdcr_2 \asf_{1}-\rdcr_1\asf_{2}\)-\mu_{12} 
\in 2\kappa_{12}\IZ.
\label{glui-zmod}
\ee 
In the following, to treat the case $\eps=0$, we will have to distinguish two cases whether $\rdcr$ is even or not.
Therefore, we define $\eps'=\eps\delta^{(2)}_{\rdcr}$ and $\eps'_i=\eps\delta^{(2)}_{\rdcr_i}$.
We then take $\asf_i \in [0,d_{\Nr_i}-1]$ such that
\be
\asf_i = (-1)^{i-1}2^{\eps+\eps'}\kappa \Nr_0 \rdcr_1 \rdcr_2 
\left\{\frac{\mu_{12}}{2^{\eps'} \kappa \Nr_0 \rdcr_1 \rdcr_2}\right\} \rho_{3-i} +\ell\rdcr_i 
+\eps'_i (1-\eps')\,2^{1+\eps'}\kappa\Nr_0\rdcr_1 \rdcr_2\veps
\mod d_{\Nr_i},
\label{zm-alphai}
\ee
where $\{x\}$ denotes the fractional part of $x$, $\ell=0,\dots,4^\eps\kappa\Nr_0-1$
and $\veps=0,2^{\eps}-1$. Note that the last term is non-vanishing only for 
$\kappa=1$, odd $\rdcr$ and even $\rdcr_i$. 
We also observe that the $\ell$-dependent term cancels in \eqref{glui-zmod} and 
the transformation $\asf_i \to \asf_i + d_{\Nr_i}$ induces a shift of 
$2^{-\eps}(\rdcr_2 \asf_1 - \rdcr_1 \asf_2)$ by $\pm 2^\eps\kappa \Nr_0 \rdcr_1 \rdcr_2$.
Therefore, $\asf_i$ specified in \eqref{zm-alphai} satisfy
\be
2^{-\eps}(\rdcr_2 \asf_1 - \rdcr_1 \asf_2) =\mu_{12}+ 2^{\eps'}\kappa \Nr_0 \rdcr_1 \rdcr_2 m, 
\qquad m\in \IZ.
\ee
Substituting this into \eqref{glui-zmod}, one reduces the condition to 
\be
2^{\eps-\eps'}(\rho_1-\rho_2) \asf_0- 2^{-\eps'}\rdcr n=m, 
\qquad 
n \in \IZ.
\label{cond-amn}
\ee 
Note that $2^{-\eps'}\rdcr$ is integer and, due to the possibility to rewrite the condition on $\rho_i$ 
as $(\rho_1-\rho_2)\rdcr_1+\rho_2\rdcr=1$, one has $\gcd(\rho_1-\rho_2,\rdcr)=1$, which also 
implies that $\gcd(2^{\eps-\eps'}(\rho_1-\rho_2),2^{-\eps'}\rdcr)=1$.
Thus, if we restrict $\asf_0$ to belong to the interval $[0,\rdcr-1]$, 
for $\eps'=0$ and given $m$, there is a unique pair $(\asf_0,n)$ satisfying \eqref{cond-amn},
while for $\eps'=1$ there are two such pairs $(\asf_0+\hf\rdcr\veps,n+(\rho_1-\rho_2)\veps)$, $\veps=\{0,2^\eps-1\}$.
Thus, the equations \eqref{zm-alphai} and \eqref{cond-amn}
encode $2^{3\eps} \kappa \Nr_0$ solutions to the condition \eqref{glui-zmod}
parametrized by $(\ell,\veps)$ and provide an explicit description of the set $\cAr_0(\mu_{12})$.

\subsection{Zero modes of maximal order} 
\label{subsec-zm}

For $n$ charges, the maximal order of zero modes is $n-1$ and their set is determined by the conditions
$\wbbm_{ij}\ast\kbbm=0$ for all $i,j$ and $\kbbm\in \bbLami{\bfr}+\bbmu$. 
In practice we work with the lattice decompositions \eqref{lat-glue-our} and \eqref{recover-L}.
Therefore, $\kbbm$ can be replaced by $\kbbm_{||} \in\bbLami{\bfr}_{||}+\glueg_\Asf^{||}+\bbmu$
which is expanded as in \eqref{expkparal} with coefficients whose fractional parts are determined by 
the glue vectors of the two decompositions, labelled by $\Asf=\{\asf_0,\asf_1,\dots,\asf_n\}$ and 
$\Bsf=\{\bsf_{ij},\tbsf_{ij}\}$, respectively.
What we are interested in is the set of glue vectors for which the space of solutions to  
the above conditions is non-empty.

To describe the resulting set, let us define $\Delta\bfhmu=\bfhmu-2^\eps\bfhtmu$
where $\bfhmu$ is the vector from \eqref{thetadata}, while $\bfhtmu$ is the same vector with $\kappa$ multiplied by $4^\eps$ 
and $\mu$, $\mu_i$ replaced by $\tmu$, $\tmu_i$ computed in \eqref{tildemus}. This latter vector can be seen
as the projection of $\glueg_\Asf$ on $\tbfLami{\bfr}$. Using these definitions, one finds
that the components of $\Delta\bfhmu$ have the following explicit expressions
\be  
\Delta\hmu_i
=\frac{\mu_i-2^{-\eps}\asf_i}{\kappa\Nr_i}-\frac{\mu-2^{-\eps}\sum_{j=1}^n\asf_j}{\kappa\Nr}
+\frac{2^\eps \Nr_0\asf_0}{\Nr}+\(\frac{\Delta\mu}{\kappa\Nr_0}-2^\eps\asf_0\)\rho_i.
\label{comp-dhmu}
\ee 
Then we summarize the main facts about the desired set in the following

\begin{proposition}\label{prop-zm}
The set of glue vectors of zero modes of maximal order can be characterized by two sets of conditions:
\begin{enumerate}
	\item 
	the first specifies a set $\cAr_0$ of $\Asf$ indices and is independent of $\Bsf$:
	\be 
	\Delta\hmu_i(\Asf) \in\IZ,
	\qquad
	i=1,\dots n;
	\label{cond-glueA}
	\ee 
	\item
	the second restricts $\Bsf$ indices: 
	\be
	\bsf_{ij}-2^\eps \tbsf_{ij}=-c_{ij}(\Asf) \mod N_{ij},
	\qquad
	i<j-1,
	\label{cond-glueB}
	\ee 
	where $N_{ij}$ is given in \eqref{valNij} and $c_{ij}(\Asf)$ are the coefficients in the expansion
	\be 
	\Delta\bfhmu(\Asf)=\sum_{i<j}c_{ij}(\Asf)\,\bfhv_{ij},
	\label{decomDhmu}
	\ee 
	which is unique provided $\Asf\in\cAr_0$ and $c_{ij}(\Asf)$ are required to be integers
	ranging from 0 to $N_{ij}-1$.
\end{enumerate} 
\end{proposition}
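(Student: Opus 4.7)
The strategy is to convert the zero mode equations into a system of Diophantine conditions on $(\Asf,\Bsf)$, and then to disentangle them into a purely $\Asf$-dependent piece (yielding \eqref{cond-glueA}) and a piece linking $\Bsf$ to $\Asf$ (yielding \eqref{cond-glueB}).

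First I would reduce the zero mode condition $\wbbm_{ij}\ast\kbbm=0$ to $\ell_i=2^\eps\tell_i$ in the expansion \eqref{expkparal}. By the linear relations \eqref{rel-vw}, the $n(n-1)/2$ vectors $\wbbm_{ij}$ span only an $(n-1)$-dimensional subspace, so it suffices to impose $\kbbm_{||}\ast\wbbm_{i,i+1}=0$ for $i=1,\dots,n-1$. Substituting \eqref{expkparal} and using $\wbbm_{i,i+1}^2=0$ together with $\hubbm_{j,j+1}=\hwbbm_{j,j+1}-2^\eps\hvbbm_{j,j+1}$, each condition collapses to
\begin{equation}
\sum_{j=1}^{n-1}(\ell_j-2^\eps\tell_j)\,(\hvbbm_{j,j+1}\ast\wbbm_{i,i+1})=0.
\end{equation}
The pairing matrix is non-singular: $N=\Span\{\hwbbm_{i,i+1}\}$ is maximally isotropic in the non-degenerate quadratic space $\bbLami{\bfr}_{||}\otimes\IR$ of signature $(n-1,n-1)$ (induced by the relative sign $-4^\eps$ between $\bfLami{\bfr}$ and $\tbfLami{\bfr}$), and $V=\Span\{\hvbbm_{i,i+1}\}$ satisfies $V\cap N=\{0\}$ since $\sum_i\lambda_i\hwbbm_{i,i+1}\in V$ forces the $\hubbm$-part $\sum\lambda_i\hubbm_{i,i+1}=0$ and hence $\lambda_i=0$. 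Consequently $N^\perp=N$ implies that the pairing $V\times N\to\IR$ is non-degenerate, yielding $\ell_i=2^\eps\tell_i$ for all $i$.

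Next I would unpack this condition using the lattice decomposition. Writing $\kbbm_{||}=\bfk+\tbfk$ with $\bfk\in\bfLami{\bfr}+\bfhmu$ and $\tbfk\in\tbfLami{\bfr}+\bfhtmu$, the decomposition \eqref{recover-L} applied separately to the two factors allows one to expand $\bfk$ and $\tbfk$ in the respective bases $\{\bfhv_{i,i+1}\}$, $\{\hubbm_{i,i+1}\}$ plus the glue shifts $\sum_{i<j-1}\bsf_{ij}\bfhv_{ij}$ and $\sum_{i<j-1}\tbsf_{ij}\hubbm_{ij}$. Under the natural identification $\hubbm_{k,l}\leftrightarrow \bfhv_{k,l}$ implicit in the isomorphism $\tbfLami{\bfr}\simeq\bfLami{\bfr}$ (rescaling the quadratic form by $-4^\eps$), the equality $\ell_i=2^\eps\tell_i$ becomes
\begin{equation}
\Delta\bfhmu=\sum_{i<j-1}(2^\eps\tbsf_{ij}-\bsf_{ij})\,\bfhv_{ij}+\sum_i(2^\eps\tell_i-\ell_i)\,\bfhv_{i,i+1},
\label{pl:reduction}
\end{equation}
which must have a solution with $\ell_i,\tell_i\in\IZ$. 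Since $\bfhv_{ij}\in\bfLami{\bfr}$ (its components $\Nr_j/\Nr_{ij}$, $\Nr_i/\Nr_{ij}$ being integer) and the r.h.s. of \eqref{pl:reduction} lies in $\bfLami{\bfr}$, solvability requires $\Delta\bfhmu\in\bfLami{\bfr}$. Because $\sum_i\Nr_i\Delta\hmu_i=0$ holds automatically, this reduces to the componentwise integrality $\Delta\hmu_i\in\IZ$, which is \eqref{cond-glueA}; through \eqref{comp-dhmu} it involves only the $\Asf$-data. Once $\Asf\in\cAr_0$, the decomposition \eqref{recover-L} provides the unique expansion \eqref{decomDhmu} with $c_{ij}\in\IZ_{N_{ij}}$ for $j-i>1$, and matching the $\bfhv_{ij}$-coefficients in \eqref{pl:reduction} modulo the integer freedom in $\Span\{\bfhv_{k,k+1}\}$ gives precisely \eqref{cond-glueB}; the residual $\bfhv_{i,i+1}$-coefficients are absorbed in the free integers $2^\eps\tell_i-\ell_i$.

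\paragraph{Main obstacle.} The principal technical point is the bookkeeping in the last step: one must verify that the glue decomposition \eqref{recover-L} is compatible with the rational basis expansion \eqref{exp-vw}, so that the coefficients $\bsf_{ij},\tbsf_{ij}$ extracted relative to $\Span\{\bfhv_{k,k+1}\}$ are unambiguously defined modulo $N_{ij}$, and that the difference $2^\eps\tbsf_{ij}-\bsf_{ij}$ indeed equals $-c_{ij}(\Asf)\bmod N_{ij}$ in the sense of \eqref{decomDhmu}. This is an inductive verification in $j-i$, resting on the fact that $\{\bfhv_{i,i+1}\}_{i=1}^{n-1}$ generates a sublattice of $\bfLami{\bfr}$ of index $\prod_{j-i>1}N_{ij}$ as in \eqref{total-valNij}, and on careful tracking of the fractional parts in \eqref{exp-vw}. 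No conceptually new ingredients are required, but the combinatorics is the place where errors would most easily creep in.
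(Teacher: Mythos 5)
Your proof is correct and follows essentially the same route as the paper's: reduce the maximal zero mode condition to $\bflam-2^\eps\bftlam=-\Delta\bfhmu$, deduce condition \eqref{cond-glueA} from $\Delta\bfhmu\in\bfLami{\bfr}$ together with the automatic constraint $\sum_i\Nr_i\Delta\hmu_i=0$, and obtain \eqref{cond-glueB} by matching coefficients in the decomposition \eqref{recover-L}. The only difference is that you supply an explicit non-degeneracy argument (via the maximally isotropic subspace spanned by the $\hwbbm_{i,i+1}$) for the step the paper simply declares "clear", which is a welcome but inessential elaboration.
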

\begin{proof}
Let $\kbbm_{||}=\bblambda+\glueg_\Asf^{||}+\bbmu$ where $\bblambda=(\bflam,\bftlam)\in \bbLami{\bfr}_{||}$.
Given the form of the null vectors $\wbbm_{ij}$ \eqref{def-bfcvij}, the maximal order zero mode condition 
can be written as 
\be 
\bfv_{ij} \cdot(\bflam-2^\eps\bftlam +\bfhmu-2^\eps\bfhtmu)=0
\qquad \mbox{for all } i,j,
\label{maxzm-prof}
\ee 
where we used the bilinear form \eqref{bf-r} and took into account that the bilinear form on $\tbfLami{\bfr}$ differs
by the factor $-4^\eps$. It is clear that the condition \eqref{maxzm-prof} is equivalent to 
\be 
\bflam-2^\eps\bftlam=-\Delta\bfhmu,
\label{maxzm-prof2}
\ee
which in turn implies that $\Delta\bfhmu\in\bfLami{\bfr}$. Since the components \eqref{comp-dhmu}
automatically satisfy the condition $\sum_{i=1}^n\Nr_i\Delta\hmu_i=0$, 
the only remaining condition is that they must be integer.
This is precisely the condition \eqref{cond-glueA} which provides the definition of the set $\cAr_0$.

Next, each vector in \eqref{maxzm-prof2} has a unique decomposition according to the lattice decomposition in \eqref{recover-L}.
In particular, $\bflam=\bflam'+\sum_{i,j=1\atop j-i>1}^{n}\bsf_{ij} \bfhv_{ij}$ and 
$\bftlam=\bftlam'+\sum_{i,j=1\atop j-i>1}^{n}\tbsf_{ij} \bfhv_{ij}$ where
$\bflam',\bftlam'\in \Span\{\bfhv_{k,k+1}\}_{k=1}^{n-1}$, while $\Delta\bfhmu$ can be written as in \eqref{decomDhmu}
where $c_{ij}$ with $j-i>1$ play the role of the glue vector indices.
Then the second set of conditions \eqref{cond-glueB} claimed by the proposition is a direct consequence of 
\eqref{maxzm-prof2}.
\end{proof}

\begin{remark}\label{remark-zm}
In fact, due to $\sum_{i=1}^n\Nr_i\Delta\hmu_i=0$, a stronger form of the condition \eqref{cond-glueA} holds: 	
$\Delta\hmu_i \in \gcd(\{\rdcr_j\}_{j\ne i})\IZ$. 
In particular, this should be taken into account to reproduce the condition \eqref{glui-zmod}
in the $n=2$ case.
\end{remark}

\begin{corollary}\label{cor-sumB}
Each pair $(\bsf_{ij},\tbsf_{ij})$ contributes $N_{ij}$ solutions so that their total number is
given by \eqref{total-valNij} and is $\Asf$-independent. 
As a result, if $\Dzm(\Asf,\Bsf)$ is the Kronecker symbol 
implementing the maximal zero mode condition,
then 
\be 
\sum_B \Dzm(\Asf,\Bsf)=\frac{\Nr_0\prod_{j=2}^{n-1} \Nr_i }{\prod_{i=1}^{n-1}\Nr_{i,i+1}}\, \delta_{\Asf\in\cAr_0} .
\label{sumB}
\ee 
\end{corollary}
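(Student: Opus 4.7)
The plan is to invoke Proposition \ref{prop-zm} to split the maximal zero mode constraint into two decoupled pieces, and then to count solutions of the second piece under the assumption that the first is satisfied. Concretely, Proposition \ref{prop-zm} states that the condition $\wbbm_{ij}\ast \kbbm = 0$ for all $i,j$ is equivalent to the pair of requirements \eqref{cond-glueA} and \eqref{cond-glueB}, where the former involves only the glue vector indices $\Asf=\{\asf_0,\asf_1,\dots,\asf_n\}$ and thus defines the subset $\cAr_0\subseteq \{\Asf\}$, while the latter constrains the self-glue indices $\Bsf=\{\bsf_{ij},\tbsf_{ij}\}_{j-i>1}$ via the congruences
\be
\bsf_{ij}-2^\eps\tbsf_{ij}\equiv -c_{ij}(\Asf)\pmod{N_{ij}}.
\label{cond-prop}
\ee
The Kronecker symbol $\Dzm(\Asf,\Bsf)$ is by construction the product of the indicator of \eqref{cond-glueA} on $\Asf$ and the indicators of the congruences \eqref{cond-prop} on $\Bsf$, so the sum $\sum_\Bsf \Dzm(\Asf,\Bsf)$ automatically vanishes for $\Asf\notin\cAr_0$, which produces the factor $\delta_{\Asf\in\cAr_0}$ in \eqref{sumB}.

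For $\Asf\in \cAr_0$, the coefficients $c_{ij}(\Asf)$ are well-defined integers by the uniqueness part of Proposition \ref{prop-zm}, and the congruences \eqref{cond-prop} are fully decoupled across the pairs $(i,j)$ with $j-i>1$. Each $\bsf_{ij}$ and each $\tbsf_{ij}$ takes values in $\{0,1,\dots,N_{ij}-1\}$ as dictated by the decomposition \eqref{recover-L} applied to $\bfLami{\bfr}$ and $\tbfLami{\bfr}$ respectively. For every fixed $\tbsf_{ij}$ in this range, the congruence \eqref{cond-prop} has a unique solution for $\bsf_{ij}$ modulo $N_{ij}$, hence exactly one representative in $\{0,\dots,N_{ij}-1\}$, so the pair $(\bsf_{ij},\tbsf_{ij})$ contributes $N_{ij}$ solutions. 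This establishes the first sentence of the corollary.

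Taking the product over the independent pairs and using the identity \eqref{total-valNij}, one obtains
\be
\sum_\Bsf \Dzm(\Asf,\Bsf) = \prod_{i=1}^{n-2}\prod_{j=i+2}^{n} N_{ij} \cdot \delta_{\Asf\in\cAr_0}
= \frac{\Nr_0\prod_{j=2}^{n-1}\Nr_j}{\prod_{i=1}^{n-1}\Nr_{i,i+1}}\,\delta_{\Asf\in\cAr_0},
\ee
which is the claimed formula \eqref{sumB}, and in particular the total count is manifestly independent of $\Asf$.

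There is no serious obstacle in this argument once Proposition \ref{prop-zm} is granted; the only point worth double-checking is that the $\Bsf$-congruences really do decouple across pairs $(i,j)$, which follows from the fact that the decomposition \eqref{decomDhmu} of $\Delta\bfhmu$ in the basis $\{\bfhv_{ij}\}$ is unique modulo the lattice of self-glue identifications. This uniqueness is precisely the content of the last sentence of Proposition \ref{prop-zm}, so the counting is unambiguous.
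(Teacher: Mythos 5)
Your proof is correct and follows exactly the route the paper intends: the corollary is an immediate consequence of Proposition \ref{prop-zm}, with the factor $\delta_{\Asf\in\cAr_0}$ coming from the $\Asf$-conditions \eqref{cond-glueA} and the count $N_{ij}$ per pair coming from the fact that each congruence \eqref{cond-glueB} fixes $\bsf_{ij}$ uniquely in $\{0,\dots,N_{ij}-1\}$ for each of the $N_{ij}$ values of $\tbsf_{ij}$, independently of $c_{ij}(\Asf)$. (You also correctly read the $\Nr_i$ in \eqref{total-valNij} as the typo $\Nr_j$.)
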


Although Proposition \ref{prop-zm} gives a formula \eqref{cond-glueB}
for the $\Bsf$ indices corresponding to the zero modes, it does not tell us how to compute
the coefficients $c_{ij}$ encoding the solution. Let us explain a simple recipe how this can be done.
We start with the first component of the expansion \eqref{decomDhmu} 
multiplied by the factor $\frac{\Nr_{1n}}{\Nr_{2::n}}$ so that, according to Remark \ref{remark-zm},
the result is still an integer. It is found to be
\be 
\frac{\Nr_{1n}}{\Nr_{2::n}}\, \Delta\hmu_1 =\sum_{j=2}^n \frac{\Nr_{1n}\Nr_j}{\Nr_{1j}\Nr_{2::n}} \, c_{1j}
=\frac{\Nr_n}{\Nr_{2::n}} \,c_{1n} \mod N_{1n},
\ee 
where we used the fact that all $\frac{\Nr_{1n}\Nr_j}{\Nr_{1j}\Nr_{2::n}}$ with $2\leq j\leq  n-1$ are divisible by $N_{1n}$, 
as follows from its expression in \eqref{valNij}. 
Furthermore, it is easy to see that $\gcd\bigl(\frac{\Nr_n}{\Nr_{2::n}},N_{1n}\bigr)=1$.
This implies that there exist $a_{1n},b_{1n}\in\IZ$ such that $a_{1n}\,\frac{\Nr_n}{\Nr_{2::n}}+b_{1n}N_{1n}=1$, i.e.
\be 
a_{1n}\, \frac{\Nr_{1n}}{\Nr_{2::n}}\,\Delta\hmu_1=c_{1n}\mod N_{1n}.
\ee 
Then one subtracts the already found part of the expansion \eqref{decomDhmu} from $\Delta\bfhmu$ and 
repeats the procedure, first by lowering the second index and then by raising the first index.
Thus, to determine $c_{ij}$, one chooses $a_{ij},b_{ij}\in\IZ$ such that $a_{ij}\,\frac{\Nr_j}{\Nr_{i+1::j}}+b_{ij}N_{ij}=1$
and uses the fact that
\be 
\begin{split} 
&\,
a_{ij}\,\frac{\Nr_{ij}}{\Nr_{i+1::j}}\(\Delta\hmu_i-\sum_{k=1}^{i-1}\sum_{l=k+1}^n c_{kl}\,(\bfhv_{kl})_i-\sum_{l=j+1}^n c_{il}\,(\bfhv_{kl})_i \)
\\
= &\, a_{ij}\sum_{l=i+1}^{j}\frac{\Nr_{ij}\Nr_l}{\Nr_{il}\Nr_{i+1::j}} \,  c_{il}
= a_{ij}\( \frac{\Nr_j}{\Nr_{i+1::j}} \, c_{ij} \!\!\!\! \mod N_{ij}\)
=c_{ij} \mod N_{ij}.
\end{split} 
\label{findcij}
\ee

\section{Computations for three charges}
\label{ap-3ch}

\subsection{Contributions with zero mode order equal 0 and 1}
\label{subsec-ch3-zm01}

For three charges, the kernel of the theta series $\vthls{\bfr}_{\bbmu,\Asf}$ defined in \eqref{defprojtheta} 
is given in \eqref{expr-Phi2} in terms of $\gamma_{ij}$. 
The scalar products $\xbbm_\bbbeta\ast\vbbm_{ij}$ appearing in the kernel are given by the relation \eqref{rel-scpr}.
Similarly, we define for $i<j$ 
\be 
\begin{split} 
\om_{ij}=&\, \frac{1}{\sqrt{2\tau_2}}\, \wbbm_{ij}\ast\xbbm_\bbbeta= \wbbm_{ij}\ast\kbbm\, , 
\\
\omb_{ij}=&\, \frac{1}{\sqrt{2\tau_2}}\, \wbbm_{ij}\ast\xbbm= \wbbm_{ij}\ast(\kbbm+\bbbeta)
\\
&\qquad \qquad\qquad
=\om_{ij}+\left\{\begin{array}{ll}
	2^\eps\kappa (\Nr_i+\Nr_j)\Nr_i\Nr_j\beta,&\quad (ij)=(12),(23),
	\\
	2^\eps\kappa (\Nr+\Nr_2)\Nr_1\Nr_3\beta,&\quad (ij)=(13).
	\end{array}\right.
\end{split}
\label{def-bij}
\ee
For $i>j$, $\om_{ij}$ and $\omb_{ij}$ are defined by applying the permutation $i\leftrightarrow j$ to \eqref{def-bij}.
More generally, permutations leave $\kbbm$ invariant and permute the indices of all vectors like $\vbbm_{ij}$ and $\wbbm_{ij}$.
Note that the component of $\zbbm$ along $\bfLami{\bfr}$ is proportional to $\bftet$ (see \eqref{thetadata}) 
defined in terms of the vectors $\bfv_{ij}$ and therefore it is affected by the permutations so as $\bbbeta$.
This explains why $\om_{ij}$ are anti-symmetric, whereas this is not true for $\omb_{ij}$ 
as follows from the last line in \eqref{def-bij}.
We also have $\om_{i+j,k}=\om_{ik}+\om_{jk}$ and similarly for $\omb_{1+2,3}$ and $\omb_{1,2+3}$.
Using these properties, one can write 
\be 
\begin{split}
\Sym\bigl\{\vthls{\bfr}_{\bbmu,\Asf}(\tau, z) \bigr\}=&\, \sum_{\kbbm_{||} \in\bbLami{\bfr}_{||}+\glueg_\Asf^{||}+\bbmu}
\q^{-\hf\kbbm_{||}^2}\Sym\Biggl\{  y^{\gamma_{1,2+3}+\gamma_{23}} 
\biggl[\frac13\, \delta_{\gamma_{1,2+3}}\delta_{\gamma_{1+2,3}} 
\\
&
+\Bigl(\sgn(\gamma_{1,2+3})-\sgn(\omb_{12})\Bigr)\Bigl(\sgn(\gamma_{1+2,3})-\sgn(\omb_{23})\Bigr) \biggr]
\\
&
+\delta_{\om_{12}}\phi^{(\kappa_{12})}(\tau,\Nr_{12} z) 
\Bigl[ y^{\gamma_{1+2,3}}
\Bigl(\sgn(\gamma_{1+2,3})-\sgn(\omb_{1+2,3})\Bigr)
\\
&\qquad
+y^{-\gamma_{1+2,3}}\Bigl(\sgn(\gamma_{3,1+2})-\sgn(\omb_{3,1+2})\Bigr)\Bigr]
\sum_{\eta=1}^{2^{1+\eps}\kappa_{12}}\delta_{\frac{\gamma_{12}}{r_{12}}-\eta}
\Biggr\},
\label{theta-par3}
\end{split}
\ee
where we applied the permutation $1\leftrightarrow 3$ to the last term in \eqref{expr-Phi2}.

Let us extract the contribution of vanishing zero mode order. It is given by the first two lines in \eqref{theta-par3}
where all $\om_{ij}$ are non-vanishing. Under this condition, for small $\beta$ one can replace 
$\sgn(\omb_{ij})$ by $\sgn(\om_{ij})$ so that this term becomes
\be
\Sym  \left\{ y^{\gamma_{1,2+3}+\gamma_{23}} \[\frac13\, \delta_{\gamma_{1,2+3}}\delta_{\gamma_{1+2,3}} 
+\Bigl(\sgn(\gamma_{1,2+3})-\sgn(\om_{12})\Bigr)\Bigl(\sgn(\gamma_{1+2,3})-\sgn(\om_{23})\Bigr) \]\right\}.
\ee
We expand the product and apply the sign identity 
\be
\sgn(x_1)\sgn(x_2) = 1-\delta_{x_1}\delta_{x_2} - \sgn(x_1-x_2)\(\sgn(x_1) - \sgn(x_2) \)
\label{sign-id}
\ee
to $\frac13$ of the terms with two $\gamma$'s and two $\om$'s. This results in 
\bea
&&
\Sym  \left\{y^{\gamma_{1,2+3}+\gamma_{23}} \bigg[
\frac23\, \sgn(\gamma_{1,2+3})\sgn(\gamma_{1+2,3}) 
-\frac13\, \sgn(\gamma_{1+3,2})\Bigl(\sgn(\gamma_{1,2+3})-\sgn(\gamma_{1+2,3})\Bigr)
\right.
\nn\\
&&\qquad
+\frac23\, \sgn(\om_{12})\sgn(\om_{23}) + \frac13\, \sgn(\om_{13})\Bigl(\sgn(\om_{12})+\sgn(\om_{23})\Bigr)
\\
&&\left. \qquad
-\sgn(\gamma_{1,2+3})\sgn(\om_{23})-\sgn(\om_{12})\sgn(\gamma_{1+2,3})\bigg]\right\},
\nn
\eea
where we used $\gamma_{1,2+3}-\gamma_{1+2,3}=\gamma_{1+3,2}$ and $\Nr_3\om_{12}+\Nr_1\om_{23}=\Nr_2\om_{13}$
following from \eqref{rel-vw}. Then we use the symmetrization 
to bring the products of sign functions in each line to a single expression. This gives
\be
\begin{split}
	&\, \Sym  \biggl\{\(
	\frac13\, \sgn(\gamma_{1,2+3})\sgn(\gamma_{1+2,3})+\frac13\, \sgn(\om_{12})\sgn(\om_{23}) 
	-\hf\, \sgn(\gamma_{1,2+3})\sgn(\om_{23})\)
	\\
	& \qquad
	\times\Bigl(y^{\gamma_{1,2+3}}-y^{-\gamma_{1,2+3}}\Bigr)
	\Bigl(y^{\gamma_{23}}-y^{-\gamma_{23}}\Bigr)\biggr\}.
\end{split}
\label{nzmod-n3}
\ee
This result makes it manifest that the contribution of vanishing zero mode order has a zero of second order at $z=0$.

Next, we consider zero mode contributions of order 1. They are characterized by vanishing of only one of $\om_{ij}$'s.
This will be indicated by insertion of the corresponding Kronecker symbol $\delta_{\om_{ij}}$, but we will omit 
the factors $1-\delta_{\om_{jk}}$ ensuring that other variables are non-vanishing. 
Thus, the relevant contribution to \eqref{theta-par3} reads
\be
\begin{split}
	&\Sym \Biggl\{ y^{\gamma_{1,2+3}+\gamma_{23}} \[
	\frac13\, \delta_{\gamma_{1,2+3}}\delta_{\gamma_{1+2,3}}\(\delta_{\om_{12}}+\delta_{\om_{23}}+\delta_{\om_{13}}\)
	\right.
	\\
	&\qquad\qquad
	+\Bigl(\sgn(\gamma_{1,2+3})-\sgn(\beta)\Bigr)\Bigl(\sgn(\gamma_{1+2,3})-\sgn( \om_{23})\Bigr) \delta_{\om_{12}}
	\\
	&\qquad\qquad
	+\Bigl(\sgn(\gamma_{1,2+3})-\sgn( \om_{12})\Bigr)\Bigl(\sgn(\gamma_{1+2,3})-\sgn(\beta)\Bigr) \delta_{\om_{23}}
	\\
	&\qquad\qquad
	+\Bigl(\sgn(\gamma_{1,2+3})+\sgn( \om_{23})\Bigr)\Bigl(\sgn(\gamma_{1+2,3})-\sgn( \om_{23})\Bigr) \delta_{\om_{13}}
	\biggr]
	\\
	&\quad
	+\delta_{\om_{12}}
	\Bigl(y^{\gamma_{1+2,3}}-y^{-\gamma_{1+2,3}}\Bigr)
	\Bigl(\sgn(\gamma_{1+2,3})-\sgn(\om_{1+2,3})\Bigr)\phi^{(\kappa_{12})}
	\sum_{\eta=1}^{2^{1+\eps}\kappa_{12}}\delta_{\frac{\gamma_{12}}{r_{12}}-\eta}\Biggr\},
\end{split}
\label{zmod1-n3}
\ee
where in the forth line we used that $\sgn (\om_{12}) = -\sgn (\om_{23})$ provided $\om_{13}=0$. 
Then we use the symmetrization to transform all $\delta_{\om_{ij}}$ to $\delta_{\om_{12}}$, 
and add and subtract a term proportional to $\sgn(\gamma_{12})$, which is recombined with the terms 
coming from the second and third lines of \eqref{zmod1-n3}.
Taking also into account that $\sgn(\om_{23})=\sgn(\om_{1+2,3})$ provided $\om_{12}=0$, 
we end up with the following result
\bea
&&\Sym \Biggl\{\delta_{\om_{12}}\Biggl[
\frac13 \(y^{\gamma_{1,2+3}+\gamma_{23}}+y^{-\gamma_{1,2+3}-\gamma_{23}}+y^{\gamma_{1,2+3}-\gamma_{23}}\)
\delta_{\gamma_{1,2+3}}\delta_{\gamma_{1+2,3}}
\nn\\
&&\qquad 
+\(y^{\gamma_{1,2+3}+\gamma_{23}}+y^{-\gamma_{1,2+3}-\gamma_{23}}\)
\Bigl(\sgn(\gamma_{1+2,3})-\sgn( \om_{23})\Bigr)\Bigl(\sgn(\gamma_{1,2+3})-\sgn(\gamma_{12})\Bigr) 
\nn\\
&&\qquad 
+y^{\gamma_{1,2+3}-\gamma_{23}}
\Bigl(\sgn(\gamma_{1,2+3})-\sgn( \om_{23})\Bigr)\Bigl(\sgn(\gamma_{1+3,2})+\sgn( \om_{23})\Bigr) 
\label{zmod-lev1-n=3} \\
&&\qquad
+\Bigl(\sgn(\gamma_{1+2,3})-\sgn(\om_{1+2,3})\Bigr)\Biggl(
\Bigl(y^{\gamma_{1+2,3}}-y^{-\gamma_{1+2,3}}\Bigr)\phi^{(\kappa_{12})}
\sum_{\eta=1}^{2^{1+\eps}\kappa_{12}}\delta_{\frac{\gamma_{12}}{r_{12}}-\eta}
\nn\\
&&\qquad\quad
+y^{\gamma_{1+2,3}+\gamma_{12}}\Bigl(\sgn(\gamma_{12})-\sgn(\beta)\Bigr)
+y^{-\gamma_{1+2,3}-\gamma_{12}}\Bigl(\sgn(\gamma_{12})+\sgn(\beta)\Bigr)
\Biggr)\Biggr]\Biggr\}.
\nn
\eea
Expanding the product in the third line and applying the identity \eqref{sign-id} to 
the product of two $\sgn(\gamma_{i+j,k})$, one finds that it is equal to
\be
\Sym \Bigl\{\delta_{\om_{12}}\, y^{\gamma_{1,2+3}-\gamma_{23}}\Bigl[
\Bigl(\sgn(\gamma_{1+2,3})-\sgn( \om_{23})\Bigr)\Bigl(\sgn(\gamma_{1+3,2})-\sgn(\gamma_{1,2+3})\Bigr) 
-\delta_{\gamma_{1,2+3}}\delta_{\gamma_{1+3,2}} 
\Bigr] \Bigr\}.
\ee 
The last term cancels the first line in \eqref{zmod-lev1-n=3} (to see this, one should use the vanishing 
condition imposed by the Kronecker symbols and the symmetrization $1\leftrightarrow 2$),
while the first term can be combined with the second line in that equation and,
after the same symmetrization and the replacement of $\sgn(\om_{23})$ by $\sgn(\om_{1+2,3})$, gives
\be
\begin{split}
	& \Sym\biggl\{ \delta_{\om_{12}}
	\Bigl(y^{\gamma_{1,2+3}}-y^{-\gamma_{1,2+3}}\Bigr) 
	\Bigl(y^{\gamma_{23}}-y^{-\gamma_{23}}\Bigr)
	\\
	&\qquad\times
	\Bigl(\sgn(\gamma_{1+2,3})-\sgn( \om_{1+2,3})\Bigr) \Bigl(\sgn(\gamma_{1,2+3})-\sgn(\gamma_{12})\Bigr)
	\biggr\}.
\end{split}
\label{doublez-contr1}
\ee
Thus, this contribution also has a zero of second order at $z=0$.

It remains to compute the contribution of the last two lines in \eqref{zmod-lev1-n=3}, which after symmetrization 
takes the form
\be 
\begin{split}
&\Sym \Biggl\{\delta_{\om_{12}}\Bigl(y^{\gamma_{1+2,3}}-y^{-\gamma_{1+2,3}}\Bigr)
\Bigl(\sgn(\gamma_{1+2,3})-\sgn(\om_{1+2,3})\Bigr)
\\
&\qquad\times\Biggl(
\phi^{(\kappa_{12})}\sum_{\eta=1}^{2^{1+\eps}\kappa_{12}}\delta_{\frac{\gamma_{12}}{r_{12}}-\eta}
+y^{\gamma_{12}}\Bigl(\sgn(\gamma_{12})-\sgn(\beta)\Bigr)
\Biggr)\Biggr\}.
\end{split}
\label{contrib-pole1zm} 
\ee 
We observe that it factorizes into two parts: the second line depends only on $\gamma_{12}$,
while the first line together with the factor $\q^{-\hf\kbbm_{||}^2}$ can be shown 
to depend on $\gamma_{12}$ only $\mod 2^{1+\eps}\Nr_{12} \kappa_{12}$.
Indeed, it is easy to verify that, provided $\om_{12}=0$, the transformation 
$\kbbm_{||}\mapsto \kbbm_{||}+\hwbbm_{12}$ affects neither $\kbbm_{||}^2$, nor the quantities   
appearing in the first line: $\om_{12}$, $\om_{1+2,3}$, $\gamma_{1+2,3}$.
On the other hand, it maps $\gamma_{12} \to \gamma_{12} + 2^{1+\eps}\Nr_{12} \kappa_{12}$,
which proves the claim.
This means that, substituted into the theta series, the contribution \eqref{contrib-pole1zm} is equivalent to
\be
\begin{split}
    & \Sym \Biggl\{ \delta_{\om_{12}}\Bigl(y^{\gamma_{1+2,3}}-y^{-\gamma_{1+2,3}}\Bigr)
    	\Bigl(\sgn(\gamma_{1+2,3})-\sgn(\om_{1+2,3})\Bigr)
    \\
    & \qquad\times 
    \sum_{\eta=1}^{2^{1+\eps}\kappa_{12}}\delta_{\frac{\gamma_{12}}{r_{12}}-\eta}	
	\(\phi^{(\kappa_{12})}+\sum_{\tell \in \IZ} 
	\Bigl(\sgn(\eta+2^{1+\eps}\kappa_{12}\tell)-\sgn(\beta)\Bigr) 
	\,y^{\Nr_{12}(\eta+2^{1+\eps}\kappa_{12}\tell)}\) 
	\Biggr\}.
\end{split}
\label{contrib-pole1zm-2} 
\ee
The sum over $\tell$ is identical to the one in \eqref{geompr} (where the sum over $\sigma$ is equivalent to 
the symmetrization with respect to $1\leftrightarrow 2$ in \eqref{contrib-pole1zm-2}).
Thus, we can borrow the result \eqref{zmod-sigmap1}, which gives
\be
\begin{split}
&
\Sym\Biggl\{ \delta_{\om_{12}}\Bigl(y^{\gamma_{1+2,3}}-y^{-\gamma_{1+2,3}}\Bigr)
\Bigl(\sgn(\gamma_{1+2,3})-\sgn(\om_{1+2,3})\Bigr)
\\
& \qquad\times 
\sum_{\eta=1}^{2^{1+\eps}\kappa_{12}}\delta_{\frac{\gamma_{12}}{r_{12}}-\eta}	
\(\phi^{(\kappa_{12})}
-\frac{y^{2^{1+\eps}\Nr_{12} \kappa_{12}\lambda_{12}}+y^{-2^{1+\eps}\Nr_{12} \kappa_{12}\lambda_{12}}}
{y^{2^{\eps}\Nr_{12} \kappa_{12}}-y^{-2^{\eps}\Nr_{12} \kappa_{12}}} \) 
\Biggr\},
\end{split}
\label{doublez-contr2}
\ee
where $\lambda_{12}$ is given by \eqref{def-lam0} with $\tnu$ replaced by $2^{-\eps}\eta$. 
Since $\phi^{(\kappa_{12})}$ was chosen precisely to cancel the pole coming from the geometric progression,
the last bracket behaves as $\cO(z)$ and the contribution \eqref{doublez-contr2} has a zero of second order at $z=0$.
As a result, the same conclusion applies to the total zero mode contribution of order 1 to the summand in \eqref{theta-par3},
given by the sum of \eqref{doublez-contr1} and \eqref{doublez-contr2}.

\subsection{The unrefined limit}
\label{subsec-unrefn=3}

We define the theta series appearing in \eqref{res-unref3} as 
\be 
\vthpzi{\bfr}{k}_{\bbmu,\Asf}(\tau)=\lim_{z\rightarrow 0} 
\frac{\bigl[\Sym\bigl\{\vthls{\bfr}_{\bbmu,\Asf}(\tau, z)\bigr\}\bigr]_k}{(y-y^{-1})^2},
\label{deftheta-unref}
\ee 
where $[\;\cdot\;]_k$ denotes the contribution of zero mode order equal $k$.
In particular, the kernel of the contribution of vanishing order is given by \eqref{nzmod-n3}, 
and the one of order 1 is given by the sum of \eqref{doublez-contr1} and \eqref{doublez-contr2}.

First, we represent the theta series \eqref{deftheta-unref} as a sum over the lattice $\bbLami{\bfr}_{||}$ 
with the kernel expressed through $\gamma_{ij}$ and $\omega_{ij}$ as in \eqref{theta-par3}.
One easily finds
\bea
\vthpzi{\bfr}{0}_{\bbmu,\Asf}(\tau)&=& 
\sum_{\kbbm_{||} \in\bbLami{\bfr}_{||}+\glueg_\Asf^{||}+\bbmu}\nDzm\Sym\biggl\{
\gamma_{23}\gamma_{1,2+3} \(\frac13\, \sgn(\gamma_{1,2+3})\sgn(\gamma_{1+2,3})
+\frac13\, \sgn(\om_{12})\sgn(\om_{23}) 
\right.\nn\\
&&\left.\qquad
-\hf\, \sgn(\gamma_{1,2+3})\sgn(\om_{23})\)\biggr\}\,\q^{-\hf\kbbm_{||}^2} ,
\label{vth-zmk}\\
\vthpzi{\bfr}{1}_{\bbmu,\Asf}(\tau)&=&
\!\!\!\!\sum_{\kbbm_{||} \in\bbLami{\bfr}_{||}+\glueg_\Asf^{||}+\bbmu} \!\!\!\!
\Sym\biggl\{
\delta_{\om_{12}}(1-\delta_{\om_{23}})\gamma_{1,2+3} \Bigl(\sgn(\gamma_{1+2,3})-\sgn( \om_{1+2,3})\Bigr)
\nn\\
&&\times \biggl[ \gamma_{23}
\Bigl(\sgn(\gamma_{1,2+3})-\sgn(\gamma_{12})\Bigr)
+\frac{2^\eps\Nr_{12}}{12}\(1- 12 \lambda_{12}^2 - \frac{E_2(\tau)}{2^{1+2\eps} \kappa_{12}}\) 
\sum_{\eta=1}^{2^{1+\eps}\kappa_{12}}\delta_{\frac{\gamma_{12}}{r_{12}}-\eta}	
\biggr]\biggr\}\,
\q^{-\hf\kbbm_{||}^2},
\nn 
\eea
where the factor $\nDzm=(1-\delta_{\om_{12}})(1-\delta_{\om_{23}})(1-\delta_{\om_{13}})$ implements the condition 
of vanishing zero mode order. 
Note that in the first theta series the symmetrization over charges must be implemented {\it before} performing the sum. 
The reason is that in the process of getting the kernel  
that makes manifest the existence of a second order zero and allows to compute the unrefined limit, 
the conditions of convergence ceased to be satisfied. This happened due to the use of various permutations
which spoiled the original structure of the kernel satisfying Theorem \ref{th-conv}. 
Of course, one can do them backwards to recover 
the manifest convergence. We omit the details of manipulations, which are similar to the ones in the previous subsection,
and present just the final result
\be
\vthpzi{\bfr}{0}_{\bbmu,\Asf}(\tau)=
\frac16
\!\!\sum_{\kbbm_{||} \in\bbLami{\bfr}_{||}+\glueg_\Asf^{||}+\bbmu}\!\!\!\!
\nDzm (1+\sigma_{13})\biggl[
\gamma_{23}\gamma_{1,2+3} \Bigl(\sgn(\gamma_{1,2+3})-\sgn(\om_{13})\Bigr)
\Bigl(\sgn(\gamma_{2,1+3})-\sgn(\om_{23})\Bigr)\biggr]\,\q^{-\hf\kbbm_{||}^2} ,
\ee
where $\sigma_{ij}$ denotes the permutation $i\leftrightarrow j$.
Note that although the symmetry of this expression under all permutations of charges is not manifest,
it is actually symmetric being obtained from a symmetrized expression.
For the second theta series in \eqref{vth-zmk} the issue of convergence is absent since it is ensured by the kernel 
so that the symmetrization can be put outside the sum.

For computer evaluation it might be useful also to rewrite the above expressions as a sum over an unconstrained lattice.
To this end, we can substitute the expansion \eqref{expkparal} of the lattice vector 
and rewrite the sum over the lattice $\bbLami{\bfr}_{||}$ as a sum over the coefficients $\ell_i$ and $\tell_i$ 
at the price of introducing new glue vectors $\Bsf=\{\bsf_{13},\tbsf_{13}\}$.
This leads to more explicit, although more cumbersome expressions:
\bea
\vthpzi{\bfr}{0}_{\bbmu,\Asf}(\tau)&=& \frac{\Nr_0}{6}\,(1+\sigma_{13})
\sum_\Bsf\sum_{\ell_i\in\IZ+\frac{\nu_i}{\kappa_i}}
\sum_{\tell_i\in\IZ+\frac{\tnu_i}{\kappa_i}} \nDzm 
\Bigl[\Nr_{23} \kappa_1 \ell_1 (2\kappa_{23}\ell_2-\kappa_{123}\ell_1)
\Bigl( \sgn(\ell_1)-\sgn\Bigl(\tfrac{\Delta\ell_1}{\Nr_{12}}+\tfrac{\Delta\ell_2}{\Nr_{23}}\Bigr)\Bigr)
\nn\\
&&
\times \Bigl(\sgn\Bigl(\tfrac{\Nr_3}{\Nr_{23}}\,\ell_2-\tfrac{\Nr_1}{\Nr_{12}}\,\ell_1\Bigr)
-\sgn\Bigl(\tfrac{\Nr_2+\Nr_3}{\Nr_{23}}\,\Delta\ell_2-\tfrac{\Nr_1}{\Nr_{12}}\,\Delta\ell_1\Bigr)\Bigr)
\Bigr]
\q^{\kappa_{12}(4^\eps \tell_1^2-\ell_1^2)+\kappa_{23}(4^\eps \tell_2^2-\ell_2^2)-\kappa_{123}(4^\eps\tell_1\tell_2-\ell_1\ell_2)},
\nn\\
\vthpzi{\bfr}{1}_{\bbmu,\Asf}(\tau)&=& \Nr_0\Sym\Biggl\{ \kappa_{1} \sum_\Bsf 
\sum_{\ell_i\in\IZ+\frac{\nu_i}{\kappa_i}}\sum_{\tell_i\in\IZ+\frac{\tnu_i}{\kappa_i}} 
\delta_{\frac{\Nr_1+\Nr_2}{\Nr_{12}}\,\Delta\ell_1-\frac{\Nr_3}{\Nr_{23}}\,\Delta\ell_2}
\(1-\delta_{\Delta\ell_1}\)\ell_1 \Bigl(\sgn(\ell_2)-\sgn(\Delta\ell_2)\Bigr)
\label{vth-zm01}\\
&&
\times \biggl[
\Nr_{23}(2\kappa_{23}\ell_2-\kappa_{123}\ell_1)
\Bigl(\sgn(\ell_1)-\sgn\Bigl(\tfrac{\Nr_1+\Nr_2}{\Nr_{12}}\,\ell_1-\tfrac{\Nr_3}{\Nr_{23}}\,\ell_2\Bigr)\Bigr)
\nn\\
&&\quad 
+\frac{2^\eps\Nr_{12}}{12}\(1- 12 \lambda_{12}^2 - \frac{E_2(\tau)}{2^{1+2\eps} \kappa_{12}}\) 
\!\!\sum_{\eta=1}^{2^{1+\eps}\kappa_{12}}\!\! \delta_{2\kappa_{12}\ell_1-\kappa_{123}\ell_2-\eta}	
\biggr]
\q^{\frac{\Nr_0\Nr_2\kappa_2}{2(\Nr_1+\Nr_2)}\,(4^\eps\tell_2^2-\ell_2^2)}\Biggr\},
\nn
\eea
where $\Delta\ell_i=\ell_i-2^\eps\tell_i$.
The variables $\nu_i$ and $\tnu_i$ determining the rational parts are given in \eqref{nu-n3}, 
and the factor implementing the condition of vanishing zero mode order reads as
$\nDzm=\(1-\delta_{\frac{\Nr_1+\Nr_2}{\Nr_{12}}\,\Delta\ell_1-\frac{\Nr_3}{\Nr_{23}}\,\Delta\ell_2}\)
\(1-\delta_{\frac{\Nr_2+\Nr_3}{\Nr_{23}}\,\Delta\ell_2-\frac{\Nr_1}{\Nr_{12}}\,\Delta\ell_1}\)
\(1-\delta_{\frac{\Delta\ell_1}{\Nr_{12}}+\frac{\Delta\ell_2}{\Nr_{23}}}\)$.

\section{Consistency of different solutions}
\label{ap-consist}

\subsection{Two charges}
\label{subsec-Jacobi2}

The solution \eqref{res-unref2} obtained using indefinite theta series
can be compared with the one constructed in section \ref{subsec-sol2ch} using Hecke-like operators.
The consistency requires that their difference is a VV modular form.
To verify whether this is the case, let us write the function \eqref{res-unref2}
as in \eqref{sol-n=2} 
\be 
\label{chG-n=2}
\gi{\Nr_1,\Nr_2}_{\mu, \mu_1, \mu_2}=\Nr_0 \delta^{(\kappa\Nr_0)}_{\Delta\mu} \chG^{(\kappa;\Nr_1,\Nr_2)}_{\mu_{12}}
\ee
and define 
\be
\vph^{(\kappa)}_{\Nr_1,\Nr_2}(\tau,z)=\sum_{\mu=0}^{2\kappa_{12}-1}
\Bigl(\chG^{(\kappa;\Nr_1,\Nr_2)}_{\mu}(\tau)-G^{(\kappa_{12})}_{\mu}(\tau)\Bigr)\ths{\kappa_{12}}_\mu(\tau,z),
\ee
where $G^{(\kappa)}$ is given by \eqref{DMZ} and $\ths{\kappa}_\mu(\tau,z)$ is the theta series \eqref{deftheta}. 
The modularity condition is equivalent to the requirement that $\vph^{(\kappa)}_{\Nr_1,\Nr_2}$ 
is a Jacobi form of weight 2, index $\kappa_{12}$ and trivial multiplier system.
Using Mathematica, for the choice a) in \eqref{choicetr} of the vectors $\frt^{(\Nr)}$,
we have found the following results for several low values of charges and $\kappa$:
\begin{subequations}
\bea
\vph^{(1)}_{1,1}(\tau,z)&=&\frac{7}{5971968} \, \frac{E_4(\tau) E_6(\tau)}{\Delta(\tau)}\,
\Bigl(E_4(\tau)\,\varphi_{0,1}+E_6(\tau)\,\varphi_{-2,1}\Bigr)-\frac{11}{3456} \, E_4(\tau)\,\varphi_{-2,1},
\\
\vph^{(2)}_{1,1}(\tau,z)&=&-\frac{E_4(\tau)}{2304} \,\varphi_{-2,1}\varphi_{0,1},
\\
\vph^{(3)}_{1,1}(\tau,z)&=& \frac{1}{9216}\,\Bigl(E_4(\tau)\,\varphi_{-2,1}\varphi_{0,1}^2
+2E_6(\tau)\,\varphi_{-2,1}^2 \varphi_{0,1} 
+E_4^2(\tau)\,\varphi_{-2,1}^3 \Bigr),
\\
\vph^{(4)}_{1,1}(\tau,z)&=& \vph^{(2)}_{2,2}(\tau,z)= 
\frac{1}{497664}\,\Bigl(-5 E_4(\tau)\,\varphi_{-2,1}\varphi_{0,1}^3
+14 E_6(\tau)\,\varphi_{-2,1}^2 \varphi_{0,1}^2 
\nn\\
&& 
+7 E_4^2(\tau)\,\varphi_{-2,1}^3\varphi_{0,1}
+2 E_4(\tau) E_6(\tau)\,\varphi_{-2,1}^4\Bigr) ,
\\
\vph^{(5)}_{1,1}(\tau,z)&=& \frac{1}{19906560}\,\Bigl(29 E_4(\tau)\,\varphi_{-2,1}\varphi_{0,1}^4
+130 E_6(\tau)\,\varphi_{-2,1}^2 \varphi_{0,1}^3 
+ 195 E_4^2(\tau)\,\varphi_{-2,1}^3\varphi_{0,1}^2
\nn\\
&& +116 E_4(\tau) E_6(\tau)\,\varphi_{-2,1}^4\varphi_{0,1}
+\Bigl(112 E_6^2(\tau)-90 E_4^3(\tau)\Bigr)\varphi_{-2,1}^5\Bigr) ,
\\
\vph^{(6)}_{1,1}(\tau,z)&=&\frac{1}{597196800}\,\Bigl(-125 E_4(\tau)\,\varphi_{-2,1}\varphi_{0,1}^5
+300 E_6(\tau)\,\varphi_{-2,1}^2 \varphi_{0,1}^4 
- \frac{3}{2}\, E_4^2(\tau)\,\varphi_{-2,1}^3\varphi_{0,1}^3
\nn\\
&& +149 E_4(\tau) E_6(\tau)\,\varphi_{-2,1}^4\varphi_{0,1}^2
-\( \frac{2529}{8}\,E_4^3(\tau)- 264 E_6^2(\tau)\)\varphi_{-2,1}^5\varphi_{0,1}
\nn\\
&& +\frac{63}{2}\, E_4^2(\tau)E_6(\tau)\,\varphi_{-2,1}^5\Bigr) ,
\\
\vph^{(2)}_{1,2}(\tau,z)&=& -\frac{1}{71663616}\,\Bigl(19 E_4(\tau)\,\varphi_{-2,1}\varphi_{0,1}^5
+30 E_6(\tau)\,\varphi_{-2,1}^2 \varphi_{0,1}^4 
+ 74 E_4^2(\tau)\,\varphi_{-2,1}^3\varphi_{0,1}^3
\nn\\
&& +148 E_4(\tau) E_6(\tau)\,\varphi_{-2,1}^4\varphi_{0,1}^2
+99 E_4^3(\tau)\,\varphi_{-2,1}^5\varphi_{0,1}+14 E_4^2(\tau)E_6(\tau)\,\varphi_{-2,1}^5\Bigr) ,
\eea
\label{res-diff-a}
\end{subequations}
while for the choice b) in \eqref{choicetr}, we have
\bea
\vph^{(6)}_{1,1}(\tau,z)&=&\frac{1}{1194393600}\,\Bigl(-1250 E_4(\tau)\,\varphi_{-2,1}\varphi_{0,1}^5
+12600 E_6(\tau)\,\varphi_{-2,1}^2 \varphi_{0,1}^4 
- 19487 E_4^2(\tau)\,\varphi_{-2,1}^3\varphi_{0,1}^3
\nn\\
&& +259558 E_4(\tau) E_6(\tau)\,\varphi_{-2,1}^4\varphi_{0,1}^2
-\( \frac{23409}{4}\,E_4^3(\tau)+851952 E_6^2(\tau)\)\varphi_{-2,1}^5\varphi_{0,1}
\nn\\
&& +9457 E_4^2(\tau)E_6(\tau)\,\varphi_{-2,1}^5\Bigr) .
\label{res-diff-b}
\eea
Here we used the standard Jacobi forms defined in \cite{Dabholkar:2012nd}
\be
\label{DMZ-phis}
\begin{split}
	\Delta(\tau)&=\eta^{24}(\tau)=\frac{E_4^3(\tau)-E_6^2(\tau)}{1728}\, ,
	\\
	\varphi_{-2,1}\(\tau,z\) &= \frac{\theta_1^2(\tau,z)}{\eta^6(\tau)},
	\\
	\varphi_{0,1}\(\tau,z\) &= 4\(\frac{\theta_2(\tau,z)^2}{\theta_2(\tau)^2}
	+\frac{\theta_3(\tau,z)^2}{\theta_3(\tau)^2}+\frac{\theta_4(\tau,z)^2}{\theta_4(\tau)^2} \),
\end{split}
\ee
and $\theta_2,\, \theta_3, \, \theta_4$ are the Jacobi theta functions.
All functions in \eqref{res-diff-a} and \eqref{res-diff-b} have the right weight and index, so that
the two solutions are indeed mutually consistent.

\subsection{Three unit charges}
\label{subsec-Jacobi3}

Here we compare the solution \eqref{res-unref3} for three charges specified to the case $\Nr_i=\kappa=1$
with the one given by the normalized generating function of $SU(3)$ VW invariants on $\IP^2$ \eqref{rel-tggP2-n}.
An explicit expression for this generating function can be found, e.g., in \cite[\S A]{Manschot:2017xcr}.
For convenience of the reader we copy it here. 
Let $T_3(\tau)=\vthA{3}_0(\tau,0)$ be the theta series associated with the $A_2$ lattice
and
\be
\label{VW3-summands}
\begin{split}
	S_{1,\mu}(k;\q) =&\,\frac{(-1+E_2(\tau))(k-\mu+1)}{2(1-\q^{3k-\mu})} 
	+ \frac{9(k-\mu)^2+33(k-\mu)+31-E_2(\tau)}{2(1-\q^{3k-\mu})^2}
	\\
	&\,
	-\frac{15(k-\mu)+34}{(1-\q^{3k-\mu})^3}+\frac{19}{(1-\q^{3k-\mu})^4}\, ,
	\\
	S_2(A,B;\q) =&\, \frac{4 \q^B}{(1-\q^A)(1-\q^B)^3}+\frac{4\q^A}{(1-\q^A)^3(1-\q^B)}+ \frac{4}{(1-\q^A)(1-\q^B)^2} 
	\\
	&\,
	-\frac{2(A+B+1)\q^B}{(1-\q^A)(1-\q^B)^2}-\frac{2(A+B+1)\q^A}{(1-\q^A)^2(1-\q^B)}+\frac{(A+B-2)^2-8}{2(1-\q^A)(1-\q^B)}.
\end{split}
\ee
Then the normalized generating function has the following components
\be
\label{VW-n=3}
\begin{split}
	\vwgi{3,0} =&\, \frac{1}{T_3(\tau)}\Biggl[ \frac{13}{240} + \frac{1}{24}\, E_2\(\tau\) 
	+ \frac{1}{72}\,E_2\(\tau\)^2 + \frac{1}{720}\,E_4\(\tau\) -\frac92 \sum_{k\in \IZ} k^2 \q^{3k^2}
	\\
	&\quad
	+ \frac16 \sum_{k_1,k_2\in \IZ} (k_1+2k_2)^2 \q^{k_1^2+k_2^2+k_1 k_2} + \sum_{\substack{k\in \IZ \\ k\neq 0}} S_{1,0}(k;\q) \q^{3k^2}  
	\\
	&\quad
	+ \!\!\!\!\!\!\!\!\!\sum_{k_1,k_2\in \IZ\atop 2k_1+k_2 \neq  0,\; k_2\neq k_1} \!\!\!\!\!\!\!\!\!
	S_2(2k_1+k_2,\,k_2-k_1; \q)
	\q^{k_1^2+k_2^2+k_1 k_2 + 2k_1 + k_2} 
	\Biggr],
	\\
	\vwgi{3,\pm 1} =&\, \frac{1}{T_3(\tau)}\Biggl[ \sum_{k\in\IZ} S_{1,1}(k;\q)\q^{3k^2-\frac13}
	+ \!\!\!\!\!\!\!\!\!\!\! \sum_{k_1,k_2\in \IZ\atop 2k_1+k_2\neq 1,\; k_2\neq k_1} \!\!\!\!\!\!\!\!\!
	S_2(2k_1+k_2-1,k_2-k_1;\q) \q^{k_1^2+k_2^2+k_1 k_2 - \frac13 } \Biggr].
\end{split}
\ee

In contrast to the case of two charges, the difference of two solutions for three charges should not be modular.
Instead, taking the difference of two anomaly equations, one finds that  
\be  
\delta\whgi{1,1,1}_{\mu}=\delta\gi{1,1,1}_{\mu}+2\sum_\nu \trmRi{1,2}_{\mu,\nu}\delta \gi{1,1}_{\nu},
\ee 
where $\delta \gi{\cdots}_{\mu}$ denotes the difference of two solutions. 
The non-holomorphic function $\trmRi{1,2}_{\mu,\nu}$ has a modular anomaly opposite to the one 
of the anomalous coefficient $\gi{1,2}_{\mu,\nu}$. This implies that
the function
\be
\vph^{(1)}_{1,1,1}(\tau,z;\frt)=
\sum_{\mu=0}^2\(\delta\gi{1,1,1}_{\mu}(\tau)-2\sum_{\nu=0}^1 \gi{1,2}_{\mu,\nu}(\tau)\delta \gi{1,1}_{\nu}(\tau)\)
\vthA{3}_\mu(\tau,z;\frt),
\ee 
where $\vthA{3}_\mu$ is the theta series for the $A_2$ lattice defined in \eqref{3thetas}, 
must be a Jacobi form of weight 4, index $\frt^2/2$ and trivial multiplier system, for any vector $\frt$.
In our case $\delta\gi{1,1,1}_\mu$ is the difference between  \eqref{res-unref3} and $\frac19\vwgi{3,\mu}$,
$\delta \gi{1,1}_\mu$ is the difference of \eqref{res-unref2} and $\frac13\, \vwgi{2,\mu}=H_\mu$,
and $\gi{1,2}_{\mu,\nu}$ is the solution \eqref{res-unref2} specialized to $(\Nr_1,\Nr_2)=(1,2)$ and $\kappa=1$. 
Using Mathematica, for the choice a) in \eqref{choicetr} of the vectors $\frt^{(\Nr)}$ and
for $\frt$ equal to any of the three vectors: $(1,-1,0)$, $(0,1,-1)$ or $(1,0,-1)$,
we have found that
\bea
\label{vph111}
&&\vph^{(1)}_{1,1,1}(\tau,z;\frt) =\frac{1}{156031757844480\Delta(\tau)^2}\Bigl[
\Bigl(34638 E_4(\tau)^6 E_6(\tau)
-82938 E_4(\tau)^3 E_6(\tau)^3
\\
&&\quad 
+21840 E_6(\tau)^5 \Bigr) \vph_{-2,1}
+\Bigl(22559 E_4(\tau)^7
-99247E_4(\tau)^4 E_6(\tau)^2
+40428 E_4(\tau) E_6(\tau)^4\Bigr) \vph_{0,1}\Bigr],
\nn
\eea
which is indeed a Jacobi form of the right weight and index.
We have checked that the same conclusion holds also for $\frt=(1,1,-2)$ and $\frt=(2,0,-2)$. 
These checks confirm the consistency of the two solutions.

\section{Expansion of anomalous coefficients for small charges}
\label{ap-expand}

In this appendix we provide explicit $\q$-series of the anomalous coefficients for a few sets of small charges.
For two charges, the result is presented in terms of the VV function $\Nr_0\chG^{(\kappa;\,\Nr_1,\Nr_2)}_{\mu_{12}}$
introduced in \eqref{chG-n=2} since it efficiently encodes independent components of the anomalous coefficients.
The index $\mu_{12}$ is related to the indices of $\gi{\Nr_1,\Nr_2}_{\mu,\mu_1,\mu_2}$ via \eqref{defmu0}
and takes $2\kappa_{12}$ values. However, the symmetry $\mu_{12}\to -\mu_{12}$ 
reduces the number of independent components to $\kappa_{12}+1$.
For three charges, we present directly the non-vanishing components of 
$\gi{\kappa;\,\Nr_1,\Nr_2,\Nr_3}_{\mu,\mu_1,\mu_2,\mu_3}$
where we added $\kappa$ to the set of charges to distinguish different cases.
\be
\begin{split}
	\chG^{(1;1,1)}_{0} =&\, \q^{-1}\(
	\tfrac{7}{497664}
	-\tfrac{7573}{82944}\,\q
	-\tfrac{11993 }{3456}\,\q^2
	-\tfrac{6147187}{15552} \,\q^3 
	-\tfrac{417892013 }{20736}\,\q^4
	-\tfrac{2669990303 }{4608}\,\q^5
	\right.	\\
	&\left.
	-\tfrac{3236466331}{288}\,\q^6
	-\tfrac{141840373163 }{864}\,\q^7
	-\tfrac{106915932005927}{55296} \,\q^8
	+\cdots\),
	\\
	\chG^{(1;1,1)}_{1} =&\, \q^{-1/4}\(
	\tfrac{247}{62208}\,
	+\tfrac{2441 }{2592}\,\q
	-\tfrac{685847 }{6912}\,\q^{2}
	-\tfrac{60354863}{7776}\,\q^{3}
	-\tfrac{1794183169 }{6912}\,\q^{4} 
	-\tfrac{4761308023 }{864}\,\q^{5}
	\right.
	\\&\left.
	-\tfrac{890009700749}{10368}\,\q^{6}
	-\tfrac{688179765559 }{648}\,\q^{7}
	-\tfrac{25273195362785  }{2304}\,\q^{8}
	+\cdots\),
\end{split}
\ee 
\bea
\chG^{(1;2,1)}_0 &=& \q^{-1}\(
\tfrac{23}{248832}
-\tfrac{588457}{2903040}\,\q
-\tfrac{4374197}{181440}\,\q^2
-\tfrac{1405439843}{435456}\,\q^3
-\tfrac{45080954077 }{241920}\,\q^4
-\tfrac{18356399752001}{2903040}\,\q^5
\right.\nn\\&&\left.
-\tfrac{326985343617523 }{2177280}\,\q^6
-\tfrac{1998756778032397}{725760}\,\q^7
-\tfrac{13340225185984769 }{322560}\,\q^8
+\cdots\),
\nn\\
\chG^{(1;2,1)}_1 &=& \q^{-13/12}\(
\tfrac{7}{559872}
+\tfrac{254521}{19595520}\,\q
+\tfrac{21208469}{19595520}\,\q^{2}
-\tfrac{2141274337 }{3265920}\,\q^3
-\tfrac{436455080941}{9797760}\,\q^4
-\tfrac{8043883306813 }{6531840}\,\q^5
\right.\nn\\&&\left.
-\tfrac{70596332554177}{3919104}\,\q^6
-\tfrac{289466108764979}{2799360}\,\q^7
+\tfrac{307669392722141}{181440}\,\q^8
+\cdots\),
\\
\chG^{(1;2,1)}_2 &=& \q^{-4/3}\(
\tfrac{5}{4478976}
+\tfrac{12343}{2612736}\,\q
+\tfrac{20174263}{39191040}\,\q^{2}
-\tfrac{22053267949 }{78382080}\,\q^{3}
-\tfrac{578517733229}{39191040}\,\q^{4}
-\tfrac{678670136857 }{2449440}\,\q^{5}
\right.\nn\\&&\left.
-\tfrac{223277357377}{1632960}\,\q^{6}
+\tfrac{754932817919599 }{6531840}\,\q^{7}
+\tfrac{2198602487769755}{653184}\,\q^{8}
+\cdots\),
\nn\\
\chG^{(1;2,1)}_3 &=& \q^{-3/4}\(
\tfrac{221}{725760}
+\tfrac{12463}{72576}\,\q
-\tfrac{37963483}{2177280}\,\q^{2}
-\tfrac{863218933 }{362880}\,\q^{3}
-\tfrac{3226760741}{20160}\,\q^{4}
-\tfrac{6663319581967 }{1088640}\,\q^{5}
\right.\nn\\&&\left.
-\tfrac{38738609259101}{241920}\,\q^{6}
-\tfrac{115345074567433 }{36288}\,\q^{7}
-\tfrac{110739444675550361}{2177280}\,\q^{8}
+\cdots\),\nn
\eea
\be
\begin{split}
	2\,\chG^{(1;2,2)}_0 =& \q^{-2}\(
	\tfrac{5}{60466176}
	+\tfrac{761731}{9876142080}\,\q
	-\tfrac{5964113}{20995200}\,\q^2
	-\tfrac{2481864301319}{74071065600}\,\q^3
	-\tfrac{2872361748037}{411505920}\,\q^4
	\right.\\&
	-\tfrac{2209512975931771}{2743372800}\,\q^5
	-\tfrac{41509779007262023 }{617258880}\,\q^6
	-\tfrac{49227475374711403543}{12345177600}\,\q^7
	\\&\left.
	-\tfrac{19464785233233206843 }{114307200}\,\q^8
	+\cdots\),
	\\
	2\,\chG^{(1;2,2)}_1 =& \q^{-9/8}\(
	\tfrac{277841}{8230118400}
	+\tfrac{1059416423}{74071065600}\,\q
	+\tfrac{74795669047}{24690355200}\,\q^2
	-\tfrac{1801948967597 }{1371686400}\,\q^3
	\right.\\&
	-\tfrac{16603156407869}{55987200}\,\q^4
	-\tfrac{416033291771649169 }{12345177600}\,\q^5
	-\tfrac{2105420048303835229}{914457600}\,\q^6
	\\&\left.
	-\tfrac{313599714564738170867}{2962842624}\,\q^7
	-\tfrac{88220836719674177597447}{24690355200}\,\q^8
	+\cdots\),
	\\
	2\,\chG^{(1;2,2)}_2 =& \q^{-1/2}\(
	\tfrac{39977803}{14814213120}
	+\tfrac{3589837921}{6172588800}\,\q
	+\tfrac{129752549}{5443200}\,\q^2
	-\tfrac{22102735798051 }{1543147200}\,\q^3
	\right.\\&
	-\tfrac{114275375877392837}{24690355200}\,\q^4
	-\tfrac{936409392998996809 }{2057529600}\,\q^5
	-\tfrac{4796069474036495173}{188956800}\,\q^6
	\\&\left.
	-\tfrac{1501044707018652100451 }{1543147200}\,\q^7
	-\tfrac{77342357414981744924971}{2743372800}\,\q^8
	+\cdots\),
\end{split}
\ee
\bea
\chG^{(1;3,1)}_0 &=& \q^{-2}\(
\tfrac{47}{318504960}
+\tfrac{4247129}{45984153600}\,\q
-\tfrac{68036588023}{183936614400}\,\q^2
-\tfrac{55422609269}{696729600}\,\q^3
-\tfrac{337075127071841}{26276659200}\, \q^4
\right.\nn\\&&\left.
-\tfrac{5507214757771}{4490640}\,\q^5
-\tfrac{7981897203311365099}{91968307200}\,\q^6
-\tfrac{1110685079693008997}{239500800}\,\q^7
-\tfrac{17226825559762052679301}{91968307200}\,\q^8
+\cdots\),
\nn\\
\chG^{(1;3,1)}_1 &=& q^{-25/24}\(
\tfrac{3981871}{55180984320}
+\tfrac{2648895629}{137952460800}\,\q
+\tfrac{41222942111 }{13138329600}\,\q^{2}
-\tfrac{285856288266173}{91968307200}\,\q^{3}
-\tfrac{2814997287076607}{5109350400}\,\q^{4}
\right.\nn\\&&
-\tfrac{1347624999852051079}{25082265600}\,\q^{5}
-\tfrac{935913250177725865183}{275904921600}\,\q^{6}
-\tfrac{20586251240400924448327}{137952460800}\,\q^{7}
\nn\\&&\left.
-\tfrac{270196785882281507783201}{55180984320}\,\q^{8}
+\cdots\),
\nn\\
\chG^{(1;3,1)}_2 &=& \q^{-7/6}\(
\tfrac{2971}{107775360}
+\tfrac{5544628433}{551809843200}\,\q
+\tfrac{115442040149 }{13795246080}\,\q^{2}
+\tfrac{23452076233427}{68976230400}\,\q^{3}
-\tfrac{61420667950261}{638668800}\,\q^{4}
\right.\nn\\&&
-\tfrac{5663089816965736121}{275904921600}\,\q^{5}
-\tfrac{9916439642198445779}{5748019200}\,\q^{6}
-\tfrac{2995144716151528141301}{34488115200}\,\q^{7}
\nn\\&&\left.
-\tfrac{37830819493668472285}{12317184}\,\q^{8}
+\cdots\),
\nn\\
\chG^{(1;3,1)}_3 &=& \q^{-11/8}\(
\tfrac{19}{2949120}
+\tfrac{738683593}{91968307200}\,\q
-\tfrac{2206279697}{2874009600}\,\q^2
-\tfrac{31913899958429}{91968307200}\,\q^3
-\tfrac{1052975439005939}{18393661440}\,\q^4
\right.\nn\\&&
-\tfrac{870826166392452223}{91968307200}\,\q^5
-\tfrac{24710897250831165317}{30656102400}\,\q^6
-\tfrac{3872950410038690100437}{91968307200}\,\q^7
\nn\\&&\left.
-\tfrac{17796943645723753462087}{11496038400}\,\q^8
+\cdots\),
\\
\chG^{(1;3,1)}_4 &=& \q^{-2/3}\(
\tfrac{398074393}{220723937280}
+\tfrac{7709928847}{19707494400}\,\q
-\tfrac{1124847101051}{110361968640}\,\q^2
-\tfrac{56937410997937}{3284582400}\,\q^3
-\tfrac{1198914724901166557}{367873228800}\,\q^4
\right.\nn\\&&
-\tfrac{3539458767899738927}{12541132800}\,\q^5
-\tfrac{8440035595252062897707}{551809843200}\,\q^6
-\tfrac{4516265260602369778967}{7664025600}\,\q^7
\nn\\&&\left.
-\tfrac{16911256238023424894329}{973209600}\,\q^8
+\cdots\),
\nn\\
\chG^{(1;3,1)}_5 &=& \q^{-25/24}\(
\tfrac{1507663}{55180984320 }
+\tfrac{8390729309}{137952460800}\,\q
+\tfrac{138045501791 }{13138329600}\,\q^2
-\tfrac{108275799214493}{91968307200}\,\q^3
-\tfrac{6633302879487181}{15328051200}\,\q^4
\right.\nn\\&&
-\tfrac{1253534988377466919}{25082265600}\,\q^5
-\tfrac{914235335203953688063}{275904921600}\,\q^6 
-\tfrac{20417088939750658199287}{137952460800}\,\q^7 
\nn\\&&   \left.
-\tfrac{269351350642042364835521}{55180984320}\,\q^8   
+\cdots\),
\nn\\
\chG^{(1;3,1)}_6 &=& \q^{-3/2}\(
\tfrac{11}{159252480}
+\tfrac{47213977}{22992076800 }\,\q
+\tfrac{24794030699\q^{1/2}}{22992076800}\,\q^2
-\tfrac{107364499601}{821145600}\,\q^3
-\tfrac{3501922956080341}{183936614400}\,\q^4
\right.\nn\\&&
-\tfrac{100084012971526063}{22992076800}\,\q^5
-\tfrac{286767755453058307}{656916480}\,\q^6
-\tfrac{52241700229979368001}{2090188800}\,\q^7
\nn\\&&\left.
-\tfrac{2134111773571338037187}{2189721600}\,\q^8
+\cdots\),
\nn
\eea
\be
\begin{split}
	\chG^{(2;1,1)}_0 =&\, 
	-\tfrac{17}{128}
	-\tfrac{113}{72}\,\q
	-\tfrac{4139 }{64}\,\q^2
	-\tfrac{4249 }{4}\,\q^3
	-\tfrac{3279635}{288}\,\q^4
	-\tfrac{370057 }{4}\,\q^5
	-\tfrac{9852609 }{16}\,\q^6
	\\&
	-\tfrac{63375745}{18}\,\q^7
	-\tfrac{1142163379 }{64}\,\q^8
	+\cdots,
	\\
	\chG^{(2;1,1)}_1 =&\, \q^{-1/8}\(
	\tfrac{1}{288}
	+\tfrac{47 }{32}\,\q
	+\tfrac{1369 }{32}\,\q^{2}
	+\tfrac{109447}{144}\,\q^{3}
	+\tfrac{274299 }{32}\,\q^{4}
	+\tfrac{575603 }{8}\,\q^{5}
	+\tfrac{15679485}{32}\,\q^{6}
	\right.\\&\left.
	+\tfrac{91227113 }{32}\,\q^{7}
	+\tfrac{234427805}{16}\,\q^{8}
	+\cdots\),
	\\
	\chG^{(2;1,1)}_2 =&\,\q^{-1/2}\(
	\tfrac{1}{2304}
	+\tfrac{5}{16}\,\q
	-\tfrac{281}{32}\,\q^{2}
	-\tfrac{9563}{36}\,\q^{3}
	-\tfrac{913121 }{256}\,\q^{4}
	-\tfrac{531815 }{16}\,\q^{5}
	-\tfrac{70096291}{288}\,\q^{6}
	\right.\\&\left.
	-\tfrac{5984707 }{4}\,\q^{7}
	-\tfrac{2056205767}{256}\,\q^{8}
	+\cdots\),
\end{split}
\ee
\be
\begin{split}
	\chG^{(2;2,1)}_{0} =&\, 
	-\tfrac{6775}{20736}
	-\tfrac{7157}{576}\,\q
	-\tfrac{361297}{768}\,\q^2
	-\tfrac{46696309}{5184}\,\q^3
	-\tfrac{600613969}{5184} \,\q^4
	-\tfrac{329780519}{288} \,\q^5
	\\
	&
	-\tfrac{12090114491}{1296} \,\q^6
	-\tfrac{169489988459}{2592} \,\q^7
	-\tfrac{467398870843}{1152} \,\q^8
	+\cdots,
	\\
	\chG^{(2;2,1)}_1 =&\, \q^{-1/24}\(
	\tfrac{65}{5184}
	+\tfrac{6353}{864} \,\q
	+\tfrac{1386517}{5184}\,\q^2
	+\tfrac{27324275}{5184}\,\q^3
	+\tfrac{87346705}{1296}\,\q^4
	+\tfrac{3384145273}{5184}\,\q^5
	\right.\\&\left.
	+\tfrac{26915024329}{5184}\,\q^6
	+\tfrac{46072770397 }{1296}\,\q^7
	+\tfrac{559857514087}{2592}\,\q^8
	+\cdots\),
	\\
	\chG^{(2;2,1)}_2 =&\, \q^{-1/6}\(
	\tfrac{7}{2592}
	-\tfrac{319}{2304} \,\q
	-\tfrac{178241}{5184}\,\q^2
	-\tfrac{793715}{1728} \,\q^3
	-\tfrac{470705}{2592}\,\q^4
	+\tfrac{5047214}{81}\,\q^5
	\right.\\&\left.
	+\tfrac{2610260809}{2592}\,\q^6
	+\tfrac{13190202073}{1296} \,\q^7
	+\tfrac{718155005}{9}\,\q^8
	+\cdots\),
	\\
	\chG^{(2;2,1)}_3 =&\, \q^{-3/8}\(
	\tfrac{11}{5184}\,
	+\tfrac{8129}{5184}\,\q
	+\tfrac{3797}{144}\,\q^2
	-\tfrac{84545}{5184}\,\q^3
	-\tfrac{44772229}{5184}\,\q^4
	-\tfrac{98195357}{576}\,\q^5
	\right.\\&\left.
	-\tfrac{10553330059}{5184}\,\q^6
	-\tfrac{94932443735}{5184} \,\q^7
	-\tfrac{78048945335}{576}\,\q^8
	+\cdots\),
	\\
	\chG^{(2;2,1)}_4 =&\, \q^{-2/3}\(
	\tfrac{7}{41472}
	+\tfrac{3869 }{10368}\,\q
	-\tfrac{77077}{3456}\,\q^2
	-\tfrac{652597}{1296} \,\q^3
	-\tfrac{2914033}{648}\,\q^4
	-\tfrac{81599053}{5184} \,\q^5
	\right.\\&\left.
	+\tfrac{2596870511}{20736}	\,\q^6
	+\tfrac{4497526727}{1728} \,\q^7
	+\tfrac{180081086629}{6912}\,\q^8
	+\cdots\),
	\\
	\chG^{(2;2,1)}_5 =&\, \q^{-1/24}\(
	\tfrac{281}{5184 }
	+\tfrac{5093}{864}\,\q
	+\tfrac{1358437}{5184}\,\q^2
	+\tfrac{27265307}{5184} \,\q^3
	+\tfrac{87314575}{1296}\,\q^4
	+\tfrac{3383929057}{5184} \,\q^5
	\right.
	\\&\left.
	+\tfrac{26914617169}{5184} \,\q^6
	+\tfrac{46072612447}{1296}\,\q^7
	+\tfrac{559856986399}{2592}\,\q^8
	+\cdots\),
	\\
	\chG^{(2;2,1)}_6 =&\, \q^{-1/2}\(
	\tfrac{55}{20736}
	+\tfrac{649}{576}\,\q
	-\tfrac{228775}{5184}\,\q^2
	-\tfrac{2115335 }{1296}\,\q^3
	-\tfrac{65662657}{2304}	\,\q^4
	-\tfrac{1760666641}{5184}\,\q^5
	\right.\\&\left.
	-\tfrac{16293026827}{5184}\,\q^6
	-\tfrac{3481650799}{144}\,\q^7
	-\tfrac{1113729358001}{6912}\,\q^8
	+\cdots\),
\end{split}
\ee
\bea
\chG^{(3;1,1)}_0 &=& 
-\tfrac{13}{96}
+\tfrac{71}{48}\,\q
+\tfrac{103 }{16}\,\q^2
+\tfrac{139 }{8} \,\q^3
+\tfrac{2243}{48}\,\q^4
+104 \q^5        
+\tfrac{3693 }{16}\,\q^6
+\tfrac{3667 }{8}\,\q^7
+\tfrac{14403}{16}\,\q^8
+\cdots,
\\
\chG^{(3;1,1)}_1 &=& \,\q^{-1/12}\(
-\tfrac{1}{64}
+\tfrac{27 }{32}\,\q
+\tfrac{211 }{64}\,\q^{2}
+\tfrac{139}{16}\,\q^{3}
+\tfrac{179 }{8}\,\q^{4}
+\tfrac{1585 }{32}\,\q^{5}
+\tfrac{6979}{64}\,\q^{6}
+\tfrac{6993 }{32}\,\q^{7}
+\tfrac{27263}{64}\,\q^{8}
+\cdots\),
\nn\\
\chG^{(3;1,1)}_2 &=& \,\q^{2/3}
\(-\tfrac{1}{32} 
-\tfrac{13}{16} \,\q
-\tfrac{157 }{32}\,\q^{2}
-\tfrac{233}{16} \,\q^{3}
-\tfrac{309}{8} \,\q^{4}
-\tfrac{677}{8}  \,\q^{5}
-\tfrac{2925}{16}\,\q^{6}
-355 \q^{7} 
-\tfrac{10867}{16}\,\q^{8}
+\cdots\),
\nn\\
\chG^{(3;1,1)}_3 &=& \,\q^{1/4}\(
-\tfrac{5 }{96}
-\tfrac{37 }{32}\,\q
-\tfrac{611}{96}\,\q^2
-\tfrac{641}{32}\,\q^3
-\tfrac{869}{16}\,\q^4
-\tfrac{3991}{32}\,\q^5
-\tfrac{25955}{96}\,\q^6
-\tfrac{4341}{8}\,\q^7
-\tfrac{16719}{16}\,\q^8
+\cdots\),\nn
\eea
\be
\begin{split}
	\chG^{(4;1,1)}_0 =& 
	-\tfrac{25891}{124416}
	-\tfrac{45463 }{5184}\,\q
	-\tfrac{852205 }{1296}          \,\q^2
	-\tfrac{32943085}{1296}         \,\q^3
	-\tfrac{13379780347 }{20736}    \,\q^4
	-\tfrac{46060126277}{3888}      \,\q^5
	\\&
	-\tfrac{12147774365 }{72}       \,\q^6
	-\tfrac{158959400096}{81}       \,\q^7
	-\tfrac{66913439964953 }{3456}  \,\q^8
	+\dots,
	\\
	\chG^{(4;1,1)}_1 =& \,\q^{-1/16}
	\(-\tfrac{37}{7776}
	+\tfrac{743 }{144}\,\q
	+\tfrac{186041}{432}\,\q^2
	+\tfrac{49307035 }{2592}\,\q^3
	+\tfrac{250278505}{486}\,\q^4
	+\tfrac{25390442633 }{2592}\,\q^5
	\right.\\&\left.
	+\tfrac{1109071859713}{7776}\,\q^6
	+\tfrac{2183125435445 }{1296}\,\q^7
	+\tfrac{21784639784605}{1296}\,\q^8
	+\dots\),
	\\
	\chG^{(4;1,1)}_2 =& \,\q^{-1/4}\(
	\tfrac{247}{62208}
	+\tfrac{2441 }{2592}\,\q
	-\tfrac{685847}{6912}\,\q^2
	-\tfrac{60354863 }{7776}\,\q^3
	-\tfrac{1794183169}{6912}\,\q^4
	-\tfrac{4761308023 }{864}\,\q^5
	\right.\\&\left.
	-\tfrac{890009700749}{10368}\,\q^6
	-\tfrac{688179765559 }{648}\,\q^7
	-\tfrac{25273195362785}{2304}\,\q^8
	+\dots\),
	\\
	\chG^{(4;1,1)}_3 =& \,\q^{-9/16}\(
	\tfrac{5}{7776 }
	+\tfrac{667}{2592}\,\q
	+\tfrac{9853}{7776}\,\q^2
	+\tfrac{4114013 }{2592}\,\q^3
	+\tfrac{208989733}{2592}\,\q^4
	+\tfrac{16164572585 }{7776}\,\q^5
	\right.\\&\left.
	+\tfrac{31419487859}{864}\,\q^6
	+\tfrac{78893525585 }{162}\,\q^7
	+\tfrac{41486136304309}{7776}\,\q^8
	+\dots\),
	\\
	\chG^{(4;1,1)}_4 =& \,\q^{-1}\(
	\tfrac{7}{248832}
	+\tfrac{793}{31104}\,\q
	+\tfrac{2371}{1296}\,\q^2
	-\tfrac{1033957}{7776}\,\q^3
	-\tfrac{51449111 }{3456}\,\q^4
	-\tfrac{2662533095}{5184}\,\q^5
	\right.\\&\left.
	-\tfrac{10331116165}{972}\,\q^6
	-\tfrac{68953726973 }{432}\,\q^7
	-\tfrac{157973370319477}{82944}\,\q^8
	+\dots\),
\end{split}
\ee
\bea
\chG^{(6;1,1)}_0 &=& 
-\tfrac{465535}{1769472}
-\tfrac{525963}{25600}\,\q
-\tfrac{4385292809}{1638400}\,\q^2
-\tfrac{43459913743}{230400}\,\q^3
-\tfrac{95409288460249}{11059200}\,\q^4
-\tfrac{716734134057}{2560}\,\q^5
\nn\\&&
-\tfrac{19031436950174249}{2764800}\,\q^6
-\tfrac{46566304858176491}{345600}\,\q^7
-\tfrac{357424842056971557}{163840}\,\q^8
+\dots
\nn\\
\chG^{(6;1,1)}_1 &=& \q^{-1/24}\(
-\tfrac{128371}{5529600}
+\tfrac{40650107}{3686400}\,\q
+\tfrac{2207272423}{1228800}\,\q^2
+\tfrac{799482612769}{5529600}\,\q^3
+\tfrac{15579841901093}{2211840}\,\q^4
\nn\right.\\&&\left.
+\tfrac{3410305584319}{14400}\,\q^5
+\tfrac{16460186000332049}{2764800}\,\q^6
+\tfrac{130847802327022001}{1105920}\,\q^7
+\tfrac{793328412584550589}{409600}\,\q^8
+\dots \)\nn
\\
\chG^{(6;1,1)}_2 &=& \q^{-1/6}\(
\tfrac{1052509}{176947200}
+\tfrac{209941}{138240}\,\q
-\tfrac{5204324219 }{11059200}\,\q^2
-\tfrac{2193309727}{34560}\,\q^3
-\tfrac{67263775905863}{17694720}\,\q^4
-\tfrac{49340829606167}{345600}\,\q^5
\nn\right.\\&&\left.
-\tfrac{1416482503757447}{368640}\,\q^6
-\tfrac{6909229953035321}{86400}\,\q^7
-\tfrac{239502419541351089947}{176947200}\,\q^8
+\dots \)
\nn\\
\chG^{(6;1,1)}_3 &=& \q^{-3/8}\(
\tfrac{3059}{1228800}
+\tfrac{1240819 }{2211840}\,\q
+\tfrac{6103987}{409600}\,\q^2
+\tfrac{40005301093}{2764800}\,\q^3
+\tfrac{14584022769029}{11059200}\,\q^4
+\tfrac{12408048600401}{204800}\,\q^5
\nn\right.\\&&\left.
+\tfrac{2031688897405031}{1105920}\,\q^6
+\tfrac{14301651369906251}{345600}\,\q^7
+\tfrac{30365831720828173}{40960}\,\q^8
+\dots\)
\\
\chG^{(6;1,1)}_4 &=& \q^{-2/3}\(
\tfrac{13157}{29491200 }
+\tfrac{202109}{1382400}\,\q
+\tfrac{117034913 }{22118400}\,\q^2
-\tfrac{13051427}{10800}\,\q^3
-\tfrac{51184196269 }{184320}\,\q^4
-\tfrac{4080532598711}{230400}\,\q^5
\nn\right.\\&&\left.
-\tfrac{28426573177934497}{44236800}\,\q^6
-\tfrac{748741181058329}{46080}\,\q^7
-\tfrac{13975691481395331721}{44236800}\,\q^8
+\dots\)
\nn\\
\chG^{(6;1,1)}_5 &=& \q^{-25/24}\(
\tfrac{19}{442368}
+\tfrac{204533}{11059200}\,\q
+\tfrac{12303371 }{11059200}\,\q^2
-\tfrac{176760667}{2764800}\,\q^3
+\tfrac{115535286821}{3686400}\,\q^4
+\tfrac{19874511607}{5760}\,\q^5
\nn\right.\\&&\left.
+\tfrac{1786765849460017}{11059200}\,\q^6
+\tfrac{52850388569884871}{11059200}\,\q^7
+\tfrac{76411388946994309}{737280}\,\q^8
+\dots\)
\nn\\
\chG^{(6;1,1)}_6 &=& \q^{-4/2}\(
\tfrac{11}{7077888}
+\tfrac{877}{691200}\,\q
+\tfrac{108537}{409600}\,\q^2
+\tfrac{1579361 }{172800}\,\q^3
-\tfrac{42904604779}{35389440}\,\q^4
-\tfrac{10934623253}{25600}\,\q^5
\nn\right.\\&&\left.
-\tfrac{315649065581971}{11059200}\,\q^6
-\tfrac{178841598736999}{172800}\,\q^7
-\tfrac{8452543304385033}{327680}\,\q^8
+\dots\),
\nn
\eea
\bea
\gi{1;1,1,1}_{0} &=&\, \q^{-1}\(
-\tfrac{307}{53084160}
+\tfrac{248189}{10616832}\,\q
+\tfrac{19861423}{8847360}\,\q^2
+\tfrac{53650187}{737280} \,\q^3
+\tfrac{1952990383}{8847360}\,\q^4
-\tfrac{1384556270653}{17694720} \,\q^5
\right.\nn\\
&&\left.
-\tfrac{5992150958009}{1769472}\,\q^6
-\tfrac{36863885478071 }{442368}\,\q^7
-\tfrac{428314732091503585}{286654464}\,\q^8
+\cdots\),
\\
\gi{1;1,1,1}_{1} &=&\, \q^{-4/3}\(
-\tfrac{67}{8599633920}
+\tfrac{621023}{4299816960 }\,\q
+\tfrac{145699391}{1074954240}\,\q^2
-\tfrac{9013190993}{859963392} \,\q^3
-\tfrac{4987262966201}{4299816960}\,\q^4
\right.\nn\\&&\left.
-\tfrac{29007521629433}{537477120}\,\q^5
-\tfrac{779386701680657}{537477120}\,\q^6
-\tfrac{56215493359702901}{2149908480}\,\q^7
-\tfrac{49271213570980177}{143327232}\,\q^8
+\cdots\),\nn
\eea
\bea 
\!\!\!\!\!\!\!\!\!\!\!\!
\gi{1;2,1,1}_{0,0} &=& \q^{-2}\(
-\tfrac{17}{6449725440}
-\tfrac{767269}{13544423424}\,\q
+\tfrac{112089181661}{1015831756800}\,\q^2
+\tfrac{1043317980281}{56435097600}\,\q^3
+\tfrac{27231789142567 }{48372940800}\,\q^4
\right.\nn\\&&\left.
-\tfrac{9482261170891493 }{253957939200}\,\q^5
-\tfrac{284324365489757}{59719680}\,\q^6
-\tfrac{11734312010643938927 }{42326323200}\,\q^7
-\tfrac{77318861918278372661}{6772211712}\,\q^8
+\dots\)
\nn\\
\!\!\!\!\!\!\!\!\!\!\!\!
\gi{1;2,1,1}_{0,1} &=& \q^{-3/4}\(
-\tfrac{131899}{9069926400}
+\tfrac{9311573}{220449600}\,\q
-\tfrac{3649155569 }{21163161600}\,\q^2
-\tfrac{1965367009399}{661348800}\,\q^3
-\tfrac{16199438866333 }{55112400}\,\q^4
\nn\right.\\&&\left.
-\tfrac{32549313903053 }{1889568}\,\q^5
-\tfrac{5459536184966606299}{7054387200}\,\q^6
-\tfrac{2077744772341594457 }{73483200}\,\q^7
-\tfrac{3598424331353910127481}{4232632320}\,\q^8
+\dots\)
\nn\\
\!\!\!\!\!\!\!\!\!\!\!\!
\gi{1;2,1,1}_{1,0} &=& \q^{-11/8}\(
-\tfrac{631}{1074954240}
+\tfrac{1410085567}{1015831756800}\,\q
+\tfrac{16470146659}{28217548800}\,\q^2
-\tfrac{19337664631871}{338610585600}\,\q^3
-\tfrac{8998136290013387 }{1015831756800}\,\q^4
\nn\right.\\&&\left.
-\tfrac{25781708703370891}{37623398400}\,\q^5
-\tfrac{2838148975641283621}{67722117120}\,\q^6
-\tfrac{2131628890388703071609}{1015831756800}\,\q^7
-\tfrac{4631806887062853201259}{56435097600}\,\q^8
+\dots \)
\nn\\
\!\!\!\!\!\!\!\!\!\!\!\!
\gi{1;2,1,1}_{1,1} &=& \q^{-9/8}\(
\tfrac{352537}{507915878400}
+\tfrac{113751559}{18811699200}\,\q
+\tfrac{34152804511}{24186470400}\,\q^2
-\tfrac{17760737685259}{42326323200}\,\q^3
-\tfrac{2380349828670577}{56435097600}\,\q^4
\right.\\&&\left.
-\tfrac{14512111029604723}{5643509760}\,\q^5
-\tfrac{67386472835220554599}{507915878400}\,\q^6
-\tfrac{327553586631623048293}{56435097600}\,\q^7
-\tfrac{183797329597763776319}{895795200}\,\q^8
+\dots \)
\nn\\
\!\!\!\!\!\!\!\!\!\!\!\!
\gi{1;2,1,1}_{2,0} &=& \q^{-3/2}\(
-\tfrac{253}{29023764480}
+\tfrac{4062503}{18811699200}\,\q
+\tfrac{197690483}{846526464}\,\q^2
-\tfrac{1374141989333}{31744742400}\,\q^3
-\tfrac{153753168177467 }{37623398400}\,\q^4
\right.\nn\\&&\left.
+\tfrac{566837660926451}{2687385600}\,\q^5
-\tfrac{846083215871836309}{63489484800}\,\q^6
-\tfrac{147305259874126081}{167961600}\,\q^7
-\tfrac{74177388668796200399}{1763596800}\,\q^8
+\dots \) 
\nn\\
\!\!\!\!\!\!\!\!\!\!\!\!
\gi{1;2,1,1}_{2,1} &=& \q^{-5/4}\(
-\tfrac{1}{1866240}
+\tfrac{24730109}{10581580800}\,\q
+\tfrac{2530558951 }{15872371200}\,\q^2
-\tfrac{510086582701	}{3527193600}\,\q^3
-\tfrac{160843599445 }{70543872}\,\q^4
\nn\right.\\&&\left.
+\tfrac{634132172310379}{10581580800}\,\q^5
-\tfrac{959317534461841}{58786560}\,\q^6
-\tfrac{1229196315272186183}{661348800}\,\q^7
-\tfrac{759174953988648536803}{7936185600}\,\q^8
+\dots\)
\nn
\eea
\be
\begin{split} 
\gi{2;1,1,1}_{0,0,0,0} =&\,
\tfrac{62959}{1658880}
+\tfrac{226289}{207360}\,\q
+\tfrac{9809837}{552960}\,\q^2
+\tfrac{13343}{69120}             \,\q^3
-\tfrac{2629441427 }{829440}\,\q^4
\\&
-\tfrac{20201197 }{360}\,\q^5
-\tfrac{85538444479}{138240}\,\q^6
-\tfrac{36356182001 }{6912}\,\q^7
-\tfrac{2585001140599}{69120}\,\q^8
+\cdots
\\
\gi{2;1,1,1}_{0,1,1,0} =&\, \q^{-1/2}
\(-\tfrac{73}{995328}
+\tfrac{6793}{829440}\,\q
+\tfrac{1989169 }{829440}\,\q^2
+\tfrac{24402883}{497664}\,\q^3
+\tfrac{759624757 }{1658880}\,\q^4
\right.\\&\left.
+\tfrac{662585173 }{829440}\,\q^5
-\tfrac{17919291181}{497664}\,\q^6
-\tfrac{501969016519 }{829440} \,\q^7
-\tfrac{3398716646521}{552960} \,\q^8
+\cdots\)
\\
\gi{2;1,1,1}_{1,1,0,0} =&\, \q^{-1/6}\(
\tfrac{101}{995328}
+\tfrac{824453 }{4976640}\,\q
+\tfrac{476039}{207360}\,\q^2
+\tfrac{54842903 }{829440}\,\q^3
+\tfrac{2494364191}{1658880}\,\q^4
\right.\\&\left.
+\tfrac{26617621331 }{1244160}\,\q^5
+\tfrac{110997670277}{497664}\,\q^6
+\tfrac{1163782557059 }{622080}\,\q^7
+\tfrac{33182088835409}{2488320}\,\q^8
+\cdots\)
\\
\gi{2;1,1,1}_{1,1,1,1} =&\, \q^{-2/3}\(
-\tfrac{23}{9953280}
+\tfrac{16607}{1244160}\,\q
-\tfrac{284833}{995328}\,\q^2
-\tfrac{4913885 }{124416}\,\q^3
-\tfrac{1967897171}{2488320}\,\q^4
\right.\\&\left.
-\tfrac{1152163793 }{124416}\,\q^5
-\tfrac{378680060593}{4976640}\,\q^6
-\tfrac{97855125631 }{207360}\,\q^7
-\tfrac{5549175976751}{2488320}\,\q^8
+\cdots\).
\end{split}
\ee

\section{Index of notations}
\label{sec_index}

In most of the paper we use boldface letters to denote vectors consisting of $n$ or $n-1$ components,
the blackboard script for the extended lattice as well as for generic $d$-dimensional lattices and their vectors,
the sans-serif script for glue vectors, 
and the mathfrak script for vectors with $n_k$ components resulting from the decomposition $n=\sum_{k=1}^m n_k$
and for the vectors $\frt^{(\Nr)}_\alpha$ determining the dependence on the refinement parameters.

\begin{longtable}{lp{10cm}l}
	Symbol & Description & ${\mbox{Appears or} \atop \mbox{defined in }}$ \\[5pt]
	\hline
	\rule{0pt}{20pt}
$\Asf=\{\asf_0,\asf_1,\dots,\asf_n\}$ & glue vector indices for the lattice decomposition \eqref{lat-glue-our} & \eqref{gluegA}\\
$\cAr_0 $ & set of glue vectors of the lattice decomposition \eqref{lat-glue-our} corresponding to 
the maximal order zero modes  & \eqref{cond-glueA} \\
$\beta=-\Im(z)/\tau_2$ & real variable parametrizing $z$ & \eqref{Rirf-to-rmRrf} \\
$\bbbeta= - \Im(\zbbm)/\tau_2$ & real vector parametrizing $\zbbm$  & \eqref{gentheta} \\
$b_2=b_2(\CY)$ & second Betti number of $\CY$  & p.\pageref{pb2} \\
$\Bsf=\{\bsf_{ij},\tbsf_{ij}\}$ & self-glue vector indices induced by the basis \eqref{recover-L} & 
below \eqref{expkparal}\!\!\!\!\!\!\! \\
$c_{2}$ & the second Chern class of $\CY$ & \eqref{qmax} \\
$\cbfr$ & coefficient of the leading pole of the modular ambiguity& \eqref{constphi}\\
$\Csf=\{\csf_k\}$ & glue vector indices for the lattice decomposition \eqref{lat-decomp-r}  & \eqref{setC} \\
$\delta^{(n)}_x$ & mod-$n$ Kronecker delta symbol & \eqref{defdelta} \\
$d_r$ & dimension of the lattice extension factor $\Zbbm^{d_r}$ associated to charge $r$ &\eqref{defdr} \\
$\bfDr, \ID^{(\bfr)}_{||}$& discriminant groups of the lattices $\bfLami{\bfr}$, $\bbLami{\bfr}_{||}$ & \S\ref{ap-mainlat}, \S\ref{subsubsec-gengen}\!\!\!\!\!\!\! \\
$\cD_m^{(n)}$ & differential operator on Jacobi forms & \eqref{defcDmn}\\
$\cD\vthA{N}_{\asf}(\tau;\frt)$& modular derivative of the $A_{N-1}$ lattice theta series &\eqref{def-cDvth}\\
$\Dzm(\Asf,\Bsf)$ & Kronecker symbol ensuring the maximal order zero mode condition & \eqref{theta-mzm} \\
$e_m$& weights of degenerate charge configurations &\eqref{def-genthm}\\
$\ebbm_0, \ebbm_{i,\alpha}$ & basis vectors of $\bbLami{\bfr}_\perp$ & \eqref{def-basise} \\
$E_n(\tau)$ & Eisenstein series & \eqref{modtr-E2}, \S \ref{ap-consist} \\
$E_n(\cM;\vu)$ & generalized error function on $\IR^n$ & \eqref{generr-E} \\
$\Ev_n=\Ef_n+\Ep_n$ & function encoding the modular completion & \eqref{twocEs}  \\
$\chphi^{(\bfr)}_{\mu,\bfmu}(\tau,z,\bfz)$ & modular ambiguity of anomalous coefficients & \eqref{def-phin}\\
$\phi^{(\bfr)}_{\bbnu}(\tau,z)$ & Jacobi-like form encoding the modular ambiguity & \eqref{def-phin}\\
$\Phi_n^E(\{\vbbm_i\};\xbbm)$ & boosted error function & \eqref{generrPhiME} \\
$\EPhi_n(\bfx)$ & kernel corresponding to the function $\Ev_n$ & \eqref{rescEn}  \\
$\rPhi(\xbbm;\{\Fvi{\bfs}\})$ & kernel of the theta series representing $\chgirf{\bfr}_{\mu, \bfmu }$ & \eqref{ker-thm}\\
$\Phidi{\bfr}(\xbbm,\tau,z)$ & kernel representing the modular ambiguity $\phi^{(\bfr)}_\bbnu$ & \eqref{Phidn}\\
$\Fvi{\bfr}(\xbbm)$& theta series kernel encoding $\rPhi$ & \eqref{kern-manyz}\\
$\gamma= (p^0,r,q,q_0)$ &charge vector of a generic D6-D4-D2-D0 bound state & p.\pageref{pgamma} \\
$\hgam=(r,q)$ & reduced charge vector & \eqref{defRn} \\
$\bfhgam=(\hgam_1,\dots,\hgam_n)$ & vector of reduced charges & \eqref{defRn} \\
$\gamma_{ij}$ & Dirac-Schwinger-Zwanziger product & \eqref{def-gammaij} \\
$\gi{\bfr}_{\mu,\bfmu}(\tau)$ & anomalous coefficient & Thm \ref{thm-ancoef} \\
$\girf{\bfr}_{\mu,\bfmu}(\tau,z)$ & refined anomalous coefficient & \eqref{refexp-whgi} \\
$\chgirf{\bfr}_{\mu,\bfmu}(\tau,z,\bfz)$ & auxiliary anomalous coefficient associated to the extended lattice & 
\eqref{extmodan-manyz} \\
$\whgi{\bfr}_{\mu,\bfmu},\whgirf{\bfr}_{\mu,\bfmu},\whchgirf{\bfr}_{\mu,\bfmu}$ & 
modular completions for various versions of anomalous coefficients &\eqref{compl-gi} \\
$\vwgi{n,\mu}(\tau)$ & normalized generating function of $SU(n)$ VW invariants on $\IP^2$ & \eqref{def-VWnorm} \\
$\glueg_\Asf$ & glue vectors for the lattice decomposition \eqref{lat-glue-our}& \eqref{gluegA}\\
$\Gi{\kappa}_\mu$, $\chG^{(\kappa;r_1,r_2)}_\mu$ & vectorial versions of the anomalous coefficient for two charges & 
\eqref{sol-n=2}, \eqref{chG-n=2}\!\!\!\!\!\!\! \\
$\cGi{d}_\mu(\tau)$   & seed functions for the mock modular forms of optimal growth & \eqref{DMZ} \\
$h_{r,\mu}(\tau) $ & generating function of D4-D2-D0 BPS indices (rank 0 DT invariants) & \eqref{defhDT} \\
$\whh_{r,\mu}(\tau)$ & modular completion of $h_{r,\mu}(\tau)$ & \eqref{exp-whh} \\
$\tlh_{r,\mu}(\tau)$ & redefined version of  $h_{r,\mu}(\tau)$ & \eqref{def-tgi} \\
$\than_{r,\mu}(\tau)$ & anomalous part of $\tlh_{r,\mu}(\tau)$ & \eqref{han} \\
$\thh_{r,\mu}(\tau)$ & modular ambiguity in $\tlh_{r,\mu}(\tau)$ & \eqref{han} \\
$\cIr_{\bbnu}(z)$ &maximal order zero mode contribution to $\vthls{\bfr}_{\bbmu,\Asf}$ & \eqref{theta-mzm}\\
$j_k$ & labels appearing in anomaly equations & \eqref{split-rs}\\
$\kappa$ & intersection number of CY  & \eqref{lvolatt} \\
$\kappa_i,\kappa_{ij},\kappa_{ijk}$ & $\kappa$ rescaled by magnetic charges $r_i$ & \eqref{defkapij} \\
$\kbbm$ & summation variable for theta series & \eqref{gentheta} \\
$\bfLami{\bfr}$ & lattice of D2-brane charges of constituents & \eqref{def-bfLam} \\
$\tbfLami{\bfr}$ & sublattice of $\bbLami{\bfr}_{||}$ isomorphic to $\bfLami{\bfr}$  & \eqref{defLampar} \\
$\bbLami{\bfr}$ & extended lattice & \eqref{extlatNr} \\
$\bbLami{\bfr}_{||}$ & sublattice of $\bbLami{\bfr}$ containing null-vectors & \eqref{defLampar} \\
$\bbLami{\bfr}_\perp$ & sublattice of $\bbLami{\bfr}$ orthogonal to $\bbLami{\bfr}_{||}$ & \eqref{factor-perp} \\
$\lambda_i$ & convenient combination of residue classes & \eqref{def-lami} \\
$\mu$ & residue class of $q$ modulo spectral flow & \eqref{defmu} \\
$\mu_{12}$ & effective residue class for two charges & \eqref{defmu0} \\
$\bfmu=(\mu_1,\dots,\mu_n)$ & vector of residue classes & \eqref{exp-whh}\\
$\bfhmu=(\hmu_1,\dots,\hmu_n)$ & vector of residue classes belonging to $\bfDr$ & \eqref{thetadata}\\
$\bfhmu^{(\bfr)}\bigl(\bfmu,\lambda \bigr)$& a different parametrization of $\bfhmu$ &\eqref{defhmu-all}\\
$\bbmu$ & vector of residue classes of the extended lattice & \eqref{thetadata}\\
$\Delta\mu$ & difference of residue classes & \eqref{constr-epsi} \\
$m_\bfr$& index of $\girf{\bfr}_{\mu,\bfmu}$& \eqref{index-mr}\\
$\frm_k$ & subset of residue classes appearing in anomaly equations & \eqref{split-rs} \\
$\nu_0(\Asf), \nu_{i,\alpha}(\Asf)$ & residue classes determined by glue vectors & \eqref{glui-extzi}\\
$\bbnu=(\bfhmu,\bfhtmu)$ & residue class of $\bbLami{\bfr}_{||}$ & below \eqref{def-phin}\!\!\!\!\!\!\! \\ 
$n_k$ & number of charges in the $k$-th subset  & \eqref{compl-gi} \\
$N_{ij}$ & range of the glue vector index $\bsf_{ij}$ &\eqref{valNij}\\
$\pbbm$& characteristic vector of the extended lattice &\eqref{thetadata}\\
$\q=e^{2\pi\I \tau}$ & expansion parameter of generating series & \eqref{defhDT} \\
$q$ & D2-brane charge & \eqref{defmu} \\
$\hat q_0$ & invariant D0-brane charge & \eqref{defqhat} \\
$\bfq=\bigl(\frac{q_1}{\kappa r_1},\dots,\frac{q_n}{\kappa r_n}\bigr)$ & vector of D2-brane charges & \eqref{rel-scpr-th}\\
$Q_n(\bfgam_i)$ & quadratic form in anomaly equations & \eqref{defQlr} \\
$\bfrho=(\rho_1,\dots,\rho_n)$ & vector of Bezout integers for $\bfr$ & above \eqref{gluegA}\!\!\!\!\!\!\! \\
$\bfrho^{(0)},\bfrho^{(k)}$& vector of Bezout integers for $\bfs$ and $\frr_k$ &\eqref{bfrho}\\
$r$ & total D4-brane charge & p.\pageref{pgamma} \\
$r_0$ & gcd of $(r_1,\dots, r_n)$ & below \eqref{defkapij}\!\!\!\!\!\!\! \\
$r_i$ & D4-brane charge of the $i$-th constituent & \eqref{exp-whh} \\
$\rdcr_i=\Nr_i/\Nr_0$ & normalized D4-brane charge &\eqref{Lam2}\\
$r_{ij}$& gcd of $(r_i, r_j)$ & \eqref{def-bfcvij-hat}\\
$\bfr=(r_1,\dots,r_n)$ & vector of D4-brane charges & \eqref{exp-whh}\\
$\frr_k$ & $k$-th subset of D4-brane charges appearing in anomaly equations & \eqref{split-rs} \\
$\rmRi{\bfr}_{\mu,\bfmu}(\tau,\btau)$ & coefficients of the anomaly equation \eqref{exp-whh} & \eqref{defRn} \\
$\trmRi{\bfr}_{\mu,\bfmu}(\tau,\btau)$ & redefined coefficients of the anomaly equation & \eqref{redefRn} \\
$\rmRirf{\bfr}_{\mu,\bfmu}(\tau,\btau,z)$ & coefficients of the refined anomaly equation & \eqref{Rirf-to-rmRrf} \\
$\scR_n(\bfhgam;\tau_2)$ & contribution to $\rmRi{\bfr}_{\mu,\bfmu}$ of a given charge decomposition & \eqref{solRn} \\
$\scRrf_n(\bfhgam;\tau_2,\beta)$ & contribution to $\rmRirf{\bfr}_{\mu,\bfmu}$ of a given charge decomposition & \eqref{refsolRn} \\
$\bfs=(s_1,\dots,s_m)$ & vector of D4-brane charges of subsets of constituents & \eqref{split-rs}  \\ 
$S_\cT(\bfhgam)$& large $\tau_2$ limit of the generalized error function $\Phi_{n}^E({\{\bfv_e\}};\bfx)$ 
defined by the tree $\cT$ &\eqref{defST}\\
$\!\Sym$& symmetrization with respect to charges &\eqref{defRn}\\
$\frt^{(\Nr_i)}$ & vector determining the dependance on $z_i$ & \eqref{choicetr}\\
$\theta_1(\tau,z)$ & Jacobi theta series & \eqref{free-theta-1} \\
$\ths{\kappa}_\mu(\tau)$ & unary theta series with even quadratic form & \eqref{deftheta}  \\
$\vth^{(d)}_{\nu_0}(\tau)$ & unary theta series appearing in \eqref{factor-perptheta} &  \eqref{3thetaZ}\\
$\vth_{\bbmu}(\tau, \zbbm;\bbLambda, \Phi, \pbbm)$ & indefinite theta series associated with the 
lattice $\bbLambda$ and kernel $\Phi$ & \eqref{gentheta} \\
$\cvths{\bfr}_{\bftnu(\Asf,\Bsf)}(z)$ & contribution to $\cIr_{\bbnu}$ for a fixed glue vector of index $\Bsf$  &  \eqref{cvths}\\
$\vthpzi{\bfr}{k}_{\bbmu,\Asf}(\tau)$& unrefined limit of the contribution of order $k$ zero modes& \eqref{deftheta-unref}\\
$\vthls{\bfr}_{\bbmu,\Asf}(\tau, z)$& theta series associated with the lattice $\bbLami{\bfr}_{||}$ &\eqref{defprojtheta}\\
$\vthps{\bfr}_\Asf(\tau,\bfz)$&theta series associated with the lattice $\bbLami{\bfr}_\perp$ &\eqref{defprojtheta}\\
$\vthA{N}_\asf(\tau,z;\frt)$ & theta series associated with $A_{N-1}$ root lattice & \eqref{3thetas}\\
$\bftet$ & vector of $\bfLami{\bfr}$ determining the index & \eqref{thetadata}\\
$\IT_n^{\rm S}$ & set of Schr\"oder trees with $n$ leaves & p.\pageref{pSchr} \\
$\hubbm_{ij}$& normalized vectors in $\tbfLami{\bfr}$ & below \eqref{defLampar}\!\!\!\!\!\!\! \\
$\bfv_{ij}$, $\bfv_e$, $\bfv_\ell$ & vectors in $\bfLami{\bfr}$ & \eqref{def-bfvij}, \S\ref{ap-E}\!\!\!\!\!\!\! \\
$\vbbm_{ij}$, $\vbbm_\ell$ & vectors in $\bbLami{\bfr}$ extending $\bfv_{ij}$, $\bfv_\ell$
& \eqref{def-bfcvij}, \eqref{def-genthm}\!\!\!\!\!\!\! \\
$\hvbbm_{ij}$ & normalized vectors in $\bbLami{\bfr}$ &\eqref{def-bfcvij-hat} \\
$\om_{ij}$ & scalar product of the lattice vector with null vectors & \eqref{def-bij}\\
$\wbbm_{ij}$ & null vectors in $\bbLami{\bfr}$ & \eqref{def-bfcvij} \\
$\hwbbm_{ij}$ & normalized null vectors &\eqref{def-bfcvij-hat} \\
$\chi_r$ & arithmetic genus of the divisor $\cD_r$ &  \eqref{defL0} \\
$\xbbm = \sqrt{2\tau_2} \,( \kbbm + \bbbeta)$ & $d$-dimensional vector, argument of kernels of theta series & \eqref{gentheta} \\
$\xbbm_\bbbeta = \sqrt{2\tau_2} \,\kbbm$ & shifted vector $\xbbm$ & above \eqref{rel-scpr}\!\!\!\!\!\!\! \\
$y=e^{2\pi i z}$ & exponential of the refinement parameter & \S \ref{subsec-ref} \\
$z$ & refinement parameter & \S \ref{subsec-ref} \\
$\bfz=(z_1,\dots,z_n)$ & refinement parameters associated to the extended lattice & \eqref{extmodan-manyz} \\
$\zbbm$ & vector of all refinement parameters & \eqref{thetadata} \\
$\Zv_n$& set of integers from $1$ to $n$ & \eqref{kern-manyz}\\
$\cdot$ , $\ast$ & bilinear forms on $\bfLami{\bfr}$ and $\bbLami{\bfr}$ & \eqref{bf-r}, \eqref{bb-r}\!\!\!\!\!\!\!
\end{longtable}

\providecommand{\href}[2]{#2}\begingroup\raggedright\endgroup


\end{document}